\documentclass[11pt]{article}
\pdfoutput=1
\usepackage{float}
\usepackage{setspace}
\usepackage{appendix}
\usepackage{xfrac}
\usepackage{amssymb,amsfonts}
\usepackage[figuresright]{rotating}
\usepackage{amsmath,amssymb}
\usepackage{dcolumn}
\usepackage{bm}
\usepackage{amscd,amsthm}
\usepackage{ifthen}
\usepackage{mathtools}
\usepackage{graphicx}
\usepackage{caption}
\usepackage{subcaption}
\usepackage{pgf,tikz}
\usepackage{pgfplots}
\usetikzlibrary{calc,shadows}
\usetikzlibrary{pgfplots.groupplots}
\usepackage{url}
\usepackage{siunitx}

%
\usepackage{amsthm}
\usepackage{mathtools}

\usepackage[nosort]{cite}
\usepackage{url}
\usepackage{bbm}
\usepackage{paralist}
\usepackage{caption}
\usepackage[printonlyused]{acronym}
\usepackage{xcolor}

\newlength{\figwidth}
\setlength{\figwidth}{78mm}

\pdfsuppresswarningpagegroup=1

\pgfplotsset{every non boxed x axis/.append style={x axis line style=->,>=latex},every non boxed y axis/.append style={y axis line style=->,>=latex}}

%
%

\usepackage{amsmath}
\usepackage{amssymb}
\usepackage{amsfonts}
\usepackage{amsthm}
\usepackage{mathrsfs}
\usepackage{xspace}
\usepackage{bm}
\usepackage{fancyref}
\usepackage{textcomp}
\usepackage[Symbol]{upgreek}

\newcommand{\safemath}[2]{\newcommand{#1}{\ensuremath{#2}\xspace}}


\newcommand{\ssa}{\mathsf{a}}
\newcommand{\ssb}{\mathsf{b}}
\newcommand{\ssc}{\mathsf{c}}
\newcommand{\ssd}{\mathsf{d}}
\newcommand{\sse}{\mathsf{e}}
\newcommand{\ssf}{\mathsf{f}}
\newcommand{\ssg}{\mathsf{g}}
\newcommand{\ssh}{\mathsf{h}}
\newcommand{\ssi}{\mathsf{i}}
\newcommand{\ssj}{\mathsf{j}}
\newcommand{\ssk}{\mathsf{k}}
\newcommand{\ssl}{\mathsf{l}}
\newcommand{\ssm}{\mathsf{m}}
\newcommand{\ssn}{\mathsf{n}}
\newcommand{\sso}{\mathsf{o}}
\newcommand{\ssp}{\mathsf{p}}
\newcommand{\ssq}{\mathsf{q}}
\newcommand{\ssr}{\mathsf{r}}
\newcommand{\sss}{\mathsf{s}}
\newcommand{\sst}{\mathsf{t}}
\newcommand{\ssu}{\mathsf{u}}
\newcommand{\ssv}{\mathsf{v}}
\newcommand{\ssw}{\mathsf{w}}
\newcommand{\ssx}{\mathsf{x}}
\newcommand{\ssy}{\mathsf{y}}
\newcommand{\ssz}{\mathsf{z}}



\safemath{\bmsa}{\bm{\ssa}}
\safemath{\bmsb}{\bm{\ssb}}
\safemath{\bmsc}{\bm{\ssc}}
\safemath{\bmsd}{\bm{\ssd}}
\safemath{\bmse}{\bm{\sse}}
\safemath{\bmsf}{\bm{\ssf}}
\safemath{\bmsg}{\bm{\ssg}}
\safemath{\bmsh}{\bm{\ssh}}
\safemath{\bmsi}{\bm{\ssi}}
\safemath{\bmsj}{\bm{\ssj}}
\safemath{\bmsk}{\bm{\ssk}}
\safemath{\bmsl}{\bm{\ssl}}
\safemath{\bmsm}{\bm{\ssm}}
\safemath{\bmsn}{\bm{\ssn}}
\safemath{\bmso}{\bm{\sso}}
\safemath{\bmsp}{\bm{\ssp}}
\safemath{\bmsq}{\bm{\ssq}}
\safemath{\bmsr}{\bm{\ssr}}
\safemath{\bmss}{\bm{\sss}}
\safemath{\bmst}{\bm{\sst}}
\safemath{\bmsu}{\bm{\ssu}}
\safemath{\bmsv}{\bm{\ssv}}
\safemath{\bmsw}{\bm{\ssw}}
\safemath{\bmsx}{\bm{\ssx}}
\safemath{\bmsy}{\bm{\ssy}}
\safemath{\bmsz}{\bm{\ssz}}

\bmdefine{\bmualphad}{\upalpha}
\bmdefine{\bmubetad}{\upbeta}
\bmdefine{\bmuchid}{\upchi}
\bmdefine{\bmudeltad}{\updelta}
\bmdefine{\bmuepsilond}{\upepsilon}
\bmdefine{\bmuvarepsilond}{\upvarepsilon}
\bmdefine{\bmuetad}{\upeta}
\bmdefine{\bmugammad}{\upgamma}
\bmdefine{\bmuiotad}{\upiota}
\bmdefine{\bmukappad}{\upkappa}
\bmdefine{\bmulambdad}{\uplambda}
\bmdefine{\bmumud}{\upmu}
\bmdefine{\bmunud}{\upnu}
\bmdefine{\bmuomegad}{\upomega}
\bmdefine{\bmuphid}{\upphi}
\bmdefine{\bmuvarphid}{\upvarphi}
\bmdefine{\bmupid}{\uppi}
\bmdefine{\bmuvarpid}{\upvarpi}
\bmdefine{\bmupsid}{\uppsi}
\bmdefine{\bmurhod}{\uprho}
\bmdefine{\bmuvarrhod}{\upvarrho}
\bmdefine{\bmusigmad}{\upsigma}
\bmdefine{\bmuvarsigmad}{\upvarsigma}
\bmdefine{\bmutaud}{\uptau}
\bmdefine{\bmuthetad}{\uptheta}
\bmdefine{\bmuvarthetad}{\upvartheta}
\bmdefine{\bmuupsilond}{\upupsilon}
\bmdefine{\bmuxid}{\upxi}
\bmdefine{\bmuzetad}{\upzeta}

\safemath{\bmua}{\mathbf{a}}
\safemath{\bmub}{\mathbf{b}}
\safemath{\bmuc}{\mathbf{c}}
\safemath{\bmud}{\mathbf{d}}
\safemath{\bmue}{\mathbf{e}}
\safemath{\bmuf}{\mathbf{f}}
\safemath{\bmug}{\mathbf{g}}
\safemath{\bmuh}{\mathbf{h}}
\safemath{\bmui}{\mathbf{i}}
\safemath{\bmuj}{\mathbf{j}}
\safemath{\bmuk}{\mathbf{k}}
\safemath{\bmul}{\mathbf{l}}
\safemath{\bmum}{\mathbf{m}}
\safemath{\bmun}{\mathbf{n}}
\safemath{\bmuo}{\mathbf{o}}
\safemath{\bmup}{\mathbf{p}}
\safemath{\bmuq}{\mathbf{q}}
\safemath{\bmur}{\mathbf{r}}
\safemath{\bmus}{\mathbf{s}}
\safemath{\bmut}{\mathbf{t}}
\safemath{\bmuu}{\mathbf{u}}
\safemath{\bmuv}{\mathbf{v}}
\safemath{\bmuw}{\mathbf{w}}
\safemath{\bmux}{\mathbf{x}}
\safemath{\bmuy}{\mathbf{y}}
\safemath{\bmuz}{\mathbf{z}}

\safemath{\bmualpha}{\bmualphad}
\safemath{\bmubeta}{\bmubetad}
\safemath{\bmuchi}{\bumchid}
\safemath{\bmudelta}{\bmudeltad}
\safemath{\bmuepsilon}{\bmuepsilond}
\safemath{\bmuvarepsilon}{\bmuvarepsilond}
\safemath{\bmueta}{\bmuetad}
\safemath{\bmugamma}{\bmugammad}
\safemath{\bmuiota}{\bmuiotad}
\safemath{\bmukappa}{\bmukappad}
\safemath{\bmulambda}{\bmulambdad}
\safemath{\bmumu}{\bmumud}
\safemath{\bmunu}{\bmunud}
\safemath{\bmuomega}{\bmuomegad}
\safemath{\bmuphi}{\bmuphid}
\safemath{\bmuvarphi}{\bmuvarphid}
\safemath{\bmupi}{\bmupid}
\safemath{\bmuvarpi}{\bmuvarpid}
\safemath{\bmupsi}{\bmupsid}
\safemath{\bmurho}{\bmurhod}
\safemath{\bmuvarrho}{\bmuvarrhod}
\safemath{\bmusigma}{\bmusigmad}
\safemath{\bmuvarsigma}{\bmuvarsigmad}
\safemath{\bmutau}{\bmutaud}
\safemath{\bmutheta}{\bmuthetad}
\safemath{\bmuvartheta}{\bmuvarthetad}
\safemath{\bmuupsilon}{\bmuupsilond}
\safemath{\bmuxi}{\bmuxid}
\safemath{\bmuzeta}{\bmuzetad}

\bmdefine{\bmiad}{a}
\bmdefine{\bmibd}{b}
\bmdefine{\bmicd}{c}
\bmdefine{\bmidd}{d}
\bmdefine{\bmied}{e}
\bmdefine{\bmifd}{f}
\bmdefine{\bmigd}{g}
\bmdefine{\bmihd}{h}
\bmdefine{\bmiid}{i}
\bmdefine{\bmijd}{j}
\bmdefine{\bmikd}{k}
\bmdefine{\bmild}{l}
\bmdefine{\bmimd}{m}
\bmdefine{\bmind}{n}
\bmdefine{\bmiod}{o}
\bmdefine{\bmipd}{p}
\bmdefine{\bmiqd}{q}
\bmdefine{\bmird}{r}
\bmdefine{\bmisd}{s}
\bmdefine{\bmitd}{t}
\bmdefine{\bmiud}{u}
\bmdefine{\bmivd}{v}
\bmdefine{\bmiwd}{w}
\bmdefine{\bmixd}{x}
\bmdefine{\bmiyd}{y}
\bmdefine{\bmizd}{z}

\bmdefine{\bmialphad}{\alpha}
\bmdefine{\bmibetad}{\beta}
\bmdefine{\bmichid}{\chi}
\bmdefine{\bmideltad}{\delta}
\bmdefine{\bmiepsilond}{\epsilon}
\bmdefine{\bmivarepsilond}{\varepsilon}
\bmdefine{\bmietad}{\eta}
\bmdefine{\bmigammad}{\gamma}
\bmdefine{\bmiiotad}{\iota}
\bmdefine{\bmikappad}{\kappa}
\bmdefine{\bmivarkappad}{\varkappa}
\bmdefine{\bmilambdad}{\lambda}
\bmdefine{\bmimud}{\mu}
\bmdefine{\bminud}{\nu}
\bmdefine{\bmiomegad}{\omega}
\bmdefine{\bmiphid}{\phi}
\bmdefine{\bmivarphid}{\varphi}
\bmdefine{\bmipid}{\pi}
\bmdefine{\bmivarpid}{\varpi}
\bmdefine{\bmipsid}{\psi}
\bmdefine{\bmirhod}{\rho}
\bmdefine{\bmivarrhod}{\varrho}
\bmdefine{\bmisigmad}{\sigma}
\bmdefine{\bmivarsigmad}{\varsigma}
\bmdefine{\bmitaud}{\tau}
\bmdefine{\bmithetad}{\theta}
\bmdefine{\bmivarthetad}{\vartheta}
\bmdefine{\bmiupsilond}{\upsilon}
\bmdefine{\bmixid}{\xi}
\bmdefine{\bmizetad}{\zeta}

\safemath{\bmia}{\bmiad}
\safemath{\bmib}{\bmibd}
\safemath{\bmic}{\bmicd}
\safemath{\bmid}{\bmidd}
\safemath{\bmie}{\bmied}
\safemath{\bmif}{\bmifd}
\safemath{\bmig}{\bmigd}
\safemath{\bmih}{\bmihd}
\safemath{\bmii}{\bmiid}
\safemath{\bmij}{\bmijd}
\safemath{\bmik}{\bmikd}
\safemath{\bmil}{\bmild}
\safemath{\bmim}{\bmimd}
\safemath{\bmin}{\bmind}
\safemath{\bmio}{\bmiod}
\safemath{\bmip}{\bmipd}
\safemath{\bmiq}{\bmiqd}
\safemath{\bmir}{\bmird}
\safemath{\bmis}{\bmisd}
\safemath{\bmit}{\bmitd}
\safemath{\bmiu}{\bmiud}
\safemath{\bmiv}{\bmivd}
\safemath{\bmiw}{\bmiwd}
\safemath{\bmix}{\bmixd}
\safemath{\bmiy}{\bmiyd}
\safemath{\bmiz}{\bmizd}

\safemath{\bmialpha}{\bmialphad}
\safemath{\bmibeta}{\bmibetad}
\safemath{\bmichi}{\bmichid}
\safemath{\bmidelta}{\bmideltad}
\safemath{\bmiepsilon}{\bmiepsilond}
\safemath{\bmivarepsilon}{\bmivarepsilond}
\safemath{\bmieta}{\bmietad}
\safemath{\bmigamma}{\bmigammad}
\safemath{\bmiiota}{\bmiiotad}
\safemath{\bmikappa}{\bmikappad}
\safemath{\bmivarkappa}{\bmivarkappad}
\safemath{\bmilambda}{\bmilambdad}
\safemath{\bmimu}{\bmimud}
\safemath{\bminu}{\bminud}
\safemath{\bmiomega}{\bmiomegad}
\safemath{\bmiphi}{\bmiphid}
\safemath{\bmivarphi}{\bmivarphid}
\safemath{\bmipi}{\bmipid}
\safemath{\bmivarpi}{\bmivarpid}
\safemath{\bmipsi}{\bmipsid}
\safemath{\bmirho}{\bmirhod}
\safemath{\bmivarrho}{\bmivarrhod}
\safemath{\bmisigma}{\bmisigmad}
\safemath{\bmivarsigma}{\bmivarsigmad}
\safemath{\bmitau}{\bmitaud}
\safemath{\bmitheta}{\bmithetad}
\safemath{\bmivartheta}{\bmivarthetad}
\safemath{\bmiupsilon}{\bmiupsilond}
\safemath{\bmixi}{\bmixid}
\safemath{\bmizeta}{\bmizetad}

\bmdefine{\bmuDeltad}{\Updelta}
\bmdefine{\bmuGammad}{\Upgamma}
\bmdefine{\bmuLambdad}{\Uplambda}
\bmdefine{\bmuOmegad}{\Upomega}
\bmdefine{\bmuPhid}{\Upphi}
\bmdefine{\bmuPid}{\Uppi}
\bmdefine{\bmuPsid}{\Uppsi}
\bmdefine{\bmuSigmad}{\Upsigma}
\bmdefine{\bmuThetad}{\Uptheta}
\bmdefine{\bmuUpsilond}{\Upupsilon}
\bmdefine{\bmuXid}{\Upxi}

\safemath{\bmuA}{\mathbf{A}}
\safemath{\bmuB}{\mathbf{B}}
\safemath{\bmuC}{\mathbf{C}}
\safemath{\bmuD}{\mathbf{D}}
\safemath{\bmuE}{\mathbf{E}}
\safemath{\bmuF}{\mathbf{F}}
\safemath{\bmuG}{\mathbf{G}}
\safemath{\bmuH}{\mathbf{H}}
\safemath{\bmuI}{\mathbf{I}}
\safemath{\bmuJ}{\mathbf{J}}
\safemath{\bmuK}{\mathbf{K}}
\safemath{\bmuL}{\mathbf{L}}
\safemath{\bmuM}{\mathbf{M}}
\safemath{\bmuN}{\mathbf{N}}
\safemath{\bmuO}{\mathbf{O}}
\safemath{\bmuP}{\mathbf{P}}
\safemath{\bmuQ}{\mathbf{Q}}
\safemath{\bmuR}{\mathbf{R}}
\safemath{\bmuS}{\mathbf{S}}
\safemath{\bmuT}{\mathbf{T}}
\safemath{\bmuU}{\mathbf{U}}
\safemath{\bmuV}{\mathbf{V}}
\safemath{\bmuW}{\mathbf{W}}
\safemath{\bmuX}{\mathbf{X}}
\safemath{\bmuY}{\mathbf{Y}}
\safemath{\bmuZ}{\mathbf{Z}}

\safemath{\bmuZero}{\mathbf{0}}
\safemath{\bmuOne}{\mathbf{1}}

\safemath{\bmuDelta}{\bmuDeltad}
\safemath{\bmuGamma}{\bmuGammad}
\safemath{\bmuLambda}{\bmuLambdad}
\safemath{\bmuOmega}{\bmuOmegad}
\safemath{\bmuPhi}{\bmuPhid}
\safemath{\bmuPi}{\bmuPid}
\safemath{\bmuPsi}{\bmuPsid}
\safemath{\bmuSigma}{\bmuSigmad}
\safemath{\bmuTheta}{\bmuThetad}
\safemath{\bmuUpsilon}{\bmuUpsilond}
\safemath{\bmuXi}{\bmuXid}

\bmdefine{\bmiAd}{A}
\bmdefine{\bmiBd}{B}
\bmdefine{\bmiCd}{C}
\bmdefine{\bmiDd}{D}
\bmdefine{\bmiEd}{E}
\bmdefine{\bmiFd}{F}
\bmdefine{\bmiGd}{G}
\bmdefine{\bmiHd}{H}
\bmdefine{\bmiId}{I}
\bmdefine{\bmiJd}{J}
\bmdefine{\bmiKd}{K}
\bmdefine{\bmiLd}{L}
\bmdefine{\bmiMd}{M}
\bmdefine{\bmiOd}{N}
\bmdefine{\bmiPd}{O}
\bmdefine{\bmiQd}{P}
\bmdefine{\bmiRd}{R}
\bmdefine{\bmiSd}{S}
\bmdefine{\bmiTd}{T}
\bmdefine{\bmiUd}{U}
\bmdefine{\bmiVd}{V}
\bmdefine{\bmiWd}{W}
\bmdefine{\bmiXd}{X}
\bmdefine{\bmiYd}{Y}
\bmdefine{\bmiZd}{Z}

\bmdefine{\bmiDeltad}{\Delta}
\bmdefine{\bmiGammad}{\Gamma}
\bmdefine{\bmiLambdad}{\Lambda}
\bmdefine{\bmiOmegad}{\Omega}
\bmdefine{\bmiPhid}{\Phi}
\bmdefine{\bmiPid}{\Pi}
\bmdefine{\bmiPsid}{\Psi}
\bmdefine{\bmiSigmad}{\Sigma}
\bmdefine{\bmiThetad}{\Theta}
\bmdefine{\bmiUpsilond}{\Upsilon}
\bmdefine{\bmiXid}{\Xi}

\safemath{\bmiA}{\bmiAd}
\safemath{\bmiB}{\bmiBd}
\safemath{\bmiC}{\bmiCd}
\safemath{\bmiD}{\bmiDd}
\safemath{\bmiE}{\bmiEd}
\safemath{\bmiF}{\bmiFd}
\safemath{\bmiG}{\bmiGd}
\safemath{\bmiH}{\bmiHd}
\safemath{\bmiI}{\bmiId}
\safemath{\bmiJ}{\bmiJd}
\safemath{\bmiK}{\bmiKd}
\safemath{\bmiL}{\bmiLd}
\safemath{\bmiM}{\bmiMd}
\safemath{\bmiN}{\bmiNd}
\safemath{\bmiO}{\bmiOd}
\safemath{\bmiP}{\bmiPd}
\safemath{\bmiQ}{\bmiQd}
\safemath{\bmiR}{\bmiRd}
\safemath{\bmiS}{\bmiSd}
\safemath{\bmiT}{\bmiTd}
\safemath{\bmiU}{\bmiUd}
\safemath{\bmiV}{\bmiVd}
\safemath{\bmiW}{\bmiWd}
\safemath{\bmiX}{\bmiXd}
\safemath{\bmiY}{\bmiYd}
\safemath{\bmiZ}{\bmiZd}

\safemath{\bmiDelta}{\bmiDeltad}
\safemath{\bmiGamma}{\bmiGammad}
\safemath{\bmiLambda}{\bmiLambdad}
\safemath{\bmiOmega}{\bmiOmegad}
\safemath{\bmiPhi}{\bmiPhid}
\safemath{\bmiPi}{\bmiPid}
\safemath{\bmiPsi}{\bmiPsid}
\safemath{\bmiSigma}{\bmiSigmad}
\safemath{\bmiTheta}{\bmiThetad}
\safemath{\bmiUpsilon}{\bmiUpsilond}
\safemath{\bmiXi}{\bmiXid}


\safemath{\evA}{\mathcal{A}}
\safemath{\evB}{\mathcal{B}}
\safemath{\evC}{\mathcal{C}}
\safemath{\evD}{\mathcal{D}}
\safemath{\evE}{\mathcal{E}}
\safemath{\evF}{\mathcal{F}}
\safemath{\evG}{\mathcal{G}}
\safemath{\evH}{\mathcal{H}}
\safemath{\evI}{\mathcal{I}}
\safemath{\evJ}{\mathcal{J}}
\safemath{\evK}{\mathcal{K}}
\safemath{\evL}{\mathcal{L}}
\safemath{\evM}{\mathcal{M}}
\safemath{\evN}{\mathcal{N}}
\safemath{\evO}{\mathcal{O}}
\safemath{\evP}{\mathcal{P}}
\safemath{\evQ}{\mathcal{Q}}
\safemath{\evR}{\mathcal{R}}
\safemath{\evS}{\mathcal{S}}
\safemath{\evT}{\mathcal{T}}
\safemath{\evU}{\mathcal{U}}
\safemath{\evV}{\mathcal{V}}
\safemath{\evW}{\mathcal{W}}
\safemath{\evX}{\mathcal{X}}
\safemath{\evY}{\mathcal{Y}}
\safemath{\evZ}{\mathcal{Z}}

\safemath{\setA}{\mathcal{A}}
\safemath{\setB}{\mathcal{B}}
\safemath{\setC}{\mathcal{C}}
\safemath{\setD}{\mathcal{D}}
\safemath{\setE}{\mathcal{E}}
\safemath{\setF}{\mathcal{F}}
\safemath{\setG}{\mathcal{G}}
\safemath{\setH}{\mathcal{H}}
\safemath{\setI}{\mathcal{I}}
\safemath{\setJ}{\mathcal{J}}
\safemath{\setK}{\mathcal{K}}
\safemath{\setL}{\mathcal{L}}
\safemath{\setM}{\mathcal{M}}
\safemath{\setN}{\mathcal{N}}
\safemath{\setO}{\mathcal{O}}
\safemath{\setP}{\mathcal{P}}
\safemath{\setQ}{\mathcal{Q}}
\safemath{\setR}{\mathcal{R}}
\safemath{\setS}{\mathcal{S}}
\safemath{\setT}{\mathcal{T}}
\safemath{\setU}{\mathcal{U}}
\safemath{\setV}{\mathcal{V}}
\safemath{\setW}{\mathcal{W}}
\safemath{\setX}{\mathcal{X}}
\safemath{\setY}{\mathcal{Y}}
\safemath{\setZ}{\mathcal{Z}}
\safemath{\emptySet}{\varnothing}

\safemath{\colA}{\mathscr{A}}
\safemath{\colB}{\mathscr{B}}
\safemath{\colC}{\mathscr{C}}
\safemath{\colD}{\mathscr{D}}
\safemath{\colE}{\mathscr{E}}
\safemath{\colF}{\mathscr{F}}
\safemath{\colG}{\mathscr{G}}
\safemath{\colH}{\mathscr{H}}
\safemath{\colI}{\mathscr{I}}
\safemath{\colJ}{\mathscr{J}}
\safemath{\colK}{\mathscr{K}}
\safemath{\colL}{\mathscr{L}}
\safemath{\colM}{\mathscr{M}}
\safemath{\colN}{\mathscr{N}}
\safemath{\colO}{\mathscr{O}}
\safemath{\colP}{\mathscr{P}}
\safemath{\colQ}{\mathscr{Q}}
\safemath{\colR}{\mathscr{R}}
\safemath{\colS}{\mathscr{S}}
\safemath{\colT}{\mathscr{T}}
\safemath{\colU}{\mathscr{U}}
\safemath{\colV}{\mathscr{V}}
\safemath{\colW}{\mathscr{W}}
\safemath{\colX}{\mathscr{X}}
\safemath{\colY}{\mathscr{Y}}
\safemath{\colZ}{\mathscr{Z}}

\safemath{\opA}{\mathbb{A}}
\safemath{\opB}{\mathbb{B}}
\safemath{\opC}{\mathbb{C}}
\safemath{\opD}{\mathbb{D}}
\safemath{\opE}{\mathbb{E}}
\safemath{\opF}{\mathbb{F}}
\safemath{\opG}{\mathbb{G}}
\safemath{\opH}{\mathbb{H}}
\safemath{\opI}{\mathbb{I}}
\safemath{\opJ}{\mathbb{J}}
\safemath{\opK}{\mathbb{K}}
\safemath{\opL}{\mathbb{L}}
\safemath{\opM}{\mathbb{M}}
\safemath{\opN}{\mathbb{N}}
\safemath{\opO}{\mathbb{O}}
\safemath{\opP}{\mathbb{P}}
\safemath{\opQ}{\mathbb{Q}}
\safemath{\opR}{\mathbb{R}}
\safemath{\opS}{\mathbb{S}}
\safemath{\opT}{\mathbb{T}}
\safemath{\opU}{\mathbb{U}}
\safemath{\opV}{\mathbb{V}}
\safemath{\opW}{\mathbb{W}}
\safemath{\opX}{\mathbb{X}}
\safemath{\opY}{\mathbb{Y}}
\safemath{\opZ}{\mathbb{Z}}
\safemath{\opZero}{\mathbb{O}}
\safemath{\identityop}{\opI}


\safemath{\sca}{a}
\safemath{\scb}{b}
\safemath{\scc}{c}
\safemath{\scd}{d}
\safemath{\sce}{e}
\safemath{\scf}{f}
\safemath{\scg}{g}
\safemath{\sch}{h}
\safemath{\sci}{i}
\safemath{\scj}{j}
\safemath{\sck}{k}
\safemath{\scl}{l}
\safemath{\scm}{m}
\safemath{\scn}{n}
\safemath{\sco}{o}
\safemath{\scp}{p}
\safemath{\scq}{q}
\safemath{\scr}{r}
\safemath{\scs}{s}
\safemath{\sct}{t}
\safemath{\scu}{u}
\safemath{\scv}{v}
\safemath{\scw}{w}
\safemath{\scx}{x}
\safemath{\scy}{y}
\safemath{\scz}{z}

\safemath{\scA}{A}
\safemath{\scB}{B}
\safemath{\scC}{C}
\safemath{\scD}{D}
\safemath{\scE}{E}
\safemath{\scF}{F}
\safemath{\scG}{G}
\safemath{\scH}{H}
\safemath{\scI}{I}
\safemath{\scJ}{J}
\safemath{\scK}{K}
\safemath{\scL}{L}
\safemath{\scM}{M}
\safemath{\scN}{N}
\safemath{\scO}{O}
\safemath{\scP}{P}
\safemath{\scQ}{Q}
\safemath{\scR}{R}
\safemath{\scS}{S}
\safemath{\scT}{T}
\safemath{\scU}{U}
\safemath{\scV}{V}
\safemath{\scW}{W}
\safemath{\scX}{X}
\safemath{\scY}{Y}
\safemath{\scZ}{Z}

\safemath{\scalpha}{\alpha}
\safemath{\scbeta}{\beta}
\safemath{\scchi}{\chi}
\safemath{\scdelta}{\delta}
\safemath{\scepsilon}{\epsilon}
\safemath{\scvarepsilon}{\varepsilon}
\safemath{\sceta}{\eta}
\safemath{\scgamma}{\gamma}
\safemath{\sciota}{\iota}
\safemath{\sckappa}{\kappa}
\safemath{\scvarkappa}{\varkappa}
\safemath{\sclambda}{\lambda}
\safemath{\scmu}{\mu}
\safemath{\scnu}{\nu}
\safemath{\scomega}{\omega}
\safemath{\scphi}{\phi}
\safemath{\scvarphi}{\varphi}
\safemath{\scpi}{\pi}
\safemath{\scvarpi}{\varpi}
\safemath{\scpsi}{\psi}
\safemath{\scrho}{\rho}
\safemath{\scvarrho}{\varrho}
\safemath{\scsigma}{\sigma}
\safemath{\scvarsigma}{\varsigma}
\safemath{\sctau}{\tau}
\safemath{\sctheta}{\theta}
\safemath{\scvartheta}{\vartheta}
\safemath{\scupsilon}{\upsilon}
\safemath{\scxi}{\xi}
\safemath{\sczeta}{\zeta}


\safemath{\veca}{\mathbf{a}}
\safemath{\vecb}{\mathbf{b}}
\safemath{\vecc}{\mathbf{c}}
\safemath{\vecd}{\mathbf{d}}
\safemath{\vece}{\mathbf{e}}
\safemath{\vecf}{\mathbf{f}}
\safemath{\vecg}{\mathbf{g}}
\safemath{\vech}{\mathbf{h}}
\safemath{\veci}{\mathbf{i}}
\safemath{\vecj}{\mathbf{j}}
\safemath{\veck}{\mathbf{k}}
\safemath{\vecl}{\mathbf{l}}
\safemath{\vecm}{\mathbf{m}}
\safemath{\vecn}{\mathbf{n}}
\safemath{\veco}{\mathbf{o}}
\safemath{\vecp}{\mathbf{p}}
\safemath{\vecq}{\mathbf{q}}
\safemath{\vecr}{\mathbf{r}}
\safemath{\vecs}{\mathbf{s}}
\safemath{\vect}{\mathbf{t}}
\safemath{\vecu}{\mathbf{u}}
\safemath{\vecv}{\mathbf{v}}
\safemath{\vecw}{\mathbf{w}}
\safemath{\vecx}{\mathbf{x}}
\safemath{\vecy}{\mathbf{y}}
\safemath{\vecz}{\mathbf{z}}
\safemath{\veczero}{\mathbf{0}}
\safemath{\vecone}{\mathbf{1}}

\safemath{\vecalpha}{\upalpha}
\safemath{\vecbeta}{\upbeta}
\safemath{\vecchi}{\upchi}
\safemath{\vecdelta}{\updelta}
\safemath{\vecepsilon}{\upepsilon}
\safemath{\vecvarepsilon}{\upvarepsilon}
\safemath{\veceta}{\upeta}
\safemath{\vecgamma}{\upgamma}
\safemath{\veciota}{\upiota}
\safemath{\veckappa}{\upkappa}
\safemath{\veclambda}{\uplambda}
\safemath{\vecmu}{\text{\textmu}}
\safemath{\vecnu}{\upnu}
\safemath{\vecomega}{\upomega}
\safemath{\vecphi}{\upphi}
\safemath{\vecvarphi}{\upvarphi}
\safemath{\vecpi}{\uppi}
\safemath{\vecvarpi}{\upvarpi}
\safemath{\vecpsi}{\uppsi}
\safemath{\vecrho}{\uprho}
\safemath{\vecvarrho}{\upvarrho}
\safemath{\vecsigma}{\upsigma}
\safemath{\vecvarsigma}{\upvarsigma}
\safemath{\vectau}{\uptau}
\safemath{\vectheta}{\uptheta}
\safemath{\vecvartheta}{\upvartheta}
\safemath{\vecupsilon}{\upupsilon}
\safemath{\vecxi}{\upxi}
\safemath{\veczeta}{\upzeta}


\safemath{\vecac}{a}
\safemath{\vecbc}{b}
\safemath{\veccc}{c}
\safemath{\vecdc}{d}
\safemath{\vecec}{e}
\safemath{\vecfc}{f}
\safemath{\vecgc}{g}
\safemath{\vechc}{h}
\safemath{\vecic}{i}
\safemath{\vecjc}{j}
\safemath{\veckc}{k}
\safemath{\veclc}{l}
\safemath{\vecmc}{m}
\safemath{\vecnc}{n}
\safemath{\vecoc}{o}
\safemath{\vecpc}{p}
\safemath{\vecqc}{q}
\safemath{\vecrc}{r}
\safemath{\vecsc}{s}
\safemath{\vectc}{t}
\safemath{\vecuc}{u}
\safemath{\vecvc}{v}
\safemath{\vecwc}{w}
\safemath{\vecxc}{x}
\safemath{\vecyc}{y}
\safemath{\veczc}{z}


\safemath{\matA}{\mathbf{A}}
\safemath{\matB}{\mathbf{B}}
\safemath{\matC}{\mathbf{C}}
\safemath{\matD}{\mathbf{D}}
\safemath{\matE}{\mathbf{E}}
\safemath{\matF}{\mathbf{F}}
\safemath{\matG}{\mathbf{G}}
\safemath{\matH}{\mathbf{H}}
\safemath{\matI}{\mathbf{I}}
\safemath{\matJ}{\mathbf{J}}
\safemath{\matK}{\mathbf{K}}
\safemath{\matL}{\mathbf{L}}
\safemath{\matM}{\mathbf{M}}
\safemath{\matN}{\mathbf{N}}
\safemath{\matO}{\mathbf{O}}
\safemath{\matP}{\mathbf{P}}
\safemath{\matQ}{\mathbf{Q}}
\safemath{\matR}{\mathbf{R}}
\safemath{\matS}{\mathbf{S}}
\safemath{\matT}{\mathbf{T}}
\safemath{\matU}{\mathbf{U}}
\safemath{\matV}{\mathbf{V}}
\safemath{\matW}{\mathbf{W}}
\safemath{\matX}{\mathbf{X}}
\safemath{\matY}{\mathbf{Y}}
\safemath{\matZ}{\mathbf{Z}}
\safemath{\matzero}{\mathbf{0}}

\safemath{\matDelta}{\Updelta}
\safemath{\matGamma}{\Upgammma}
\safemath{\matLambda}{\Uplambda}
\safemath{\matOmega}{\Upomega}
\safemath{\matPhi}{\Upphi}
\safemath{\matPi}{\Uppi}
\safemath{\matPsi}{\Uppsi}
\safemath{\matSigma}{\Upsigma}
\safemath{\matTheta}{\Uptheta}
\safemath{\matUpsilon}{\Upupsilon}
\safemath{\matXi}{\Upxi}

\safemath{\matidentity}{\matI}
\safemath{\vecunit}{\vece} 
\safemath{\matone}{\matO}

\safemath{\matAc}{a}
\safemath{\matBc}{b}
\safemath{\matCc}{c}
\safemath{\matDc}{d}
\safemath{\matEc}{e}
\safemath{\matFc}{f}
\safemath{\matGc}{g}
\safemath{\matHc}{h}
\safemath{\matIc}{i}
\safemath{\matJc}{j}
\safemath{\matKc}{k}
\safemath{\matLc}{l}
\safemath{\matMc}{m}
\safemath{\matNc}{n}
\safemath{\matOc}{o}
\safemath{\matPc}{p}
\safemath{\matQc}{q}
\safemath{\matRc}{r}
\safemath{\matSc}{s}
\safemath{\matTc}{t}
\safemath{\matUc}{u}
\safemath{\matVc}{v}
\safemath{\matWc}{w}
\safemath{\matXc}{x}
\safemath{\matYc}{y}
\safemath{\matZc}{z}


\safemath{\rnda}{\bmia}
\safemath{\rndb}{\bmib}
\safemath{\rndc}{\bmic}
\safemath{\rndd}{\bmid}
\safemath{\rnde}{\bmie}
\safemath{\rndf}{\bmif}
\safemath{\rndg}{\bmig}
\safemath{\rndh}{\bmih}
\safemath{\rndi}{\bmii}
\safemath{\rndj}{\bmij}
\safemath{\rndk}{\bmik}
\safemath{\rndl}{\bmil}
\safemath{\rndm}{\bmim}
\safemath{\rndn}{\bmin}
\safemath{\rndo}{\bmio}
\safemath{\rndp}{\bmip}
\safemath{\rndq}{\bmiq}
\safemath{\rndr}{\bmir}
\safemath{\rnds}{\bmis}
\safemath{\rndt}{\bmit}
\safemath{\rndu}{\bmiu}
\safemath{\rndv}{\bmiv}
\safemath{\rndw}{\bmiw}
\safemath{\rndx}{\bmix}
\safemath{\rndy}{\bmiy}
\safemath{\rndz}{\bmiz}

\safemath{\rndA}{\bmiA}
\safemath{\rndB}{\bmiB}
\safemath{\rndC}{\bmiC}
\safemath{\rndD}{\bmiD}
\safemath{\rndE}{\bmiE}
\safemath{\rndF}{\bmiF}
\safemath{\rndG}{\bmiG}
\safemath{\rndH}{\bmiH}
\safemath{\rndI}{\bmiI}
\safemath{\rndJ}{\bmiJ}
\safemath{\rndK}{\bmiK}
\safemath{\rndL}{\bmiL}
\safemath{\rndM}{\bmiM}
\safemath{\rndN}{\bmiN}
\safemath{\rndO}{\bmiO}
\safemath{\rndP}{\bmiP}
\safemath{\rndQ}{\bmiQ}
\safemath{\rndR}{\bmiR}
\safemath{\rndS}{\bmiS}
\safemath{\rndT}{\bmiT}
\safemath{\rndU}{\bmiU}
\safemath{\rndV}{\bmiV}
\safemath{\rndW}{\bmiW}
\safemath{\rndX}{\bmiX}
\safemath{\rndY}{\bmiY}
\safemath{\rndZ}{\bmiZ}

\safemath{\rndalpha}{\bmialpha}
\safemath{\rndbeta}{\bmibeta}
\safemath{\rndchi}{\bmichi}
\safemath{\rnddelta}{\bmidelta}
\safemath{\rndepsilon}{\bmiepsilon}
\safemath{\rndvarepsilon}{\bmivarepsilon}
\safemath{\rndeta}{\bmieta}
\safemath{\rndgamma}{\bmigamma}
\safemath{\rndiota}{\bmiiota}
\safemath{\rndkappa}{\bmikappa}
\safemath{\rndlambda}{\bmilambda}
\safemath{\rndmu}{\bmimu}
\safemath{\rndnu}{\bminu}
\safemath{\rndomega}{\bmiomega}
\safemath{\rndphi}{\bmiphi}
\safemath{\rndvarphi}{\bmivarphi}
\safemath{\rndpi}{\bmipi}
\safemath{\rndvarpi}{\bmivarpi}
\safemath{\rndpsi}{\bmipsi}
\safemath{\rndrho}{\bmirho}
\safemath{\rndvarrho}{\bmivarrho}
\safemath{\rndsigma}{\bmisigma}
\safemath{\rndvarsigma}{\bmivarsigma}
\safemath{\rndtau}{\bmitau}
\safemath{\rndtheta}{\bmitheta}
\safemath{\rndvartheta}{\bmivartheta}
\safemath{\rndupsilon}{\bmiupsilon}
\safemath{\rndxi}{\bmixi}
\safemath{\rndzeta}{\bmizeta}

\safemath{\rveca}{\bmua}
\safemath{\rvecb}{\bmub}
\safemath{\rvecc}{\bmuc}
\safemath{\rvecd}{\bmud}
\safemath{\rvece}{\bmue}
\safemath{\rvecf}{\bmuf}
\safemath{\rvecg}{\bmug}
\safemath{\rvech}{\bmuh}
\safemath{\rveci}{\bmui}
\safemath{\rvecj}{\bmuj}
\safemath{\rveck}{\bmuk}
\safemath{\rvecl}{\bmul}
\safemath{\rvecm}{\bmum}
\safemath{\rvecn}{\bmun}
\safemath{\rveco}{\bmuo}
\safemath{\rvecp}{\bmup}
\safemath{\rvecq}{\bmuq}
\safemath{\rvecr}{\bmur}
\safemath{\rvecs}{\bmus}
\safemath{\rvect}{\bmut}
\safemath{\rvecu}{\bmuu}
\safemath{\rvecv}{\bmuv}
\safemath{\rvecw}{\bmuw}
\safemath{\rvecx}{\bmux}
\safemath{\rvecy}{\bmuy}
\safemath{\rvecz}{\bmuz}

\safemath{\rvecalpha}{\bmualpha}
\safemath{\rvecbeta}{\bmubeta}
\safemath{\rvecchi}{\bmuchi}
\safemath{\rvecdelta}{\bmudelta}
\safemath{\rvecepsilon}{\bmuepsilon}
\safemath{\rvecvarepsilon}{\bmuvarepsilon}
\safemath{\rveceta}{\bmueta}
\safemath{\rvecgamma}{\bmugamma}
\safemath{\rveciota}{\bmuiota}
\safemath{\rveckappa}{\bmukappa}
\safemath{\rveclambda}{\bmulambda}
\safemath{\rvecmu}{\bmumu}
\safemath{\rvecnu}{\bmunu}
\safemath{\rvecomega}{\bmuomega}
\safemath{\rvecphi}{\bmuphi}
\safemath{\rvecvarphi}{\bmuvarphi}
\safemath{\rvecpi}{\bmupi}
\safemath{\rvecvarpi}{\bmuvarpi}
\safemath{\rvecpsi}{\bmupsi}
\safemath{\rvecrho}{\bmurho}
\safemath{\rvecvarrho}{\bmuvarrho}
\safemath{\rvecsigma}{\bmusigma}
\safemath{\rvecvarsigma}{\bmuvarsigma}
\safemath{\rvectau}{\bmutau}
\safemath{\rvectheta}{\bmutheta}
\safemath{\rvecvartheta}{\bmuvartheta}
\safemath{\rvecupsilon}{\bmuupsilon}
\safemath{\rvecxi}{\bmuxi}
\safemath{\rveczeta}{\bmuzeta}

\safemath{\rmatA}{\bmuA}
\safemath{\rmatB}{\bmuB}
\safemath{\rmatC}{\bmuC}
\safemath{\rmatD}{\bmuD}
\safemath{\rmatE}{\bmuE}
\safemath{\rmatF}{\bmuF}
\safemath{\rmatG}{\bmuG}
\safemath{\rmatH}{\bmuH}
\safemath{\rmatI}{\bmuI}
\safemath{\rmatJ}{\bmuJ}
\safemath{\rmatK}{\bmuK}
\safemath{\rmatL}{\bmuL}
\safemath{\rmatM}{\bmuM}
\safemath{\rmatN}{\bmuN}
\safemath{\rmatO}{\bmuO}
\safemath{\rmatP}{\bmuP}
\safemath{\rmatQ}{\bmuQ}
\safemath{\rmatR}{\bmuR}
\safemath{\rmatS}{\bmuS}
\safemath{\rmatT}{\bmuT}
\safemath{\rmatU}{\bmuU}
\safemath{\rmatV}{\bmuV}
\safemath{\rmatW}{\bmuW}
\safemath{\rmatX}{\bmuX}
\safemath{\rmatY}{\bmuY}
\safemath{\rmatZ}{\bmuZ}

\safemath{\rmatDelta}{\bmuDelta}
\safemath{\rmatGamma}{\bmuGamma}
\safemath{\rmatLambda}{\bmuLambda}
\safemath{\rmatOmega}{\bmuOmega}
\safemath{\rmatPhi}{\bmuPhi}
\safemath{\rmatPi}{\bmuPi}
\safemath{\rmatPsi}{\bmuPsi}
\safemath{\rmatSigma}{\bmuSigma}
\safemath{\rmatTheta}{\bmuTheta}
\safemath{\rmatUpsilon}{\bmuUpsilon}
\safemath{\rmatXi}{\bmuXi}


\newenvironment{textbmatrix}{    \setlength{\arraycolsep}{2.5pt}%
    \big[\begin{matrix}}{\end{matrix}\big]%
    \raisebox{0.08ex}{\vphantom{M}}}

\def\btm{\begin{textbmatrix}}
\def\etm{\end{textbmatrix}}



\newcommand{\lefto}{\mathopen{}\left}


\newcommand{\dummyrel}[1]{\mathrel{\hphantom{#1}}\strut\mskip-\medmuskip}


\DeclareMathOperator{\diag}{diag}            
\DeclareMathOperator{\adj}{adj}              
\DeclareMathOperator{\sign}{sign}            
\DeclareMathOperator{\sinc}{sinc}            
\DeclareMathOperator*{\argmin}{arg\;min}     
\DeclareMathOperator{\conv}{\star}            


\safemath{\fun}{f}                        
\safemath{\altfun}{\scg}                        
\safemath{\vectr}{\veca}         
\safemath{\vectrc}{\vecac}         

\safemath{\vrbl}{t}                        
\safemath{\altvrbl}{y}                        
\safemath{\aaltvrbl}{z}                        

\safemath{\vvrbl}{\vecx}                        
\safemath{\altvvrbl}{\vecy}                        
\safemath{\aaltvvrbl}{\vecz}                        

\safemath{\aaltfun}{\sch}
\safemath{\bel}{\sce}                    
\safemath{\altbel}{\sce}                    
\safemath{\frel}{g}                    
\safemath{\altfrel}{g}                    
\safemath{\dfrel}{\tilde{g}}                    
\safemath{\altdfrel}{\tilde{g}}                    

\safemath{\mat}{\matA}                        
\safemath{\matc}{\matAc}                        
\safemath{\altmat}{\matB}                        
\safemath{\altmatc}{\matBc}                        

\safemath{\altvectr}{\vecv}                        
\safemath{\altvectrc}{\vecvc}                        

\newcommand{\nullspace}{\setN}                 

\newcommand{\abs}[1]{\mathchoice{{\left\lvert#1\right\rvert}}{{\bigl\lvert#1\bigr\rvert}}{{\left\lvert#1\right\rvert}}{{\left\lvert#1\right\rvert}}}        

\newcommand{\union}{\cup}                    

\newcommand{\intersect}{\cap}                

\newcommand{\twonorm}[1]{\lVert#1\rVert_2}        
\newcommand{\onenorm}[1]{\lVert#1\rVert_1}        
\newcommand{\infnorm}[1]{\lVert#1\rVert_{\infty}}        
\newcommand{\zeronorm}[1]{\lVert#1\rVert_0}        
\newcommand{\vecnorm}[1]{\lVert#1\rVert}        
\newcommand{\conj}[1]{\ensuremath{#1^{*}}}     
\newcommand{\tp}[1]{\ensuremath{#1^{\mathsf{T}}}}         
\newcommand{\herm}[1]{\ensuremath{#1^{\mathsf{H}}}}     
\newcommand{\inv}[1]{\ensuremath{#1^{-1}}}     

\safemath{\dirac}{\delta}                    
\safemath{\diracp}{\dirac(\time)}            
\safemath{\krond}{\dirac}                    
\safemath{\indfun}{I}                        
\safemath{\stepfun}{u}                        


\safemath{\upto}{\uparrow}
\safemath{\downto}{\downarrow}
\safemath{\iu}{\mathrm{i}}                            
\safemath{\maj}{\succ}
\newcommand{\dftmat}[1]{\matF_{#1}}            
\safemath{\mdft}{\dftmat{}}                    
\safemath{\runity}{\beta}                    
\safemath{\eval}{\lambda}                    

\safemath{\veval}{\veclambda}                
\safemath{\littleo}{\sco}                    

\let\im\undefined
\safemath{\re}{\Re}                
\safemath{\im}{\Im}                

\safemath{\euclidspace}{\complexset}            
\safemath{\confunspace}{\setC}                
\newcommand{\banachseqspace}[1]{l^{#1}}        
\safemath{\hilseqspace}{\banachseqspace{2}}    
\newcommand{\banachfunspace}[1]{\setL^{#1}}    
\safemath{\hilfunspace}{\banachfunspace{2}}    
\safemath{\hilfunspacep}{\hilfunspace(\complexset)}    
\safemath{\schwarzspace}{\setS}                

\newcommand{\hadj}[1]{#1^{\star}}            

\safemath{\SNR}{\rho}                 
\safemath{\SINR}{\text{\sc sinr}}                 
\safemath{\No}{N_0}                            
\safemath{\Es}{E_s}                            
\safemath{\Eb}{E_b}                            
\safemath{\EbNo}{\frac{\Eb}{\No}}
\safemath{\EsNo}{\frac{\Es}{\No}}
\safemath{\NoVar}{\variance}                 

\let\time\undefined
\safemath{\time}{\sct}                        
\safemath{\dtime}{\sck}                        
\safemath{\delay}{\sctau}                    
\safemath{\ddelay}{\scl}                        
\safemath{\doppler}{\scnu}                    
\safemath{\ddoppler}{\scm}                    
\safemath{\freq}{\scf}                        
\safemath{\dfreq}{\scn}                        
\safemath{\Dtime}{\Delta\time}
\safemath{\Dfreq}{\Delta\freq}
\safemath{\Ddtime}{\dtime}
\safemath{\Ddfreq}{\dfreq}
\safemath{\bandwidth}{\scB}
\safemath{\maxdoppler}{\doppler_{0}}            
\safemath{\maxdelay}{\delay_{0}}                
\safemath{\spread}{\Delta_{\CHop}}            
\DeclareMathOperator{\CHop}{\ensuremath{\opH}} 
\safemath{\kernel}{\rndk_{\CHop}}            
\safemath{\kernelp}{\kernel(\time,\time')}    
\safemath{\tvir}{\rndh_{\CHop}}                
\safemath{\tvirp}{\tvir(\time,\delay)}        
\safemath{\tvirc}{\conj{\rndh}_{\CHop}}        
\safemath{\tvtf}{\rndl_{\CHop}}                
\safemath{\tvtfp}{\tvtf(\time,\freq)}            
\safemath{\tvtfc}{\conj{\rndl}_{\CHop}}        
\safemath{\spf}{\rnds_{\CHop}}                
\safemath{\spfp}{\spf(\doppler,\delay)}        
\safemath{\spfc}{\conj{\rnds}_{\CHop}}        
\safemath{\bff}{\rndb_{\CHop}}                
\safemath{\bffp}{\bff(\doppler,\freq)}        

\safemath{\irc}{\scr_{\rndh}}                
\safemath{\tfc}{\scr_{\rndl}}                
\safemath{\spc}{\scr_{\rnds}}                
\safemath{\bfc}{\scr_{\rndb}}                

\safemath{\scaf}{\scc_{\rnds}}                
\safemath{\scafp}{\scaf(\doppler,\delay)}        
\safemath{\ccf}{\scc_{\rndl}}                
\safemath{\ccfp}{\ccf(\Dtime,\Dfreq)}            
\safemath{\cic}{\scc_{\rndh}}                
\safemath{\cicp}{\cic(\Dtime,\delay)}            

\safemath{\mi}{\scI}                            
\safemath{\capacity}{\scC}                    

\DeclareMathOperator{\Prob}{\opP}        

\safemath{\normal}{\mathcal{N}}            
\safemath{\jpg}{\mathcal{CN}}            
\safemath{\uniform}{\mathcal{U}}                
\safemath{\mchain}{\leftrightarrow}        

\safemath{\dB}{\,\mathrm{dB}}
\safemath{\dBm}{\,\mathrm{dBm}}
\safemath{\Hz}{\,\mathrm{Hz}}
\safemath{\kHz}{\,\mathrm{kHz}}
\safemath{\MHz}{\,\mathrm{MHz}}
\safemath{\GHz}{\,\mathrm{GHz}}
\safemath{\THz}{\,\mathrm{THz}}
\safemath{\s}{\,\mathrm{s}}
\safemath{\ms}{\,\mathrm{ms}}
\safemath{\mus}{\,\mathrm{\text{\textmu}s}}
\safemath{\ns}{\,\mathrm{ns}}
\safemath{\ps}{\,\mathrm{ps}}
\safemath{\meter}{\,\mathrm{m}}
\safemath{\mm}{\,\mathrm{mm}}
\safemath{\cm}{\,\mathrm{cm}}
\safemath{\m}{\,\mathrm{m}}
\safemath{\W}{\,\mathrm{W}}
\safemath{\mW}{\, \mathrm{mW}}
\safemath{\J}{\,\mathrm{J}}
\safemath{\K}{\,\mathrm{K}}
\safemath{\bit}{\,\mathrm{bit}}
\safemath{\nat}{\,\mathrm{nat}}


\safemath{\define}{\triangleq}                    

\providecommand{\inner}[2]{\ensuremath{\left\langle#1,#2\right\rangle}}
\safemath{\equivalent}{\sim}
\safemath{\distas}{\sim}                    
\safemath{\sdiff}{\Delta}                
\safemath{\setdiff}{\setminus}                

\safemath{\reals}{\mathbb R}
\safemath{\positivereals}{\reals^{+}}
\safemath{\integers}{\mathbb Z}
\safemath{\posint}{\integers^{+}}
\safemath{\naturals}{\mathbb N}
\safemath{\posnaturals}{\naturals^{+}}
\safemath{\complexset}{\mathbb C}
\safemath{\rationals}{\mathbb Q}

\newcommand{\natseg}[2]{[#1 \text{\phantom{\tiny{.}}:\phantom{\tiny{.}}} #2]} 

\DeclarePairedDelimiter\floor{\lfloor}{\rfloor}

\safemath{\iSet}{\setI}
\safemath{\rel}{\bowtie}                    
\safemath{\eqrel}{\sim}                     
\safemath{\rlord}{\leq}                     
\safemath{\slord}{<}                        
\safemath{\rpord}{\preceq}                  
\safemath{\rrpord}{\succeq}                 
\safemath{\spord}{\prec}                    
\safemath{\sig}{\sigma}                     
\safemath{\metric}{d}                       

\safemath{\setfun}{\Phi}                    
\safemath{\measure}{\mu}                    
\safemath{\altmeasure}{\lambda}             
\newcommand{\outerm}[1]{#1^{\star}}         
\newcommand{\innerm}[1]{#1_{\star}}         
\safemath{\omeasure}{\outerm{\measure}}     
\safemath{\imeasure}{\innerm{\measure}}     
\safemath{\aecol}{\colS^{\star}_{\measure}} 
\safemath{\emeasure}{\bar{\measure}_{0}}    
\safemath{\rmeasure}{\tilde{\measure}}      
\safemath{\bmeasure}{\measure_{0}}          
\safemath{\glength}{\measure_{\altfun}}     
\safemath{\lebmea}{\lambda}                 
\safemath{\blebmea}{\lebmea_{0}}            
\safemath{\sfun}{s}                         
\safemath{\absintspace}{\colL^{1}}          
\safemath{\sqintspace}{\colL^{2}}           
\safemath{\abssumspace}{l^{1}}              
\safemath{\sqsumspace}{l^{2}}               

\safemath{\sfield}{\setF}                   
\safemath{\vectors}{\setV}                  
\safemath{\vecspace}{(\vectors,\sfield)}    
\safemath{\linspace}{\setV}                 
\safemath{\clinspace}{(\linspace,\sfield)}  
\safemath{\nspace}{\setU}                   
\safemath{\metspace}{\setM}                 
\safemath{\bspace}{\setB}                   
\safemath{\ipspace}{\setG}                  
\safemath{\hilspace}{\setH}                 
\safemath{\blospace}{\setG}                 
\safemath{\lop}{\opT}                       
\safemath{\altlop}{\opS}                    
\safemath{\nullsp}{\nullspace(\lop)}        
\safemath{\lfun}{l}                         
\safemath{\altlfun}{g}                      
\newcommand{\dual}[1]{#1^{'}}               
\safemath{\dsum}{\oplus}                    
\safemath{\funspace}{\colL}                 
\renewcommand{\adj}[1]{#1^{\times}}         
\safemath{\adjlop}{\adj{\lop}}              
\safemath{\hadjlop}{\hadj{\lop}}            
\safemath{\tow}{\xrightarrow{w}}            
\safemath{\tows}{\xrightarrow{w^{*}}}       
\safemath{\cparam}{\lambda}
\safemath{\lopl}{\lop_{\cparam}}
\safemath{\iop}{\opI}                       
\safemath{\resolop}{\opR}                   
\safemath{\resolvent}{\resolop_{\cparam}(\lop)}    
\safemath{\reset}{\setQ}
\safemath{\spectrum}{\setS}
\safemath{\resolset}{\reset(\lop)}          
\safemath{\lopspec}{\spectrum(\lop)}        
\safemath{\pspec}{\spectrum_{p}(\lop)}      
\safemath{\cspec}{\spectrum_{c}(\lop)}      
\safemath{\rspec}{\spectrum_{r}(\lop)}      
\safemath{\ev}{\cparam}                     
\newcommand{\specrad}[1]{r_{#1}}            
\safemath{\lopsrad}{\specrad{\lop}}         
\safemath{\pop}{\opP}                       

\safemath{\specfam}{\colE}                  
\safemath{\specop}{\opE_{\cparam}}          
\safemath{\altspecop}{\opE_{\mu}}           
\safemath{\vmulti}{\vecone}                 
\safemath{\unitaryop}{\opU}                 
\safemath{\sval}{\sigma}                    
\safemath{\corrcoef}{\rho}                  
\safemath{\sangle}{\theta}                  

\let\time\undefined
\safemath{\iset}{\setI}                     
\safemath{\shift}{\nu}
\safemath{\scale}{\alpha}
\safemath{\time}{t}
\safemath{\specfreq}{\theta}
\newcommand{\transopgen}[1]{\opT_{#1}}      
\safemath{\transop}{\transopgen{\delay}}
\newcommand{\modopgen}[1]{\opM_{#1}}        
\safemath{\modop}{\modopgen{\shift}}
\newcommand{\dilopgen}[1]{\opD_{#1}}        
\safemath{\dilop}{\dilopgen{\scale}}
\safemath{\fram}{\setF}                     
\safemath{\dfram}{\dual{\fram}}             
\safemath{\ufb}{B}                          
\safemath{\lfb}{A}                          
\safemath{\sop}{\hadj{\aop}}                
\safemath{\aop}{\opT}                       
\safemath{\fop}{\opS}                       
\safemath{\daop}{\tilde\opT}                
\safemath{\dsop}{\hadj{\tilde\opT}}         

\safemath{\ifop}{\inv{\fop}}                
\safemath{\rifop}{\fop^{-1/2}}              
\safemath{\transeq}{\setT}                  
\safemath{\nfun}{\Phi}                      
\safemath{\funvec}{\vecf}                   
\safemath{\altfunvec}{\vecg}

\safemath{\samplespace}{\Omega}
\safemath{\probspace}{(\samplespace,\sfield,\Prob)}    
\safemath{\ccoef}{\rho}            

\safemath{\infstate}{\vecpi}                 
\safemath{\typset}{\setA_{\epsilon}^{(N)}}   
\safemath{\expequal}{\doteq}                 
\safemath{\code}{C}                          
\safemath{\dstringset}{\setD^{\star}}        
\safemath{\cwlength}{l}                      
\safemath{\elength}{L}                       
\safemath{\extension}{C^{\star}}             
\safemath{\approaches}{\rightarrow}          
\safemath{\evnt}{\setA}                      
\safemath{\altevnt}{\setB}                   
\safemath{\rv}{\rndx}                        
\safemath{\altrv}{\rndy}                     
\safemath{\complexrv}{\rndu}                 
\safemath{\altcrv}{\rndv}                    
\safemath{\rvec}{\rvecx}                     
\safemath{\altrvec}{\rvecy}                  
\safemath{\crvec}{\rvecu}                    
\safemath{\altcrvec}{\rvecv}                 
\safemath{\variance}{\sigma^{2}}             
\safemath{\map}{T}                           
\safemath{\jacobian}{\matJ}                  
\safemath{\wvec}{\rvecw}                     
\safemath{\wrv}{\rndw}                       
\safemath{\orthmat}{\matQ}                   
\safemath{\evmat}{\matLambda}                
\safemath{\identity}{\matidentity}           
\safemath{\innovec}{\vecv}                   
\safemath{\convas}{\xrightarrow{\text{a.s.}}}
\safemath{\convr}{\xrightarrow{\text{r}}}    
\safemath{\convp}{\xrightarrow{\text{P}}}    
\safemath{\convd}{\xrightarrow{\text{D}}}    
\safemath{\ltis}{\opL}                       
\safemath{\ir}{h}                            
\safemath{\tf}{\MakeUppercase{\ir}}          

\newtheorem{thm}{Theorem}
\newtheorem{lem}{Lemma}

\newtheorem{dfn}{Definition}
\newtheorem{prp}{Proposition}


\newcommand*{\fancyrefparlabelprefix}{par}      
\newcommand*{\fancyrefchalabelprefix}{cha}      
\newcommand*{\fancyrefapplabelprefix}{app}      
\newcommand*{\fancyrefthmlabelprefix}{thm}      
\newcommand*{\fancyreflemlabelprefix}{lem}      
\newcommand*{\fancyrefcorlabelprefix}{cor}      
\newcommand*{\fancyrefdeflabelprefix}{def}      
\newcommand*{\fancyrefproplabelprefix}{prop}    
\newcommand*{\fancyrefprplabelprefix}{prp}      

\frefformat{vario}{\fancyrefparlabelprefix}{Part~#1}
\frefformat{vario}{\fancyrefchalabelprefix}{Chapter~#1}
\frefformat{vario}{\fancyrefseclabelprefix}{Section~#1}
\frefformat{vario}{\fancyrefthmlabelprefix}{Theorem~#1}
\frefformat{vario}{\fancyreflemlabelprefix}{Lemma~#1}
\frefformat{vario}{\fancyrefcorlabelprefix}{Corollary~#1}
\frefformat{vario}{\fancyrefdeflabelprefix}{Definition~#1}
\frefformat{vario}{\fancyreffiglabelprefix}{Figure~#1}
\frefformat{vario}{\fancyrefapplabelprefix}{Appendix~#1}
\frefformat{vario}{\fancyrefeqlabelprefix}{(#1)}
\frefformat{vario}{\fancyrefproplabelprefix}{Property~#1}
\frefformat{vario}{\fancyrefprplabelprefix}{Proposition~#1}


%
%

 



\newcommand{\inp}{\vecx}                
\newcommand{\data}{\vecs}               
\newcommand{\noise}{\vecz}              

\newcommand{\N}{N}                      
\newcommand{\sparsity}{S}               
\newcommand{\rr}{r}                     
\newcommand{\hrr}{{\hat r}}             
\newcommand{\npoints}{V}                

\newcommand{\T}{\mathbb{T}}         
\newcommand{\support}{\mathrm{supp}}    

\newcommand{\lo}{{\mathrm{lo}}}         
\newcommand{\hi}{{\mathrm{hi}}}         

\newcommand{\near}[2]{\setN(#1,#2)}     
\newcommand{\nearhi}[1]{\setN(\lhi, #1)}  
\newcommand{\nearh}{\setN_\hi}          
\newcommand{\far}[2]{\setF(#1,#2)}      
\newcommand{\farh}{\setF_\hi}           

\newcommand{\khi}{k_\hi}                
\newcommand{\klo}{k_\lo}                

\newcommand{\flo}{f_\lo}                
\newcommand{\fhi}{f_\hi}                
\newcommand{\llo}{\lambda_\lo}          
\newcommand{\lhi}{\lambda_\hi}          
\newcommand{\fc}{f_c}                   
\newcommand{\lc}{\lambda_c}             

\newcommand{\rset}[2]{\setR(#1,#2)}     

\newcommand{\fval}{\vecf}               
\newcommand{\dval}{\vecd}               
\newcommand{\fvalc}{f}                  
\newcommand{\dvalc}{d}                  
\newcommand{\I}[1]{{I[#1]}}             

\newcommand{\SRF}{\mathrm{SRF}}         
\newcommand{\DSRF}{\mathrm{DSRF}}       

\newcommand{\D}{\Delta}                 

\newcommand{\fejer}[1]{g_{#1}}          

\newcommand{\vt}{v^\tau}
\newcommand{\vtn}{v^\tau}

\newcommand{\cc}{c}                     
\newcommand{\tcc}{\tilde c}             
\newcommand{\CC}[1]{C(#1)}              
\newcommand{\tCC}[1]{\tilde C(#1)}      

\newcommand{\cgsum}{c_{s0}}             
\newcommand{\cgsumd}{c_{s1}}            

\newcommand{\cabshid}{c_k'}             
\newcommand{\cabshidd}{c_k''}           
\newcommand{\ckersum}{{\tilde c}_0}     
\newcommand{\ckersumd}{{\tilde c}_1}    
\newcommand{\ckersumdd}{{\tilde c}_2}   
\newcommand{\cDzinv}{{\bar c}_0}        
\newcommand{\cDtinv}{{\bar c}_2}        
\newcommand{\cEinv}{{\bar c}_E}         

\newcommand{\cFinv}{{\bar c}_F}         
\newcommand{\ca}{c_\alpha}              
\newcommand{\cb}{c_\beta}               
\newcommand{\cg}{c_g}                   
\newcommand{\cgz}{c_g^0}                
\newcommand{\cgd}{c_g^1}                

\newcommand{\cql}{c_l}
\newcommand{\cqlf}{c_{l1}}
\newcommand{\cz}{c_{l2}}
\newcommand{\czz}{c_{l3}}
\newcommand{\cqu}{c_u}
\newcommand{\cqdvd}{c_{u1}}
\newcommand{\cqdvdd}{c_{u2}}
\newcommand{\cqdv}{c_{u0}}
\newcommand{\czdu}{c_{u3}}

\newcommand{\czdduu}{c_{u5}}
\newcommand{\czdduuu}{c_{u6}}
\newcommand{\czdduuuu}{c_{u7}}
\newcommand{\czdduf}{c_{u8}}
\newcommand{\czdduff}{c_{u9}}
\newcommand{\czddufff}{c_{u10}}
\newcommand{\czdduffff}{c_{u11}}
\newcommand{\czdduffd}{c_{u12}}
\newcommand{\czddufffdd}{c_{u13}}
\newcommand{\czddufffddd}{c_{u14}}
\newcommand{\czddufffdddd}{c_{u15}}
\newcommand{\cuos}{c_{u16}}
\newcommand{\cuoss}{c_{u17}}
\newcommand{\cuosss}{c_{u18}}
\newcommand{\cuossss}{c_{u19}}
\newcommand{\cuosssss}{c_{u20}}
\newcommand{\cuossssss}{c_{u21}}
\newcommand{\cuuu}{c_{u22}}
\newcommand{\cuuuu}{c_{u23}}
\newcommand{\cuuuuu}{c_{u24}}
\newcommand{\cuuuuuu}{c_{u25}}
\newcommand{\cuf}{c_{u26}}
\newcommand{\cuff}{c_{u27}}
\newcommand{\cufff}{c_{u28}}
\newcommand{\cuffff}{c_{u29}}
\newcommand{\cufffff}{c_{u30}}
\newcommand{\cufh}{c_{u31}}

\newcommand{\cufk}{c_{u34}}
\newcommand{\cufl}{c_{u35}}
\newcommand{\cufm}{c_{u36}}
\newcommand{\cufp}{c_{u37}}
\newcommand{\cufq}{c_{u38}}
\newcommand{\cufr}{c_{u39}}
\newcommand{\cufs}{c_{u40}}
\newcommand{\cuft}{c_{u41}}
\newcommand{\cufv}{c_{u42}}
\newcommand{\cufw}{c_{u43}}
\newcommand{\cufx}{c_{u44}}
\newcommand{\cufy}{c_{u45}}
\newcommand{\cufz}{c_{u46}}
\newcommand{\cuga}{c_{u47}}
\newcommand{\cugb}{c_{u48}}
\newcommand{\cugc}{c_{u49}}
\newcommand{\cugg}{c_{u50}}
\newcommand{\cugf}{c_{u51}}
\newcommand{\cugh}{c_{u52}}
\newcommand{\cugk}{c_{u53}}
\newcommand{\cugl}{c_{u54}}
\newcommand{\cugm}{c_{u55}}
\newcommand{\cugn}{c_{u56}}
\newcommand{\cugo}{c_{u57}}
\newcommand{\cugp}{c_{u58}}

\usepackage{hyperref}

\hypersetup{
    unicode=false,          
    pdftoolbar=true,        
    pdfmenubar=true,        
    pdffitwindow=false,     
    pdfstartview={FitH},    
    pdfauthor={Veniamin I. Morgenshtern},     
    pdfcreator={Veniamin I. Morgenshtern},   
    pdfproducer={Producer}, 
    pdfnewwindow=true,      
    colorlinks=true,       
     linkcolor=blue!60!black,          
    citecolor=blue!60!black,        
    filecolor=blue!60!black,      
    urlcolor=blue!60!black,
     hypertexnames=false
     %
}
\usepackage{autonum}
\usepackage{enumitem}

\setlength{\oddsidemargin}{0pt}
\setlength{\evensidemargin}{0pt}
\setlength{\textwidth}{6.5in}
\setlength{\topmargin}{-0.3in}
\setlength{\textheight}{8.89in}

\begin{document}
\acrodef{PSF}{point-spread function}

\title{Super-Resolution of Positive Sources on an Arbitrarily Fine Grid}

\pdfinfo{
/Title  (Super-Resolution of Positive Sources on an Arbitrarily Fine Grid)
/Author (Veniamin~I.~Morgenshtern)
/Keywords ()
}

\author{
\parbox{\linewidth}{\centering
Veniamin~I.~Morgenshtern\\\vspace{0.3cm}
Chair of Multimedia Communications and Signal Processing,\\ University of Erlangen-Nuremberg,\\ 91058 Erlangen, Germany
}
}

\maketitle
\begin{abstract} {In super-resolution it is necessary to locate with high precision point 
sources from noisy observations of the spectrum of the signal at low frequencies capped by $\flo$. 
In the case when the point sources are positive and are located 
on a grid, it has been recently established that the super-resolution problem can be solved via 
linear programming in a stable manner and that the method is nearly optimal in the minimax sense.
The quality of the reconstruction critically depends on the 
Rayleigh regularity of the support of the signal; that is, on the maximum number of sources that 
can occur within an interval of side length about $1/\flo$.
This work extends the earlier result 
and shows that the conclusion continues to hold when the locations of the point sources are arbitrary, 
i.e., the grid is arbitrarily fine. The proof relies on new interpolation constructions in Fourier analysis.}
\end{abstract}


\section{Introduction}
The super-resolution problem of positive sources (see \fref{fig:conv}) consists 
of recovering a high-frequency signal
\begin{equation}
    x(w) = \sum_i x_i \dirac(w - w_i)
    \label{eq:spikescont}
\end{equation}
consisting of positive point sources (\emph{spikes}, for short) located at unknown positions $w_i\in [0, 1)$ 
and of unknown intensity $x_i>0$; $\dirac(\cdot)$ is the Dirac delta function.
The signal is observed through a convolution measurement of the form
\begin{equation}
    s(v) = \int \klo (v - w) x(w)d w + z(v),
    \label{eq:convcont}
\end{equation}
where $\klo(\cdot)$ is a low-frequency kernel that erases the high-frequency components 
of the signal and $z(\cdot)$ is noise.

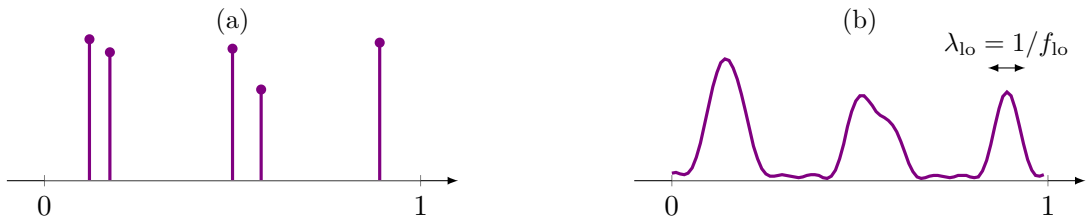
\begin{figure}[ht]
\begin{subfigure}{0.5\textwidth}
\centering
    \caption{}\vspace{-0.3cm}
\begin{tikzpicture}
\begin{axis}[
    width=6cm,scale only axis,height=2cm,
    name= plot1,
    ytick=\empty,
    xtick={0,1},
    axis x line = middle,
    name= plot2,
    every axis title/.style={at={(0.5,1)},above,yshift=-0.2cm},
    hide y axis,
    ymin=0,
    ymax=1,
    xmin=-0.1,
    xmax=1.1]
    \addplot[ycomb,range=0:1,line width=1.2pt,violet,mark options={fill=violet,scale=1},mark=*, mark size=1.3pt]
         table[x index=0,y index=1]{./figures/figspikes.txt};
  \draw[help lines] (axis cs:0,0) -- (axis cs:1,0);
\end{axis}
\end{tikzpicture}
\end{subfigure}
\begin{subfigure}{0.5\textwidth}
    \centering
    \caption{}\vspace{-0.3cm}
    \begin{tikzpicture}
    \begin{axis}[
       width=6cm,scale only axis,height=2cm,
    name= plot1,
        ytick=\empty,
    xtick={0,1},
    axis x line = middle,
    name= plot2,
    every axis title/.style={at={(0.5,1)},above,yshift=-0.2cm},
    hide y axis,
    ymin=0,
    ymax=0.2,
    xmin=-0.1,
    xmax=1.1]
    \addplot[domain=0:1, line width=1.2pt,violet]
             table[x index=0,y index=4]{./figures/figrclass.txt};
      \draw[help lines] (axis cs:0,0) -- (axis cs:1,0);
      \draw[<->,>=latex] (axis cs:0.84,0.15) -- (axis cs:0.94,0.15) node[above,midway] {\small $\llo=1/\flo$};
    \end{axis}
    \end{tikzpicture}
\end{subfigure}
\vspace{-0.3cm}
\caption{Microscope as a low-pass filter: signal (a) and convolution measurement (b).}
\label{fig:conv}
\end{figure}

This problem arises in single-molecule super-resolution microscopy~\cite{betzig06-06,dickson97on,klar00fluor}. 
In this application, $w_i$'s encode the unknown locations of fluorescent molecules, $x_i$ is 
proportional to the number of photons emitted by the $i$th molecule during the observation time. 
Crucially, the number of photons is a nonnegative number, leading to the assumption $x_i>0$, 
which makes the problem much simpler. Assume that light of wavelength $\llo$ is emitted 
by the molecules. Due to diffraction of light, the high-frequency spacial details of the 
signal are destroyed, no matter how perfect or large the microscope is. 
At the detector we record a blurred version of the signal, no 
details smaller than about $\llo$ are visible. To restate this mathematically: the function
$\klo(\cdot)$ models the \ac{PSF} of the microscope; due to diffraction of 
light the \ac{PSF} is band-limited to $\flo \define 1 /\llo$. The noise $z(v)$ represents all sources 
of noise in the system. For example, the thermal noise at the detector, the Poisson quantum 
mechanical noise due to photon quantization in low-intensity imaging, and the noise originating 
from the imperfect knowledge of the \ac{PSF} in the optical system. We refer the interested reader 
to~\cite{candes-14}, where the connection to super-resolution microscopy is worked out in details.

\subsection{Discrete model}
In the earlier work~\cite{candes-14} a discrete analog of the model in~\fref{eq:spikescont} and~\fref{eq:convcont} 
has been considered. The signal is modeled by a discrete vector $\inp = \tp{[x_0\cdots x_{\N-1}]}\in\reals^\N$, 
where $\N$ is the number of elements in the grid, corresponding to partitioning the interval $w_i\in [0, 1)$ into $\N$ equispaced segments. 
Each nonzero element in $\inp$ corresponds to one spike in~\fref{eq:spikescont}. The \ac{PSF} is modeled by matrix 
$\matQ$ that implements an ideal low-pass filter in the sense that it has a flat spectrum 
with a sharp cut-off at $\flo$. Formally,
\begin{equation}
    \label{eq:matqf}
    \matQ = \herm\matF \hat\matQ \matF,
\end{equation}
where
is the $\N\times \N$ discrete Fourier transform matrix
\begin{equation}
    [\matF]_{k,l} = \frac{1}{\sqrt{\N}}e^{-\iu 2\pi kl/\N}
\end{equation}
and $\hat \matQ=\diag(\tp{[\hat Q_{-\N/2+1}\cdots \hat Q_{\N/2}]})$ with
\begin{equation}
    \label{eq:spec}
    \hat Q_k =
    \begin{cases}
        1,&k=-\flo,\ldots, \flo,\\
        0,&\text{otherwise}.
    \end{cases}
\end{equation}
The wavelength $\llo=1/\flo$ gives the width of the convolution kernel represented by $\matQ$. 
We assume throughout the paper that $\N$ is even for simplicity.

Translated to discrete setting the model in~\fref{eq:convcont} becomes
\begin{equation}
    \label{eq:IO3altnoise1d}
    \data = \matQ\inp + \noise.
\end{equation}

\subsection{Recovery algorithm}
Our recovery method from the observations $\data$ in
\fref{eq:IO3altnoise1d} is extremely simple: solve
\begin{align}
    \label{eq:find0}\tag{$\mathrm{CVX}$}
    \hat\inp = \argmin_{\tilde\inp} \quad \onenorm{\data-\matQ\tilde\inp} \quad 
    \text{subject to}\quad \tilde\inp\ge \veczero.
\end{align}
In other words, we are looking for a set of positive spikes
such that the mismatch in received intensities is minimum. Note that
this method does not
make use of any knowledge other than the observations $\data$ and
the \ac{PSF} $\matQ$. Furthermore, (CVX) is a simple convex optimization
program, which can be recast as a linear program since both
$\inp$ and $\matQ$ are real valued.

\subsection{Rayleigh regularity}
\label{sec:reyleigh}

Consider discrete signal
$\inp \in \reals^{\N}$ as samples on the grid
$\{0,1/\N,\ldots,1-1/\N\}\subset\T$, where $\T$ is the circle in 1D, i.e., the interval 
$[0, 1)$ with $0$ and $1$ identified. We introduce a definition of Rayleigh regularity inspired 
by~\cite[Def.~1]{donoho92-09}. Let $\support(\inp) \define \{l/\N: x_l>0\}$ denote the support of the discrete signal.

As we shall see, our ability to super-resolve the signal $\inp$, will be fundamentally determined 
by how regular $\support(\inp)$ is in the following sense.
\begin{dfn}[\bf Rayleigh regularity]
\label{def:rreg}
We say that the set of points $\setV\subset\T$ is Rayleigh-regular with parameters $(d,\rr)$
and write $\setV\in \rset{d}{r}$ if it may be partitioned as
$\setV=\setV_1\union\ldots\union\setV_r$, where the $\setV_i$'s are
disjoint, and each obeys a separation constraint:
\begin{enumerate}
    \item for all $1\le i<j\le \rr$, $\setV_i\intersect\setV_j=\varnothing$;
    \item for all intervals $\setD\subset\T$ of length $\abs{\setD} = d$ (in terms of the 
    wrap-around distance\footnote{For $a, b\in \T$, the wrap-around  distance between $a$ and $b$ is  $\abs{b-a}\define b-a \mod 1$. For an interval $[a, b]$, its length is $\abs{b-a}$.} on $\T$) and all $i$,
    \begin{equation}
        \abs{\setV_i\intersect \setD}\le 1.
    \end{equation}
\end{enumerate}
\end{dfn}
In this paper we are interested in super-resolving signals with Rayleigh-regular support: $\support(\inp)\in\rset{d}{r}$.
Such signals are illustrated in \fref{fig:rclass}. 

As we will discuss, in the special case when $\support(\inp)\in \rset{\tcc\llo}{1}$ [i.e., when $r=1$] with $\tcc$ a bit larger than one (as in \fref{fig:rclass1}), the super-resolution problem is particularly easy. In this case we will say that the spikes in $\vecx$ are \emph{well-separated}.

\newcommand\cxi{0.1196}
\newcommand\cxil{0.8261}
\newcommand\cyi{0.3370}
\newcommand\cxii{0.1196}
\newcommand\cyii{0.5000}
\newcommand\cxiii{0.1196}
\newcommand\cyiii{0.7283}
\newcommand\cxiiii{0.0217}
\newcommand\cyiiii{0.1087}
\begin{figure}[ht]
    \centering
    \vspace{-0.3cm}
    \begin{subfigure}{\textwidth}
        \caption{$\rset{2\llo}{1}$}\vspace{-0.2cm}
        \label{fig:rclass1}
        \centering
        \begin{tikzpicture}
        \begin{axis}[width=12cm,scale only axis, height=1.4cm,
            ytick=\empty, xtick={0,1},
            every axis title/.style={at={(0.5,1)},above,yshift=-0.2cm}, axis x line = middle,
            hide y axis, ymin=-0.3, ymax=1, xmin=-0.1, xmax=1.1]
            \addplot[domain=0:1,samples=201,line width=0.7pt,black, dotted]{0.2*sin(deg(2*pi*(23/2)*x))};
            \addplot[ycomb,range=0:1,line width=0.1pt,violet,mark options={fill=violet,scale=1},mark=*, 
                mark size=1.3pt,only marks] table[x index=0,y index=1]{./figures/figrclass.txt};
            \addplot[ycomb,range=0:1,line width=1.2pt,violet] 
                table[x index=0,y index=1]{./figures/figrclass.txt};
            \draw[help lines] (axis cs:0,0) -- (axis cs:1,0);
            \draw[<->,>=latex] (axis cs: \cxi,0.3) -- (axis cs: \cyi,0.3) node[above,midway] {\small $\ge 2\llo$};
            \draw[->,>=latex] (axis cs: 0,0.3) -- (axis cs: \cxi,0.3) node[above,midway] {};
            \draw[<-,>=latex] (axis cs: \cxil,0.3) -- (axis cs: 1,0.3) node[above,midway] {\small $\ge 2\llo$};
        \end{axis}
        \end{tikzpicture}
    \end{subfigure}\vspace{0.4cm}

    \begin{subfigure}{\textwidth}
        \caption{$\rset{4\llo}{2}$}\vspace{-0.2cm}
        \label{fig:rclass2}
        \centering
        \begin{tikzpicture}
        \begin{axis}[width=12cm,scale only axis, height=1.4cm,
            ytick=\empty, xtick={0,1}, axis x line = middle,
            every axis title/.style={at={(0.5,1)},above,yshift=-0.2cm}, 
            hide y axis, ymin=-0.3, ymax=1.1, xmin=-0.1, xmax=1.1]
            \addplot[domain=0:1,samples=201,line width=0.7pt,black, dotted]{0.2*sin(deg(2*pi*(23/2)*x))};
            \draw[<->,>=latex]  (axis cs:\cxiiii,0.3) -- (axis cs:  \cyiiii,0.3) node[above,midway] {\small $\llo$};
            \addplot[ycomb,range=0:1,line width=0.1pt,violet,mark options={fill=violet,scale=1},mark=*, 
                mark size=1.3pt,only marks]
                table[x index=0,y index=2]{./figures/figrclass.txt};
            \addplot[ycomb,range=0:1,line width=1.2pt,violet]
                table[x index=0,y index=2]{./figures/figrclass.txt};
            \draw[help lines] (axis cs:0,0) -- (axis cs:1,0);
            \draw[<->,>=latex] (axis cs:\cxii,0.3) -- (axis cs: \cyii,0.3) node[above,midway] {\small $\ge 4\llo$};
        \end{axis}
        \end{tikzpicture}
    \end{subfigure}\vspace{0.4cm}

    \begin{subfigure}{\textwidth}
        \caption{$\rset{4\llo}{2}$}\vspace{-0.2cm}
        \label{fig:rclass2alt}
        \centering
        \begin{tikzpicture}
        \begin{axis}[width=12cm,scale only axis, height=1.4cm,
            ytick=\empty, xtick={0,1}, axis x line = middle, 
            every axis title/.style={at={(0.5,1)},above,yshift=-0.2cm},
            hide y axis, ymin=-0.3, ymax=1.1, xmin=-0.1, xmax=1.1]
            \addplot[domain=0:1,samples=201,line width=0.7pt,black, dotted]{0.2*sin(deg(2*pi*(23/2)*x))};
            \draw[<->,>=latex]  (axis cs:\cxiiii,0.3) -- (axis cs:  \cyiiii,0.3) node[above,midway] {\small $\llo$};
            \addplot[ycomb,range=0:1,line width=0.1pt,violet,mark options={fill=violet,scale=1},mark=*, 
                mark size=1.3pt,only marks]
                table[x index=0,y index=5]{./figures/figrclass.txt};
            \addplot[ycomb,range=0:1,line width=1.2pt,violet]
                table[x index=0,y index=5]{./figures/figrclass.txt};
            \draw[help lines] (axis cs:0,0) -- (axis cs:1,0);
            \draw[<->,>=latex] (axis cs:\cxii,0.3) -- (axis cs: \cyii,0.3) node[above,midway] {\small $\ge 4\llo$};
        \end{axis}
        \end{tikzpicture}
    \end{subfigure}\vspace{0.4cm}

    \begin{subfigure}{\textwidth}
        \caption{$\rset{6\llo}{3}$}\vspace{-0.2cm}
        \label{fig:rclass3}
        \centering
        \begin{tikzpicture}
        \begin{axis}[width=12cm,scale only axis, height=1.4cm,
            ytick=\empty, xtick={0,1}, axis x line = middle,
            every axis title/.style={at={(0.5,1)},above,yshift=-0.2cm},
            hide y axis, ymin=-0.3, ymax=1.1, xmin=-0.1, xmax=1.1]
            \addplot[domain=0:1,samples=201,line width=0.7pt,black, dotted]{0.2*sin(deg(2*pi*(23/2)*x))};
            \draw[<->,>=latex]  (axis cs:\cxiiii,0.3) -- (axis cs:  \cyiiii,0.3) node[above,midway] {\small $\llo$};
            \addplot[ycomb,range=0:1,line width=0.1pt,violet,mark options={fill=violet,scale=1},mark=*, 
                mark size=1.3pt,only marks] table[x index=0,y index=3]{./figures/figrclass.txt};
            \addplot[ycomb,range=0:1,line width=1.2pt,violet] table[x index=0,y index=3]{./figures/figrclass.txt};
            \draw[<->,>=latex]  (axis cs:\cxiii,0.3) -- (axis cs: \cyiii,0.3) node[above,midway] {\small$\ge 6\llo$};
        \end{axis}
        \end{tikzpicture}
    \end{subfigure}
\caption{Examples of discrete $\N$ dimensional signals whose support belongs to the
Rayleigh classes $\rset{2\llo}{1}$, $\rset{4\llo}{2}$, $\rset{6\llo}{3}$
depicted on the grid $\{0,1/\N,\ldots,1-1/\N\}\subset \T$. Note that the 
signals in (b) and in (c) both have support in $\rset{4\llo}{2}$. In general, Rayleigh regularity 
\emph{does not} require that all spikes in 
the signal are arranged into separated clusters as is the case in (b) and in (d).
The sine wave $\sin(2\pi \flo t)$ at the highest visible frequency is
shown by the dotted line for reference. Here, $\N=92$ and $\flo=11$, so
that $\llo=1/11$. By periodicity, the endpoints are
identified.}
\label{fig:rclass}
\end{figure}

\subsection{Discrete stability estimates}
The main result of the earlier work~\cite{candes-14} is the following proposition.

\begin{prp}
\label{prp:UB}
Assume $\inp\ge \veczero$ and $\support(\inp)\in
\rset{\kappa\llo \rr}{\rr}$ with $\kappa\define 1.87$ and $\flo\ge 128\rr$. 
Assume that the observations $\data$ are given
by~\fref{eq:IO3altnoise1d}. Then the solution
$\hat\inp$ to~\fref{eq:find0} obeys
\begin{equation}
    \label{eq:mainbnd}
    \underbrace{\onenorm{\hat\inp-\inp}}_\mathrm{Error} 
    \le C_d(\rr) \cdot \left(\frac{\N}{\flo}\right)^{2\rr} \cdot \|\mathbf{z}\|_1 
    = C_d(\rr) \cdot \DSRF^{2 \rr} \cdot  \|\mathbf{z}\|_1,
\end{equation}
where $C_d(\rr)$ only depends on $\rr$ (if $\DSRF\ge 3.03/\rr$, it can be taken to be $C_d(\rr)= r^{2r}\cdot 4 \cdot 17^r$).
\end{prp}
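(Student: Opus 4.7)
My plan is to use a convex-duality / dual-certificate argument adapted to the positive, multi-level Rayleigh-regular setting. Write $\bmue = \hat\inp - \inp$ and $T = \support(\inp)$. The optimality of $\hat\inp$ for~\fref{eq:find0} together with $\data = \matQ\inp + \noise$ immediately gives $\onenorm{\matQ\hat\inp - \data}\le \onenorm{\noise}$, and so the triangle inequality yields $\onenorm{\matQ\bmue}\le 2\onenorm{\noise}$. Moreover, the nonnegativity constraints $\inp,\hat\inp\ge 0$ force $\bmue$ to be nonnegative at every grid point outside $T$.

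The next step is a geometric split of the grid into a ``near'' region $\nearh$, consisting of points within distance of order $\llo$ from $T$, and a complementary ``far'' region $\farh$. On $\farh$, I expect a classical low-pass test vector $\vecq$ (of frequency support in $\{-\flo,\ldots,\flo\}$) that is uniformly close to $1$ on $T\cup\nearh$ and bounded above by $1 - c_0$ on $\farh$ to suffice. Since $\matQ$ is the self-adjoint projector onto its low-frequency range and $\vecq$ already lies there, the duality identity $\langle\vecq,\bmue\rangle = \langle\vecq,\matQ\bmue\rangle$ converts the bound on $\onenorm{\matQ\bmue}$ into a linear-in-$\onenorm{\noise}$ estimate on $\onenorm{\bmue_{\farh}}$.

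The main technical ingredient is the near-region bound. I would construct a low-pass test vector $q$ with $q(w_i)=1$ for every $w_i\in T$, $0\le q\le 1$ on the grid, and the quantitative lower bound
\[
    1-q(t) \ge c(\rr)\,\bigl(\mathrm{dist}(t,T)/\llo\bigr)^{2\rr}\qquad\text{for }t\in\nearh.
\]
The construction exploits the Rayleigh decomposition $T=T_1\cup\cdots\cup T_\rr$ guaranteed by Definition~\ref{def:rreg}: each $T_i$ has pairwise separation at least $\kappa\llo\rr$, which is large enough to admit a classical well-separated Fej\'er-type certificate of degree roughly $\flo/\rr$ (this is precisely what the hypothesis $\flo\ge 128\rr$ is tailored to enable). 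A careful product of these $\rr$ individual certificates produces the required $2\rr$-fold vanishing at every element of $T$, while the choice $\kappa = 1.87$ keeps all cross-level interaction terms small enough to preserve $0\le q\le 1$ globally. Pairing $q$ against $\bmue$, combined with a companion certificate that isolates $\onenorm{\bmue_T}$ (for instance a sign-weighted variant of $q$), yields
\[
    \onenorm{\bmue_{\nearh\setminus T}} \;\lesssim\; \frac{\onenorm{\matQ\bmue}}{\min_{\nearh\setminus T}(1-q)} \;\lesssim\; (\N/\flo)^{2\rr}\,\onenorm{\noise} \;=\; \DSRF^{2\rr}\,\onenorm{\noise},
\]
since the smallest nonzero distance between the grid and $T$ is $1/\N$. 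Summing the three contributions and tracking multiplicative constants produces~\fref{eq:mainbnd} with $C_d(\rr)$ equal to the product of the $\rr$-level normalizers.

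The main obstacle I anticipate is the dual-certificate construction itself. Enforcing $q\equiv 1$ on $T$, $|q|\le 1$ globally, and the sharp $(d/\llo)^{2\rr}$ lower bound on $1-q$ simultaneously, uniformly over all admissible Rayleigh-regular supports, is delicate because distinct layers $T_i$ and $T_j$ may have spikes at separation as small as $\llo$, and the classical well-separated construction must be combined across layers without amplifying beyond $1$. Once such a $q$ is in hand, the remaining steps are routine pairing and bookkeeping.
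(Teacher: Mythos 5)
Your overall template --- pair the error $\vech=\hat\inp-\inp$ against low-pass dual polynomials, and obtain the $2\rr$-fold vanishing at the support by multiplying $\rr$ well-separated Fej\'er-type certificates, one per Rayleigh layer --- is exactly the mechanism used to prove \fref{prp:UB}, and your opening chain ($\onenorm{\matQ\hat\inp-\data}\le\onenorm{\matQ\inp-\data}$, hence $\onenorm{\matQ\vech}\le 2\onenorm{\noise}$, together with $\inner{\vecq}{\vech}=\inner{\vecq}{\matQ\vech}$ for any low-pass $\vecq$) is correct and is the right starting point. The $\DSRF^{2\rr}$ scaling from the $1/\N$ grid spacing is also the right source of the final exponent.

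Where the plan goes astray is in taking $\setT=\support(\inp)$. With that choice $\vech$ has arbitrary signs on $\setT$, which is why you feel compelled to invoke a ``companion certificate that isolates $\onenorm{\vech_\setT}$ --- a sign-weighted variant of $q$.'' But building a low-pass polynomial that interpolates an arbitrary sign pattern on a Rayleigh-regular set whose points may cluster far below $\llo$ is precisely the obstruction this paper advertises as its main new difficulty; its resolution (the small-amplitude $\rho$-interpolation in \fref{lem:dualprod1}) is genuinely delicate and is \emph{not} present in the source of \fref{prp:UB}. So your route, as stated, would require reproving a large part of \fref{thm:mainthm} just to recover the much simpler \fref{prp:UB}.

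The standard fix is to define $\setT\define\{m/\N: h_m<0\}$ rather than $\support(\inp)$. Since $\inp,\hat\inp\ge\veczero$, $h_m<0$ forces $x_m>0$, so $\setT\subset\support(\inp)$ and $\setT\in\rset{\kappa\llo\rr}{\rr}$ still holds, but now the sign structure of $\vech$ is one-sided by construction: $h_m<0$ on $\setT$ and $h_m\ge 0$ off it. Build the product certificate $q_0$ so that it vanishes on \emph{this} $\setT$, with $0\le q_0\le 1$. Every summand $q_0(m/\N)\,h_m$ is then nonnegative, so
\begin{equation}
2\onenorm{\noise}\ \ge\ \sum_{m} q_0\!\left(\frac{m}{\N}\right) h_m
\ =\ \sum_{m/\N\notin\setT} q_0\!\left(\frac{m}{\N}\right) h_m
\ \ge\ \left(\min_{m/\N\notin\setT} q_0\!\left(\frac{m}{\N}\right)\right)\sum_{m/\N\notin\setT} h_m,
\end{equation}
and because every grid point off $\setT$ lies at distance at least $1/\N$ from $\setT$, the minimum is at least a constant times $\rr^{-2\rr}\DSRF^{-2\rr}$. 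That controls the nonnegative part. For the negative part, pair $\vech$ against the constant function $1$ (the simplest low-pass polynomial): $|\sum_m h_m|\le\onenorm{\matQ\vech}\le 2\onenorm{\noise}$, and conclude via $\onenorm{\vech}=2\sum_{m/\N\notin\setT} h_m-\sum_m h_m$. No sign-interpolating certificate is needed, and the near/far split at scale $\sim\llo$ is also unnecessary: the single lower bound on $q_0$ at grid points off $\setT$ handles everything at once. The $q_1$- and $q_2$-type machinery you are reaching for only becomes necessary for \fref{thm:mainthm}, where the error metric aggregates $h_m$ over $\lhi$-neighborhoods and the aggregated quantities $\sum_{m} h_m$ over each $\nearhi{t_j}$ can again carry arbitrary signs even with the corrected definition of $\setT$.
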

The ratio $\DSRF\define
\N/(2\flo)$ is called the {\em discrete super-resolution factor}; this
is the ratio between the scale at which we observe the data, $1/(2\flo)$, and the scale of the finest 
details in the data, $1/\N$.

\subsection{Breakdown of discrete stability estimates}
In practice, signals do not belong to a discrete grid. In order to accurately approximate the continuous 
model in~\eqref{eq:spikescont} we might need to make the grid very fine, i.e., take $\N$ large.

The problem is that the theoretical result in~\fref{eq:mainbnd} becomes meaningless when $\flo$ 
and $\|\mathbf{z}\|_1$ remain fixed, and $\N\to\infty$. Indeed, observe that~\fref{eq:mainbnd} 
guarantees accurate signal recovery when the right-hand side of~\fref{eq:mainbnd} is much smaller 
than $\onenorm{\inp}$.  When $\N\to\infty$ with $\flo$ fixed, then $\DSRF^{2 \rr}\to\infty$ very 
quickly, so that the right-hand side of~\fref{eq:mainbnd} becomes larger than $\onenorm{\inp}$, even 
for very small noise.

This is expected. Consider the hypothetical situation illustrated in \fref{fig:contest}. The true 
signal $\inp$ consists of three spikes as depicted in \fref{fig:sigest} in purple solid. The grid is 
very fine ($\N$ is large); the PSF is wide as shown in \fref{fig:vecs} (the purple solid curve, with 
characteristic width $\llo$, represents $\data=\matQ\inp$ with $\inp$ from \fref{fig:sigest}); and 
the data is noisy. Imagine an algorithm produced an estimate $\hat\inp_\mathrm{good}$ as depicted 
in \fref{fig:goodest} by the blue dashed spikes.
\begin{figure}[h!t]
    \vspace{-0.3cm}
    \begin{subfigure}{0.5\textwidth}
        \centering
        \caption{signal $\inp$}
        \includegraphics[width=0.9\textwidth]{./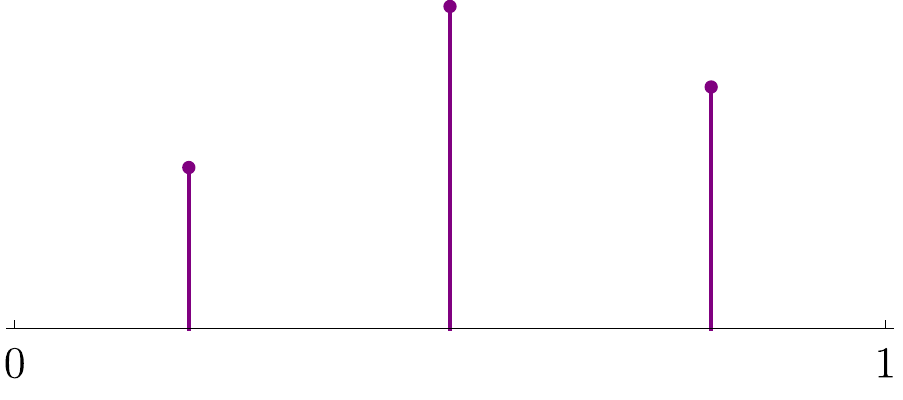}
        \label{fig:sigest}
    \end{subfigure}
    \begin{subfigure}{0.5\textwidth}
        \centering
        \caption{observations $\data$ in solid, kernel $\khi(\cdot)$ in dotted}
        \includegraphics[width=0.9\textwidth]{./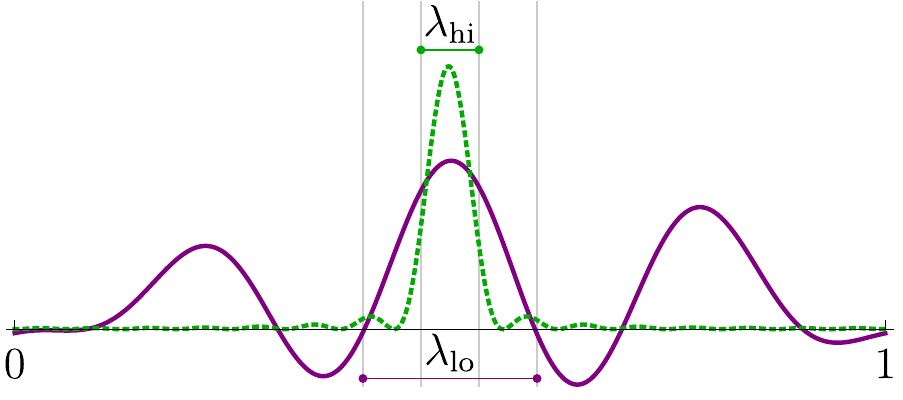}
        \label{fig:vecs}
    \end{subfigure}
    \vspace{0.8cm}

    \begin{subfigure}{0.5\textwidth}
        \centering
        \caption{$\inp$ in solid, $\hat \inp_\mathrm{good}$ in dashed, $\khi(\cdot)$ in dotted}
        \includegraphics[width=0.9\textwidth]{./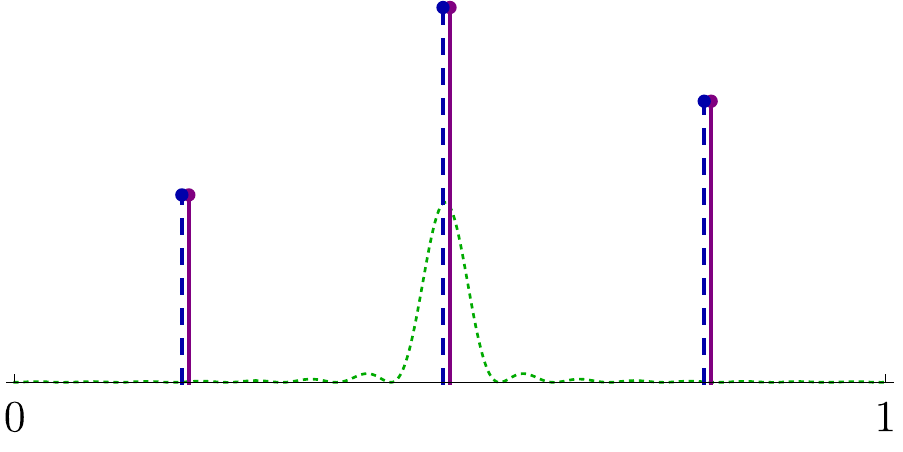}
        \label{fig:goodest}
    \end{subfigure}
    \begin{subfigure}{0.5\textwidth}
        \centering
        \caption{$\inp$ in solid, $\hat \inp_\mathrm{bad}$ in dashed, $\khi(\cdot)$ in dotted}
        \includegraphics[width=0.9\textwidth]{./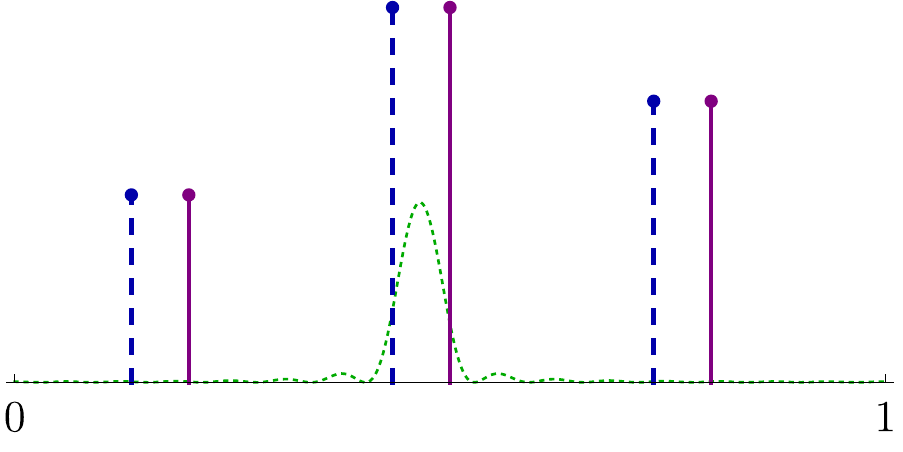}
        \label{fig:badest}
    \end{subfigure}
    \vspace{0.8cm}
    
    \begin{subfigure}{0.5\textwidth}
        \centering
        \caption{$\text{error}=\onenorm{\khi\conv (\inp-\hat \inp_\mathrm{good})}$}
        \includegraphics[width=0.9\textwidth]{./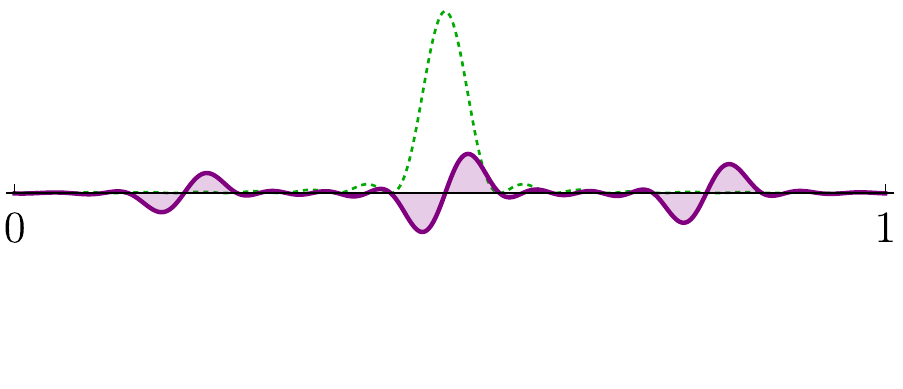}
        \label{fig:goodesterror}
    \end{subfigure}    
    \begin{subfigure}{0.5\textwidth}
        \centering
        \caption{$\text{error}=\onenorm{\khi\conv (\inp-\hat \inp_\mathrm{bad})}$}
        \includegraphics[width=0.9\textwidth]{./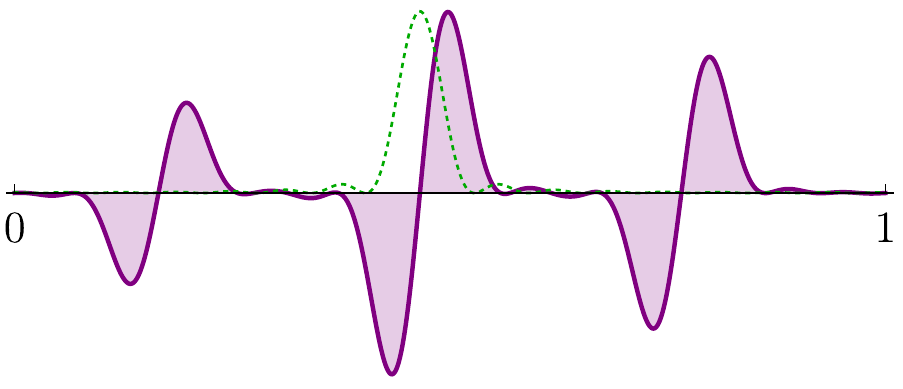}
        \label{fig:badesterror}
    \end{subfigure}
    \vspace{-0.5cm}
    \caption{Measuring the estimation error when the grid is very fine ($\N$ is large).}
\label{fig:contest}
\end{figure}
The estimate $\hat\inp_\mathrm{good}$ is excellent: the blue dashed spikes are located in the neighboring 
discrete bins to the corresponding purple solid ground truth spikes, the magnitudes are estimated perfectly. 
In the presence of noise we cannot hope for infinite resolution, so for large $\N$, we should be 
happy if we were able to obtain $\hat \inp_\mathrm{good}$ as in \fref{fig:goodest}. Yet,
\begin{equation}
    \onenorm{\hat\inp_\mathrm{good}-\inp} = 2 \onenorm{\inp},
\end{equation}
i.e., the estimation error is about as large as it can possibly be. We conclude that the 
reason why the result in~\fref{eq:mainbnd} becomes meaningless when $\flo$ and $\onenorm{\noise}$ 
remain fixed and $\N\to\infty$ is that the error metric $\onenorm{\hat\inp-\inp}$ becomes 
inadequate. We need a more \emph{forgiving} error metric that should penalize small 
localization errors on the fine grid mildly.

We will explain in \fref{sec:mainres} how to construct the more forgiving error 
metric and how to change the definition of super-resolution factor accordingly. With 
these modifications we can generalize~\fref{prp:UB} and formulate the stability estimates 
in \fref{thm:mainthm} that remain meaningful even when $\N\to\infty$ and $\flo$ and $\onenorm{\noise}$ 
are fixed. 
With the appropriate new definitions, the result in \fref{thm:mainthm} is nearly identical 
to that in~\fref{prp:UB}. Surprisingly, the proof technique necessary to obtain 
\fref{thm:mainthm} is much harder than the trick that was sufficient to prove~\fref{prp:UB}. 
The proof relies on new trigonometric interpolation constructions that constitute 
the main mathematical contribution of this paper.

\section{Main results}
\label{sec:mainres}

\subsection{Measuring the reconstruction error}
To   avoid penalizing the estimators that produce spikes very close to the original spikes on the 
fine grid, a natural approach is to convolve the difference $\hat\inp-\inp$ with a 
nonnegative kernel $\khi(\cdot)$ of width $\lhi$ (represented by the dotted green line in \fref{fig:vecs}) before computing the $\ell_1$ norm:
\begin{equation}
    \text{error}=\onenorm{\khi\conv (\hat\inp- \inp)},
\end{equation}
where
\begin{equation}
    \left[\khi\conv (\hat\inp-\inp)\right]_n \define \sum_{m=0}^{\N-1} \khi\lefto(\frac{n-m}{\N}\right) h_m
\end{equation}
and $\vech = \tp{[h_0, h_1, \ldots, h_{\N-1}]} \define \hat\inp-\inp$ is the difference vector.
The new error metric is illustrated in Figures~\ref{fig:goodest}--\ref{fig:badesterror}. 
When the estimated spikes are closer 
than $\lhi$ to the original spikes, as is the case for $\hat\inp=\hat\inp_\mathrm{good}$ 
in~\fref{fig:goodest}, the error, represented by the area of the shaded region in~\fref{fig:goodesterror},
is very small.
Conversely, when the estimated spikes are further than $\lhi$ from the original spikes, as is the case 
for $\hat\inp=\hat\inp_\mathrm{bad}$ in~\fref{fig:badest},  we have,
$\text{error}=\onenorm{\khi\conv (\hat\inp-\inp)}\approx 2\onenorm{\inp}$, so that the
error is large, as illustrated in~\fref{fig:badesterror}.

The width, $\lhi$, of the kernel $\khi(\cdot)$ is a parameter of the theory. This 
parameter will be chosen to be
\begin{inparaenum}[(i)]
    \item larger (or equal to) the finest scale of the data, $\lhi\ge 1/\N$, and, simultaneously,
    \item smaller than the native resolution of the observations, $\lhi< \llo$.
\end{inparaenum}
Having chosen $\lhi$, we define the \emph{super-resolution factor} as:
\begin{equation}
    \SRF \define \frac{\llo}{\lhi}.
\end{equation}
The $\SRF$ will play the same role in our theory as the $\DSRF$ played in~\fref{prp:UB}.
In \fref{fig:vecs}, the $\SRF$ is the ratio between $\llo$, the width of the kernel $\matQ$,
and $\lhi$, the width of the kernel $\khi(\cdot)$.

To be concrete, a reasonable situation might be: $\llo = 1/10$, $1/\N = 1/1000$ so 
that $\DSRF = 100$. This makes the right-hand side of~\fref{eq:mainbnd} huge so that 
the stability estimate is useless. Now, choose $\lhi=1/100$ so that $\SRF=10$, which 
is much smaller than $\DSRF$. The main result of this paper, \fref{thm:mainthm} below, 
shows that we can upper-bound the error
$\onenorm{\khi\conv (\hat\inp-\inp)}$ in terms of $\SRF^{2r}$, which is much smaller 
than $\DSRF^{2r}$, keeping the bound tight for realistic values of the noise.

For $\khi(\cdot)$, in this paper we use the Fejér kernel:
\begin{equation}
    \khi(t) 
    \define \frac{1}{\N}\frac{1}{\fhi+1}  
        \left(\frac{\sin(\pi (\fhi+1) t)}{\sin(\pi t)}\right)^2,\quad 
    \fhi = 1/\lhi.
    \label{eq:khi}
\end{equation}
The normalization is such that
\begin{equation}
    \sum_{n=0}^{\N-1} \khi\lefto(\frac{n}{\N}\right) = 1, \label{eq:cabshi}
\end{equation}
which ensures that the ``energy'' in the error is preserved in the sense that 
$\text{error}=\onenorm{\khi\conv (\hat\inp-\inp)}\approx 2\onenorm{\inp}$ whenever 
the estimated spikes in $\hat\inp$ are far away from the true spikes in $\inp$. The concrete form of the kernel
$\khi(\cdot)$ is not important. Our results hold for any other periodic nonnegative high-resolution kernel
as long as it satisfies conditions~\fref{eq:cabshid} and \fref{eq:cabshidd} below.

When $\N\to\infty$, the error metric defined here becomes the one used in~\cite{candes21-2} 
in the analysis of the continuous super-resolution problem. Compared to~\cite{candes21-2}, the key novelty of this
paper is that the results in~\cite{candes21-2} apply only when the spikes in the signal are well-separated [$\support(\inp)\in \rset{1.87\llo}{1}$] as in \fref{fig:rclass1}, a stringent assumption. 
In this paper we don't assume that the spikes are well-separated 
 and our results also hold for 
signals with $\support(\inp)\in\rset{1.87\llo\rr}{\rr}$ and $\rr>1$ as in Figures~\ref{fig:rclass2}, \ref{fig:rclass2alt}, \ref{fig:rclass3}. 
The price we pay is that our results are only valid 
for nonnegative signals, whereas the results in~\cite{candes21-2} are valid for complex-valued signals.

\subsection{Stability estimate on an arbitrarily fine grid}
In this paper we prove the following theorem.
\begin{thm}
\label{thm:mainthm}
Assume $\inp\ge \veczero$ and $\support(\inp)\in
\rset{\kappa\llo \rr}{\rr}$ with $\kappa\define 1.87$ and $\flo\ge 128\rr$. 
Assume $\lhi<\llo$, $\lhi\ge 1/\N$, and $\SRF>12$.
Assume, in addition, that the elements of $\support(\inp)$ are separated by at least $2\lhi$: 
if $t,t'\in\support(\inp)$ with $t\ne t'$, then $\abs{t-t'}\ge 2\lhi$, where $\abs{\cdot}$ 
is the wrap-around distance on $\T$.
Assume that the observations $\data$ are given
by~\fref{eq:IO3altnoise1d}. Then the solution $\hat\inp$ to \fref{eq:find0} 
obeys
\begin{equation}
    \label{eq:stability}
    \onenorm{\khi\conv (\hat\inp-\inp)} \le \CC{\rr} \SRF^{2\rr} \onenorm{\noise},
\end{equation}
where $\CC{\rr}\define\rr^{2\rr+4} \cc^{\rr+1}$ and the positive numerical constant $\cc$ is 
defined in~\fref{eq:ccdef} below.
\end{thm}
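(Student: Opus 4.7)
Set $\vech=\hat\inp-\inp$. The first step is standard: since $\inp$ is feasible in~\fref{eq:find0}, optimality and the triangle inequality give $\onenorm{\matQ\vech}\le 2\onenorm{\noise}$, and the positivity of both $\hat\inp$ and $\inp$ forces $h_m\ge 0$ at every grid point $m/\N\notin\support(\inp)$. These two facts are the only information I will use about $\hat\inp$; everything else is converted, by duality, into a question about low-frequency interpolation.

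I would then pass to the dual formulation
\[
\onenorm{\khi\conv\vech}=\sup_{\vecw:\;\infnorm{\vecw}\le 1}\inner{\khi\conv\vecw}{\vech},
\]
and, given any such $\vecw$, set $v\define \khi\conv\vecw$ (a trigonometric polynomial of degree $\fhi$ with $\infnorm{v}\le 1$). The core of the proof is the construction of a trigonometric polynomial $q$ of degree at most $\flo$ (so $\matQ q=q$) enjoying three properties: (i) $q(t)=v(t)$ for every $t\in\support(\inp)$; (ii) $q(m/\N)\ge v(m/\N)$ for every grid point $m/\N\notin\support(\inp)$; and (iii) $\infnorm{q}\le\tfrac12\CC{\rr}\SRF^{2\rr}$. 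Once such a $q$ is in hand, the decomposition
\[
\inner{v}{\vech}=\inner{q}{\matQ\vech}+\inner{v-q}{\vech}
\]
closes the argument: the first term is bounded by $\infnorm{q}\onenorm{\matQ\vech}\le \CC{\rr}\SRF^{2\rr}\onenorm{\noise}$, while the second is non-positive because $(v-q)$ vanishes on $\support(\inp)$ by (i) and is $\le 0$ off the support by (ii), where $h_m\ge 0$. Taking the supremum over $\vecw$ yields~\fref{eq:stability}.

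The bulk of the work is the construction of $q$. Because $\support(\inp)\in\rset{\kappa\llo\rr}{\rr}$, the support can be decomposed into clusters, each lying in an interval of length $O(\llo\rr)$ and containing at most $\rr$ spikes, with distinct clusters separated on the scale $\llo$; by the hypothesis, the nodes within a cluster are pairwise separated by $\ge 2\lhi$. For each cluster I would build a local interpolant from translates and derivatives of the Fej\'er-type kernel $\fejer{\flo}$: choose the coefficients so that the interpolant equals $v$ at the $\rr$ cluster nodes and, using its second-order envelope, dominates $v$ on the $\lhi$-grid-neighborhood of each node. Cluster interpolants are assembled into $q$ by summation; the rapid decay of (iterated) Fej\'er kernels away from their center, together with the $\llo$-separation between clusters, ensures that (ii) continues to hold far from every cluster and that clusters do not disturb one another. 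The amplification $\SRF^{2\rr}$ in property (iii) is the honest price of inverting a nearly confluent interpolation system of size $\rr$ with nodes as close as $2\lhi$ resolved by polynomials of bandwidth $\flo=1/\llo$: the conditioning of this system scales like $(\llo/\lhi)^{2\rr}$.

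The main obstacle, as advertised in the abstract, is property (ii). In~\fref{prp:UB} the dual certificate only interpolated prescribed signs at coarse grid points, and in the continuous analysis of~\cite{candes21-2} one only needs an $\ell_\infty$ bound on $q$; here $q$ must track the \emph{arbitrary} high-resolution function $v=\khi\conv\vecw$ at $\rr$-confluent cluster nodes lying at arbitrary positions on an arbitrarily fine grid and simultaneously dominate $v$ at all remaining grid points. Producing a certificate with this pointwise envelope behavior while keeping $\infnorm{q}$ at the sharp order $\SRF^{2\rr}$ is exactly what forces the ``new interpolation constructions in Fourier analysis.'' The technical core of the paper, then, is the derivative-level control of Fej\'er kernels, the confluent-interpolation bounds at scale $\lhi$, and the stitching lemma that upgrades local domination near each cluster to global domination at every off-support grid point.
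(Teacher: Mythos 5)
Your decomposition at the top level is sound: dualizing the $\ell_1$-norm, writing $v\define\khi\conv\vecw$, and splitting $\inner{v}{\vech}=\inner{q}{\matQ\vech}+\inner{v-q}{\vech}$ with a frequency-limited $q$ is a legitimate strategy, and the positivity/sign bookkeeping is handled correctly. This is, however, a genuinely different route from the paper, and the route as you have sketched it has a gap precisely where all the work is: the existence of a degree-$\flo$ polynomial $q$ with $q=v$ on $\support(\inp)$, $q\ge v$ at \emph{every} off-support grid point, and $\infnorm{q}\lesssim \SRF^{2\rr}$. Condition (ii) is much stronger than anything the paper's certificates are required to satisfy, and your construction sketch does not establish it. The cluster-local Hermite-type interpolants you propose must interpolate targets $v(t_j)$ of size $O(1)$ at nodes separated by only $\sim\lhi$ with bandwidth $\flo$, so the interpolation coefficients genuinely blow up by the factor you yourself identify, $(\llo/\lhi)^{2\rr}=\SRF^{2\rr}$. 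The translates of the (fourth-power sinc) Fej\'er-type kernel and its derivative then carry these $\SRF^{2\rr}$-sized coefficients with only polynomial decay ($\sim 1/(\flo t)^4$ and $\sim 1/(\flo t)^3$), which at the inter-cluster distance $t\sim\llo$ is still $O(1)$; the oscillating sign of the derivative kernel then lets the tails drag $q$ \emph{below} $v$ at off-cluster grid points. The phrase ``rapid decay\dots ensures that (ii) continues to hold far from every cluster'' is where the argument silently fails: the decay you have is not nearly fast enough to absorb the $\SRF^{2\rr}$ amplification, and there is no clean way to sign-control those tails.

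The paper sidesteps this entirely by never constructing a dominating $q$ at all. Instead it Taylor-expands $\khi$ to second order around each $t_j$, which splits $\onenorm{\khi\conv\vech}$ into four terms ($A_0,A_1,A_2,A_3$): the off-support mass, the per-cluster zeroth moment, the per-cluster first moment, and the per-cluster second moment of $\vech$. Each of these is handled by a separate certificate ($q_0$ for $A_0$ and $A_3$, $q_1$ for $A_1$, $q_2$ for $A_2$), and crucially each of $q_0,q_1,q_2$ is required to interpolate only \emph{tiny} target values, of order $\rho=1/\SRF^{2\rr}$ or $\gamma=\rho/\lhi$, on $\setT$. Because the targets are so small, the certificates stay uniformly bounded by $\rr^{O(\rr)}\cc^{O(\rr)}$, with no $\SRF$-dependence in $\infnorm{q_i}$ at all; the $\SRF^{2\rr}$ factor appears only at the very end when one divides $A_1$ and $A_2$ by $\rho$. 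This is exactly what makes the constructions in Lemmas~\ref{lem:dualprod1} and~\ref{lem:dualprod1d} tractable. Moreover, those two lemmas demand only \emph{one-sided} control relative to $q_0$ (e.g.\ $|q_1(\tau)-\rho s_j/2|\le C\,q_0(\tau)$ near $\setT$ and $|q_1(\tau)|\le C'q_0(\tau)$ away from $\setT$), not pointwise domination of an arbitrary high-bandwidth $v$, which is a far softer target. If you want to rescue your single-certificate route you would need a genuinely new argument to defeat the tail problem; as written, the gap is real.
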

The theorem is proven in the next section and in the appendices. Before we embark on the proof, 
we discuss the significance and the accuracy of the result.

\subsubsection{Significance of the result}
\fref{thm:mainthm} gives essentially the same stability estimate for an arbitrarily fine 
grid as~\fref{prp:UB} does for a discrete grid. With the new definition for error metric, $\lhi$ in \fref{thm:mainthm}
plays the same role as the grid segment size, $1/\N$, played in~\fref{prp:UB}. In turn, the grid segment
size, $1/\N$, in \fref{thm:mainthm} may be arbitrarily small without affecting the stability 
estimate at all. The only thing that changes when $\N$ grows is that it becomes numerically 
harder to solve~\fref{eq:find0}.

\subsubsection{Tightness}
The result is information-theoretically tight in the following sense. It is possible to prove 
a converse theorem (see~\cite[Sec.~2.3]{candes-14}) that says that the best possible algorithm in the 
worst case (the minimax setting)
cannot achieve stability estimate in~\fref{eq:stability} with super-resolution 
factor dependence better than $\SRF^{2\rr-1}$. In other words,
the exponent of $\SRF$ in~\fref{eq:stability} is near-optimal.

We have made no attempt to optimize $\CC{\rr}$. Finding the tightest possible $\CC{\rr}$ is an 
important open problem, which seems to be hard to address with the mathematical techniques developed 
in this paper.

\subsubsection{Mathematical novelty} 
The reader might expect that since \fref{thm:mainthm} is so similar to~\fref{prp:UB}, the 
proof of \fref{thm:mainthm} is a minor modification of the work done in~\cite{candes-14}. 
Perhaps surprisingly, this is not the case. 

The proof technique in~\cite{candes-14} relied on a simple and elegant trigonometric 
interpolation construction reviewed in \fref{sec:subqz}. In this paper, in addition, we had 
to develop a flexible set of techniques that allowed us to build trigonometric 
polynomials with specific interpolation properties. These techniques---that constitute 
the main mathematical contribution of this paper---are presented in Sections~\ref{sec:q1} and~\ref{sec:q2} 
and in~\fref{app:dualprod1d}. 
We believe that the new techniques are interesting in their own right and may be useful in other projects. 

\subsubsection{Separation by $2\lhi$}
\fref{thm:mainthm} requires the assumption that no two spikes in $\inp$ are 
closer than $2\lhi$. It is important to contrast this assumption with the 
separation assumption in~\cite{candes21-2, candes13}. The results in~\cite{candes13} hold only when 
no two spikes in $\inp$ are closer than $1.87\llo$ (the spikes are well-separated). 
Our separation requirement is much weaker than the one needed in~\cite{candes21-2, candes13}: we require 
the separation at the scale of $\lhi$ whereas the results in~\cite{candes21-2, candes13} need separation 
on the scale of~$\llo$. Since the whole point of super-resolution is to reconstruct the 
original signal with accuracy about $\lhi\ll\llo$, our assumption is mild, whereas the assumption 
in~\cite{candes21-2, candes13} is restrictive.

Further, it follows from the proof of \fref{thm:mainthm} that the  $2\lhi$ separation 
requirement may be relaxed to, for example, $\lhi/2$, or, more generally, to 
$\lhi/\beta$ for any $\beta>1$. The result in \fref{thm:mainthm} will not change, 
except that the constant $\CC{\rr}$ will now depend on $\beta$. Specifically, the result will read:
\begin{equation}
    \onenorm{\khi\conv (\hat\inp-\inp)} \le \rr^{2\rr+4} \cc^{\rr+1} \beta^{2\rr} \SRF^{2\rr} \onenorm{\noise}.
\end{equation}
To keep the proof of \fref{thm:mainthm} as clean as possible, we decided to stick with 
the $2\lhi$ separation assumption in the theorem.

Finally, it is not clear if the separation assumption of the form $\lhi/\beta$ is 
fundamentally necessary. Certainly, it is necessary for the proof technique developed in 
this paper. It is an open problem to either find a proof of \fref{thm:mainthm} that does 
not rely on this assumption, or to prove a converse result showing that this assumption 
is unavoidable. Note that there is no \emph{explicit} separation assumption in~\fref{prp:UB}; 
however, since the spikes are on the grid, the separation assumption at the 
scale of $1/\N$ is made \emph{implicitly}.

\subsubsection{Density constant}
We next discuss the following question: can the constant $\kappa=1.87$ in \fref{thm:mainthm} 
be made smaller without changing the result? The answer is ``probably yes''. Specifically, 
our proof builds upon Lemmas~\ref{lem:dualcarlos} and~\ref{lem:dualcarlos2} below. 
The lemmas generalize~\cite[Lm.~2.4, Lm.~2.5, Sec.~2.5]{candes21-2} and their proof exploits a 
construction developed in~\cite{candes13}. The specific value for $\kappa=1.87$ comes from 
the construction borrowed from~\cite{candes13}. An improved construction has recently been 
reported in~\cite{fernandez-granda16a} leading to a smaller value $\kappa=1.26$. To keep 
this paper as simple as possible, we decided not to accommodate this improvement. To do so, 
one would need to change Lemmas~\ref{lem:dualcarlos} and~\ref{lem:dualcarlos2}
below and the proof in~\fref{app:dp}; all 
other derivations in this paper will remain unchanged.
The constant $\CC{\rr}$ in \fref{thm:mainthm} would need to be updated accordingly. 

We expect that there is a trade-off: the larger $\kappa$ is, the smaller the constant $\CC{\rr}$ 
can be made. However, our estimates do not provide the smallest 
possible constant. Hence, we cannot analyze the trade-off.

Finally, as explained in~\cite[Sec.~2.3.1]{candes-14}, $\kappa>1$ is a fundamental limit, so our 
result is within the factor $1.87$ from the optimum.

\subsubsection{Gridless super-resolution}
It has been shown in~\cite{bhaskar11-09, candes13, candes21-2} that under the assumption 
that spikes are separated by at least $1.87\llo$ (well-separated spikes), one can solve the gridless 
super-resolution problem in which the spikes have completely arbitrary locations 
on~$\T$ (no need for the $1/\N$ discretization). It turns out that in the gridless setup one needs 
to solve an infinite-dimensional, but convex, total-variation-minimization problem 
(see~\cite[eq.~(1.4)]{candes13}). Surprisingly, if one works in the dual domain and 
uses the idea of lifting, the equivalent problem becomes finite-dimensional and, 
therefore, may be solved on the computer. The solution to the original problem may 
then be reconstructed by duality. This approach is explained in \cite[Sec.~4]{candes13}.

The approach, by now standard, may be carried over to the problem considered in 
this paper, where we work with a nonnegative signal $\inp$ and the spikes need not be well-separated.
The same trigonometric polynomials that certify 
optimality of~\fref{eq:find0} and lead to \fref{thm:mainthm} may also be used to 
prove stability of the corresponding gridless algorithm.

The reason why we chose to focus on the arbitrarily fine grid and not 
to discuss the gridless problem in details is the following \emph{practical} consideration. 
In applications, for example in super-resolution microscopy, there 
is no real difference between the gridless problem and the problem with 
a very fine grid. The real sources have some finite 
nonzero size, perhaps small. Therefore, in practice, one has a choice between solving 
\fref{eq:find0} on a sufficiently fine grid or solving the infinite-dimensional total-variation-minimization
problem via lifting. To solve \fref{eq:find0} with $\N$ variables 
efficiently, one would use a first-order solver whose complexity 
is dominated by repeated multiplications by $\matQ$, $\tp\matQ$. Using~\fref{eq:matqf} 
one would implement $\matQ$ via the fast Fourier transform so that each matrix multiplication 
takes $\mathcal{O}(\N\log \N + \flo)$ multiplications. The gridless approach via lifting requires 
one to solve a semidefinite convex optimization problem  (see~\cite[eq.~(4.3)]{candes13}) 
with $\mathcal{O}(\flo^2)$ variables. The complexity of the gridless approach does not 
depend on $\N$ at all, a very nice property. However, the necessity to deal with a 
semidefinite problem with $\mathcal{O}(\flo^2)$ variables make it much more costly 
than solving~\fref{eq:find0} on a sufficiently fine grid in the applications we have encountered.

\subsubsection{General \acp{PSF}}
The sharp rectangular frequency cut-off of $\matQ$ in~\fref{eq:spec} corresponds to 
the \ac{PSF} $\klo(\cdot)=\sinc(\cdot)$ in~\fref{eq:convcont}. The $\sinc$ function 
takes negative values (as shown in \fref{fig:vecs} in purple solid), whereas all \acp{PSF} 
in microscopy take nonnegative values (as shown in \fref{fig:conv}). The 
simplest PSF that takes nonnegative values is the Fejèr kernel. The spectrum 
of $\matQ$ that corresponds to the Fejèr kernel has a triangular decay of 
$\hat q_k$ in~\fref{eq:spec} as in~\cite[eq.~13]{candes-14} and in~\fref{eq:fejsum}. 
The results for the rectangular 
spectrum can be translated into the results for the triangular spectrum (in fact for the spectrum 
of any reasonable shape) using the idea of spectrum equalization. We refer the reader 
to~\cite{candes-14} for a detailed explanation on how this can be done. In this paper 
we focus on the basic case in~\fref{eq:spec} only.

\subsubsection{2D model}
All results in this paper are for the 1D model. The discrete results have been 
generalized to the 2D model in~\cite{candes-14}. We believe that the  
results in this paper may be generalized to the 2D model in a similar way. We leave this 
generalization for future work.

\section{Literature review and innovations}
\subsection{Prior art}
\paragraph*{Prony's method.} 
Prony's method~\cite{prony95essai} is an algebraic approach for
solving the gridless super-resolution problem from \emph{noiseless} data when the
number of spikes is known a priori.  The observations $\data$ are used to
form a trigonometric polynomial, whose roots coincide with the spike
locations. The trigonometric polynomial is then factored, thus revealing those
locations, and the amplitudes estimated by solving a system of linear
equations. In the noiseless case, Prony's method recovers $\inp$
perfectly provided that $\zeronorm{\inp}\le\flo$. No further
Rayleigh regularity assumption on the signal support is needed.  With
noise, however, the performance of Prony's method degrades
sharply. The difficulty comes from the fact that the roots of a
trigonometric polynomial constructed by an algebraic method are
unstable and can shift dramatically even with small changes
in the data. Therefore, a crucial problem is to solve the super-resolution problem
in the \emph{presence of noise}.

\paragraph*{Fundamental limits.} 
In the pioneering work~\cite{donoho92-09}, Donoho studied limits of
performance for the super-resolution problem and recognized the importance of
Rayleigh regularity as the fundamental property that determines how easy it is 
to super-resolve the signal. He analyzed an \emph{intractable} exhaustive search algorithm 
and demonstrated that assuming $\support(\inp)\in\rset{2\llo \rr}{\rr}$, 
the estimator, $\hat\inp$,  produced by this algorithm satisfies:
\begin{equation}
    \label{eq:donest}
    \twonorm{\hat\inp-\inp}\le \tCC{\rr} \SRF^{2\rr+1}\twonorm{\noise}.
\end{equation}
The algorithm proposed by Donoho may only be applied to vectors $\inp$ with very 
few dimensions. Therefore, the fundamental problem posed by Donoho is to find 
an \emph{efficient} algorithm that is stable in the sense of~\fref{eq:donest}. 
Donoho has also proven a converse to~\fref{eq:donest}: the $\SRF$ 
dependence in~\fref{eq:donest} cannot be better than $\SRF^{2\rr-1}$ 
even for the best possible algorithm in the worst-case scenario (the minimax setting). 
The results of Donoho have been recently (partially) improved 
in~\cite{demanet14the-r, batenkov20a} where for the same intractable algorithm the 
following stability estimate was derived:
\begin{equation}
\label{eq:demanet}
\twonorm{\hat\inp-\inp}\le \tCC{\rr, \zeronorm{\inp}} \SRF^{2\rr-1}\twonorm{\noise}.
\end{equation}
The result is sharp in the sense that the $\SRF$ dependence matches Donoho's converse. 
The weakness is that $\tCC{\rr, \zeronorm{\inp}}$ depends on the total number of 
spikes in the signal, which may be very large. Note also that the stability estimates 
in~\fref{eq:donest}, \fref{eq:demanet} are expressed in terms of $\ell_2$ norms, 
whereas our stability estimates in~\fref{eq:stability} are expressed in terms of $\ell_1$ norms.

Other works \cite{stoica89stati,stoica89maxim,batenkov13accuracy}
study the stability of the super-resolution problem in the presence of
noise, but likewise do not provide a tractable algorithm to perform
recovery. Work in~\cite{shahram04-05,shahram05resolv,helstrom64-10}
analyzes the detection and separation of two closely-spaced spikes,
but does not generalize to the case when there are more than two
spikes in the signal.

\paragraph*{Super-resolution for well-separated spikes.}
Progress towards resolving the question posed in \cite{donoho92-09} in
the general situation where $\inp\in\complexset^\N$---in this
paper we consider the case $\inp\ge\veczero$ only---has been
made in~\cite{candes13,candes21-2,fernandez-granda16a}. The sharpest from this series of 
results~\cite{fernandez-granda16a} implies the following.
Assume $\support(\inp)\in\rset{1.26\llo}{1}$, then 
the solution to $\ell_1$-minimization problem
\begin{equation}
    \tag{L1}
    \label{eq:l1min}
    \hat\inp = \argmin_{\tilde\inp} 
    \onenorm{\tilde\inp} \quad  
    \text{subject to} \quad 
    \vecnorm{\data-\matQ\tilde\inp}_1\le \delta
\end{equation}
with $\delta$ chosen so that $\onenorm{\noise}\le \delta$ satisfies
\begin{equation}
    \label{eq:carlosbnd}
    \onenorm{\inp-\hat\inp}\le \tcc \cdot \SRF^{2},
\end{equation}
where  $\tcc$ is a positive numerical constant.  The requirement
$\support(\inp)\in\rset{1.26\llo}{1}$ (well-separated spikes in our terminology)
is restrictive because it means that the
signal $\inp$ cannot contain spikes that are at a distance less than
$1.26\llo$. This is a limitation
for many applications including single-molecule microscopy, as it is
usually understood that the goal of super-resolution is to distinguish
spikes that are (significantly) closer than the Rayleigh diffraction
limit, i.e.,~at a fraction of $\llo$ apart. Unfortunately, if there
are spikes at a distance smaller than $\llo$, $\ell_1$ minimization
does not, in general, return the correct solution even if there is no
noise. The central question therefore is: which algorithms and under which assumptions are
able to super-resolve signals \emph{robustly} when the distance between some of the spikes 
may be \emph{substantially smaller} than $\llo$?

On a similar line of research, see~\cite{tang13-1} and \cite{tang13compr}
for related results on the denoising of line spectra and on the
recovery of sparse signals from a random subset of their low-pass
Fourier coefficients. The accuracy of support detection for well-separated spikes 
is analyzed in~\cite{fernandez-granda13suppo,azais15spike}.

\paragraph*{Noise-aware algebraic methods.} 
Many noise-aware versions of Prony's method are used frequently in
engineering applications, for example in radar (see~\cite[Ch.~6]{stoica05}). 
The most popular methods are MUSIC and its numerous
variations~\cite{barabell83impro, bienvenu79influ, schmidt86-03,pisarenko73theretr,
tufts82estim,cadzow88signa}, matrix-pencil~\cite{hua90matri}, and
ESPRIT~\cite{paulraj86a-sub,roy89-07}. For more details on algebraic
methods we refer the reader to the excellent book~\cite[Ch.~4]{stoica05}. 
It is important to point out that unlike convex optimization based
methods like~\fref{eq:l1min}, algebraic methods do
\emph{not} need the spikes to be well-separated ($\inp$ may contain spikes closer 
than $\llo$) even when the signal is complex-valued, at least in the noiseless case.

The stability of noise-aware algebraic methods
is an active area of research. Asymptotic results (at high SNR)
on the stability of MUSIC in the presence of Gaussian noise are
derived in~\cite{clergeot89perfo,stoica91stati}. 
More recently, some
steps towards analyzing MUSIC and matrix-pencil in a non-asymptotic regime have
been taken in~\cite{liao14music} and in~\cite{moitra15a}, respectively.

Especially important is the question of stability of algebraic methods when the spikes are not well-separated.
Substantial progress in understanding this for MUSIC and ESPRIT algorithms
has been made by Li and Liao in the last two years~\cite{li19b, li17a, li20a}. See also~\cite{kunis19a} 
for a simplified exposition of ideas in~\cite{li17a} and some extensions.
The authors considered a \emph{separated cluster model} for spike locations; the model is similar to
Rayleigh regularity in spirit, but is more restrictive. For example, 
the signals depicted in Figures~\ref{fig:rclass2} and~\ref{fig:rclass2alt} are both Rayleigh-regular with $r=2$. 
However, only the signal in \fref{fig:rclass2}, 
but not the signal in \fref{fig:rclass2alt}, has separated spike clusters. 
For MUSIC in~\cite{li19b, li17a} and for ESPRIT in~\cite{li20a},
assuming Gaussian noise and making a further (restrictive) assumption $\flo\gtrsim \zeronorm{\inp}^2$,
the authors derived bounds on signal-to-noise ratio
in terms of $\SRF^{2\rr-2}$ and a factor that depend on $\flo$ so that
the correct signal support recovery is guaranteed. There is still a large gap between
these stability estimates and the minimax converse results. For example, for ESPRIT, the gap is a 
factor proportional to $\flo$, which may be very large for high-dimensional signals~\cite{li20a}. 
Hence, the problem of finding a super-resolution method for complex-valued signals that performs 
well empirically and has sharp 
theoretical stability estimates in the case when the spikes are not well-separated
is still open.

\paragraph*{Super-resolution of nonnegative signals.} 
The case of nonnegative signal, $\inp\ge \veczero$, was
analyzed in~\cite{donoho90-06}, see also~\cite{fuchs05} for a shorter
exposition of the same idea. It is proven in~\cite{donoho90-06} that as long as 
$\zeronorm{\inp}\le \flo$, one can recover $\inp$ by solving a simple convex feasibility 
problem in the noiseless setting. In the presence of noise, \cite{donoho90-06} does not provide 
sharp estimates: it does not reveal the correct $\SRF$ dependence in the stability estimate.

More recently, the authors of~\cite{schiebinger17a} generalized~\cite{donoho90-06} to the case of more general 
point spread functions and sampling patterns in the noiseless case. The corresponding noisy case has been studied 
in~\cite{eftekhari19a}. Being very general, the results of~\cite{eftekhari19a} do not appear to be sharp enough 
to reveal the fundamental dependence between the stability of the algorithm, the regularity of the signal,
and the super-resolution factor. 

Most relevant to this work is the earlier paper~\cite{candes-14} where~\fref{prp:UB} has been 
proven. The key question remained: what happens if the grid becomes arbitrarily fine or when there 
is no grid at all (the gridless setting). Some progress towards answering 
this question has since been made in~\cite{denoyelle17a} where stability estimates for the detection 
of signal support have been expressed in terms of $\SRF^{2\zeronorm{\inp}-1}$. Note that 
$\zeronorm{\inp}$ may be arbitrarily large for high-dimensional signals, and so the bounds 
in~\cite{denoyelle17a} become highly suboptimal for the practically relevant case 
in which the spikes are distributed in a regular way in the signal.

\subsection{Innovations}
The innovations in this paper may be summarized as follows:
\begin{itemize}
    \item Generalization of the results of~\cite{candes-14} to the case when the grid is arbitrarily fine. 
    \item Seamless connection between the super-resolution results for the discrete grid and the 
    results for the gridless (continuous) setting. This has theoretical as well as practical implications.
    \item Mathematically the paper builds on the ideas from~\cite{candes21-2} and \cite{candes-14} and 
    develops these methods further. The interpolation constructions
    in \fref{lem:dualprod1} and \fref{lem:dualprod1d} are new. These constructions may be of independent 
    interest and may be useful for other problems.
\end{itemize}

\section{Notation}
Sets are denoted by calligraphic letters $\setA, \setB$, and so on.
Boldface letters $\matA,\matB,\ldots$ and $\veca,\vecb,\ldots$ denote
matrices and vectors, respectively.  The element
in the $i$th row and $j$th column of a matrix $\matA$ is
$\matAc_{i j}$ or $[\matA]_{i,j}$, and the $i$th element of a
vector~$\vectr$ is $\vectrc_i$ or $[\vectr]_i$.  For a vector
$\vectr$, $\diag(\vectr)$ stands for the diagonal matrix that has the
entries of $\vectr$ on its main diagonal. 
The vector of all zeros is denoted $\veczero$.
The superscript~$\tp{}$ stands for transposition.
For a finite set $\setI$, we write $\abs{\setI}$ for the cardinality.
For $x\in\reals$, $\lceil x\rceil\define
\min\{m\in\integers\mid m\geq x\}$.  We use $\natseg{l}{k}$ to
designate the set of natural numbers $\left\{l, l+1,\ldots,k\right\}$.
For a vector
$\vectr\in\complexset^n$, $\onenorm{\vectr}=\sum_{j=0}^{n-1}
\abs{\vectrc_j}$ denotes the $\ell_1$ norm; $\twonorm{\vectr}=\bigl(\sum_{j=0}^{n-1}
\vectrc_j^2\bigr)^{1/2}$ denotes the $\ell_2$ norm; $\infnorm{\vectr}=\max_{j}\abs{\vectrc_j}$ denotes 
the $\ell_\infty$ norm; and $\zeronorm{\vectr}$ denotes the number of nonzero elements in $\vectr$.
For a function 
$\fun(\cdot):\reals\to\reals$, $\infnorm{\fun(\cdot)}=\max_{t\in\reals}\abs{\fun(t)}$. 
The indicator function is denoted as $\I{\cdot}$,  it is equal to one if the condition in 
the brackets is satisfied and zero otherwise.  We use $c$ with 
various subindexes and superindexes to denote \emph{positive numerical} constants; to track things simpler, we
use the convention that the numerical constants with the subscript $u$, like $c_{u1}$, satisfy $c_{u1}>1$, and
the numerical constants with subscript $l$, like $c_{l1}$, satisfy $0<c_{l1}<1$. 
Throughout the paper we use the convention:
$\flo$ denotes the frequency cut-off of the measured data [see \fref{eq:spec}], $\llo=1/\flo$ is the corresponding wavelength; $\fc$ denotes an abstract frequency cut-off (this value changes in different places in the paper) and $\lc=1/\fc$ is the corresponding wavelength. To simplify writing, we follow the conventions:
$\prod_{i=1}^r a_i=1$ and $\{a_1,\ldots, a_r\}=\varnothing$ when $r=0$.

\section{Structure of the proof}
Previous results in the field \cite{candes13, candes21-2, candes-14} suggest that \fref{thm:mainthm} may be proven by constructing an appropriate dual certificate. Since the 
measurement operator is a low-pass kernel, the dual certificate for this problem is a real-valued 
trigonometric polynomial frequency-limited to $\flo$ with additional properties. 
In fact, similar to~\cite{candes21-2}, we will need three trigonometric polynomials instead of one, each with its 
own properties; they will be called $q_0(\cdot)$, $q_1(\cdot)$, and $q_2(\cdot)$. 
These dual trigonometric polynomials are constructed in Lemmas~\ref{lem:dualprod}, \ref{lem:dualprod1}, 
and \ref{lem:dualprod1d} in \fref{sec:dualcerts}; $q_0(\cdot)$ is borrowed from~\cite{candes-14}, 
$q_1(\cdot)$ and $q_2(\cdot)$ are new---they are the main mathematical contribution of this paper. 
In \fref{sec:stability} we use
$q_0(\cdot)$, $q_1(\cdot)$, and $q_2(\cdot)$ to derive the stability estimates and prove \fref{thm:mainthm}.

We invite the reader unfamiliar with the concept of dual certificates in convex optimization 
to study the short proof of \cite[Lm.~1]{candes-14} before reading this paper further. The 
derivations in \fref{sec:stability} generalize \cite[Lm.~1]{candes-14} to the arbitrarily fine 
grid setting, but they are much more involved.

Some calculations in this paper are complicated, but we tried to present the key new ideas in a simple way.
At the first pass through the paper we 
suggest that the reader studies Sections~\ref{sec:build}--\ref{sec:subqz}; then
focuses on the formulations of Lemmas \ref{lem:dualprod1} and \ref{lem:dualprod1d} and the new constructions in \fref{sec:dualprod1m} and in \fref{sec:const2}; skips the details in Sections~\ref{sec:propphizk}--\ref{sec:bndfarV} and in Sections~\ref{sec:existencephipk}--\ref{sec:proofdualprod14}; and finally studies the stability estimates in \fref{sec:stability}. After this, return to the technical details in Sections~\ref{sec:propphizk}--\ref{sec:bndfarV} and in Sections~\ref{sec:existencephipk}--\ref{sec:proofdualprod14}.

\section{Dual certificates}
\label{sec:dualcerts}

Throughout the paper we will use the following definitions.
Define the error vector
\begin{equation}
    \vech = \tp{[h_0, \ldots, h_{\N-1}]}\define\hat\inp - \inp
\end{equation}
and the set of points where the error vector takes on negative values
\begin{equation}
    \label{eq:setT}
    \setT = \{t_1, \ldots, t_\sparsity\}\define\{m/\N: h_m<0\}.
\end{equation}
The points are ordered according to $t_1<\ldots<t_\sparsity$.
Recall, $\hat\inp\ge\veczero$ and $\inp\ge\veczero$. Therefore, $h_m$ can only take on 
negative values on $\support(\inp)$, which implies $\setT\subset\support(\inp)$. 
Since $\support(\inp)\in \rset{\kappa \llo\rr}{\rr}$ and 
since the elements of $\support(\inp)$ are separated 
by at least $2\lhi$, it follows $\setT\in\rset{\kappa\llo\rr}{\rr}$ and the elements 
of $\setT$ are also separated by at least $2\lhi$. As we will see below, the dual 
trigonometric polynomials $q_0(\cdot)$, $q_1(\cdot)$, and $q_2(\cdot)$ need to  satisfy specific interpolation 
constraints on $\setT$.

\label{sec:neigh}
Throughout the paper we will use the following neighborhood notations.
\begin{dfn}
\label{dfn:neigh}
For $\tau\in\T$, $\delta>0$,
\begin{equation}
    \near{\delta}{\tau}\define \{t\in\T: \abs{t-\tau}\le \delta \},
\end{equation}
where $\abs{\cdot}$ denotes the wrap-around distance on $\T$. Above, $\near{\cdot}{\cdot}$ stands for ``near'' (i.e., the points near $\tau$).

For a set $\setV\subset\T$ and $\delta>0$,
\begin{align}
    \near{\delta}{\setV}&\define\union_{\tau\in\setV} \near{\delta}{\tau},\\
    \far{\delta}{\setV}&\define\T\setdiff \near{\delta}{\setV}.
\end{align}
Above, $\far{\cdot}{\cdot}$ stands for ``far'' (i.e., the points far from $\setV$).
\end{dfn}

\subsection{Building blocks}
\label{sec:build}
The following two lemmas serve as common building blocks for the construction of trigonometric 
polynomials $q_0(\cdot)$, $q_1(\cdot)$, and $q_2(\cdot)$.

\fref{lem:dualcarlos} allows us to construct a trigonometric polynomial frequency-limited to $\fc$ that 
\emph{interpolates zeros} at well-separated points as illustrated in \fref{fig:toolsmult2n1}.
\begin{figure}[ht]
    \begin{subfigure}{0.5\textwidth}
        \centering
        \caption{}\vspace{-0.3cm}
        \label{fig:toolsmult2n1}
        \includegraphics[width=0.9\textwidth]{./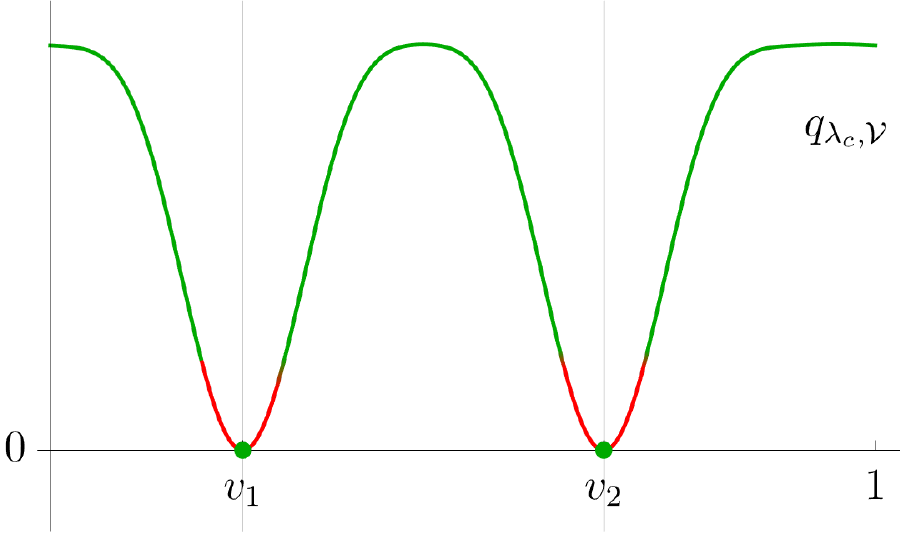}
    \end{subfigure}
    \begin{subfigure}{0.5\textwidth}
        \centering
        \caption{}\vspace{-0.3cm}
        \label{fig:toolsmult2n2}
        \includegraphics[width=0.9\textwidth]{./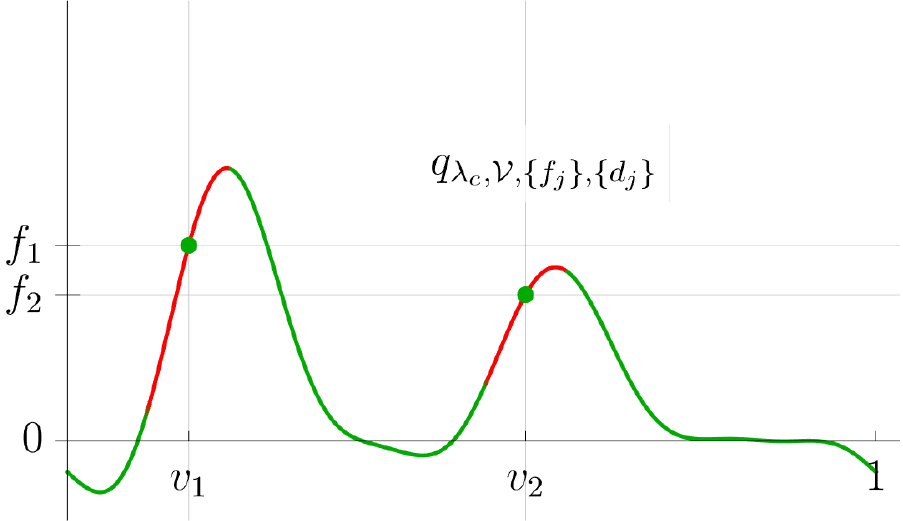}
    \end{subfigure}
    \caption{(a) Illustration of \fref{lem:dualcarlos}. Trigonometric polynomial 
    frequency-limited to $\fc=6$
    interpolates zeros at well-separated points $\{v_1, v_2\}\in \rset{2.5\lc}{1}$. 
    Specifically, $q_{\lc, \setV}(v_j)=q_{\lc, \setV}'(v_j)=0$ 
    and the curvature in the neighborhoods of $v_1$ and $v_2$ is controlled (indicated in red) 
    according to~\fref{eq:qnearbndeq}. (b) Illustration of \fref{lem:dualcarlos2}. 
    Trigonometric polynomial frequency-limited to $\fc=6$
    interpolates values $f_1$ and $f_2$ at well-separated 
    points $\{v_1, v_2\}\in \rset{2.5\lc}{1}$.
    Specifically, $q_{\lc, \setV, \{f_j\}, \{d_j\}}(v_j)=f_j$ and the derivatives at
    $v_1$ and $v_2$ are constrained (indicated in red) according 
    to~$q'_{\lc, \setV,\{f_j\}, \{d_j\}}(v_j)=d_j$.}
    \label{fig:toolsmult2n}
\end{figure}

\begin{lem}
\label{lem:dualcarlos}
Let $\lc\in (0,1/128)$, set $\fc\define 1/\lc$. Consider a collection of points 
$v_1< v_2< \ldots< v_{V}$, define $\setV\define\{v_1, v_2, \ldots, v_{V}\}$ and 
assume~$\setV\in\rset{\kappa\lc}{1}$. Then, there exists a real-valued trigonometric 
polynomial $q(\cdot) = q_{\lc,\setV}(\cdot)$ that satisfied the following properties.
\begin{enumerate}
    \item \label{prop:lf} Frequency limitation to $\fc$: 
    $q(t)=\sum_{k=-\fc}^{\fc} \hat q_k e^{-\iu 2\pi kt}$ for some $\hat q_k\in\complexset$.
    \item \label{prop:qzero} Zero values and zero derivatives on $\setV$: 
        for all $v\in\setV$, $q(v)=q'(v)=0$.
    \item \label{prop:qzeroone} Uniform confinement between zero and one: 
        for all $\tau\in\reals$, $0\le q(\tau)\le 1$.
    \item \label{prop:qnearbnd}Quadratic behavior near $\setV$: for all $v\in\setV$ 
        and for all $\tau\in\near{\D \lc}{v}$
    \begin{equation}
    \frac{\cql (v-\tau)^2}{\lc^2}\le q(\tau) \le \frac{\cqu (v-\tau)^2}{\lc^2}.
    \label{eq:qnearbndeq}
    \end{equation}
    \item \label{prop:qinfbnd} Boundedness away from zero far from $\setV$: for all 
        $\tau\in\far{\D \lc}{\setV}$, $q(\tau) \ge \cqlf>0$.
    \item \label{prop:qpbnd} Uniform confinement of the derivative: $\infnorm{q'(\cdot)}\le 2\pi/\lc$.
    \item \label{prop:qppbnd} Uniform confinement of the second derivative: $\infnorm{q''(\cdot)}\le 4\pi^2/\lc^2$.
\end{enumerate}
Above,  all the constants are positive numerical constants. Specifically,
\begin{alignat}{2}
    &\kappa\define 1.87, &&\D\define 0.17,\\
    &\cql\define 0.029, &&\cqu\define 2\pi^2,\\
    &\cqlf\define \D^2 \cql = \num{8.3e-4}. \label{eq:cqlfcnear}
\end{alignat}
\end{lem}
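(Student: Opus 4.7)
The plan is to realize $q$ as $q(t)=1-p(t)$ following the construction of~\cite{candes13}. Let $K$ be a nonnegative, even, band-limited kernel of squared-Fej\'er type with $K(0)=1$ and bandwidth $\lesssim\fc$, so that the combination below is band-limited to $\fc$. Set
\[
p(t)=\sum_{v\in\setV}\bigl(\alpha_v\,K(t-v)+\beta_v\,K'(t-v)\bigr),
\]
and choose the $2V$ coefficients $\alpha_v,\beta_v$ by solving the $2V$ interpolation equations $p(v)=1$ and $p'(v)=0$ for every $v\in\setV$. The density hypothesis $\setV\in\rset{\kappa\lc}{1}$ with $\kappa=1.87$ enters through the invertibility of the associated $2V\times 2V$ block interpolation matrix: its diagonal is dominant because of the decay of $K$ and $K'$ away from $0$ combined with the $\kappa\lc$ separation between points of $\setV$, so by a Schur-complement or Neumann-series argument the solution exists and satisfies $\alpha_v\approx 1$, $\beta_v\approx 0$ uniformly in $v$. \fref{prop:lf} is immediate since $q$ is a linear combination of translates of $K$ (band-limited to $\fc$) plus the constant $1$; \fref{prop:qzero} holds by construction; and $q$ is real-valued since $K$ is real and even.

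For the pointwise size bounds I would split $\T$ into three regimes. Near a single $v\in\setV$, i.e.\ for $\tau\in\near{\D\lc}{v}$, a second-order Taylor expansion of $q$ around $v$ combined with $q(v)=q'(v)=0$ reduces~\fref{eq:qnearbndeq} to two-sided bounds on $q''$. The leading term $-\alpha_v K''(0)$ supplies the sign and magnitude, and the contributions from $v'\ne v$ are controlled via the separation $\kappa\lc$; this yields the constants $\cql=0.029$ and $\cqu=2\pi^2$. Outside these balls, i.e.\ on $\far{\D\lc}{\setV}$, the super-quadratic decay of $K$ away from its center together with $\beta_v\approx 0$ and $\alpha_v\approx 1$ gives $p(t)\le 1-\cqlf$, with $\cqlf=\D^2\cql$ calibrated so that the near and far bounds agree at the boundary $\abs{v-\tau}=\D\lc$; this is \fref{prop:qinfbnd}. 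The upper bound $q\le 1$ (equivalently $p\ge 0$) follows from the dominance of the nonnegative $\alpha_v K(t-v)$ terms over the signed $\beta_v K'(t-v)$ corrections, since $\alpha_v K$ is nonnegative and $\abs{\beta_v K'}$ is uniformly much smaller. Having established \fref{prop:qzeroone}, Bernstein's inequality applied to the real trigonometric polynomial $q$ of degree $\fc$ with $\infnorm{q}\le 1$ yields $\infnorm{q'}\le 2\pi\fc=2\pi/\lc$ and $\infnorm{q''}\le 4\pi^2\fc^2=4\pi^2/\lc^2$, which are \fref{prop:qpbnd} and~\fref{prop:qppbnd}.

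The main obstacle is not the structural argument but the sharp tracking of numerical constants. The specific values $\kappa=1.87$, $\D=0.17$, $\cql=0.029$, $\cqu=2\pi^2$, $\cqlf=\num{8.3e-4}$ must be matched to the concrete choice of kernel $K$: one needs explicit estimates on $K(0)$, $K''(0)$, and the tail sums $\sum_{v'\ne v}\abs{K(v-v')}$ and $\sum_{v'\ne v}\abs{K'(v-v')}$ under the $\kappa\lc$ separation constraint. These in turn determine how tightly $\alpha_v\to 1$ and $\beta_v\to 0$, which propagate into all remaining constants. The heavy lifting has already been carried out in~\cite{candes13}; the role of the present lemma is to package those estimates into one convenient statement that will be reused by the new constructions of $q_1(\cdot)$ and $q_2(\cdot)$ in the sequel.
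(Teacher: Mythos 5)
Your proposal reconstructs, more or less from scratch, the interpolation construction of the cited reference (a linear combination of translates of a squared‑Fej\'er kernel $K$ and $K'$, coefficients determined by a $2V\times 2V$ interpolation system whose invertibility rests on diagonal dominance under the $\kappa\lc$ separation). The paper's own proof is far terser: it sets $q(\cdot)=\tfrac12(q_{\mathrm{CFG}}(\cdot)+1)$ where $q_{\mathrm{CFG}}$ is taken \emph{directly} from the reference and interpolates $-1$ on $\setV$, and simply cites that paper for the band limitation, the zero interpolation, the confinement $0\le q\le 1$, the far‑from‑$\setV$ lower bound, and the lower bound in \fref{eq:qnearbndeq}. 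Your shifted kernel combination $q=1-p$ with $p(v)=1$, $p'(v)=0$ is the same object up to the affine rescaling $p=\tfrac12(1-q_{\mathrm{CFG}})$, so the structural approach is identical; you just spell out what the paper delegates to the citation.

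There is one substantive deviation worth flagging. You attribute \emph{both} constants $\cql=0.029$ and $\cqu=2\pi^2$ in the quadratic estimate \fref{eq:qnearbndeq} to explicit near‑diagonal kernel calculations (``the leading term $-\alpha_v K''(0)$ supplies the sign and magnitude, and the contributions from $v'\ne v$ are controlled \dots''). That route is the right one for the \emph{lower} bound $\cql$, which genuinely comes from the reference's careful kernel estimates, but it is not how the paper gets $\cqu=2\pi^2$, and in fact explicit kernel estimates would not naturally produce that specific number. The paper first establishes $0\le q\le 1$, then applies Bernstein's theorem twice to get $\infnorm{q''(\cdot)}\le 4\pi^2/\lc^2$ (your \fref{prop:qppbnd}), and \emph{then} derives the upper bound in \fref{eq:qnearbndeq} from $q(v)=q'(v)=0$ plus the second‑order Mean Value remainder: $q(\tau)=\tfrac12 q''(\xi)(\tau-v)^2\le \tfrac12\cdot\tfrac{4\pi^2}{\lc^2}(\tau-v)^2$, which is exactly $\cqu=2\pi^2$. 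This ordering matters --- the upper bound of Property~4 is a corollary of Properties~2 and~7, not an independent kernel estimate --- and it explains why $\cqu$ comes out so clean while $\cql$ is an ugly numerical constant.
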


\begin{proof}
This lemma is a direct consequence of the technique developed in~\cite{candes13}.
Let $q_{CFG}(\cdot)$ denote the trigonometric polynomial constructed as in~\cite[eq.~(2.4)]{candes13} 
in order to interpolate $-1$ on $\setV$. Then, according to~\cite[Lm.~2.4, Lm.~2.5, Sec.~2.5]{candes13}, 
$q(\cdot) = 0.5(q_{CFG}(\cdot)+1)$ satisfies Properties~\ref{prop:lf}, \ref{prop:qzero}, \ref{prop:qzeroone}, 
\ref{prop:qinfbnd} of the lemma, and the lower bound in~\fref{eq:qnearbndeq}.
Since, by \fref{prop:qzeroone}, $\infnorm{q(\cdot)}\le 1$, Properties~\ref{prop:qpbnd} 
and~\ref{prop:qppbnd} follow by applying \fref{eq:bernstein} [Bernstein theorem]. 
Finally, the upper bound in~\fref{eq:qnearbndeq} follows from \fref{prop:qzero} 
and \fref{prop:qppbnd} by~\fref{eq:mvt2} [Mean Value theorem].
\end{proof}

\fref{lem:dualcarlos2} allows us to construct a trigonometric 
polynomial frequency-limited to~$\fc$ that \emph{interpolates arbitrary values} and has constrained derivatives
at well-separated points as illustrated in \fref{fig:toolsmult2n2}.
\begin{lem}
\label{lem:dualcarlos2}
Let $\lc\in (0,1/128)$, set $\fc\define 1/\lc$. 
Consider a collection of points $v_1< v_2< \ldots< v_{V}$, define 
$\setV\define\{v_1, v_2, \ldots, v_{V}\}$ and assume~$\setV\in\rset{\kappa\lc}{1}$.
Consider two sets of real numbers $\{\fvalc_1, \fvalc_2,\ldots, \fvalc_\npoints\}$ 
and $\{\dvalc_1, \dvalc_2, \ldots, \dvalc_\npoints\}$ that satisfy
\begin{equation}
    \abs{\fvalc_j}\le 1\quad \text{and}\quad \abs{\dvalc_j}\le \frac{1}{\lc}\label{eq:condp2}
\end{equation}
for all $j=1,\ldots, \npoints$.
Then, there exists a real-valued trigonometric polynomial 
$q(\cdot)=q_{\lc,\setV, \{\fvalc_j\}, \{\dvalc_j\}}(\cdot)$ that satisfies
the following properties.
\begin{enumerate}
    \item \label{prop:lf2} Frequency limitation 
    to $\fc$: $q(t)=\sum_{k=-\fc}^{\fc} \hat q_k e^{-\iu 2\pi kt}$ for some $\hat q_k\in\complexset$.
    \item \label{prop:rhoint} Constrained values and derivatives on $\setV$: for all $j=1,\ldots, \npoints$, 
    \begin{equation}
        q(v_j)=\fvalc_j\quad \text{and}\quad q'(v_j)=\dvalc_j.
    \end{equation}
    \item \label{prop:dqinfbnd} Uniform confinement: $\infnorm{q(\cdot)}\le \cqdv$.
    \item \label{prop:qpbnd1} Uniform confinement of the derivative: $\infnorm{q'(\cdot)}\le \cqdvd/\lc$.
    \item \label{prop:qppbnd1} Uniform confinement of the second derivative: 
        $\infnorm{q''(\cdot)}\le \cqdvdd/\lc^2$.
\end{enumerate}
Above, $\cqdv$, $\cqdvd$, and $\cqdvdd$ are  positive numerical constants that are defined in 
the proof of the lemma in~\fref{app:dp}.
\end{lem}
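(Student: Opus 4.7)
The plan is to mimic the classical Candès--Fernández-Granda dual polynomial construction used to prove \fref{lem:dualcarlos}, but with two small modifications: (i) we interpolate prescribed values and derivatives rather than zeros, and (ii) we have to track how the bounds depend on the size of the interpolation data $\{\fvalc_j\},\{\dvalc_j\}$.

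First I would fix an even, nonnegative, frequency-limited bump $K(\cdot)$ built from a squared Fejér-type kernel (the same one used in~\cite{candes13}), scaled so that $K(0)=1$, $K'(0)=0$, and so that $K$ together with its first three derivatives decays fast away from $0$. The spectrum of $K$ lives in $[-\fc,\fc]$, and analogous bounds hold for $K'$ and $K''$, up to factors of $1/\lc$. The ansatz is then
\begin{equation}
q(t)=\sum_{j=1}^{V}\lefto[\alpha_j K(t-v_j)+\lc\,\beta_j K'(t-v_j)\right],
\end{equation}
which is automatically real-valued, band-limited to $\fc$ (Property~\ref{prop:lf2}), and has $2V$ free real parameters matched to the $2V$ interpolation conditions $q(v_k)=\fvalc_k$, $q'(v_k)=\dvalc_k$ in Property~\ref{prop:rhoint}.

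Next I would write the interpolation conditions as a $2\times 2$ block linear system $\matM\tp{[\bmalpha\;\bmbeta]}=\tp{[\vecf\;\lc\vecd]}$, where the diagonal $2\times 2$ blocks are, at leading order, the identity and the off-diagonal blocks involve the values of $K$, $K'$, $K''$ at the separations $v_k-v_j$. Because $\setV\in\rset{\kappa\lc}{1}$, every off-diagonal pair $(v_k,v_j)$ satisfies $\abs{v_k-v_j}\ge\kappa\lc$, so the off-diagonal entries are controlled by the tail of $K$ (and of $\lc K'$, $\lc^2 K''$). The crucial quantitative step is to show that $\matM$ is diagonally dominant: the same tail estimates that power the analysis in~\cite{candes13} show that $\opnorm{\matI-\matM}_\infty$ is bounded by a constant strictly less than $1$ uniformly in $V$, so a Neumann series gives $\opnorm{\matM^{-1}}_\infty\le\cc_0$ for some explicit constant. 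Combined with the hypothesis $\abs{\fvalc_j}\le 1$, $\lc\abs{\dvalc_j}\le 1$, this yields $\infnorm{\bmalpha},\infnorm{\bmbeta}\le\cc_1$.

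Given these coefficient bounds, Properties~\ref{prop:dqinfbnd}--\ref{prop:qppbnd1} follow mechanically: for any $t\in\T$, split the sum in the definition of $q(t)$, $q'(t)$, $q''(t)$ into the (at most one) nearby term and the far terms; the nearby term contributes $O(1)$, $O(1/\lc)$, $O(1/\lc^2)$ respectively by the elementary bounds on $K$, $K'$, $K''$, while the far terms are summed against the separation constraint using the decay of $K$ and its derivatives, as in \cite[Lemmas~2.4--2.5]{candes13}. This gives numerical constants $\cqdv$, $\cqdvd$, $\cqdvdd$. The main obstacle, and the only place that requires real work, is the quantitative diagonal-dominance bound on $\matM$: one needs the constant $\kappa=1.87$ to be large enough that the summed tails of $K,K',K''$ at the Rayleigh-separated grid $\setV$ stay below the threshold that makes the Neumann series converge with a useful constant. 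This is precisely the same computation that underlies~\cite[Sec.~2.5]{candes13}, so it can be imported essentially verbatim, with extra bookkeeping for the derivative rows and columns.
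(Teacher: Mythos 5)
Your sketch is essentially the same construction the paper uses: the ansatz $q(t)=\sum_j\alpha_j\fejer{}(t-v_j)+\beta_j\fejer{}'(t-v_j)$ with the fourth-power Fejér-type kernel, a $2V\times 2V$ linear system enforcing the value/derivative constraints, bounds on $\infnorm{\vecalpha},\infnorm{\vecbeta}$ from off-diagonal decay over the Rayleigh-separated $\setV$, and tail sums to control $\infnorm{q(\cdot)}$. The only cosmetic differences are that the paper inverts the block system via Schur complements rather than a direct diagonal-dominance/Neumann argument on the full $2V\times 2V$ matrix (both rely on the same imported off-diagonal-sum bounds from \cite{candes13,candes21-2}; note that the $\matD_2$ diagonal is $\fejer{}''(0)\ne 1$, so one must precondition by the block diagonal before any Neumann expansion), and the paper obtains the $q'$ and $q''$ bounds in Properties~\ref{prop:qpbnd1}--\ref{prop:qppbnd1} by applying Bernstein's inequality twice to the $q$ bound rather than by re-running the near/far splitting for the derivatives.
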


The proof of the lemma generalizes the results in~\cite[Lm.~2.4, Lm.~2.5, Sec.~2.5]{candes13} 
slightly in several technical aspects; it is given in \fref{app:dp} for completeness.

\subsection{Dual certificate $q_0(\cdot)$}
\label{sec:subqz}
We are now ready to construct the trigonometric polynomial $q_0(\cdot)$.
This trigonometric polynomial, illustrated in \fref{fig:toolsmult2}, is frequency-limited 
to $\flo$, interpolates zeros on a Rayleigh-regular set, is confined between 
zero and one, and quickly grows around its zeros.

\begin{figure}[ht]
    \centering
    \includegraphics[width=0.6\textwidth]{./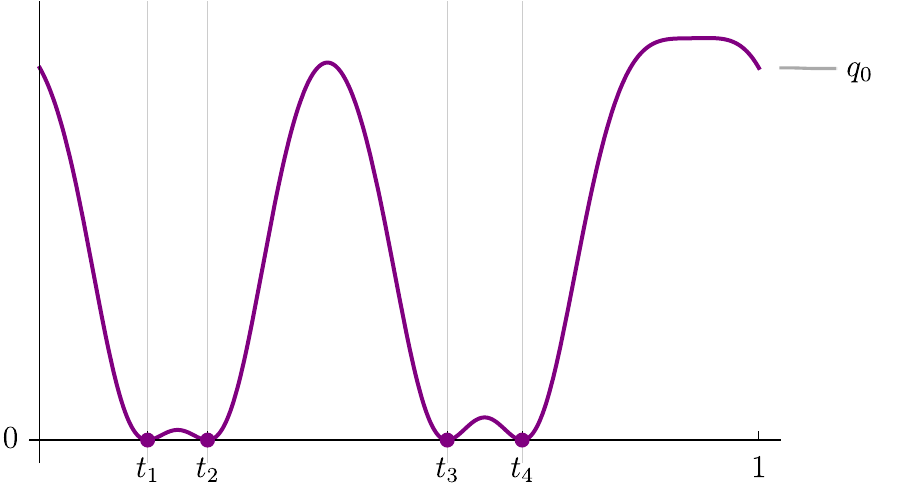}
    \caption{Illustration of \fref{lem:dualprod}. 
    Trigonometric polynomial frequency-limited to $\flo=12$ interpolates zeros on 
    Rayleigh-regular set $\setT=\{t_1, t_2, t_3, t_4\}\in\rset{5\llo}{2}$ and 
    bounces away from zeros ``quickly'': the curvature in the neighborhoods of each 
    point $t_i$ is ``high'' in the sense of~\fref{eq:qzl}. In the figure, 
    $\abs{t_3-t_1}\ge 5\llo=5/12$, $\abs{t_4-t_2}\ge 5\llo=5/12$,
    $\abs{t_2-t_1}\sim 2\lhi$, $\abs{t_3-t_4}\sim 2\lhi$.}
    \label{fig:toolsmult2}
\end{figure}

The key difference between the trigonometric polynomial $q_0(\cdot)$ and the building 
block $q_{\lc,\setV}(\cdot)$ constructed in \fref{lem:dualcarlos} is that the points where $q_0(\cdot)$ 
must take zero values may belong to a Rayleigh-regular set from a 
class $\rset{d}{r}$ with $r>1$. Zeros of $q_0(\cdot)$ may be close, whereas 
zeros of $q_{\lc,\setV}(\cdot)$ are well-separated (compare \fref{fig:toolsmult2n1}
to \fref{fig:toolsmult2}). This is the reason why the technique of~\cite{candes13} 
and \cite{candes21-2} that was used to prove \fref{lem:dualcarlos} cannot be applied directly to construct~$q_0(\cdot)$.

\begin{lem}
\label{lem:dualprod}
There exists a real-valued trigonometric polynomial $q_0(\cdot)$ that satisfies the following properties.
\begin{enumerate}
    \item \label{prop:qlowf} Frequency limitation to $\flo$: 
    $q_0(t)=\sum_{k=-\flo}^{\flo} \hat q_{0,k} e^{-\iu 2\pi kt}$ for some $\hat q_{0,k}\in\complexset$.
    \item \label{prop:qzeroprod} Zero values and zero derivatives on $\setT$: 
        for all $t\in \setT$, $q_0(t)=q_0'(t)=0$. 
    \item \label{prop:zeroonebnd} Uniform confinement between zero and one: for all $\tau\in\reals$, $0\le q_0(\tau)\le 1$. 
    \item \mbox{Controlled  behavior near $\setT$: Take $\tau\in\near{\rr\D\llo}{\setT}$. 
        Let
        $\{\vt_1,\ldots,\vt_{\hrr}\}\define \near{\rr\D\llo}{\tau}\intersect\setT$.}
        [Note: since $\setT\in \rset{\rr\kappa\llo}{\rr}$ and $\D<\kappa$, it follows that $1\le\hrr\le\rr$.] Set
        $\vtn\define\argmin_{v\in\{\vt_1,\ldots,\vt_{\hrr}\}} \abs{v-\tau}$.
        Then, the following estimates hold.
        \begin{enumerate}
        \item Lower bound:
        \begin{align}
        q_0(\tau)
        &\ge \cz^\rr \frac{ \prod_{l=1}^{\hrr}(\vt_l-\tau)^2}{(\rr\llo)^{2\hrr}}
            \label{eq:qzlp}\\
        &\ge \cz^\rr  \frac{ (\vtn-\tau)^2 \lhi^{2(\rr-1)}}{(\rr\llo)^{2\rr}}. 
            \label{eq:qzl}
        \end{align}
        \item Upper bound:
        \begin{align}
            q_0(\tau) 
            &\le 
            \cqu^{\hrr} \frac{ \prod_{l=1}^{\hrr} (\vt_l-\tau)^{2}}{(\rr\llo)^{2\hrr}}.
            \label{eq:qzlunivbnd}
        \end{align}
    \end{enumerate}
    \item \label{prop:q0lbfar} Boundedness away from zero far from $\setT$: for 
    all $\tau \in\far{\rr\D\llo}{\setT}$,
    \begin{equation}
        q_0(\tau) \ge \cqlf^{\rr}>0.
        \label{eq:q0lbxx}
    \end{equation}
    \item \label{prop:qzloinfhi} Fast growth immediately away from $\setT$: 
    for all $\tau \in\far{\lhi}{\setT}$,
    \begin{equation}
        q_0(\tau)\ge \cql^\rr \frac{\lhi^{2\rr}}{(\rr\llo)^{2\rr}}.
    \end{equation}
    \end{enumerate}
    Above, $\cz$ is a positive numerical constant, 
    defined in the proof below.
\end{lem}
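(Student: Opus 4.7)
My plan is to build $q_0(\cdot)$ as a \emph{product} of the well-separated certificates produced by \fref{lem:dualcarlos}, one factor per ``layer'' of the Rayleigh decomposition. Using the definition of $\rset{\kappa\rr\llo}{\rr}$, write $\setT = \setT_1 \cup \cdots \cup \setT_\rr$ with disjoint $\setT_i \in \rset{\kappa\rr\llo}{1}$. Apply \fref{lem:dualcarlos} at wavelength $\lc \define \rr\llo$ to each $\setT_i$ (admissible because the hypothesis $\flo \ge 128\rr$ inherited from \fref{thm:mainthm} forces $\lc \le 1/128$), producing a real trigonometric polynomial $q_i(\cdot)$ that is frequency-limited to $\flo/\rr$, vanishes with zero derivative on $\setT_i$, lies in $[0,1]$, obeys the quadratic two-sided bound on $\near{\D\lc}{\setT_i}$, and is bounded below by $\cqlf$ on $\far{\D\lc}{\setT_i}$. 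I then set
\begin{equation}
    q_0(\tau) \define \prod_{i=1}^{\rr} q_i(\tau).
\end{equation}

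The ``easy'' properties fall out quickly. A product of $\rr$ polynomials each of bandwidth $\flo/\rr$ has bandwidth $\flo$, giving Property~\ref{prop:qlowf}. Every $t \in \setT$ lies in a unique $\setT_{i_0}$, so $q_{i_0}(t)=0$ kills $q_0(t)$; the product rule, together with $q_{i_0}(t)=q_{i_0}'(t)=0$, forces $q_0'(t)=0$, yielding Property~\ref{prop:qzeroprod}. Property~\ref{prop:zeroonebnd} is a pointwise product of $[0,1]$ inequalities. For Property~\ref{prop:q0lbfar}, if $\tau \in \far{\rr\D\llo}{\setT}$ then $\tau \in \far{\D\lc}{\setT_i}$ for every $i$, so $q_i(\tau) \ge \cqlf$ and $q_0(\tau) \ge \cqlf^{\rr}$.

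The technical core is the controlled-behavior statement (Property~4 of the lemma). Fix $\tau \in \near{\rr\D\llo}{\setT}$. For each $i$ exactly one of the following occurs: (A) $\near{\D\lc}{\tau} \cap \setT_i = \{v\}$ is a singleton (uniqueness is forced by the separation $\kappa\rr\llo > 2\rr\D\llo$ inside $\setT_i$), or (B) $\tau \in \far{\D\lc}{\setT_i}$. The points $\vt_1,\ldots,\vt_{\hrr}$ of the statement are precisely the $v$'s from Case~(A), so there are $\hrr$ Case-(A) indices and $\rr-\hrr$ Case-(B) indices. Multiplying the factor bounds yields
\begin{equation}
    \cql^{\hrr}\cqlf^{\rr-\hrr}\,\frac{\prod_{l=1}^{\hrr}(\vt_l-\tau)^2}{(\rr\llo)^{2\hrr}} \,\le\, q_0(\tau) \,\le\, \cqu^{\hrr}\,\frac{\prod_{l=1}^{\hrr}(\vt_l-\tau)^2}{(\rr\llo)^{2\hrr}},
\end{equation}
the upper bound being exactly \eqref{eq:qzlunivbnd}, and the lower bound giving \eqref{eq:qzlp} with $\cz \define \cqlf$ since $\cqlf \le \cql$ implies $\cql^{\hrr}\cqlf^{\rr-\hrr} \ge \cqlf^{\rr}$. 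To upgrade to \eqref{eq:qzl}, I invoke the $2\lhi$-separation of $\setT$ (from \fref{thm:mainthm} via $\setT \subset \support(\inp)$): a short triangle-inequality argument using that $\vtn$ is the closest of the $\vt_l$'s to $\tau$ gives $|\vt_l - \tau| \ge \lhi$ for every $\vt_l \ne \vtn$, hence $\prod_l(\vt_l-\tau)^2 \ge (\vtn-\tau)^2\,\lhi^{2(\hrr-1)}$. Inserting the trivial factor $(\rr\llo)^{2(\rr-\hrr)}/\lhi^{2(\rr-\hrr)} \ge 1$ (valid because $\rr\llo \ge \lhi$) converts the denominator $(\rr\llo)^{2\hrr}$ into $(\rr\llo)^{2\rr}$ and the numerator into $(\vtn-\tau)^2\lhi^{2(\rr-1)}$, as required.

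The fast-growth statement (Property~\ref{prop:qzloinfhi}) splits along the same dichotomy. If $\tau \in \far{\rr\D\llo}{\setT}$, the already-established $q_0(\tau) \ge \cqlf^{\rr}$ dominates the desired $\cql^{\rr}\lhi^{2\rr}/(\rr\llo)^{2\rr}$ iff $\rr\D\llo \ge \lhi$, which follows from $\SRF > 12$ and $\D=0.17$. Otherwise $\hrr \ge 1$ and $|\vtn - \tau| \ge \lhi$; substituting into the intermediate lower bound and regrouping,
\begin{equation}
    q_0(\tau) \ge \cql^{\hrr}\cqlf^{\rr-\hrr}\,\frac{\lhi^{2\hrr}(\rr\llo)^{2(\rr-\hrr)}}{(\rr\llo)^{2\rr}} = \cql^{\rr}\lefto[\frac{\rr\D\llo}{\lhi}\right]^{2(\rr-\hrr)}\frac{\lhi^{2\rr}}{(\rr\llo)^{2\rr}},
\end{equation}
where the equality uses the identity $\cqlf = \D^2 \cql$. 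Since $\SRF > 12$ makes the bracketed ratio $\ge 1$, the claimed bound $\cql^{\rr}\lhi^{2\rr}/(\rr\llo)^{2\rr}$ follows. The construction itself is conceptually simple---the product of well-separated certificates---so the main obstacle is the constant bookkeeping in Property~4 and Property~\ref{prop:qzloinfhi}, where the identity $\cqlf = \D^2\cql$ must be used to trade Case-(A) and Case-(B) contributions against the gap between $\llo$ and $\lhi$, with the $\SRF > 12$ hypothesis paying for the trade.
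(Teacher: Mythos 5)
Your construction---a product over the Rayleigh partition of the well-separated certificates from \fref{lem:dualcarlos}, followed by the Case~(A)/Case~(B) factorwise bounds---is exactly the paper's argument, and your verification of all six properties (including the identity $\cqlf = \D^2\cql$ and the $\SRF>12$ hypothesis to close the fast-growth bound) is correct. The only cosmetic difference is that the paper fixes the concrete interlaced partition $\setT_k \define \{t_{j\rr+k}\}$ while you invoke the abstract partition guaranteed directly by \fref{def:rreg}; this changes nothing for the present lemma.
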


The trick to prove this lemma is the main contribution of the earlier paper~\cite{candes-14}. 
The key observation is the following. It is possible to construct the nonnegative 
trigonometric polynomial~$q_0(\cdot)$ frequency-limited to $\flo$ that is zero on all the points of the 
set $\setT\in\rset{\rr\kappa\llo}{\rr}$ as a product of $\rr$ trigonometric polynomials. 
Each of these trigonometric polynomials is zero on a set that belongs to $\rset{\kappa\llo\rr}{1}$ and is 
constructed via \fref{lem:dualcarlos}. We reproduce the proof below because it motivates the new 
construction in \fref{sec:q1}.
\begin{proof} 
\label{sec:lemdualprodproof}
Set 
\begin{equation}
    \label{eq:setTk}
    \setT_k\define\Bigl\{t_{j\rr+k}: j\in\natseg{0}{\floor{(\sparsity-1)/\rr}}\Bigr\},\quad k=1,\ldots,\rr.
\end{equation}
Observe  that $\setT = \setT_1\cup\ldots \cup\setT_r$ and $\setT_k\in\rset{\kappa\llo\rr}{1}$. Set
\begin{equation}
    q_0(t)\define q_{\rr\llo,\setT_1}(t)\times\cdots\times q_{\rr\llo,\setT_\rr}(t),
    \label{eq:q0asprod}
\end{equation}
where $q_{\rr\llo,\setT_k}(\cdot),\ k=1,\ldots,\rr,$ are the trigonometric polynomials 
constructed\footnote{Strictly speaking this requires that the frequency limitation of $q_{\rr\llo,\setT_k}(\cdot)$, $\flo/\rr$, is an integer. In the rest of the paper, for simplicity, we will make this additional assumption. If this assumption is not satisfied, we can simply substitute $\flo$ with $\floor{\flo/\rr}\rr$ and repeat all the arguments in the paper, leading only to a small increase in  the density constant $1.87$ in \fref{thm:mainthm}.} 
via \fref{lem:dualcarlos} with $\lc=\rr\llo$ and $\setV=\setT_k\in \rset{\kappa\llo\rr}{1}$. The idea of this construction for $r=2$ is illustrated in \fref{fig:proof}.
\renewcommand\cxi{0.1196}
\renewcommand\cxii{0.1739} \renewcommand\cxiii{0.5000}
\newcommand\cxiv{0.5761} \newcommand\cxv{0.8913}
\begin{figure}[ht]
    \centering
    \includegraphics[width=0.6\textwidth]{./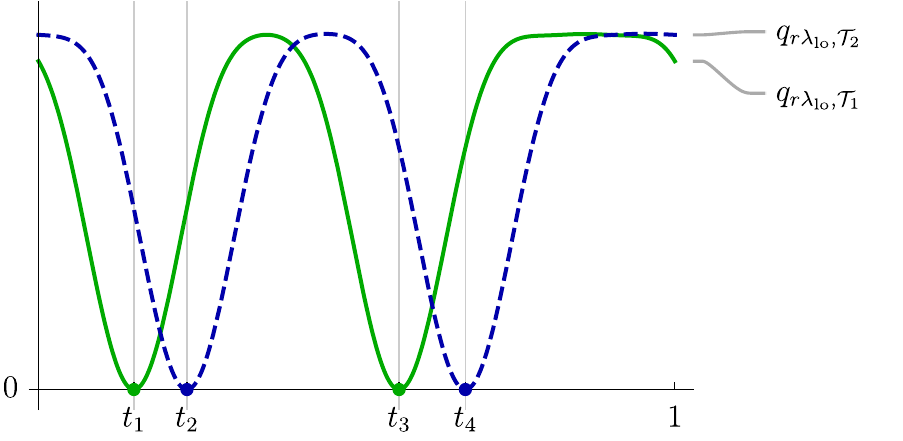}
    \caption{Illustration of the proof of \fref{lem:dualprod}. The 
    set $\setT=\{t_1, t_2, t_3, t_4\}$ is Rayleigh-regular: $\setT\in\rset{5\llo}{2}$,  
    with $\rr=2$ and $\llo=1/12$. The idea is to split this set as $\setT=\setT_1\union\setT_2$
    with $\setT_1=\{t_1,t_3\}$ and $\setT_2=\{t_2,t_4\}$ and observe $\setT_i\in \rset{5\llo}{1}$. 
    The trigonometric polynomials are frequency-limited to $\flo/2=6$ and 
    satisfy the interpolation constraints $q_{\rr\llo,\setT_1}(t)=q_{\rr\llo,\setT_1}'(t)=0$ 
    for all $t\in \setT_1$ and $q_{\rr\llo,\setT_2}(t)=q_{\rr\llo,\setT_2}'(t)=0$ for 
    all $t\in \setT_2$. Then, $q_0(\cdot)=(q_{\rr\llo,\setT_1}\times q_{\rr\llo,\setT_2})(\cdot)$ 
    satisfies $q_0(t)=q_0'(t)=0$ for all $t\in\setT$ and is frequency-limited to $2\times \flo/2=12$. 
    The trigonometric polynomial $q_0(\cdot)$ is displayed in \fref{fig:toolsmult2}. In the figure, $\abs{t_2-t_1}\sim 2\lhi$, $\abs{t_3-t_4}\sim 2\lhi$.}
    \label{fig:proof}
\end{figure}

It remains to verify that Properties \ref{prop:qlowf}--\ref{prop:qzloinfhi} are satisfied. 
Broadly, this follows from~\fref{eq:q0asprod} and \fref{lem:dualcarlos}; the details are given below.

\fref{prop:qlowf} is satisfied because each of trigonometric 
polynomials $q_{\rr\llo,\setT_k}(\cdot),\ k=1,\ldots,\rr$ is frequency-limited 
to $\flo/\rr$.
Hence, the product in~\fref{eq:q0asprod} is frequency-limited to $\rr(\flo/\rr)=\flo$.

Properties \ref{prop:qzeroprod} and \ref{prop:zeroonebnd} follow from~\fref{eq:q0asprod} 
and from \fref{lem:dualcarlos}, Properties~\ref{prop:qzero} and~\ref{prop:qzeroone}, respectively.

To prove~\fref{eq:qzlp} we lower-bound the terms in~\fref{eq:q0asprod} separately as 
follows. Assume that $k\in\{1,\ldots,\rr\}$ is such
that $\near{\rr\D\llo}{\tau}\intersect\setT_k\ne\varnothing$, i.e., 
there exist $l\in\{1,\ldots,\hat\rr\}$ that satisfies $\vt_l\in\setT_k$. In this 
case, we use the left-hand side of~\fref{eq:qnearbndeq} to write
\begin{equation}
    \label{eq:qIlb1}
    q_{\rr\llo,\setT_k}(\tau)\ge \cql\frac{(\vt_l-\tau)^2}{(\rr\llo)^2}.
\end{equation}
Note that there are exactly $\hrr$ such terms in~\fref{eq:q0asprod}. Assume that $k\in\{1,\ldots,\rr\}$ 
is such that $\near{\rr\D\llo}{\tau}\intersect\setT_k=\varnothing$. In this 
case, use \fref{lem:dualcarlos}, \fref{prop:qinfbnd}, to write 
\begin{equation}
    \label{eq:qIlb2}
    q_{\rr\llo,\setT_k}(\tau)\ge\cqlf.
\end{equation}
Note that there are exactly $\rr-\hrr$ such terms in~\fref{eq:q0asprod}. The desired 
bound~\fref{eq:qzlp} is obtained by plugging~\fref{eq:qIlb1} and~\fref{eq:qIlb2} 
into~\fref{eq:q0asprod} and setting $\cz\define\min(\cql, \cqlf)$.

Bound~\fref{eq:qzl} follows because the elements of $\setT$ are separated 
by at least $2\lhi$ and because $\lhi/\llo<1$.

To prove~\fref{eq:qzlunivbnd} we upper-bound the terms in~\fref{eq:q0asprod} separately as 
follows. Assume that $k\in\{1,\ldots,\rr\}$ is such that 
$\near{\rr\D\llo}{\tau}\intersect\setT_k\ne\varnothing$, i.e., 
there exist $l\in\{1,\ldots,\hat\rr\}$ that satisfies $\vt_l\in\setT_k$. In this case, we use the
right-hand side of~\fref{eq:qnearbndeq} to write
\begin{equation}
    \label{eq:qubrwc}
    q_{\rr\llo,\setT_k}(\tau)\le \cqu\frac{(\vt_l-\tau)^2}{(\rr\llo)^2}.
\end{equation}
Assume that $k\in\{1,\ldots,\rr\}$ is such that $\near{\rr\D\llo}{\tau}\intersect\setT_k=\varnothing$.
In this case, we use
\fref{lem:dualcarlos}, \fref{prop:qzeroone}, to write 
\begin{equation}
    \label{eq:qIu1}
    q_{\rr\llo,\setT_k}(\tau)\le 1.
\end{equation}
The desired bound~\fref{eq:qzlunivbnd} is obtained by plugging~\fref{eq:qubrwc} 
and~\fref{eq:qIu1} into~\fref{eq:q0asprod}.

\fref{prop:q0lbfar} follows by~\fref{eq:q0asprod} and \fref{lem:dualcarlos}, \fref{prop:qinfbnd}.

Finally, \fref{prop:qzloinfhi} follows  from~\fref{eq:q0asprod}, \fref{eq:qnearbndeq}, 
\fref{lem:dualcarlos}, \fref{prop:qinfbnd}, and~\fref{eq:cqlfcnear}.
\end{proof}

\subsection{Dual certificate $q_1(\cdot)$}
\label{sec:q1}
We are now ready to construct the trigonometric polynomial $q_1(\cdot)$. This construction 
and its analysis is the main mathematical contribution of this paper.
Trigonometric polynomial~$q_1(\cdot)$, illustrated in \fref{fig:toolsneed}, is 
frequency-limited to $\flo$ and, on the 
points $t_j\in\setT$, $q_1(\cdot)$ interpolates the set of signs
\begin{equation}
    \label{eq:dsdef}
    s_{j}\define\sign\lefto(\sum_{m/\N\in\nearhi{t_j}} h_m\right),\ j=1,\ldots,\sparsity,
\end{equation}
at a (low) level $\rho/2$, $\rho=(\lhi/\llo)^{2\rr}\ll 1$. The behavior of $q_1(\cdot)$ is controlled by $q_0(\cdot)$ as explained in \fref{lem:dualprod1} below.
\begin{figure}[ht]
    \centering
    \vspace{-0.2cm}
    \includegraphics[width=0.6\textwidth]{./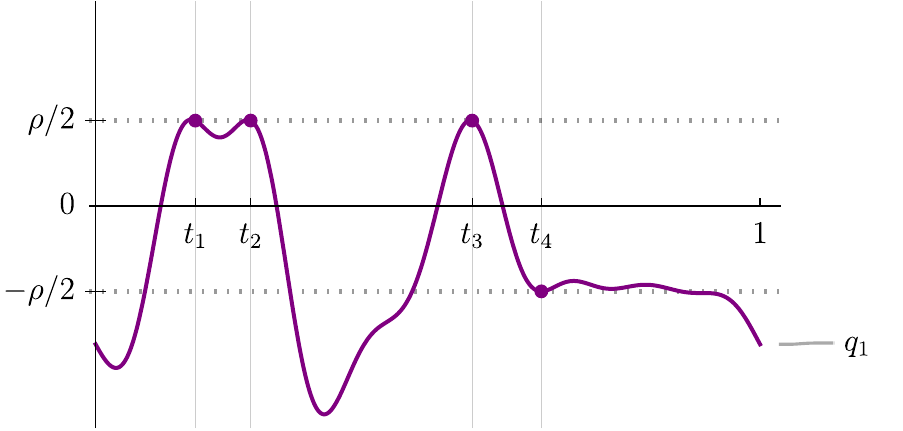}
    \vspace{-0.2cm}
    \caption{Illustration of \fref{lem:dualprod1}. 
    Trigonometric polynomial frequency-limited to $\flo=12$ interpolates the 
    sign pattern $\{s_1, s_2, s_3, s_4\} = \{+1, +1, +1, -1\}$ at a (low) level $\rho/2$. 
    Specifically, $q_1(t_i)=s_i \rho/2$ and $q'_1(t_i)=0$. The set $\setT=\{t_1, t_2, t_3, t_4\}$ 
    is Rayleigh-regular: $\setT\in\rset{5\llo}{2}$ with $\llo=1/\flo$ and $\abs{t_2-t_1}\sim 2\lhi$, $\abs{t_3-t_4}\sim 2\lhi$.}
    \label{fig:toolsneed}
\end{figure}

\begin{lem}
\label{lem:dualprod1}
Set $\rho\define\lhi^{2\rr}/\llo^{2\rr}$.
Then, there exists a real-valued trigonometric polynomial $q_1(\cdot)$
that satisfies the following properties.
\begin{enumerate}
    \item \label{prop:q1prop1} Frequency limitation to $\flo$: 
        $q_1(t)=\sum_{k=-\flo}^{\flo} \hat q_{1,k} e^{-\iu 2\pi kt}$ 
        for some $\hat q_{1,k}\in\complexset$.
    \item \label{prop:q1prop3} Constrained sign pattern (at level $\rho$) on $\setT$ and controlled behavior 
        near $\setT$: for all $j=1,\ldots,\sparsity$ and all $\tau\in\near{\lhi}{t_j}$,
        \begin{equation}
            \label{eq:bndnear}
            \abs{q_1(\tau)-\frac{\rho s_j}{2}}
            \le  \rr^{2\rr+4} \cuff^{\rr+1}  q_0(\tau),
        \end{equation}
        where $s_j$ are defined\footnote{The lemma is valid for an arbitrary sign pattern, we formulate it for the sign pattern defined in~\fref{eq:dsdef} for concreteness.} in~\fref{eq:dsdef}.
    \item \label{prop:q1prop2}
    Uniform confinement: $\infnorm{q_1(\cdot)}\le \rr^{2\rr+1}\cugm^{\rr}$.
    \item \label{prop:q1prop4} Boundedness far from $\setT$: for all $\tau\in\far{\lhi}{\setT}$,
    \begin{equation}
    \label{eq:dualprod14}
    \abs{q_1(\tau)}\le \rr^{2\rr+2} \cuffff^\rr q_0(\tau),
    \end{equation}
\end{enumerate}
The positive numerical constants $\cuff$, $\cugm$, and $\cuffff$ are defined in the proof below.
\end{lem}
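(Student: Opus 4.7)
The plan is to mirror the product construction of $q_0(\cdot)$ from \fref{sec:lemdualprodproof}, using a principal factor on each Rayleigh subset to carry the sign-pattern values. Partition $\setT = \setT_1 \cup \cdots \cup \setT_r$ as in \fref{eq:setTk}, define $\tilde\pi_k(t) \define \prod_{l \ne k} q_{r\llo, \setT_l}(t)$, and seek $q_1$ in the form
\[
q_1(t) = \sum_{k=1}^{r} \phi_k(t)\,\tilde\pi_k(t),
\]
where each $\phi_k$ is a real-valued trigonometric polynomial frequency-limited to $\flo/r$ built from \fref{lem:dualcarlos2} applied to the well-separated set $\setT_k \in \rset{\kappa\llo r}{1}$. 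Since each $\tilde\pi_k$ is frequency-limited to $(r-1)\flo/r$, the sum is frequency-limited to $\flo$, giving \fref{prop:q1prop1}. For $r=1$ this reduces to the construction used in \cite{candes21-2} (with $\tilde\pi_1 \equiv 1$).

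The interpolation data for $\phi_k$ is dictated by the constraints $q_1(t_j) = s_j\rho/2$ and $q_1'(t_j) = 0$ for every $t_j \in \setT_k$. The decisive observation is that for $l \ne k$ the polynomial $\tilde\pi_l$ contains $q_{r\llo,\setT_k}$ as a factor, which together with its derivative vanishes on $\setT_k$ by \fref{lem:dualcarlos}, \fref{prop:qzero}; hence only the $k$-th summand contributes to $q_1(t_j)$ and $q_1'(t_j)$, forcing
\[
\phi_k(t_j) = \frac{s_j\rho/2}{\tilde\pi_k(t_j)}, \qquad \phi_k'(t_j) = -\phi_k(t_j)\,\frac{\tilde\pi_k'(t_j)}{\tilde\pi_k(t_j)}.
\]
To invoke \fref{lem:dualcarlos2} I must check $|\phi_k(t_j)|\le 1$ and $|\phi_k'(t_j)| \le 1/(r\llo)$. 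This is exactly where the $2\lhi$-separation hypothesis is used: it guarantees that any point of $\setT_l$ with $l\ne k$ closest to $t_j$ lies at distance at least $2\lhi$, so \fref{prop:qnearbnd} of \fref{lem:dualcarlos} combined with \fref{prop:qinfbnd} delivers $|\tilde\pi_k(t_j)| \gtrsim (\cql/r^2)^{r-1}(\lhi/\llo)^{2(r-1)}$. With $\rho = (\lhi/\llo)^{2r}$ this forces $|\phi_k(t_j)| \lesssim r^{2(r-1)}(\lhi/\llo)^2 \ll 1$ once $\SRF>12$, and a parallel argument using \fref{prop:qpbnd} controls $|\tilde\pi_k'(t_j)|$ and hence the derivative.

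The uniform bound \fref{prop:q1prop2} then follows by summing over $k$ using $\|\phi_k\|_\infty \le \cqdv$ from \fref{prop:dqinfbnd} of \fref{lem:dualcarlos2} and $|\tilde\pi_k|\le 1$ from \fref{prop:qzeroone} of \fref{lem:dualcarlos}. For the far-field bound \fref{prop:q1prop4}, I note that $\tilde\pi_k(\tau) = q_0(\tau)/q_{r\llo,\setT_k}(\tau)$ wherever the denominator does not vanish, and for $\tau \in \far{\lhi}{\setT}$ the denominator is bounded below by a combination of \fref{prop:qnearbnd} and \fref{prop:qinfbnd}, so that $\tilde\pi_k(\tau)$ is dominated by a constant multiple of $q_0(\tau)$; the stated estimate then follows by summing over $k$.

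The main obstacle is the near-field bound \fref{eq:bndnear} for $\tau \in \near{\lhi}{t_j}$. A Bernstein estimate driven by the global $\|q_1\|_\infty$ is much too weak, because $\|q_1''\|_\infty \lesssim \flo^2\|q_1\|_\infty$ cannot be absorbed by $q_0(\tau) \ge \cz^r(\tau-t_j)^2 \lhi^{2(r-1)}/(r\llo)^{2r}$ of \fref{eq:qzl}. Instead I exploit the product structure directly: for $l \ne k$ every summand $\phi_l(\tau)\tilde\pi_l(\tau)$ carries a factor $q_{r\llo,\setT_k}(\tau)$ and the remaining factors reassemble, up to constants, into $q_0(\tau)$, which is already of the right size. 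The delicate term is $\phi_k(\tau)\tilde\pi_k(\tau) - s_j\rho/2$, which I split as $(\phi_k(\tau)-\phi_k(t_j))\tilde\pi_k(\tau) + \phi_k(t_j)(\tilde\pi_k(\tau)-\tilde\pi_k(t_j))$ and expand via the Mean Value theorem \fref{eq:mvt2} using the interpolation values fixed above together with the second-derivative bounds \fref{prop:qppbnd1} of \fref{lem:dualcarlos2} and \fref{prop:qppbnd} of \fref{lem:dualcarlos}. The resulting $O((\tau-t_j)^2)$ estimate is then majorised by $q_0(\tau)$ via \fref{eq:qzl}, and carefully tracking the $r$- and $\cuff$-dependent constants through this chain produces the asserted exponent $r^{2r+4}\cuff^{r+1}$.
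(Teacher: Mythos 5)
Your construction is identical to the paper's, up to renaming: your $\phi_k(t)\tilde\pi_k(t)$ is exactly the paper's $\phi_{+,k}(t)\phi_{0,k}(t)$, the interpolation data you impose on $\phi_k$ is the same, and the sum over $k$ is the same (the paper merely recentres by writing $q_1 = \sum\phi_k - \rho/2$ so that the target values $\eta_j = \rho(s_j+1)/2$ are nonnegative, which lets it treat the $\eta_j=0$ points trivially; you interpolate $s_j\rho/2$ directly, which is a cosmetic change). So on the construction side you have recovered the paper's approach. Two minor omissions: (i) for $r\ge 2$ and $\SRF$ close to $12$ the interpolation data $\rho/\tilde\pi_k(t_j)$ can exceed $1$ in magnitude (it is of order $r^{2(r-1)}\cql^{-(r-1)}(\lhi/\llo)^2$), so a rescaling by $r^{2r}\czdduf^r$ is needed before invoking \fref{lem:dualcarlos2}; your assertion that the data is automatically $\ll 1$ is false for $r\ge 3$; and (ii) verifying the derivative requirement $|d_j|\le 1/(r\llo)$ needs a nontrivial case analysis (near/far from $\setT_k^c$, using \fref{eq:qzlppl} and \fref{eq:qzdbndpl}), which you wave at with ``a parallel argument.''

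The genuine gap is in your proof of Properties \ref{prop:q1prop3} and \ref{prop:q1prop4}. You claim that for $l\neq k$ (with $t_j\in\setT_k$) the summand $\phi_l(\tau)\tilde\pi_l(\tau)$ ``carries a factor $q_{r\llo,\setT_k}(\tau)$ and the remaining factors reassemble, up to constants, into $q_0(\tau)$.'' They do not. Factoring out $q_{r\llo,\setT_k}$ leaves $\phi_l(\tau)\prod_{j\ne l,k}q_{r\llo,\setT_j}(\tau)$, whereas $q_0(\tau)$ after the same factoring leaves $q_{r\llo,\setT_l}(\tau)\prod_{j\ne l,k}q_{r\llo,\setT_j}(\tau)$; the two differ by the ratio $\phi_l(\tau)/q_{r\llo,\setT_l}(\tau)$. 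Using only $|\phi_l|\le\cqdv$ and the floor $q_{r\llo,\setT_l}(\tau)\ge\cql\lhi^2/(r\llo)^2$ (valid since $\tau$ is at distance $\ge\lhi$ from $\setT_l$) gives
\begin{equation}
    \abs{\phi_l(\tau)\tilde\pi_l(\tau)} \le \frac{\cqdv}{\cql}\,(r\,\SRF)^2\,q_0(\tau),
\end{equation}
which is off by $\SRF^2$ from the claimed \fref{eq:bndnear}. To kill this factor one must use that $\phi_l\tilde\pi_l$ vanishes to second order at $t_j$ (i.e., $\phi_l(t_j)\tilde\pi_l(t_j)=0$ and $(\phi_l\tilde\pi_l)'(t_j)=0$) together with a careful bound on $(\phi_l\tilde\pi_l)''$ near $t_j$; this is precisely the paper's Mean-Value-theorem argument in Sections~\ref{sec:propphizk} and following, which is the hardest and longest part of the proof. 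Equivalently, one could Taylor-expand $\phi_l$ around the closest point of $\setT_l$ and show $\phi_l(\tau)/q_{r\llo,\setT_l}(\tau)$ is $O(r^2)$ there, but that argument is not the one you wrote. Exactly the same issue affects your far-field bound \fref{prop:q1prop4}: writing $\tilde\pi_k(\tau)=q_0(\tau)/q_{r\llo,\setT_k}(\tau)$ and lower-bounding the denominator by $\cql\lhi^2/(r\llo)^2$ again produces an extra $\SRF^2$; one needs the Taylor bound \fref{eq:qokubprd1} on $\phi_{+,k}$ (not just $\|\phi_{+,k}\|_\infty\le\cqdv$) to obtain \fref{eq:dualprod14}. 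In short: you identified that a pure Bernstein argument is too weak, but the ``reassembling'' shortcut you substitute has the same defect; the quantitative work the paper does in Sections~\ref{sec:propphizk}--\ref{sec:bndfarV} is not optional.
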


\paragraph{Discussion.}
Let's compare $q_1(\cdot)$ illustrated in \fref{fig:toolsneed} to
$q_{\lc,\setV, \{\fvalc_j\}, \{\dvalc_j\}}(\cdot)$ 
constructed in \fref{lem:dualcarlos2} and illustrated in \fref{fig:toolsmult2n2}.
In $q_{\lc,\setV, \{\fvalc_j\}, \{\dvalc_j\}}(\cdot)$, the behavior at a well-separated set of 
points is independently controlled: 
the trigonometric polynomial can take arbitrary values (between $-1$ and $1$). 
Reminder: we say that the points are well-separated if the distances between the points are no smaller 
than $\sim \tcc/\fc$, where $\fc$ is the frequency limitation of the trigonometric polynomial 
under consideration and $\tcc$ is a bit larger than 1. In the case of $q_1(\cdot)$, the points where the behavior is 
controlled are \emph{not} well-separated as  illustrated on \fref{fig:toolsneed}: $\abs{t_2-t_1}\sim 2\lhi\ll 1/\flo$, $\abs{t_3-t_4}\sim 2\lhi\ll 1/\flo$. Therefore, 
by Bernstein theorem (see~\fref{thm:bernstein}), the behavior of $q_1(\cdot)$ at nearby points 
cannot be controlled~\emph{independently}. To be concrete: suppose we require 
that $q_1(t_1)=-1$ and $q_1(t_2)=+1$. Since the points $t_1$ and $t_2$ are separated by 
about $2\lhi\ll \llo$ (not well-separated), Bernstein theorem says that these two 
requirements cannot be satisfied simultaneously. Indeed, since $\infnorm{q_1(\cdot)}\le \tCC{\rr}=\rr^{2\rr+1}\cugm$, 
by~\fref{eq:bernstein}, $\infnorm{q_1'(\cdot)}\le 2\pi\tCC{\rr}\flo$. If the two requirement would be satisfied 
simultaneously, the derivative of $q_1(\cdot)$ between the points $t_1$ and $t_2$ would be about $(q_1(t_2)-q_1(t_1))/(2\lhi)=2/(2\lhi)=\fhi\gg 2\pi\tCC{\rr}\flo$ (we are assuming that $\SRF$ is large). However, if the we require that $q_1(t_1)=-\rho$ 
and $q_1(t_2)=+\rho$ and $\rho$ is small enough, the two requirements may be satisfied 
simultaneously. This is the reason 
why $\rho$ is set to $\lhi^{2\rr}/\llo^{2\rr}\ll 1$ in the formulation of \fref{lem:dualprod1}.

Let's compare $q_1(\cdot)$ to the trigonometric polynomial $q_0(\cdot)$ constructed in \fref{lem:dualprod} 
and illustrated in \fref{fig:toolsmult2}. In both trigonometric polynomials the behavior is 
controlled on a Rayleigh-regular set, whose points are not well-separated in general. The difference 
is that~$q_0(\cdot)$ takes \emph{the same} value (zero) on all the points of the Rayleigh-regular set. 
This allows us to use the multiplication trick illustrated in \fref{fig:proof} to prove \fref{lem:dualprod}. 
In the case of $q_1(\cdot)$ this does not work because we need to interpolate an \emph{arbitrary} 
sign pattern on the Rayleigh-regular set. A method to resolve this problem, presented next, 
is the main mathematical contribution of this paper.

\begin{proof}
Lemma~\ref{lem:dualprod1} is proven in Sections~\ref{sec:dualprod1m}--\ref{sec:bndfarV} below.

\subsubsection{Construction}
\label{sec:dualprod1m}

We first describe how the trigonometric polynomial $q_1(\cdot)$ is constructed. In Sections~\ref{sec:propphizk}--\ref{sec:bndfarV} we prove that the 
construction is valid and that it satisfies the required Properties \ref{prop:q1prop1}--\ref{prop:q1prop4}. 

Recall, $\setT=\{t_1,\ldots, t_\sparsity\}$ is defined in~\fref{eq:setT} and, as before, define $\setT_k$, $k=1,\ldots,\rr$, as in~\fref{eq:setTk}; remember that $\setT = \setT_1\cup\ldots \cup\setT_r$ and $\setT_k\in\rset{\kappa\llo\rr}{1}$.
Set $\eta_j=\rho (s_j+1)/2$ for $j=1,\ldots,\sparsity$.

We will construct the trigonometric polynomial $q_1(\cdot)$ as a (shifted) \emph{sum} of $\rr$ 
trigonometric polynomials $\{\phi_k(\cdot)\}_{k=1}^\rr$ (see \fref{fig:sumneed}):
\begin{equation}
    \label{eq:q1defm}
    q_1(t)=\sum_{k=1}^\rr\phi_k(t)-\rho/2.
\end{equation}
Each of the trigonometric  polynomials $\{\phi_k(\cdot)\}_{k=1}^\rr$ is frequency-limited to $\flo$,
\begin{equation}
    \label{eq:phifreqconstr}
    \phi_k(t)
        =\sum_{l=-\flo}^{\flo} \hat \phi_{k,l} e^{-\iu 2\pi l t} \quad 
            \text{for some}\quad \hat \phi_{k, l}\in\complexset
\end{equation}
and is constructed separately to satisfy the following interpolation constraints on $\setT$:
\begin{align}
    \phi_k(t_l)&=
    \begin{cases}
    \eta_l, &\text{ if } t_l\in\setT_k,\\
    0, &\text{ if } t_l\in \setT_k^c\define \setT\setminus\setT_k,
    \end{cases}\label{eq:req1m}\\
    \phi'_k(t)&=0 \text{ for all } t\in\setT.\label{eq:req3m}
\end{align}

Constraints~\fref{eq:req1m}, \fref{eq:req3m}, and definition~\fref{eq:q1defm} guarantee that 
for all $l=1,\ldots,\sparsity$
\begin{align}
    q_1(t_l)&=\rho s_l/2,\label{eq:q1int1}\\
    q'_1(t_l)&=0.\label{eq:q1int2}
\end{align}

To develop intuition, observe that~\fref{eq:phifreqconstr} and~\fref{eq:q1defm} guarantee 
that \fref{prop:q1prop1} is satisfied. Further, observe that the interpolation constraints~\fref{eq:q1int1} 
and~\fref{eq:q1int2} are  needed for~\fref{eq:bndnear} to hold because $q_0(t)=q'_0(t)=0$ for all $t\in\setT$.

For $r=2$ the construction is illustrated in \fref{fig:sumneed}. The trigonometric 
polynomials $\phi_1(\cdot)$ and $\phi_2(\cdot)$ are displayed in \fref{fig:sumneed1}; they satisfy the interpolation 
constraints~\fref{eq:req1m} and~\fref{eq:req3m} as  indicated by the points highlighted in 
bold. When we compute $(\phi_1+\phi_2)(\cdot)$ we obtain the trigonometric polynomial displayed in \fref{fig:sumneed2},
which, when shifted down by $\rho/2$, is equal to the desired $q_1(\cdot)$ displayed
in \fref{fig:toolsneed}.

\begin{figure}[h!t]
    \centering
    \vspace{-0.5cm}
    \begin{subfigure}{0.5\textwidth}%
        \caption{}\vspace{-0.6cm}
        \label{fig:sumneed1}
        \includegraphics[width=\textwidth]{./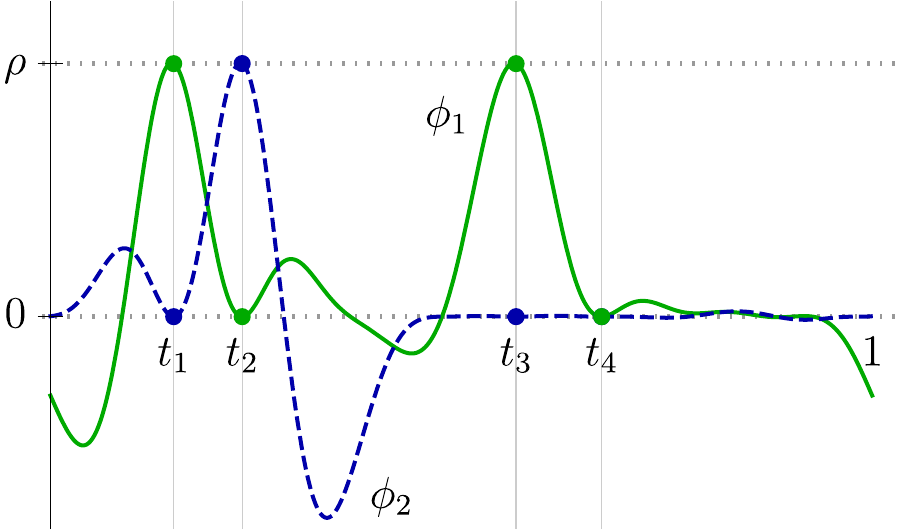}
    \end{subfigure}
    \vspace{0.5cm}
    
    \begin{subfigure}{0.5\textwidth}
        \caption{}\vspace{-0.6cm}
        \label{fig:sumneed2}
        \includegraphics[width=\textwidth]{./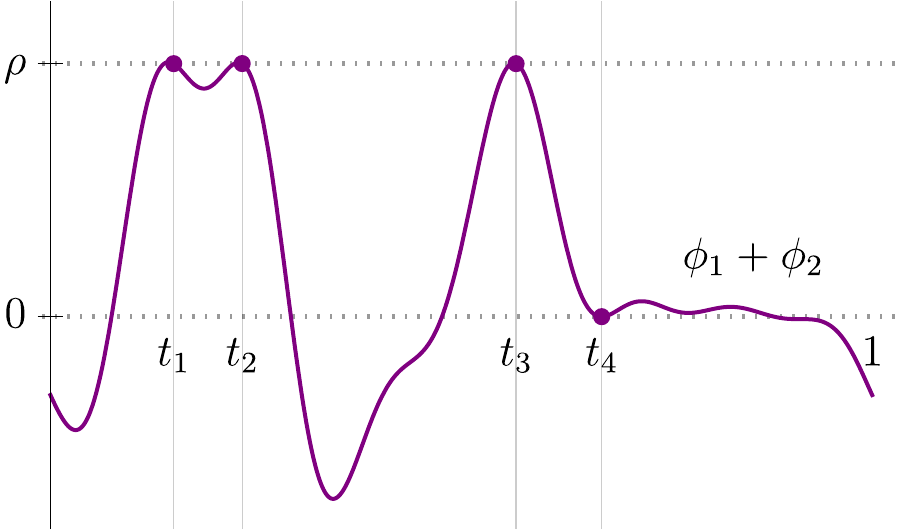}
    \end{subfigure}
    \vspace{-0.3cm}
    \caption{Construction of the trigonometric polynomial $q_1(\cdot)$ (displayed 
    in \fref{fig:toolsneed}) with target sign pattern $\{s_1, s_2, s_3, s_4\} = \{+1, +1, +1, -1\}$. 
    (a) trigonometric polynomials $\phi_1(\cdot)$ and $\phi_2(\cdot)$
    satisfy interpolation 
    constraints~\fref{eq:req1m} and~\fref{eq:req3m} as  indicated by the points highlighted in bold. 
    Specifically, $\phi_1(t_1)=\phi_1(t_3)=\rho$, 
    $\phi_1(t_2)=\phi_1(t_4)=0$, $\phi_2(t_1)=\phi_2(t_3)=\phi_2(t_4)=0$ 
    and $\phi_2(t_3)=\rho$; and further $\phi_i'(t_j)=0$. (b) the sum of $\phi_1(\cdot)$ and $\phi_2(\cdot)$ that, 
    after shifting down by $\rho/2$, is equal to $q_1(\cdot)$. In this figure, $\phi_1(\cdot)$ and $\phi_2(\cdot)$
    are frequency-limited to $\flo=12$;
    $\setT\in\rset{5\llo}{2}$ with $\llo=1/\flo$ is represented 
    as $\setT=\setT_1\union\setT_2$ with $\setT_1=\{t_1, t_3\}\in\rset{5\llo}{1}$, 
    $\setT_2=\{t_2, t_4\}\in\rset{5\llo}{1}$; $\abs{t_2-t_1}\sim 2\lhi$, $\abs{t_3-t_4}\sim 2\lhi$.}
    \label{fig:sumneed}
\end{figure}

The difficulty  remains: how to construct trigonometric polynomials $\phi_k(\cdot)$?
Set 
\begin{equation}
    \setT_k^0\define\Bigl\{t_{j\rr+k}: j\in\natseg{0}{\floor{(\sparsity-1)/\rr}} \text{ and } \eta_{j\rr+k}=0\Bigr\}\ \ \text{and}\ \ \setT_k^+\define\setT_k\setdiff \setT_k^0\ \ \text{for}\ \ k=1,\ldots,\rr.
\end{equation}
The idea now is to construct $\phi_k(\cdot)$ as a 
\emph{product} of two trigonometric polynomials (see \fref{fig:prodd}):
\begin{equation}
    \label{eq:phikdfn}
    \phi_k(t)\define \phi_{0,k}(t)\times \phi_{+,k}(t).
\end{equation}
The first term in the product is defined as
\begin{equation}
    \phi_{0,k}(t)\define \prod_{\substack{1\le l\le r,\ l\ne k}} q_{\rr\llo,\setT_l}(t),
    \label{eq:phizk}
\end{equation}
where $q_{\rr\llo,\setT_l}(\cdot),\ l=1,\ldots,\rr$, are the trigonometric polynomials constructed 
via \fref{lem:dualcarlos} with $\lc=\rr\llo$ and $\setV=\setT_l\in \rset{\kappa\llo\rr}{1}$. 
Observe similarity to the trigonometric polynomial in~\fref{eq:q0asprod}; 
the difference is that the $k$th term is missing from the product. 

\begin{figure}[h!t]\vspace{-0.5cm}
    \begin{subfigure}{0.5\textwidth}
        \centering
        \caption{}\vspace{-0.5cm}
        \label{fig:prodda}
        \includegraphics[width=0.9\textwidth]{./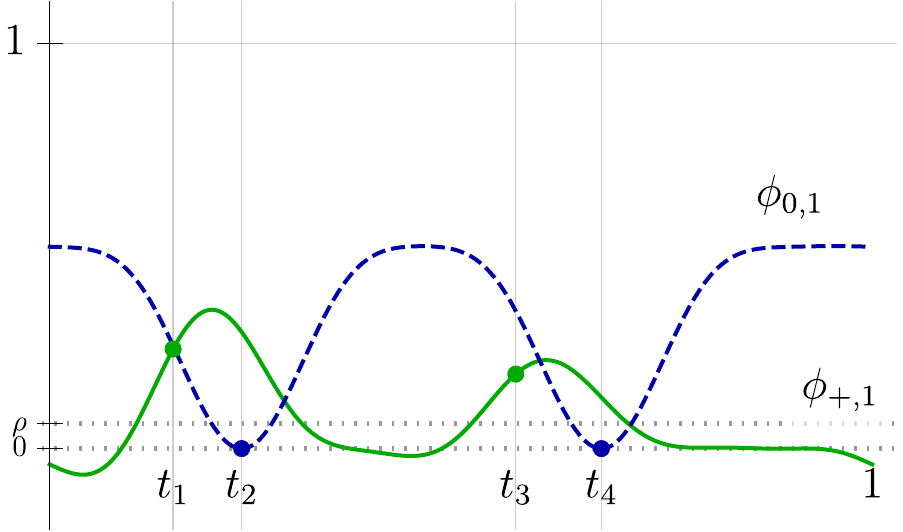}
    \end{subfigure}
    \begin{subfigure}{0.5\textwidth}
        \centering
        \caption{}\vspace{-0.5cm}
        \label{fig:proddb}
        \includegraphics[width=0.9\textwidth]{./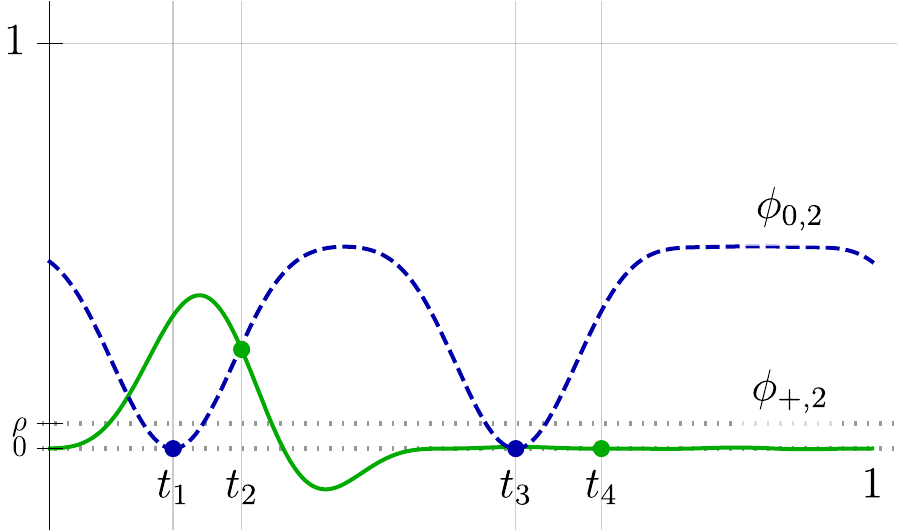}
    \end{subfigure}\vspace{0.5cm}
    
    \begin{subfigure}{0.5\textwidth}
        \centering
        \caption{}\vspace{-0.6cm}
        \label{fig:proddc}
        \includegraphics[width=0.9\textwidth]{./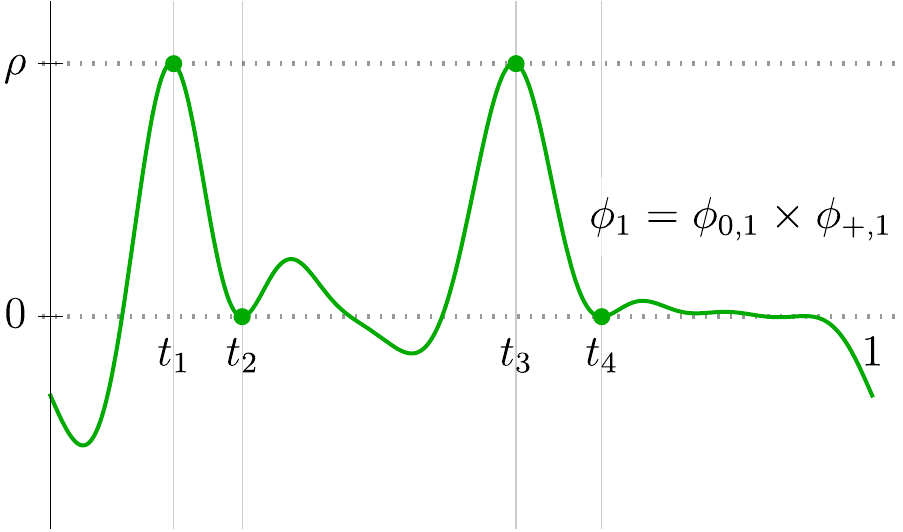}
    \end{subfigure}
    \begin{subfigure}{0.5\textwidth}
        \centering
        \caption{}\vspace{-0.6cm}
        \label{fig:proddd}
        \includegraphics[width=0.9\textwidth]{./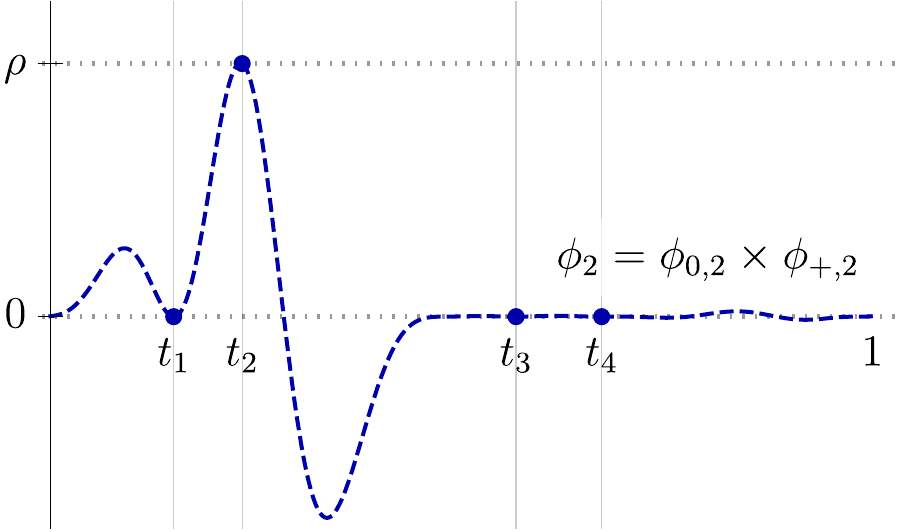}
    \end{subfigure}
    \caption{
    Left column: constructing $\phi_1(\cdot)$ as a product of $\phi_{0,1}(\cdot)$ 
    and $\phi_{+,1}(\cdot)$. The trigonometric polynomial $\phi_{0,1}(\cdot)$ is constrained 
    to take zero values (bold blue points) on $\setT_2=\{t_2, t_4\}$ and it is strictly positive 
    everywhere else. The function values of $\phi_{0,1}(\cdot)$ on 
    $\setT_1=\{t_1, t_3\}$ are unconstrained. The trigonometric polynomial $\phi_{+,1}(\cdot)$, 
    in turn, is only constrained on $\setT_1$ (bold green points). In this case 
    $\setT_1=\setT_1^0\union \setT_1^+$ with $\setT_1^+=\{t_1, t_3\}$ and $\setT_1^0=\varnothing$. 
    The function values and derivatives of $\phi_{+,1}(\cdot)$ are constrained on $\setT_1$ 
    to ``compensate'' for the function values and derivatives of $\phi_{0,1}(\cdot)$ on 
    $\setT_1$ in the sense of~\fref{eq:phi11m} and \fref{eq:phi11m1}. The compensation is such 
    that once the two polynomials are multiplied we obtain $\phi_1(\cdot)$ with the local maxima 
    at level $\rho$ on $\setT_1$ as shown in (c) (bold green points). The local 
    minima of $\phi_1(\cdot)$ on $\setT_2$ are produced ``automatically'', because 
    $\phi_{0,1}(\cdot)$ has zeros on $\setT_2$. Note that the function values and the 
    derivatives of $\phi_{+,1}(\cdot)$ can be controlled at $t_1$ and $t_3$ independently, 
    because these two points are well-separated and this would have been impossible if these 
    points where not well-separated.
    Right column: constructing $\phi_2(\cdot)$ as a product of $\phi_{0,2}(\cdot)$ 
    and $\phi_{+,2}(\cdot)$. The construction is similar, with the roles 
    of $\setT_1$ and $\setT_2$ reversed. The difference is that in this case 
    $\setT_2 = \setT_2^0\union \setT_2^+$ with $\setT_2^+=\{t_2\}$ and 
    $\setT_2^0=\{t_4\}$. Since $\setT_2^0$ is nonempty, we set 
    $\phi_{+,2}(t_4)=\phi'_{+,2}(t_4)=0$.
    Finally: observe that the scale in (a) and (b) is different from the scale in (c) and (d); 
    the level $\rho$ is marked for reference in (a) and (b) by a dotted line 
    just above the zero line. The fact that $\rho=1/\SRF^{2r} \ll 1$ is responsible for 
    the noise amplification. The setup is the same as in Figures \ref{fig:toolsneed} and \ref{fig:sumneed}:
    $\flo=12$;
    $\setT=\setT_1\union\setT_2\in\rset{5\llo}{2}$;
    $\abs{t_2-t_1}\sim 2\lhi$, $\abs{t_3-t_4}\sim 2\lhi$; and the 
    target sign patters is $\{s_1, s_2, s_3, s_4\} = \{+1, +1, +1, -1\}$.}
\label{fig:prodd}
\end{figure}

The second term in the product,
\begin{equation}
    \label{eq:phipdef}
    \phi_{+,k}(t)\define \rr^{2\rr}\czdduf^{\rr}q_{\rr\llo,\setT_k,\{f_j\},\{d_j\}}(t)
\end{equation}
is a (rescaled) trigonometric polynomial $q_{\rr\llo,\setT_k,\{f_j\},\{d_j\}}(\cdot)$ constructed 
via \fref{lem:dualcarlos2}  with $\lc=\rr\llo$, and $\setV=\setT_k\in \rset{\kappa\llo\rr}{1}$
and $\czdduf$ is a positive numerical constant defined in~\fref{eq:czdduf} below.
Further, the function-values and derivatives of $q_{\rr\llo,\setT_k,\{f_j\},\{d_j\}}(\cdot)$ are constrained 
on $\setT_k=\setT_k^0\union\setT_k^+$ so that $\phi_{+,k}(\cdot)$ satisfies the following:
\begin{align}
    \label{eq:phi11m}
    \phi_{+,k}(t)&=\begin{cases}
    0, &t\in\setT_k^0,\\
    \rho\frac{1}{\phi_{0,k}(t)}, &t\in \setT_k^+,
    \end{cases}\\
    \label{eq:phi11m1}
    \phi'_{+,k}(t)&=\begin{cases}
    0, &t\in\setT_k^0,\\
    -\rho\frac{\phi'_{0,k}(t)}{\phi^2_{0,k}(t)}, &t\in \setT_k^+.
    \end{cases}
\end{align}
We will prove in \fref{sec:estimates} below, that this specification is valid, in the sense that the 
corresponding function values and derivatives of~$q_{\rr\llo,\setT_k,\{f_j\},\{d_j\}}(\cdot)$ on $\setT_k$ satisfy 
requirements~\fref{eq:condp2} of \fref{lem:dualcarlos2}.

It follows from~\fref{eq:phikdfn}, \fref{eq:phizk}, \fref{eq:phi11m}, \fref{lem:dualcarlos}, 
Properties \ref{prop:qzero}, \ref{prop:qnearbnd}, and \ref{prop:qinfbnd}
that the interpolation constraint~\fref{eq:req1m} is satisfied:%
\begin{alignat}{2}
    \phi_k(t)
    &=\underbrace{\phi_{0,k}(t)}_0 \phi_{+,k}(t)=0 
        &&\text{ for all } t\in\setT_k^c,\\
    \phi_k(t)
    &=\phi_{0,k}(t) \underbrace{\phi_{+,k}(t)}_0=0 
        &&\text{ for all } t\in\setT_k^0,\\
    \phi_k(t)&=\phi_{0,k}(t) \phi_{+,k}(t)=\rho  &&\text{ for all } t\in\setT_k^+.
\end{alignat}
Next, by~\fref{eq:phikdfn},
\begin{equation}
    \phi'_k(t)=\phi'_{0,k}(t)\phi_{+,k}(t)+\phi_{0,k}(t)\phi'_{+,k}(t).
\end{equation}
Therefore, by~\fref{eq:phi11m}, \fref{eq:phi11m1}, \fref{lem:dualcarlos}, 
Properties \ref{prop:qzero}, \ref{prop:qnearbnd}, and \ref{prop:qinfbnd},
the interpolation constraint~\fref{eq:req3m} is satisfied:
\begin{alignat}{2}
    \phi'_k(t)
        &=\underbrace{\phi'_{0,k}(t)}_0 \phi_{+,k}(t)
            +\underbrace{\phi_{0,k}(t)}_0\phi'_{+,k}(t)=0,\  
            &&\text{for all } t\in\setT_k^c,\\
    \phi'_k(t)
        &=\phi'_{0,k}(t)\underbrace{\phi_{+,k}(t)}_0 +\, \phi_{0,k}(t) 
            \underbrace{\phi'_{+,k}(t)}_0=0,\
            &&\text{for all } t\in\setT_k^0,\\
    \phi'_k(t)
        &=\phi'_{0,k}(t)\phi_{+,k}(t)+\phi_{0,k}(t)\phi'_{+,k}(t)\nonumber\\
        &=\phi'_{0,k}(t)\rho\frac{1}{\phi_{0,k}(t)}-\phi_{0,k}(t)\rho\frac{\phi'_{0,k}(t)}{\phi^2_{0,k}(t)}=0,\ 
            &&\text{for all }t\in\setT_k^+.
\end{alignat}

Finally,~\fref{eq:phifreqconstr} follows from~\fref{eq:phikdfn} because $\phi_{0,k}(\cdot)$ in~\fref{eq:phizk} 
is frequency-limited to $(\rr-1)/(\llo\rr)$ [\fref{lem:dualcarlos}, \fref{prop:lf}] and 
$\phi_{+,k}(\cdot)$ in~\fref{eq:phipdef} is 
frequency-limited to $1/(\llo\rr)$ [\fref{lem:dualcarlos2}, \fref{prop:lf2}] so that $\phi_k(\cdot)$ is 
frequency-limited to 
    $(\rr-1)/(\llo\rr)+1/(\llo\rr)=1/\llo=\flo.$
Therefore, by~\fref{eq:q1defm}, $q_1(\cdot)$ is also frequency-limited to $\flo$, 
which proves~\fref{prop:q1prop1}.

For $\rr=2$ the construction is illustrated in \fref{fig:prodd}. In \fref{fig:prodda} trigonometric polynomials
$\phi_{0,1}(\cdot)$ and $\phi_{+,1}(\cdot)$ are displayed; $\phi_{0,1}(t)=0$ for $t\in\setT_2$ as indicated by 
the bold blue points;
$\phi_{+,1}(\cdot)$ satisfies the interpolation constraints~\fref{eq:phi11m} and~\fref{eq:phi11m1} on $\setT_1$ as indicated
by the bold green points. When we compute $\phi_1(\cdot)=(\phi_{0,1}\times \phi_{+,1})(\cdot)$ we obtain the trigonometric
polynomial in \fref{fig:proddc}. The same process is displayed in Figures \ref{fig:proddb} and \ref{fig:proddd} for $\phi_{0,2}(\cdot)$ and $\phi_{+,2}(\cdot)$. The trigonometric polynomials $\phi_1(\cdot)$ and $\phi_2(\cdot)$ in Figures \ref{fig:proddc} and \ref{fig:proddd} are the same ones
as in \fref{fig:sumneed1}.

\subsubsection{Properties of $\phi_{0,k}(\cdot)$}
\label{sec:propphizk}
We will now record useful properties of $\phi_{0,k}(\cdot)$ that are needed in the proof below. For $\rr=1$, according 
to~\fref{eq:phizk}, $\phi_{0,k}(t)=1$ for all $t$. For $\rr>1$, the following properties hold.
\begin{enumerate}
    \item Controlled behavior near $\setT_k^c$: Take $\tau\in\near{\rr\D\llo}{\setT_k^c}$. Let
    \begin{equation}
        \{\vt_1,\ldots,\vt_{\hrr}\}\define \near{\rr\D\llo}{\tau}\intersect\setT_k^c.
    \end{equation}
    Note that since $\setT_k^c\in \rset{\kappa\llo\rr}{\rr-1}$ and $\D<\kappa$, it follows 
    that $1\le\hrr\le\rr-1$. 
    Then, the following estimates hold.
    \begin{enumerate}
        \item Lower bound:
        \begin{align}
            \phi_{0,k}(\tau)
            &\ge \cz^{\rr-1} 
                 \frac{ \prod_{l=1}^{\hrr}(\vt_l-\tau)^2}{(\rr\llo)^{2\hrr}}.\label{eq:qzlppl}
        \end{align}
        \item Upper bound:
        \begin{align}
            \label{eq:phikub}
            \phi_{0,k}(\tau) &\le 
            \cqu^{\hrr} \frac{ \prod_{l=1}^{\hrr}(\vt_l-\tau)^{2}}{(\rr\llo)^{2\hrr}}.
            \label{eq:qzlunivbndpl}
        \end{align}
        \item Upper bound on modulus of the first derivative:
        \begin{equation}
            \abs{\phi_{0,k}'(\tau)}
            \le \sum_{m=1}^{\hrr} \czdu^{\hrr} 
            \frac{\prod_{\substack{1\le l\le \hrr\\l\ne m}} (\vt_{l}-\tau)^{2}\abs{\vt_m-\tau}}
                 {(\rr\llo)^{2\hrr}} 
            +(\rr-1-\hrr) \czdu^{\hrr+1}  \frac{ \prod_{l=1}^{\hrr} (\vt_{l}-\tau)^{2}}{(\rr\llo)^{2\hrr+1}},
        \label{eq:qzdbndpl}
        \end{equation}
        where $\czdu$ is a positive numerical constant defined in the proof below.
        \item Upper bound on modulus of the second derivative:
        \begin{align}
            \abs{\phi_{0,k}''(\tau)}
            &\le  \sum_{1\le m\le\hrr} \sum_{\substack{1\le m'\le\hrr\\m\ne m'}} \I{\hrr\ge 2}
                  \czdu^{\hrr}
                  \frac{ \prod_{\substack{1\le l\le \hrr\\ l\ne m,\ l\ne m'}} (\vt_l-\tau)^{2}}{(\rr\llo)^{2(\hrr-2)}}
                  \frac{\abs{\vt_m-\tau}}{(\rr\llo)^2} \frac{\abs{\vt_{m'}-\tau}}{(\rr\llo)^2}\nonumber\\
            &\dummyrel{\le} +2(\rr-1-\hrr)\sum_{1\le m\le\hrr} \czdu^{\hrr+1} 
                  \frac{\prod_{\substack{1\le l\le \hrr\\ l\ne m}} (\vt_l-\tau)^{2}}{(\rr\llo)^{2(\hrr-1)}}
                  \frac{\abs{\vt_m-\tau}}{(\rr\llo)^2}\frac{1}{\rr\llo}\nonumber\\
            &\dummyrel{\le} +(\rr-1-\hrr)^2  \czdu^{\hrr+2} 
                  \frac{ \prod_{l=1}^{\hrr} (\vt_l-\tau)^{2}}{(\rr\llo)^{2\hrr}}
                  \frac{1}{(\rr\llo)^2}\nonumber\\
            &\dummyrel{\le} + \sum_{1\le m\le\hrr} \czdu^{\hrr}
                  \frac{\prod_{\substack{1\le l\le \hrr\\ l\ne m}} (\vt_l-\tau)^{2}}{(\rr\llo)^{2(\hrr-1)}}
                  \frac{1}{(\rr\llo)^2} 
                  +(\rr-1-\hrr) \czdu^{\hrr+1} 
                  \frac{ \prod_{l=1}^{\hrr} (\vt_l-\tau)^{2}}{(\rr\llo)^{2\hrr}}\frac{1}{(\rr\llo)^2}.
        \label{eq:qzddbndpl}
        \end{align}
    \end{enumerate}
    \item \label{prop:q0lbfarpl} Boundedness away from zero far from $\setT_k^c$: for 
    all $\tau \in\far{\rr\D\llo}{\setT_k^c}$, 
    \begin{equation}
        \label{eq:phi0klb}
        \phi_{0,k}(\tau) \ge \cqlf^{\rr-1}>0.
    \end{equation}
    \item \label{prop:qodbndpl} Uniform confinement of the derivative:
    \begin{equation}
        \label{eq:phi0kdub}
        \infnorm{\phi_{0,k}'(\cdot)}\le 2\pi/\llo.
    \end{equation}
    \item \label{prop:qoddbndpl} Uniform confinement of the second derivative:
    \begin{equation}
        \label{eq:phi0kddub}
        \infnorm{\phi_{0,k}''(\cdot)}\le \czdduu/\llo^2.
    \end{equation}
    \item \label{prop:qzloinfhipl} Fast growth immediately away from $\setT_k^c$: 
        for all $\tau \in\far{\lhi}{\setT_k^c}$,
    \begin{equation}
        \label{eq:phizkbndfar}
        \phi_{0,k}(\tau)\ge \cql^{\rr-1} \frac{\lhi^{2(\rr-1)}}{(\rr\llo)^{2(\rr-1)}}.
    \end{equation}
\end{enumerate}

Next, we give the proofs of the properties.

\paragraph{Proof of properties 1a--1b.}
These properties are derived in the same way as Properties 4a and 4b in \fref{lem:dualprod}.

\paragraph{Proof of property 1c.}
To prove~\fref{eq:qzdbndpl}, observe
\begin{align}
    \abs{\phi_{0,k}'(\tau)}
    = \abs{\left[\prod_{\substack{1\le m\le \rr\\ m\ne k}} q_{\rr\llo,\setT_m}(\tau)\right]'}
    \le \sum_{\substack{1\le m\le \rr\\ m\ne k}} \abs{\prod_{\substack{1\le j\le \rr \\ j\ne k, j\ne m}} q_{\rr\llo,\setT_j}(\tau)} 
        \abs{q_{\rr\llo,\setT_m}'(\tau)}.
    \label{eq:qqpmixed}
\end{align}
Above, we applied the chain rule for derivative to~\fref{eq:phizk} and used the triangle inequality.

To upper-bound the sum in~\fref{eq:qqpmixed}, we upper-bound the quantities 
$\abs{q_{\rr\llo,\setT_j}(\tau)}$ and $\abs{q_{\rr\llo,\setT_{m}}'(\tau)}$ 
separately. To upper-bound $\abs{q_{\rr\llo,\setT_j}(\tau)}$ we use the same bounds 
as in~\fref{eq:qubrwc} and~\fref{eq:qIu1}. To upper-bound $\abs{q_{\rr\llo,\setT_{m}}'(\tau)}$ 
we use a similar strategy as follows. Assume that $m$ is such that 
$\near{\rr\D\llo}{\tau}\intersect\setT_m\ne\varnothing$, i.e., 
there exist $l\in\{1,\ldots,\hrr\}$ that satisfies $\vt_l\in\setT_m$.
In this case, according to \fref{lem:dualcarlos}, \fref{prop:qzero}, 
$q'_{\rr\llo,\setT_m}(\vt_l)=0$ and according to \fref{lem:dualcarlos}, 
\fref{prop:qppbnd}, $\abs{q''_{\rr\llo,\setT_m}(t)}\le 4\pi^2/(\rr\llo)^2$ 
for all $t$. This, by~\fref{eq:mvt1} [Mean Value theorem], gives the following bound:
\begin{equation}
    \label{eq:qubrwcd}
    \abs{q'_{\rr\llo,\setT_m}(\tau)}\le \cqdvdd\frac{\abs{\vt_l-\tau}}{(\rr\llo)^2}.
\end{equation}
Assume that $m$ is such that $\near{\rr\D\llo}{\tau}\intersect\setT_m=\varnothing$.
In this case, we use \fref{lem:dualcarlos}, \fref{prop:qpbnd}, to write
\begin{equation}
    \label{eq:qubrwcd2}
    \abs{q'_{\rr\llo,\setT_m}(\tau)}\le 2\pi\frac{1}{\rr\llo}.
\end{equation}
Plugging the estimates for $\abs{q_{\rr\llo,\setT_j}(\tau)}$ [\fref{eq:qubrwc} 
and~\fref{eq:qIu1}], \fref{eq:qubrwcd}, and~\fref{eq:qubrwcd2} into~\fref{eq:qqpmixed}, setting 
$\czdu \define \max(2\pi, 4\pi^2, \cqu)$ we obtain~\fref{eq:qzdbndpl}.

\paragraph{Proof of property 1d.}
To prove~\fref{eq:qzddbndpl}, observe
\begin{align}
    \abs{\phi_{0,k}''(\tau)}
    &= \abs{\left[\prod_{\substack{1\le m\le \rr\\ m\ne k}} q_{\rr\llo,\setT_m}(\tau)\right]''}\\
    &\le \!\!\sum_{\substack{1\le m\le \rr\\ m\ne k}}\, \sum_{\substack{1\le m'\le \rr\\m'\ne k, m'\ne m}} \!\abs{\prod_{\substack{1\le j\le \rr\\j\ne k, j\ne m, j\ne m'}} 
        q_{\rr\llo,\setT_j}(\tau)}
        \abs{q_{\rr\llo,\setT_{m}}'(\tau)}
        \abs{q_{\rr\llo,\setT_{m'}}'(\tau)}\nonumber\\
    &\dummyrel{\le}
        +\sum_{\substack{1\le m\le \rr\\m\ne k}} \abs{\prod_{\substack{1\le j\le \rr\\j\ne k, j\ne m}} q_{\rr\llo,\setT_j}(\tau)} 
                       \abs{q_{\rr\llo,\setT_{m}}''(\tau)}.
    \label{eq:qqpmixedd}
\end{align}
To upper-bound the sum in~\fref{eq:qqpmixedd}, we upper-bound the 
quantities $\abs{q_{\rr\llo,\setT_j}(\tau)}$, $\abs{q_{\rr\llo,\setT_{m}}'(\tau)}$, 
and $\abs{q_{\rr\llo,\setT_{m}}''(\tau)}$ separately.
To upper-bound $\abs{q_{\rr\llo,\setT_j}(\tau)}$ and $\abs{q'_{\rr\llo,\setT_m}(\tau)}$ 
we use estimates~\fref{eq:qubrwc}, \fref{eq:qIu1} and~\fref{eq:qubrwcd}, \fref{eq:qubrwcd2}, 
respectively. To upper-bound $\abs{q''_{\rr\llo,\setT_m}(\tau)}$ we use 
\fref{lem:dualcarlos}, \fref{prop:qppbnd}, to write
\begin{equation}
    \abs{q''_{\rr\llo,\setT_m}(\tau)}\le 4\pi^2\frac{1}{(\rr\llo)^2}.
    \label{eq:qdd}
\end{equation}
Plugging these estimates into~\fref{eq:qqpmixedd},
we obtain~\fref{eq:qzddbndpl}.

\paragraph{Proof of properties 2--5.}
\label{sec:prooflast}
\fref{prop:q0lbfarpl} follows by~\fref{eq:phizk} and \fref{lem:dualcarlos}, \fref{prop:qinfbnd}.
\fref{prop:qodbndpl} follow from~\fref{eq:qqpmixed} and from \fref{lem:dualcarlos}, \fref{prop:qpbnd}:
\begin{equation}
    \abs{\phi_{0,k}'(\tau)}\le (\rr-1)\frac{2\pi}{\rr\llo}<\frac{2\pi}{\llo}.
\end{equation}
\fref{prop:qoddbndpl} follow from~\fref{eq:qqpmixedd} and from \fref{lem:dualcarlos}, 
Properties \ref{prop:qpbnd} and \ref{prop:qppbnd}:
\begin{equation}
    \abs{\phi_{0,k}''(\tau)}
    \le (\rr-1)(\rr-2)\frac{4\pi^2}{(\rr\llo)^2}
        +(\rr-1)\frac{4\pi^2}{(\rr\llo)^2}<\frac{8\pi^2}{\llo^2},
\end{equation}
where we defined $\czdduu\define 8\pi^2$.
Finally, \fref{prop:qzloinfhipl} follows from~\fref{eq:phizk}, \fref{eq:qnearbndeq}, 
\fref{lem:dualcarlos}, \fref{prop:qinfbnd}, and~\fref{eq:cqlfcnear}.

\subsubsection{Existence of $\phi_{+,k}(\cdot)$}
\label{sec:estimates}
In this subsection, we check that trigonometric polynomial $\phi_{+,k}(\cdot)$ that satisfies~\fref{eq:phi11m} 
and~\fref{eq:phi11m1} can indeed be defined according to~\fref{eq:phipdef} 
with $q_{\rr\llo,\setT_k,\{f_j\},\{d_j\}}(\cdot)$ constructed via \fref{lem:dualcarlos2}
with $\lc=\rr\llo$ and $\setV=\setT_k\in \rset{\kappa\llo\rr}{1}$.
To this end, we need to show 
that the constraints on the function values $\{f_j\}$ and on the derivatives $\{d_j\}$
that are implied by the constraints~\fref{eq:phi11m} and~\fref{eq:phi11m1} satisfy 
requirements~\fref{eq:condp2} of \fref{lem:dualcarlos2}.

First consider the case $r=1$. As already discussed, in this case $\phi_{0,k}(t)=1$ for all
$t$, and, therefore, $\phi'_{0,k}(t)=0$ for all $t$. Plugging these values into~\fref{eq:phi11m} 
and~\fref{eq:phi11m1} we see from~\fref{eq:phipdef} that the requirements~\fref{eq:condp2} of \fref{lem:dualcarlos2}
are satisfied.

Next, consider the case $r>1$.

For $t\in\setT_k^0$, by~\fref{eq:phipdef}, \fref{eq:phi11m}, \fref{eq:phi11m1}, 
$q_{\rr\llo,\setT_k}(t)=q'_{\rr\llo,\setT_k}(t)=0$ so that 
requirements~\fref{eq:condp2} of \fref{lem:dualcarlos2}, are satisfied. 

To check that requirements~\fref{eq:condp2}  are also satisfied for $t\in\setT_k^+$, we need 
to find upper bounds on $\abs{\phi_{+,k}(\cdot)}$ and $\abs{\phi_{+,k}'(\cdot)}$.

Take $t\in\setT_k^+$ and observe:
\begin{align}
    \abs{\phi_{+,k}(t)}
    \stackrel{(a)}{=}\abs{\rho\frac{1}{\phi_{0,k}(t)}}
    \stackrel{(b)}{\le} \frac{\lhi^{2\rr}}{\llo^{2\rr}} 
        \frac{1}{\frac{\lhi^{2(\rr-1)}}{(\rr\llo)^{2(\rr-1)}}} 
        \frac{1}{\cql^{\rr-1}}
    &\stackrel{(c)}{\le} \rr^{2(\rr-1)}
         \frac{1}{\cql^{\rr}} \frac{\lhi^{2}}{\llo^{2}} \label{eq:ph11bdnm0}\\
    &\le \rr^{2\rr}\frac{1}{\cql^{\rr}}.\label{eq:ph11bdnm}
\end{align}
Above, (a) follows by~\fref{eq:phi11m}; (b) follows by~\fref{eq:phizkbndfar} which is valid 
because $t\in\setT_k^+$ implies $t\in\far{\lhi}{\setT_k^c}$; (c) follows because $\cql<1$.

Next, take $t\in\setT_k^+$  and observe, according to~\fref{eq:phi11m1},
\begin{equation}
    \label{eq:phipderiv}
    \abs{\phi_{+,k}'(t)}=\abs{\rho\frac{\phi'_{0,k}(t)}{\phi^2_{0,k}(t)}}.
\end{equation}
Consider two cases.

Case 1: $t\in\far{\rr\D\llo}{\setT_k^c}$. Then, by~\fref{eq:phi0klb}, 
$\phi_{0,k}(t) \ge \cqlf^{\rr-1}$, and, by~\fref{eq:phi0kdub},
$\abs{\phi_{0,k}'(t)}\le 2\pi/\llo$. Plugging these estimates 
into~\fref{eq:phipderiv} we obtain
\begin{equation}
    \abs{\phi_{+,k}'(t)}
    \le \abs{\rho\frac{2\pi}{\cqlf^{\rr-1}}\frac{1}{\llo}}
    \stackrel{(a)}{\le} \frac{2\pi}{\cqlf^{2\rr-2}}\frac{\lhi}{\llo^2}
    \stackrel{(b)}{<} \rr^{2\rr-1} \left(\frac{2\pi}{\cqlf^{2}}\right)^\rr  \frac{\lhi}{\llo^2}.
    \label{eq:phipderiv1}
\end{equation}
Above, in (a) we used $\rho=(\lhi/\llo)^{2\rr}\le \lhi/\llo$; (b) is a crude inequality
where we used $\cqlf<1$.

Case 2: $t\in\near{\rr\D\llo}{\setT_k^c}$. In this case set 
$\{v_1,\ldots,v_{\hrr}\}\define\setT_k^c\intersect\near{\rr\D\llo}{t}$ 
and note $1\le \hrr\le \rr-1$.
Hence,  by~\fref{eq:qzlppl}:
\begin{equation}
    \label{eq:qqlb}
    \phi_{0,k}(t)\ge \cz^{\rr-1}\frac{ \prod_{j=1}^{\hrr}(v_j-t)^2}{(\rr\llo)^{2\hrr}}.
\end{equation}
By~\fref{eq:qzdbndpl}:
\begin{equation}
    \label{eq:qqdub}
    \abs{\phi_{0,k}'(t)}\le \sum_{m=1}^{\hrr} \czdu^{\hrr} 
    \frac{\prod_{\substack{1\le l\le \hrr\\l\ne m}} (v_{l}-t)^{2}\abs{v_m-t}}{(\rr\llo)^{2\hrr}} 
    +(\rr-1-\hrr) \czdu^{\hrr+1} 
    \frac{ \prod_{l=1}^{\hrr} (v_{l}-t)^{2}}{(\rr\llo)^{2\hrr+1}}.
\end{equation}
Plugging~\fref{eq:qqlb} and~\fref{eq:qqdub} into~\fref{eq:phipderiv}:
\begin{align}
    \frac{\abs{\phi'_{0,k}(t)}}{\phi_{0,k}^2(t)}
    &\le 
    \sum_{m=1}^{\hrr} 
        \frac{\czdu^{\hrr}}{\cz^{2(\rr-1)}}
        \frac{(\rr\llo)^{2\hrr}}{\prod_{\substack{1\le j\le\hrr\\j\ne r}} (v_{j}-t)^{2}\abs{v_m-t}^3}
    +(\rr-1-\hrr) \frac{\czdu^{\hrr+1}}{\cz^{2(\rr-1)}}
                  \frac{(\rr\llo)^{2\hrr-1}}{\prod_{j=1}^{\hrr} (v_{j}-t)^{2}}\\
    &\stackrel{(a)}{\le} 
    \rr^{2\hrr+1}\frac{\czdu^{\hrr+1}}{\cz^{2(\rr-1)}} 
        \left(\frac{\llo^{2\hrr}}{\lhi^{2\hrr+1}}+\frac{\llo^{2\hrr-1}}
                   {\lhi^{2\hrr}}\right)
    \stackrel{(b)}{\le} 
        \rr^{2\rr-1}\frac{\czdu^{\rr}}{\cz^{2(\rr-1)}}  
        \left(\frac{\llo^{2\rr-2}}{\lhi^{2\rr-1}}
             +\frac{\llo^{2\rr-3}}{\lhi^{2\rr-2}}\right)\nonumber\\
    &\stackrel{(c)}{\le} 
        2\rr^{2\rr-1}\left(\frac{\czdu}{\cz^2}\right)^\rr  
        \frac{\llo^{2\rr-2}}{\lhi^{2\rr-1}}
    \stackrel{(d)}{\le} \rr^{2\rr-1} \czdduuu^\rr   \frac{\llo^{2\rr-2}}{\lhi^{2\rr-1}}.
    \label{eq:fracderivbnd}
\end{align}
Above, in (a) we used that $\abs{v_{j}-t}\ge 2\lhi$ for all $j=1,\ldots,\hrr$, $\rr-1-\hrr\le\rr$, and $\czdu>1$; 
in (b) we used that $\hrr\le \rr-1$, $\llo/\lhi>1$, and $\czdu>1$; 
in (c) we used $\llo/\lhi>1$ and $\cz<1$; in (d) we defined $\czdduuu \define 2 \czdu/\cz^2$.
Plugging the estimate~\fref{eq:fracderivbnd} into~\fref{eq:phipderiv},
\begin{align}
    \abs{\phi_{+,k}'(t)}
    &\le \rr^{2\rr-1} \czdduuu^\rr 
         \frac{\lhi^{2\rr}}{\llo^{2\rr}} 
         \frac{\llo^{2\rr-2}}{\lhi^{2\rr-1}}
    =\rr^{2\rr-1} \czdduuu^\rr\frac{\lhi}{\llo^2}.\label{eq:bndphikdnear11}
\end{align}
Combining~\fref{eq:phipderiv1} and~\fref{eq:bndphikdnear11} we find that for all $t\in\setT_k^+$, 
\begin{align}
    \abs{\phi_{+,k}'(t)} 
    &\le \rr^{2\rr-1}\czdduuuu^\rr\frac{\lhi}{\llo^2} \label{eq:bndphikdnear1f}\\
    &\le \rr^{2\rr-1}\czdduuuu^\rr\frac{1}{\llo}, \label{eq:bndphikdnear1}
\end{align}
where we defined $\czdduuuu\define\max(\czdduuu, \cqdvd/\cqlf^{2})$.

It follows from~\fref{eq:ph11bdnm} and~\fref{eq:bndphikdnear1} that the function 
values and derivatives of $q_{\rr\llo,\setT_k}(t)=\phi_{+,k}(t)/(\rr^{2\rr}\czdduf^{\rr})$ with 
\begin{equation}
    \czdduf\define\max(\czdduuuu, 1/\cql)
    \label{eq:czdduf}
\end{equation} 
satisfy requirements~\fref{eq:condp2} of \fref{lem:dualcarlos2} on $\setT_k^+$. We 
conclude that $\phi_{+,k}(\cdot)$ can indeed be defined according to~\fref{eq:phipdef}. 
According to Properties~\ref{prop:dqinfbnd}, \ref{prop:qpbnd1}, and \ref{prop:qppbnd1}
of \fref{lem:dualcarlos2}, and~\fref{eq:phipdef}, $\phi_{+,k}(\cdot)$ satisfies the following properties:
\begin{align}
    &\infnorm{\phi_{+,k}(\cdot)}\le \rr^{2\rr}\czdduf^{\rr}\cqdv,\label{eq:phi11ddm00}\\
    &\infnorm{\phi_{+,k}'(\cdot)}\le \rr^{2\rr-1}\czdduf^{\rr} \cqdvd\frac{1}{\llo},\label{eq:phi11ddm0}\\
    &\infnorm{\phi_{+,k}''(\cdot)}\le \rr^{2\rr-2}\czdduf^{\rr} \cqdvdd\frac{1}{\llo^2}.\label{eq:phi11ddm}
\end{align}

\subsubsection{Proof of property 2}
Take $j\in\{1,\ldots, \sparsity\}$ and consider $t_j\in\setT$. 
There exists a unique $l\in\{1,\ldots,\rr\}$ such that $t_j\in\setT_l$. We will show that 
for all $\tau\in\near{\lhi}{t_j}$
\begin{equation}
    \abs{\phi_l(\tau)-\eta_j}\le \rr^{2\rr+3} \cuuuuu^{\rr+1}  q_0(\tau) \label{eq:toprove2}
\end{equation}
and
\begin{equation}
    \abs{\phi_k(\tau)}\le \rr^{2\rr+3} \cuf^{\rr+1}  q_0(\tau),\ 
        \text{for}\ k\in\{1,\ldots, \rr\},\ k\ne l, \label{eq:toprove1}
\end{equation}
where the positive numerical constants $\cuuuuu$ and $\cuf$ are defined below.

From this we will conclude that
\begin{align}
    \abs{q_1(\tau)-\frac{\rho s_j}{2}}
    =\abs{\sum_{k=1}^\rr \phi_k(\tau) -\frac{\rho}{2}-\frac{\rho s_j}{2}}
    &=\abs{\sum_{k=1}^\rr \phi_k(\tau) -\eta_j}\\
    &\le\sum_{k\ne l}^\rr \abs{\phi_k(\tau)}+ \abs{\phi_l(\tau) -\eta_j}
    \le\rr^{2\rr+4} \cuff^{\rr+1}  q_0(\tau)
\end{align}
with $\cuff \define 2\max(\cuuuuu, \cuf)$, as desired.

To prove~\fref{eq:toprove2} and~\fref{eq:toprove1},  recall, by~\fref{eq:req1m} and~\fref{eq:req3m}:
\begin{alignat}{2}
    &\abs{\phi_l(t_j)-\eta_j} =  0 = q_0(t_j),\label{eq:init1}\\
    &\phi_k(t_j) =  0 = q_0(t_j), &&\text{for}\ k\in\{1,\ldots, \rr\},\ k\ne l,\label{eq:init2}\\
    &\phi'_k(t_j)=0=q'_0(t_j), &&\text{for}\ k\in\{1,\ldots, \rr\}.\label{eq:init3}
\end{alignat}
Hence, in order to prove the bounds in~\fref{eq:toprove1} and~\fref{eq:toprove2}, we will derive upper bounds on 
the second derivatives $\abs{\phi''_k(\tau)}$, $k\in\{1,\ldots,\rr\}$, valid for 
all $\tau\in\near{\lhi}{t_j}$
and use the Mean Value theorem (see \fref{thm:meanvalue}).

Taking the second derivative of~\fref{eq:phikdfn} and applying the triangle inequality we find:
\begin{equation}
    \abs{\phi''_{k}(\tau)}
    \le \underbrace{\abs{\phi''_{0,k}(\tau)}\abs{\phi_{+,k}(\tau)}}_{E_{1}(\tau)}
        +2\underbrace{\abs{\phi'_{0,k}(\tau)}\abs{\phi'_{+,k}(\tau)}}_{E_{2}(\tau)}
        +\underbrace{\abs{\phi_{0,k}(\tau)}\abs{\phi''_{+,k}(\tau)}}_{E_3(\tau)}.
    \label{eq:phiddsumtermsl1}
\end{equation}
In the derivation below we upper-bound the terms separately.

We will need the following notations.
Set $\{\vt_1,\ldots,\vt_{\hrr}\}\define \near{\rr\D\llo}{\tau}\intersect\setT_k^c$ and
set $\{v_1,\ldots,v_{\tilde\rr}\}\define \near{\rr\D\llo-\lhi}{t_j}\intersect\setT_k^c$. 
Note that the set $\{v_1,\ldots,v_{\tilde\rr}\}$ does not depend on $\tau$ and 
also $\{v_1,\ldots,v_{\tilde\rr}\}\subset \{\vt_1,\ldots,\vt_{\hrr}\}$ so 
that $\tilde\rr\le \hrr$. 

The remainder of the proof of Property 2 is organized as follows. First, consider the 
case $t_j\in \setT_k$ and prove~\fref{eq:toprove2}, next consider the 
case $t_j\in\setT_k^c$ and prove~\fref{eq:toprove1}.

\paragraph{Proof of~\fref{eq:toprove2}: case~$t_j\in\setT_k$.}
\subparagraph{Bounding $E_1(\tau)$.}
By \fref{eq:mvt2} [Mean Value theorem] and the triangle inequality we can write
\begin{equation}
    \abs{\phi_{+,k}(\tau)}
    \le \abs{\phi_{+,k}(t_j)}+\abs{\phi_{+,k}'(t_j)}\abs{\tau- t_j}
        +\frac{1}{2}\abs{\phi_{+,k}''(\tau_m)}(\tau- t_j)^2
\end{equation}
with $\tau_m\in (t_j, \tau)$. Next, we use~\fref{eq:phi11m} and~\fref{eq:ph11bdnm0} to 
upper-bound $\abs{\phi_{+,k}(t_j)}$ by the right-hand side of~\fref{eq:ph11bdnm0}; 
use~\fref{eq:phi11m1} and~\fref{eq:bndphikdnear1f} to upper 
bound $\abs{\phi_{+,k}'(t_j)}$ by the right-hand side of~\fref{eq:bndphikdnear1f}; 
use~\fref{eq:phi11ddm} to upper-bound $\abs{\phi_{+,k}''(\tau_m)}$. With 
these estimates we can further upper-bound $\abs{\phi_{+,k}(\tau)}$ as follows:
\begin{align}
    \abs{\phi_{+,k}(\tau)}
    &\le\rr^{2\rr-1}\czdduff^\rr \left(\frac{\lhi^2}{\llo^2}
        +\frac{\lhi}{\llo^2}\abs{\tau-t_j}
        +\frac{1}{\llo^2}(\tau-t_j)^2\right)
    \le \rr^{2\rr-1}\czddufff^\rr \frac{\lhi^2}{\llo^2}.
    \label{eq:phipbnd}
\end{align}
Above, we defined  $\czdduff \define \max(1/\cql,\czdduuuu, \czdduf\cqdvdd)$, $\czddufff\define 3 \czdduff$, 
and used $\abs{\tau-t_j}\le\lhi$.

Assume $\tilde \rr\ge 1$ (the case  $\tilde \rr=0$ will be treated separately below) 
so that $\hrr\ge 1$ and $\tau\in\near{\rr\D\llo}{\setT_k^c}$, which 
implies that we can use~\fref{eq:qzddbndpl} to upper-bound $\abs{\phi''_{0,k}(t)}$:
\begin{align}
    \abs{\phi_{0,k}''(\tau)}
    &\le \sum_{1\le m\le\hrr} 
        \sum_{\substack{1\le m'\le\hrr\\m\ne m'}} 
        \I{\hrr\ge 2}\czdu^{\hrr}
        \frac{ \prod_{\substack{1\le l\le \hrr\\ l\ne m,\ l\ne m'}} (\vt_l-\tau)^{2}}{(\rr\llo)^{2(\hrr-2)}}
        \frac{\abs{\vt_m-\tau}}{(\rr\llo)^2} \frac{\abs{\vt_{m'}-\tau}}{(\rr\llo)^2}\nonumber\\
    &\dummyrel{\le} +2(\rr-1-\hrr)\sum_{1\le m\le\hrr} 
        \czdu^{\hrr+1} 
        \frac{ \prod_{\substack{1\le l\le \hrr\\ l\ne m}} (\vt_l-\tau)^{2}}{(\rr\llo)^{2(\hrr-1)}}
        \frac{\abs{\vt_m-\tau}}{(\rr\llo)^2}\frac{1}{\rr\llo}\nonumber\\
    &\dummyrel{\le} +(\rr-1-\hrr)^2 
        \czdu^{\hrr+2}
        \frac{ \prod_{l=1}^{\hrr} (\vt_l-\tau)^{2}}{(\rr\llo)^{2\hrr}}
        \frac{1}{(\rr\llo)^2}\nonumber\\
    &\dummyrel{\le} 
        + \sum_{1\le m\le\hrr} 
            \czdu^{\hrr}
            \frac{ \prod_{\substack{1\le l\le \hrr\\ l\ne m}} (\vt_l-\tau)^{2}}{(\rr\llo)^{2(\hrr-1)}}
            \frac{1}{(\rr\llo)^2} 
        +(\rr-1-\hrr) 
            \czdu^{\hrr+1}
            \frac{ \prod_{l=1}^{\hrr} (\vt_l-\tau)^{2}}{(\rr\llo)^{2\hrr}}\frac{1}{(\rr\llo)^2}.
\label{eq:qddub}
\end{align}

Multiplying~\fref{eq:phipbnd} and~\fref{eq:qddub} and simplifying we obtain the 
following upper bound on $E_1$:
\begin{align}
    E_1(\tau)=\abs{\phi''_{0,k}(\tau)}\abs{\phi_{+,k}(\tau)}
    &\stackrel{(a)}{\le} \rr^{2\rr+1} \czdduffff^{\rr+1}  \frac{\prod_{1\le l\le \hrr} (\vt_l-\tau)^2}{(\rr\llo)^{2\hrr}}\frac{1}{\llo^2}\nonumber\\
    &\stackrel{(b)}{\le} \rr^{2\rr+1} \czdduffff^{\rr+1}  \frac{\prod_{1\le l\le \tilde\rr} (v_l-\tau)^2}{(\rr\llo)^{2\tilde\rr}}\frac{1}{\llo^2}.
    \label{eq:E1}
\end{align}
Above, in (a) we  used (multiple times) the bound $\lhi\le \abs{\vt_l-\tau}$,
which is true for all $l\in\{1,\ldots,\hrr\}$ 
(follows because the elements of $\setT$ are separated 
by at least $2\lhi$), used $\lhi/\llo<1$, and defined $\czdduffff \define \max(6 \czdu \czddufff, \czdduf\cqdv\czdduu)$;
in (b) we used the fact that $\abs{\vt_l-\tau}/(\llo \rr)\le \D<1$ for 
all $l\in\{1,\ldots,\hrr\}$. 

For the case $\tilde r=0$, the 
upper-bound~\fref{eq:E1} also holds by~\fref{eq:phi0kddub}
and~\fref{eq:phi11ddm00}. 

\subparagraph{Bounding $E_2(\tau)$.} 
By~\fref{eq:mvt1} [Mean Value theorem] we can write
\begin{equation}
    \abs{\phi'_{+,k}(\tau)}
    \le \abs{\phi'_{+,k}(t_j)}+\abs{\phi_{+,k}''(\tau_m)}\abs{\tau- t_j}
\end{equation}
with $\tau_m\in (t_j, \tau)$. Next, we use~\fref{eq:phi11m1} and~\fref{eq:bndphikdnear1f} 
to upper-bound $\abs{\phi_{+,k}'(t_j)}$ by the right-hand side of~\fref{eq:bndphikdnear1f}; 
use~\fref{eq:phi11ddm} to upper-bound $\abs{\phi_{+,k}''(\tau_m)}$. With these estimates we can further upper-bound 
$\abs{\phi'_{+,k}(\tau)}$ as follows:
\begin{align}
    \abs{\phi'_{+,k}(\tau)}
    &\le\rr^{2\rr-1} \czdduffd^\rr 
        \left(\frac{\lhi}{\llo^2}+\frac{1}{\llo^2}\abs{\tau-t_j}\right)
    \le  \rr^{2\rr-1} \czddufffdd^\rr\frac{\lhi}{\llo^2}.
    \label{eq:phipkdub}
\end{align}
Above, we defined $\czdduffd \define \max(\czdduuuu, \cqdvdd\czdduf)$, $\czddufffdd\define 2 \czdduffd$, 
and used $\abs{\tau-t_j}\le\lhi$.

Assume $\tilde \rr\ge 1$ (the case  $\tilde \rr=0$ will be treated separately 
below) so that $\hrr\ge 1$ and $\tau\in\near{\rr\D\llo}{\setT_k^c}$, 
which implies that we can use~\fref{eq:qzdbndpl} to upper-bound $\abs{\phi'_{0,k}(t)}$:
\begin{equation}
    \abs{\phi_{0,k}'(\tau)}
    \le \sum_{m=1}^{\hrr} \czdu^{\hrr}  
        \frac{\prod_{\substack{1\le l\le \hrr\\l\ne m}} (\vt_{l}-\tau)^{2}\abs{\vt_m-\tau}}{(\rr\llo)^{2\hrr}} 
        +(\rr-1-\hrr) \czdu^{\hrr+1} 
        \frac{ \prod_{l=1}^{\hrr} (\vt_{l}-\tau)^{2}}{(\rr\llo)^{2\hrr+1}}.
    \label{eq:qddub2}
\end{equation}

Multiplying~\fref{eq:phipkdub} and~\fref{eq:qddub2} and simplifying we obtain the 
following upper bound on $E_2$:
\begin{align}
    E_2(\tau)=\abs{\phi'_{0,k}(\tau)}\abs{\phi'_{+,k}(\tau)}
    &\stackrel{(a)}{\le} 
        \rr^{2\rr} \czddufffddd^{\rr}  
        \frac{\prod_{1\le l\le \hrr} (\vt_l-\tau)^2}{(\rr\llo)^{2\hrr}}
        \frac{1}{\llo^2}\nonumber\\
    &\stackrel{(b)}{\le} 
        \rr^{2\rr} \czddufffddd^{\rr}  
        \frac{\prod_{1\le l\le \tilde\rr} (v_l-\tau)^2}{(\rr\llo)^{2\tilde\rr}}
        \frac{1}{\llo^2}.
    \label{eq:E2}
\end{align}
Above, in (a) we  used the bound $\lhi\le \abs{\vt_l-\tau}$, which is true 
for all $l\in\{1,\ldots,\hrr\}$ (follows because the elements of $\setT$ are separated 
by at least $2\lhi$), used $\lhi/\llo<1$, 
and defined $\czddufffddd \define \max(2  \czddufffdd \czdu, 2\pi\czdduf \cqdvd)$; in (b) 
we used the fact that $\abs{\vt_l-\tau}/(\llo \rr)\le \D<1$ for 
all $l\in\{1,\ldots,\hrr\}$. 

For the case $\tilde r=0$, the upper bound~\fref{eq:E2} 
also holds by~\fref{eq:phi0kdub} and~\fref{eq:phi11ddm0}.

\subparagraph{Bounding $E_3(\tau)$.}
By~\fref{eq:phi11ddm},
\begin{equation}
    \label{eq:phipe3}
    \abs{\phi_{+,k}''(\tau)}\le \rr^{2\rr-2}\czdduf^{\rr} \cqdvdd\frac{1}{\llo^2}.
\end{equation}
Assume $\tilde \rr\ge 1$ (the case  $\tilde \rr=0$ will be treated separately below) 
so that $\hrr\ge 1$ and $\tau\in\near{\rr\D\llo}{\setT_k^c}$, 
which implies that we can use~\fref{eq:phikub} to upper-bound $\abs{\phi_{0,k}(\tau)}$:
\begin{align}
    \label{eq:phipke3}
    \abs{\phi_{0,k}(\tau)} 
    &\le \cqu^{\hrr}\frac{ \prod_{l=1}^{\hrr} (\vt_l-\tau)^{2}}{(\rr\llo)^{2\hrr}}.
    \label{eq:qzlunivbndpl}
\end{align}

Multiplying~\fref{eq:phipe3} and~\fref{eq:phipke3} and simplifying we obtain the 
following upper bound on $E_3$:
\begin{align}
    E_3(\tau)
    =\abs{\phi_{0,k}(\tau)}\abs{\phi''_{+,k}(\tau)}
    &\stackrel{(a)}{\le} \rr^{2\rr-2}\czddufffdddd^{\rr} 
        \frac{ \prod_{l=1}^{\hrr} (\vt_l-\tau)^{2}}{(\rr\llo)^{2\hrr}} \frac{1}{\llo^2} \\
    &\stackrel{(b)}{\le} \rr^{2\rr-2}\czddufffdddd^{\rr} 
        \frac{\prod_{1\le l\le \tilde\rr} (v_l-\tau)^2}{(\rr\llo)^{2\tilde\rr}} \frac{1}{\llo^2}.
    \label{eq:E3}
\end{align}
Above, (a) we defined $\czddufffdddd \define \czdduf\cqdvdd\cqu$; in (b) we use the fact 
that $\abs{\vt_l-\tau}/(\llo \rr)\le \D<1$ for 
all $l\in\{1,\ldots,\hrr\}$. 

For the case $\tilde r=0$, the 
upper-bound~\fref{eq:E3} also holds  by~\fref{eq:phipe3} because 
by~\fref{eq:phizk} and \fref{lem:dualcarlos}, \fref{prop:qzeroone}, $\abs{\phi_{0,k}(\tau)}<1$ 
and because $\cqu>1$ and $\cqdvdd>1$.

From~\fref{eq:phiddsumtermsl1}, \fref{eq:E1}, \fref{eq:E2}, 
and~\fref{eq:E3} we conclude that
\begin{equation}
     \abs{\phi''_{k}(\tau)}
     \le \rr^{2\rr+1} \cuos^{\rr+1}  
         \frac{\prod_{1\le l\le \tilde\rr} (v_l-\tau)^2}{(\rr\llo)^{2\tilde\rr}} 
         \frac{1}{\llo^2},
    \label{eq:phiddfinalbnd1}
\end{equation}
where we defined $\cuos \define 4 \max(\czdduffff, \czddufffddd, \czddufffdddd)$.

\subparagraph{Putting pieces together.}
On the one hand, by~\fref{eq:mvt2} [Mean Value theorem],
using~\fref{eq:init1}, \fref{eq:init3}, \fref{eq:phiddfinalbnd1} and 
we can write for all $\tau\in\near{\lhi}{t_j}$:
\begin{align}
    \abs{\phi_l(\tau)-\eta_j}
    &\stackrel{(a)}{\le} 
        \frac{1}{2} \rr^{2\rr+1} \cuos^{\rr+1}  
        \frac{\prod_{1\le l\le \tilde\rr} (v_l-\tau_m)^2}{(\rr\llo)^{2\tilde\rr}}  
        \frac{(\tau-t_j)^2}{\llo^2}\\
    &\stackrel{(b)}{\le} 
        \frac{1}{2} \rr^{2\rr+1} \cuos^{\rr+1}  2^{\tilde\rr}
        \frac{\prod_{1\le l\le \tilde\rr} (v_l-\tau)^2}{(\rr\llo)^{2\tilde\rr}} 
        \frac{(\tau-t_j)^2}{\llo^2}\\
    &\stackrel{(c)}{\le} 
        \rr^{2\rr+3} \cuuuu^{\rr+1} 
        \frac{\prod_{1\le l\le \tilde\rr} (v_l-\tau)^2}{(\rr\llo)^{2\tilde\rr}} 
        \frac{(\tau-t_j)^2}{(\rr\llo)^2}.
\label{eq:philtau1}
\end{align}
Above, in (a) $\tau_m\in (t_j, \tau)$; in (b) we used 
that $\abs{v_l-\tau_m}<\abs{v_l-\tau}+\lhi<2\abs{v_l-\tau}$, which is 
true because $\tau\in\near{\lhi}{t_j}$ and because the elements of $\setT$ are separated 
by at least $2\lhi$; 
in (c) we defined $\cuuuu \define 2\cuos$.

On the other hand, let 
$\{u^\tau_1,\ldots,u^\tau_{\breve \rr}\}\define \near{\rr\D\llo}{\tau}\intersect\setT$. 
Then, by~\fref{eq:qzlp},
\begin{align}
    q_0(\tau)
    \ge \cz^\rr  \frac{ \prod_{l=1}^{\breve  \rr}(u^\tau_l-\tau)^2}{(\rr\llo)^{2\breve \rr}}
    &\stackrel{(a)}{\ge} 
        \cz^\rr  \frac{\prod_{1\le l\le \tilde\rr} (v_l-\tau)^2}{(\rr\llo)^{2\tilde\rr}}  
        \frac{(\tau-t_j)^2}{(\rr\llo)^2} 
        \underbrace{
            \left(\frac{(\rr\D\llo-2\lhi)^2}{(\rr\llo)^2}\right)^{\breve\rr-\tilde\rr-1}
        }_{P_1}\\
    &\stackrel{(b)}{\ge} 
        \czz^\rr \frac{\prod_{1\le l\le \tilde\rr} (v_l-\tau)^2}{(\rr\llo)^{2\tilde\rr}}  
        \frac{(\tau-t_j)^2}{(\rr\llo)^2}.
\label{eq:philtau2}
\end{align}
Above, in (a) we use the fact that 
$\{v_1,\ldots,v_{\tilde r}\}\union \{t_j\}\subset \{u^\tau_1,\ldots,u^\tau_{\breve \rr}\}$ 
and the fact that by construction of the set $\{v_1,\ldots,v_{\tilde r}\}$ 
it follows that if, for some $k$, 
$u^\tau_k\notin \{v_1,\ldots,v_{\tilde r}\}\union \{t_j\}$, 
then\linebreak $\abs{u^\tau_k-\tau}\ge \rr\D\llo-2\lhi$; in (b) 
we used the assumption $\SRF\ge 12$ so that $\lhi\le \llo/12$ and 
therefore $\rr\D\llo-2\lhi\ge \rr(\D-1/6)\llo$, used 
that  $0<\D-1/6<1$, which implies that $P_1\ge (\D-1/6)^{2\rr} $, and 
defined $\czz \define \cz (\D-1/6)^2$ that satisfies $0<\czz<1$.

The bound~\fref{eq:toprove2} follows from~\fref{eq:philtau1} and~\fref{eq:philtau2} 
by defining $\cuuuuu \define \cuuuu/\czz$.

\paragraph{Proof of~\fref{eq:toprove1}: case~$t_j\in \setT^c_k$.}
We only need to consider this case when $\rr>1$. Indeed, when $\rr=1$, the sum in \fref{eq:q1defm} only contains
one element, $\phi_l(\cdot)$, and, necessarily, $t_j\in \setT_l$ because $\setT_l^c$ is empty. 

In this case $t_j$ is one of the elements among 
$\{v_1,\ldots,v_{\tilde\rr}\}\subset \{\vt_1,\ldots,\vt_{\hrr}\}$; in other words,
$t_j=v_{\tilde m}=\vt_{\hat m}$ for some $1\le \tilde m\le\tilde \rr, 1\le \hat m\le\hrr$.
The set $\setT_k\intersect\near{\rr\D\llo-\lhi}{t_j}$ is either 
empty or contains exactly one element. Let 
$b\define\abs{\setT_k\intersect\near{\rr\D\llo-\lhi}{t_j}}$. In 
the case when $b=1$, let $\{\tilde t\}\define \setT_k\intersect\near{\rr\D\llo-\lhi}{t_j}$.

\subparagraph{Bounding $E_1(\tau)$.}
Consider the case $b=1$.
By \fref{eq:mvt2} [Mean Value theorem] we can write
\begin{equation}
    \abs{\phi_{+,k}(\tau)}
    \le \abs{\phi_{+,k}(\tilde t)}+\abs{\phi_{+,k}'(\tilde t)}
        \abs{\tau- \tilde t}+\frac{1}{2}\abs{\phi_{+,k}''(\tau_m)}(\tau- \tilde t)^2
\end{equation}
with $\tau_m\in (\tilde t, \tau)$. Next, we use~\fref{eq:phi11m} 
and~\fref{eq:ph11bdnm0} to upper-bound $\abs{\phi_{+,k}(\tilde t)}$ by the 
right-hand side of~\fref{eq:ph11bdnm0}; use~\fref{eq:phi11m1} and~\fref{eq:bndphikdnear1f} to 
upper-bound $\abs{\phi_{+,k}'(\tilde t)}$ by the right-hand side of~\fref{eq:bndphikdnear1f}; 
use~\fref{eq:phi11ddm} to upper-bound $\abs{\phi_{+,k}''(\tau_m)}$. With these estimates we 
can further upper-bound $\abs{\phi_{+,k}(\tau)}$ as follows:
\begin{align}
    \abs{\phi_{+,k}(\tau)}
    \le\rr^{2\rr-1}\czdduff^\rr \left(\frac{\lhi^2}{\llo^2} 
        +\frac{\lhi}{\llo^2}\abs{\tau-\tilde t}
        +\frac{1}{\llo^2}(\tau-\tilde t)^2\right)
    &\le \rr^{2\rr+1} \czddufff^\rr \frac{(\tilde t-\tau)^2}{(\rr\llo)^2}\\
    &= \rr^{2\rr+1} \czddufff^\rr \left[\frac{(\tilde t-\tau)^2}{(\rr\llo)^2}\right]^\I{b=1},
    \label{eq:phipkubnd1}
\end{align}
where we used that $\lhi\le \abs{\tilde t-\tau}$ because the elements of $\setT$ are separated by at least $2\lhi$ and $\tau\in\near{\lhi}{t_j}$ with $\tilde t\ne t_j$. According to~\fref{eq:phi11ddm00} 
the upper bound~\fref{eq:phipkubnd1} also holds for $b=0$.

Since $t_j\in \setT^c_k$ and $\tau\in\near{\lhi}{t_j}$, it 
follows $\tau\in\near{\rr\D\llo}{\setT_k^c}$ 
so that $\hrr\ge 1$, which implies that we can use~\fref{eq:qzddbndpl} 
to upper-bound $\abs{\phi''_{0,k}(\tau)}$:
\begin{align}
    \abs{\phi_{0,k}''(\tau)}
    &\le 
        \sum_{1\le m\le\hrr} \sum_{\substack{1\le m'\le\hrr\\m\ne m'}} 
        \I{\hrr\ge 2}
        \czdu^{\hrr} 
        \frac{ \prod_{\substack{1\le l\le \hrr\\ l\ne m,\ l\ne m'}} (\vt_l-\tau)^{2}}{(\rr\llo)^{2(\hrr-2)}}
        \frac{\abs{\vt_m-\tau}}{(\rr\llo)^2} 
        \frac{\abs{\vt_{m'}-\tau}}{(\rr\llo)^2}\nonumber\\
    &\dummyrel{\le} 
        +2(\rr-1-\hrr)\sum_{1\le m\le\hrr} 
        \czdu^{\hrr+1} 
        \frac{ \prod_{\substack{1\le l\le \hrr\\ l\ne m}} (\vt_l-\tau)^{2}}{(\rr\llo)^{2(\hrr-1)}}
        \frac{\abs{\vt_m-\tau}}{(\rr\llo)^2}\frac{1}{\rr\llo}\nonumber\\
    &\dummyrel{\le} 
        +(\rr-1-\hrr)^2 \czdu^{\hrr+2} 
            \frac{ \prod_{l=1}^{\hrr} (\vt_l-\tau)^{2}}{(\rr\llo)^{2\hrr}}
            \frac{1}{(\rr\llo)^2}\nonumber\\
    &\dummyrel{\le}
        + \sum_{1\le m\le\hrr} 
            \czdu^{\hrr}
            \frac{ \prod_{\substack{1\le l\le \hrr\\ l\ne m}} (\vt_l-\tau)^{2}}{(\rr\llo)^{2(\hrr-1)}}
            \frac{1}{(\rr\llo)^2} 
        +(\rr-1-\hrr) 
            \czdu^{\hrr+1}
            \frac{ \prod_{l=1}^{\hrr} (\vt_l-\tau)^{2}}{(\rr\llo)^{2\hrr}}
            \frac{1}{(\rr\llo)^2}\\
    &\stackrel{(a)}{\le} 
        \rr^2\cuoss^{\rr+1}  
        \frac{ \prod_{1\le l\le \hrr, l\ne\hat m} (\vt_l-\tau)^{2}}{(\rr\llo)^{2\hrr}}.
\label{eq:qddub3}
\end{align}
Above, in (a) we used (multiple times) the fact that $\abs{\vt_{\hat m}-\tau}\le\abs{\vt_{l}-\tau}$ 
for all $l\in\{1,\ldots,\hrr\}$, the fact that $\abs{\vt_l-\tau}/(\llo \rr)\le \D<1$ 
for all $l\in\{1,\ldots,\hrr\}$, the fact $\hrr\le \rr-1$, and defined $\cuoss \define 6 \czdu$.

Multiplying~\fref{eq:phipkubnd1} and~\fref{eq:qddub3} and simplifying we obtain 
the following upper bound on $E_1$:
\begin{align}
    E_1(\tau)
    =\abs{\phi''_{0,k}(\tau)}\abs{\phi_{+,k}(\tau)}
    &\stackrel{(a)}{\le} 
        \rr^{2\rr+1}\cuosss^{\rr+1} 
        \left[\frac{(\tilde t-\tau)^2}{(\rr\llo)^2}\right]^\I{b=1} 
        \frac{ \prod_{1\le l\le \hrr, l\ne\hat m} (\vt_l-\tau)^{2}}{(\rr\llo)^{2(\hrr-1)}}
        \frac{1}{\llo^2}\nonumber\\
    &\stackrel{(b)}{\le} 
        \rr^{2\rr+1}\cuosss^{\rr+1} 
        \left[\frac{(\tilde t-\tau)^2}{(\rr\llo)^2}\right]^\I{b=1} 
        \frac{\prod_{1\le l\le \tilde\rr, l\ne\tilde m} (v_l-\tau)^{2}}{(\rr\llo)^{2(\tilde\rr-1)}}
        \frac{1}{\llo^2}.
    \label{eq:E12}
\end{align}
Above, in (a) we defined $\cuosss\define\czddufff\cuoss$; in (b) we use the fact 
that $\abs{\vt_l-\tau}/(\llo \rr)\le \D<1$ for all $l\in\{1,\ldots,\hrr\}$.

\subparagraph{Bounding $E_2(\tau)$.}
Consider the case $b=1$.
By \fref{eq:mvt1} [Mean Value theorem] we can write
\begin{equation}
    \abs{\phi'_{+,k}(\tau)}\le 
    \abs{\phi'_{+,k}(\tilde t)}+\abs{\phi_{+,k}''(\tau_m)}\abs{\tau- \tilde t}
\end{equation}
with $\tau_m\in (\tilde t, \tau)$. Next, we use~\fref{eq:phi11m1} and~\fref{eq:bndphikdnear1f} 
to upper-bound $\abs{\phi_{+,k}'(\tilde t)}$ by the right-hand side of~\fref{eq:bndphikdnear1f}; 
use~\fref{eq:phi11ddm} to upper-bound $\abs{\phi_{+,k}''(\tau_m)}$. With these estimates we can 
further upper-bound $\abs{\phi'_{+,k}(\tau)}$ as follows:
\begin{align}
    \abs{\phi'_{+,k}(\tau)}
    \le\rr^{2\rr-1}\czdduffd^\rr \left(\frac{\lhi}{\llo^2}
        +\frac{1}{\llo^2}\abs{\tau-\tilde t}\right)
    &\le \rr^{2\rr}\czddufffdd^\rr 
        \frac{\abs{\tau-\tilde t}}{\rr\llo}\frac{1}{\llo}\\
    &= \rr^{2\rr}\czddufffdd^\rr 
        \left[\frac{\abs{\tau-\tilde t}}{\rr\llo}\right]^{\I{b=1}}
        \frac{1}{\llo},
    \label{eq:phipkubnd2}
\end{align}
where we used that $\lhi\le \abs{\tilde t-\tau}$. According to~\fref{eq:phi11ddm0} 
the upper bound~\fref{eq:phipkubnd2} also holds for $b=0$.

Since $t_j\in \setT^c_k$ and $\tau\in\near{\lhi}{t_j}$, it 
follows $\tau\in\near{\rr\D\llo}{\setT_k^c}$ 
so that $\hrr\ge 1$, which implies that we can use~\fref{eq:qzdbndpl} 
to upper-bound $\abs{\phi'_{0,k}(\tau)}$:
\begin{align}
    \abs{\phi_{0,k}'(\tau)}
    &\le \sum_{m=1}^{\hrr} \czdu^{\hrr} 
        \frac{\prod_{\substack{1\le l\le \hrr\\l\ne m}} (\vt_{l}-\tau)^{2}\abs{\vt_m-\tau}}{(\rr\llo)^{2\hrr}} 
        +(\rr-1-\hrr) \czdu^{\hrr+1}  
        \frac{ \prod_{l=1}^{\hrr} (\vt_{l}-\tau)^{2}}{(\rr\llo)^{2\hrr+1}}\\
    &\stackrel{(a)}{\le}
        \cuossss^\rr\frac{\prod_{1\le l\le \hrr,l\ne \hat m} (\vt_{l}-\tau)^{2}}{(\rr\llo)^{2(\hrr-1)}}
        \left[\frac{\abs{\tilde t-\tau}}{\rr\llo}\right]^{\I{b=1}} \frac{1}{\llo}.
        \label{eq:qddub4}
\end{align}
Above, in (a) we used  the fact that $\abs{\vt_{\hat m}-\tau}\le\abs{\vt_{l}-\tau}$ 
for all $l\in\{1,\ldots,\hrr\}$, the fact that $\abs{\vt_l-\tau}/(\llo \rr)\le \D<1$ 
for all $l\in\{1,\ldots,\hrr\}$, and the fact that 
$\abs{\vt_{\hat m}-\tau}\le \abs{\tilde t-\tau}$, and defined $\cuossss\define 2 \czdu$.

Multiplying~\fref{eq:phipkubnd2} and~\fref{eq:qddub4} and simplifying we obtain 
the following upper bound on $E_2$:
\begin{align}
    E_2(\tau)
    =\abs{\phi'_{0,k}(\tau)}\abs{\phi'_{+,k}(\tau)}
    &\stackrel{(a)}{\le} \rr^{2\rr} \cuosssss^\rr 
        \left[\frac{(\tilde t-\tau)^2}{(\rr\llo)^2}\right]^{\I{b=1}}
        \frac{ \prod_{1\le l\le \hrr,l\ne \hat m} (\vt_{l}-\tau)^{2}}{(\rr\llo)^{2(\hrr-1)}} 
        \frac{1}{\llo^2} \nonumber\\
    &\stackrel{(b)}{\le} 
        \rr^{2\rr} \cuosssss^\rr 
        \left[\frac{(\tilde t-\tau)^2}{(\rr\llo)^2}\right]^{\I{b=1}}
        \frac{\prod_{1\le l\le \tilde\rr,l\ne \tilde m} (v_{l}-\tau)^{2}}{(\rr\llo)^{2(\tilde\rr-1)}} 
        \frac{1}{\llo^2}.
    \label{eq:E22}
\end{align}
Above, in (a)  we defined $\cuosssss \define \czddufffdd \cuossss$; in (b) we used the fact 
that $\abs{\vt_l-\tau}/(\llo \rr)\le \D<1$ for all $l\in\{1,\ldots,\hrr\}$.

\subparagraph{Bounding $E_3(\tau)$.}
By~\fref{eq:phi11ddm},
\begin{equation}
    \label{eq:phipkubnd3}
    \abs{\phi_{+,k}''(\tau)}\le \rr^{2\rr-2}\czdduf^{\rr}\cqdvdd \frac{1}{\llo^2}.
\end{equation}

Since $t_j\in \setT^c_k$ and $\tau\in\near{\lhi}{t_j}$, it 
follows $\tau\in\near{\rr\D\llo}{\setT_k^c}$ 
so that $\hrr\ge 1$, which implies that we can use~\fref{eq:phikub} to 
upper-bound $\abs{\phi_{0,k}(\tau)}$:
\begin{align}
    \abs{\phi_{0,k}(\tau)} 
    \le \cqu^{\hrr}\frac{ \prod_{l=1}^{\hrr} (\vt_l-\tau)^{2}}{(\rr\llo)^{2\hrr}}
    &\stackrel{(a)}{\le} 
        \cqu^{\hrr}
        \left[\frac{(\tilde t-\tau)^2}{(\rr\llo)^2}\right]^{\I{b=1}} 
        \frac{ \prod_{1\le l\le \hrr,l\ne \hat m} (\vt_{l}-\tau)^{2}}{(\rr\llo)^{2(\hrr-1)}}.
        \label{eq:qddub51}
\end{align}
Above, in (a) we used  the fact that $\abs{\vt_{\hat m}-\tau}\le\abs{\vt_{l}-\tau}$ 
for all $l\in\{1,\ldots,\hrr\}$, the fact that $\abs{\vt_l-\tau}/(\llo \rr)\le \D<1$ 
for all $l\in\{1,\ldots,\hrr\}$, and the fact that $\abs{\vt_{\hat m}-\tau}\le \abs{\tilde t-\tau}$.
Multiplying~\fref{eq:phipkubnd3} and~\fref{eq:qddub51} and simplifying we obtain the 
following upper bound on $E_3$:
\begin{align}
    E_3(\tau)
    =\abs{\phi_{0,k}(\tau)}\abs{\phi''_{+,k}(\tau)}
    &\stackrel{(a)}{\le} 
        \rr^{2\rr-2}\cuossssss^{\rr} 
        \left[\frac{(\tilde t-\tau)^2}{(\rr\llo)^2}\right]^{\I{b=1}} 
        \frac{\prod_{1\le l\le \hrr,l\ne \hat m} (\vt_{l}-\tau)^{2}}{(\rr\llo)^{2(\hrr-1)}}
        \frac{1}{\llo^2} \\
    &\stackrel{(b)}{\le} 
        \rr^{2\rr-2}\cuossssss^{\rr}
        \left[\frac{(\tilde t-\tau)^2}{(\rr\llo)^2}\right]^{\I{b=1}}
        \frac{\prod_{1\le l\le \tilde\rr, l\ne \tilde m} (v_l-\tau)^2}{(\rr\llo)^{2(\tilde\rr-1)}} 
        \frac{1}{\llo^2}.
    \label{eq:E23}
\end{align}
Above, in (a) we defined $\cuossssss\define\czdduf\cqdvdd\cqu$; in (b) we used the fact 
that $\abs{\vt_l-\tau}/(\llo \rr)\le \D<1$ for all $l\in\{1,\ldots,\hrr\}$.

From~\fref{eq:phiddsumtermsl1}, \fref{eq:E12}, \fref{eq:E22}, and \fref{eq:E23} we 
conclude that
\begin{equation}
     \abs{\phi''_{k}(\tau)}
     \le \rr^{2\rr+1}\cuuu^{\rr+1}  
     \left[\frac{(\tilde t-\tau)^2}{(\rr\llo)^2}\right]^\I{b=1} 
     \frac{\prod_{1\le l\le \tilde\rr, l\ne \tilde m} (v_l-\tau)^2}{(\rr\llo)^{2(\tilde\rr-1)}} 
     \frac{1}{\llo^2},
    \label{eq:phiddfinalbnd2}
\end{equation}
where we defined $\cuuu \define 4\max(\cuosss, \cuosssss, \cuossssss)$.

\subparagraph{Putting pieces together.}
On the one hand, by \fref{eq:mvt2} [Mean Value theorem], 
using~\fref{eq:init2}, \fref{eq:init3}, \fref{eq:phiddfinalbnd2}, and 
we can write for all $\tau\in\near{\lhi}{t_j}$:
\begin{align}
    \abs{\phi_k(\tau)}
    &\stackrel{(a)}{\le} 
        \frac{1}{2} \rr^{2\rr+1}\cuuu^{\rr+1}  
        \left[\frac{(\tilde t-\tau_m)^2}{(\rr\llo)^2}\right]^\I{b=1} 
        \frac{\prod_{1\le l\le \tilde\rr, l\ne \tilde m} (v_l-\tau_m)^2}{(\rr\llo)^{2(\tilde\rr-1)}} 
        \frac{(\tau-t_j)^2}{\llo^2}\\
    &\stackrel{(b)}{\le} \frac{1}{2} \rr^{2\rr+1}\cuuu^{\rr+1}  2^{\tilde\rr}
        \left[\frac{(\tilde t-\tau)^2}{(\rr\llo)^2}\right]^\I{b=1} 
        \frac{\prod_{1\le l\le \tilde\rr, l\ne \tilde m} (v_l-\tau)^2}{(\rr\llo)^{2(\tilde\rr-1)}} 
        \frac{(\tau-t_j)^2}{\llo^2}\\
    &\stackrel{(c)}{\le} \rr^{2\rr+3} \cuuuuuu^{\rr+1} 
        \left[\frac{(\tilde t-\tau)^2}{(\rr\llo)^2}\right]^\I{b=1}
        \frac{\prod_{1\le l\le \tilde\rr} (v_l-\tau)^2}{(\rr\llo)^{2\tilde\rr}}.
\label{eq:philtau11}
\end{align}
Above, 
in (a) $\tau_m\in (t_j, \tau)$; 
in (b) we used the fact that, for $l\ne\tilde m$, $\abs{v_l-\tau_m}<\abs{v_l-\tau}+\lhi<2\abs{v_l-\tau}$ and 
$\abs{\tilde t-\tau_m}<\abs{\tilde t-\tau}+\lhi<2\abs{\tilde t-\tau}$,
which is true because $\tau\in\near{\lhi}{t_j}$ and because the elements of $\setT$ are separated 
by at least $2\lhi$;
in (c) we defined $\cuuuuuu \define 2\cuuu$ 
and used the fact that $t_j=v_{\tilde m}$.

On the other hand, let 
$\{u^\tau_1,\ldots,u^\tau_{\breve \rr}\}\define \near{\rr\D\llo}{\tau}\intersect\setT$. 
Then by~\fref{eq:qzlp},
\begin{align}
    q_0(\tau)
    &\ge 
        \cz^\rr \frac{ \prod_{l=1}^{\breve  \rr}(u^\tau_l-\tau)^2}{(\rr\llo)^{2\breve \rr}}\\
    &\stackrel{(a)}{\ge} 
        \cz^\rr \left[\frac{(\tilde t-\tau)^2}{(\rr\llo)^2}\right]^\I{b=1} 
        \frac{\prod_{1\le l\le \tilde\rr} (v_l-\tau)^2}{(\rr\llo)^{2\tilde\rr}}
        \underbrace{
            \left(\frac{(\rr\D\llo-2\lhi)^2}{(\rr\llo)^2}\right)^{\breve\rr-\tilde\rr-\I{b=1}}
        }_{P_2}\\
    &\stackrel{(b)}{\ge} \czz^\rr \left[\frac{(\tilde t-\tau)^2}{(\rr\llo)^2}\right]^\I{b=1} 
        \frac{\prod_{1\le l\le \tilde\rr} (v_l-\tau)^2}{(\rr\llo)^{2\tilde\rr}}.
\label{eq:philtau21}
\end{align}
Above, in (a) we used the fact that 
$\{v_1,\ldots,v_{\tilde r}\}\subset \{u^\tau_1,\ldots,u^\tau_{\breve \rr}\}$, 
the fact that if $b=1$, then $\tilde t\in \{u^\tau_1,\ldots,u^\tau_{\breve \rr}\}$, 
and the fact that by construction of the set $\{v_1,\ldots,v_{\tilde r}\}$ it follows 
that if, for some $k$, $u^\tau_k\notin \{v_1,\ldots,v_{\tilde r}\}$ and 
$u^\tau_k\ne \tilde t$, then $\abs{u^\tau_k-\tau}\ge \rr\D\llo-2\lhi$; 
in (b) we used the assumption $\SRF\ge 12$ so that $\lhi\le \llo/12$ 
and therefore $\rr\D\llo-2\lhi\ge \rr(\D-1/6)\llo$, 
used that $0<\D-1/6<1$, which implies that $P_2\ge (\D-1/6)^{2\rr} $.

The bound~\fref{eq:toprove1} follows from~\fref{eq:philtau11} 
and~\fref{eq:philtau21} by defining $\cuf \define \cuuuuuu/\czz$.

\subsubsection{Proof of property 3}
By~\fref{eq:q1defm} and the triangle inequality:
\begin{align}
    \infnorm{q_1(\cdot)}
    \le \rho/2+\rr \max_{1\le k\le \rr}\infnorm{\phi_k(\cdot)}
    &\stackrel{(a)}{\le} \rho/2+\rr \max_{1\le k\le \rr}\infnorm{\phi_{+,k}(\cdot)}\\
    &\stackrel{(b)}{=} \rho/2+\rr^{2\rr+1}\czdduf^{\rr} \max_{1\le k\le \rr}\infnorm{q_{\rr\llo,\setT_k,\{f_j\},\{d_j\}}(\cdot)}\\
    &\stackrel{(c)}{\le} \rho/2+\rr^{2\rr+1}\cqdv\czdduf^{\rr}
    \stackrel{(d)}{\le} \rr^{2\rr+1}\cugm^{\rr}.
\end{align}
Above, in (a) we used~\fref{eq:phikdfn} and the fact that by~\fref{eq:phizk} and 
\fref{lem:dualcarlos}, \fref{prop:qzeroone}, $\infnorm{\phi_{0,k}(\cdot)}\le 1$;
in (b) we used~\fref{eq:phipdef}; in (c) we used \fref{lem:dualcarlos2}, \fref{prop:dqinfbnd};
in (d) we defined $\cugm\define 2\cqdv\czdduf$ and used the fact that $\rho/2<1<\cqdv\czdduf$.

\subsubsection{Proof of property 4}
\label{sec:bndfarV}

Take $\tau\in\far{\lhi}{\setT}$. 
As above, let $\{u^\tau_1,\ldots,u^\tau_{\breve \rr}\}\define \near{\rr\D\llo}{\tau}\intersect\setT$. 
Then by~\fref{eq:qzlp},
\begin{equation}
    \label{eq:qzlbpr}
    q_0(\tau)
    \ge \cz^\rr \frac{ \prod_{l=1}^{\breve  \rr}(u^\tau_l-\tau)^2}{(\rr\llo)^{2\breve \rr}}.
\end{equation}
By~\fref{eq:q0lbxx} this bound is also 
valid when $\breve \rr=0$.

Fix $k$. If $\tau\in\near{\rr\D\llo}{\setT_k^c}$, then we 
can use~\fref{eq:phikub} to upper-bound $\abs{\phi_{0,k}(\tau)}$:
\begin{align}
    \label{eq:phipke3}
    \abs{\phi_{0,k}(\tau)} 
    &\le \cqu^{\hrr}\frac{ \prod_{l=1}^{\hrr} (\vt_l-\tau)^{2}}{(\rr\llo)^{2\hrr}},
    \label{eq:qzkubpr}
\end{align}
where, as before, 
$\{\vt_1,\ldots,\vt_{\hrr}\}\define \near{\rr\D\llo}{\tau}\intersect\setT_k^c$. 
If $\tau\notin\near{\rr\D\llo}{\setT_k^c}$, we will use that by~\fref{eq:phizk} and by \fref{lem:dualcarlos}, \fref{prop:qzeroone},
\begin{equation}
    \abs{\phi_{0,k}(\tau)} \le 1.\label{eq:qzkubpruniv}
\end{equation}
The set $\setT_k\intersect\near{\rr\D\llo}{\tau}$ is either empty or 
contains exactly one element. Let $b\define\abs{ \setT_k\intersect\near{\rr\D\llo}{\tau}}$ 
denote the size of this set; when $b=1$, let $\{\tilde t\}\define \setT_k\intersect\near{\rr\D\llo}{ \tau}$.
Following the steps that lead to~\fref{eq:phipkubnd1}, we obtain
\begin{equation}
    \label{eq:qokubpr}
    \abs{\phi_{+,k}(\tau)}
    \le \rr^{2\rr+1}\czddufff^\rr 
        \left[\frac{(\tilde t-\tau)^2}{(\rr\llo)^2}\right]^\I{b=1}
\end{equation}
and the bound is valid for both cases $b=0$ and $b=1$.

\paragraph*{Case $\hrr\ge 1$:}
Then, $\{u^\tau_1,\ldots,u^\tau_{\breve \rr}\}=\{\vt_1,\ldots,\vt_{\hrr}\}\union \{\tilde t\}$ 
if $b=1$, and $\{u^\tau_1,\ldots,u^\tau_{\breve \rr}\}=\{\vt_1,\ldots,\vt_{\hrr}\}$ if $b=0$. 
Therefore,
\begin{align}
    \abs{\phi_{k}(\tau)}
    =\abs{\phi_{0,k}(\tau)}\abs{\phi_{+,k}(\tau)}
    &\stackrel{(a)}{\le} 
        \rr^{2\rr+1}\czddufff^\rr \cqu^{\hrr} 
        \left[\frac{(\tilde t-\tau)^2}{(\rr\llo)^2}\right]^\I{b=1} 
        \frac{ \prod_{l=1}^{\hrr} (\vt_l-\tau)^{2}}{(\rr\llo)^{2\hrr}}\\
    &= \rr^{2\rr+1} \czddufff^\rr \cqu^{\hrr}  
        \frac{ \prod_{l=1}^{\breve\rr} (u^\tau_l-\tau)^{2}}{(\rr\llo)^{2\breve\rr}}
    \stackrel{(b)}{\le} \rr^{2\rr+1} \cufff^\rr  q_0(\tau).
    \label{eq:phiq0ub1}
\end{align}
Above, (a) follows by~\fref{eq:qzkubpr} and~\fref{eq:qokubpr}; (b) follows by~\fref{eq:qzlbpr} 
with $\cufff \define \czddufff \cqu/\cz$.

\paragraph*{Case $\hrr= 0$:}
Then, $\breve r=1$ and $\{u^\tau_{\breve\rr}\}= \{\tilde t\}$ if $b=1$ 
and $\breve \rr=0$ if $b=0$. Therefore,
\begin{align}
    \abs{\phi_{k}(\tau)}
    =\abs{\phi_{0,k}(\tau)}\abs{\phi_{+,k}(\tau)}
    &\stackrel{(a)}{\le} \czddufff^\rr \rr^{2\rr+1}
        \left[\frac{(\tilde t-\tau)^2}{(\rr\llo)^2}\right]^\I{b=1}\nonumber\\
    &=\rr^{2\rr+1}\czddufff^\rr 
        \frac{ \prod_{l=1}^{\breve  \rr}(u^\tau_l-\tau)^2}{(\rr\llo)^{2\breve \rr}}
    \stackrel{(b)}{\le} \rr^{2\rr+1}\cufff^\rr  q_0(\tau).
    \label{eq:phiq0ub2}
\end{align}
Above, 
(a) follows by~\fref{eq:qzkubpruniv} and \fref{eq:qokubpr}; 
(b) follows by~\fref{eq:qzlbpr} because $\cqu>1$.

By \fref{lem:dualprod}, \fref{prop:qzloinfhi},
\begin{equation}
    \frac{\rho}{2}=\frac{\lhi^{2\rr}}{2\llo^{2\rr}} \le \frac{\rr^{2\rr}}{\cql^\rr} q_0(\tau).
    \label{eq:rho2ub}
\end{equation}
Therefore, by~\fref{eq:q1defm}, \fref{eq:phiq0ub1}, \fref{eq:phiq0ub2}, \fref{eq:rho2ub},%
\begin{equation}
    \abs{q_1(\tau)}
    \le \sum_{k=1}^\rr \abs{\phi_{k}(\tau)} + \rho/2
    \le \rr^{2\rr+2} \cuffff^\rr q_0(\tau),
\end{equation}
where we defined $\cuffff\define\cufff+1/\cql$.
\end{proof}

\subsection{Dual certificate $q_2(\cdot)$}
\label{sec:q2}
Finally, we construct the trigonometric polynomial $q_2(\cdot)$. This trigonometric 
polynomial is conceptually similar to $q_1(\cdot)$.
The difference is that in $q_1(\cdot)$ we control the function values on $\setT$ and the derivatives on
$\setT$ are zero; in $q_2(\cdot)$ we control the derivatives on $\setT$ and the function values on $\setT$
are zero.

Specifically, on the point $t_j\in\setT$, $q_2(\cdot)$ is approximated by a linear function whose 
derivative is controlled by the sign
\begin{equation}
    \label{eq:dsdef2}
    s'_{j}\define
    \sign\lefto(\sum_{m/\N\in\near{\lhi}{t_j}}\left(\frac{m}{\N}-t_j\right) h_m\right),\quad 
    j=1,\ldots,\sparsity,
\end{equation}
as explained in \fref{lem:dualprod1d} below.

\begin{lem}
\label{lem:dualprod1d}
Set $\gamma\define \rho/\lhi=\lhi^{2\rr-1}/\llo^{2\rr}$.
Then, there exists a real-valued trigonometric polynomial $q_2(\cdot)$
that satisfies the following properties.
\begin{enumerate}
    \item \label{prop:q1prop1d} Frequency limitation to 
    $\flo$: $q_2(t)=\sum_{k=-\flo}^{\flo} \hat q_{2,k} e^{-\iu 2\pi kt}$ for some $\hat q_{2,k}\in\complexset$.
    \item \label{prop:q1prop3d} Constrained derivative on $\setT$ and controlled 
    behavior near $\setT$: for all $j=1,\ldots,\sparsity$ and all $\tau\in\near{\lhi}{t_j}$,
    \begin{equation}
        \label{eq:bndneard}
        \abs{q_2(\tau)-\gamma s'_{j}(\tau-t_j)} \le  \rr^{2\rr+4}\cufk^{\rr+1}  q_0(\tau),
    \end{equation}
    where $s'_j$ are 
    defined\footnote{The lemma is valid for arbitrary sign pattern, we formulate it for the sign pattern defined in~\fref{eq:dsdef2} for concreteness.} in~\fref{eq:dsdef2}.
    \item \label{prop:q1prop2d}Uniform confinement: $\infnorm{q_2(\cdot)}\le \rr^{2\rr+1}\cugn^{\rr}$.
    \item \label{prop:q1prop4d} Boundedness far from $\setT$: for 
    all $\tau\in\far{\lhi}{\setT}$,
    \begin{equation}
        \label{eq:dualprod14d}
        \abs{q_2(\tau)}\le \rr^{2\rr+2} \cugh^\rr q_0(\tau).
    \end{equation}
\end{enumerate}
The positive numerical constants $\cufk$, $\cugn$, and $\cugh$ are defined in the proof below.
\end{lem}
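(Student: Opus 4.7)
My plan is to mirror the construction of $q_1(\cdot)$ from \fref{lem:dualprod1}, with the interpolation constraints on function values and derivatives interchanged. The starting point is the decomposition
\begin{equation}
    q_2(t) = \sum_{k=1}^{\rr} \phi_k(t), \qquad \phi_k(t) = \phi_{0,k}(t)\,\phi_{+,k}(t),
\end{equation}
where $\phi_{0,k}(\cdot)$ is \emph{the same} auxiliary polynomial as in~\fref{eq:phizk}, so that $\phi_{0,k}(t)=0$ for all $t\in\setT_k^c$ and the estimates in \fref{sec:propphizk} are inherited verbatim. The polynomial $\phi_{+,k}(\cdot)$ will be a rescaled version of $q_{\rr\llo,\setT_k,\{f_j\},\{d_j\}}(\cdot)$ from \fref{lem:dualcarlos2} (legitimate because the points in $\setT_k$ are well-separated at scale $\rr\llo$), but this time chosen to enforce
\begin{align}
    \phi_{+,k}(t)&=0 \quad\text{for all}\quad t\in\setT_k,\\
    \phi'_{+,k}(t)&= \begin{cases} \gamma\, s'_j/\phi_{0,k}(t), & t=t_j\in\setT_k^+,\\ 0, & t\in\setT_k^0,\end{cases}
\end{align}
where $\setT_k^+$ is now the subset of points in $\setT_k$ with $s'_j\neq 0$. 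A direct product-rule computation then gives $\phi_k(t_j)=0$ and $\phi'_k(t_j)=\gamma s'_j\,\indfun\{t_j\in\setT_k\}$ for every $t_j\in\setT$, so that $q_2(t_j)=0$ and $q_2'(t_j)=\gamma s'_j$, as desired.

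Next I would verify that the above specification is \emph{admissible} for \fref{lem:dualcarlos2}, i.e.\ that the prescribed values and derivatives of $q_{\rr\llo,\setT_k,\{f_j\},\{d_j\}}(\cdot)$ at the points of $\setT_k$ satisfy~\fref{eq:condp2} once one rescales by the right numerical factor. The function values are identically $0$, so only the derivative bound needs work. Repeating the Case~1/Case~2 analysis used in \fref{sec:estimates} between~\fref{eq:phipderiv} and~\fref{eq:bndphikdnear1f}, but with numerator $\gamma s'_j$ instead of $\rho\phi'_{0,k}/\phi_{0,k}$, I expect to obtain a bound of the form $\abs{\phi_{+,k}'(t)}\lesssim \rr^{2\rr-1} c^{\rr}\,\lhi/\llo^2$ (one power of $\lhi$ better than for $\rho$, compensating the normalization $\gamma=\rho/\lhi$). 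This yields the scaled analogues of~\fref{eq:phi11ddm00}--\fref{eq:phi11ddm} with an appropriate numerical constant playing the role of $\czdduf$.

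With $\phi_{+,k}$ in hand, the three claimed properties are proved exactly as in \fref{sec:propphizk}--\fref{sec:bndfarV}. The quadratic-type bound on $\abs{\phi_k''(\tau)}$ near $\setT$ is obtained from~\fref{eq:phiddsumtermsl1} by splitting into the cases $t_j\in\setT_k$ (giving~\fref{eq:toprove2}'s analogue) and $t_j\in\setT_k^c$ (giving~\fref{eq:toprove1}'s analogue); the initial conditions now read
\begin{equation}
    \phi_l(t_j)-0=0=q_0(t_j),\qquad \phi_l'(t_j)-\gamma s'_j\,\indfun\{t_j\in\setT_l\}=0=q_0'(t_j),
\end{equation}
and a second-order Mean Value expansion around $t_j$ together with the lower bound~\fref{eq:qzlp} on $q_0(\tau)$ converts the pointwise bound on $\abs{\phi_k''(\tau)}$ into~\fref{eq:bndneard}. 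The uniform confinement \fref{prop:q1prop2d} follows from \fref{lem:dualcarlos2}, \fref{prop:dqinfbnd}, using $\infnorm{\phi_{0,k}(\cdot)}\le 1$, and the far-from-$\setT$ bound \fref{prop:q1prop4d} is derived exactly as in \fref{sec:bndfarV} by multiplying the $\phi_{0,k}$ bound~\fref{eq:qzkubpr} with the $\phi_{+,k}$ bound analogous to~\fref{eq:qokubpr} and comparing with $q_0(\tau)$ via~\fref{eq:qzlbpr}.

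The hard part is bookkeeping rather than new ideas: one must carefully track how the renormalization $\gamma=\rho/\lhi$ alters the powers of $\lhi/\llo$ in every intermediate bound, and ensure that the extra factor of $(\tau-t_j)$ implicit in the linear approximation $\gamma s'_j(\tau-t_j)$ is correctly absorbed when invoking the Mean Value theorem against the quadratic-type second-derivative estimate. I expect the final constants $\cufk$, $\cugn$, $\cugh$ to be defined as suitable maxima of $\czdduf$-style products, just as in \fref{lem:dualprod1}, with an extra factor accounting for the different normalization.
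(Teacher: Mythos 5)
Your construction is correct but takes a genuinely different route from the paper's. The paper deliberately mirrors the $q_1$ construction: it sets $\phi_{+,k}(t)=\rho/\phi_{0,k}(t)$ on $\setT_k$ (so that $\phi_k(t_l)=\rho$ for $t_l\in\setT_k$, exactly as for $q_1$), then shifts by $q_2=\sum_k\phi_k-\rho$; this forces the compensating Leibniz term $-\rho\,\phi'_{0,k}(t_l)/\phi_{0,k}^2(t_l)$ in the derivative constraint~\fref{eq:phi11m1dd} so that the product rule cancels correctly. You instead take $\phi_{+,k}\equiv 0$ on $\setT_k$ and $q_2=\sum_k\phi_k$ with no shift, so that $\phi'_{+,k}(t_j)=\gamma s'_j/\phi_{0,k}(t_j)$ has a single term and the product rule gives $\phi'_k(t_j)=\gamma s'_j$ directly. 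Both achieve $q_2(t_j)=0$, $q_2'(t_j)=\gamma s'_j$; your version is cleaner (fewer terms to estimate, no $\rho$-term in Properties 3 and 4), while the paper's lets it reuse the bound $\abs{\phi_{+,k}(\tilde t)}\le\rho/\phi_{0,k}(\tilde t)$ verbatim from the $q_1$ analysis. Two small inaccuracies in your writeup, neither fatal: (i) the claim that the derivative bound is ``one power of $\lhi$ better'' is off---following the paper's estimates~\fref{eq:bbb1}--\fref{eq:bbb2} you get $\abs{\phi'_{+,k}(t)}\lesssim\rr^{2\rr-1}c^\rr\lhi/\llo^2$, the \emph{same} order as for $q_1$ [\fref{eq:bndphikdnear1f}], since the factor $\gamma=\rho/\lhi$ is exactly what $\phi_{0,k}(t)^{-1}\lesssim(\llo/\lhi)^{2(\rr-1)}$ converts into $\lhi/\llo^2$; and (ii) the $\setT_k^+/\setT_k^0$ split by whether $s'_j\ne 0$ is harmless but unnecessary here, since the uniform prescription $\phi'_{+,k}(t_j)=\gamma s'_j/\phi_{0,k}(t_j)$ already returns $0$ when $s'_j=0$.
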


The proof of the lemma parallels that of \fref{lem:dualprod1} but contains some 
important differences; it is given in~\fref{app:dualprod1d}.

\section{Stability estimates}
\label{sec:stability}
In this section we use the dual trigonometric polynomials $q_0(\cdot)$, $q_1(\cdot)$, and $q_2(\cdot)$ to prove \fref{thm:mainthm}.

We will use the fact that the high-resolution kernel $\khi(\cdot)$ satisfies 
the following estimates:
\begin{align}
    &\sum_{n=0}^{\N-1} \abs{\khi'\lefto(\frac{n}{\N}\right)}\le \frac{\cabshid}{\lhi},
        \label{eq:cabshid}\\
    &\frac{1}{2}\sum_{n=0}^{\N-1} \sup_{u\in\nearhi{n/\N}} \abs{\khi''(u)} 
        \le \frac{\cabshidd}{\lhi^2},
        \label{eq:cabshidd}
\end{align}
where $\cabshid$, $\cabshidd$ are positive numerical constants. The bounds are 
proven in~\fref{app:propfejer}.

We will use the following shorthand notations:
\begin{align}
    \nearh&\define\union_{t\in\setT} \near{\lhi}{t},\\
    \farh&\define\T\setdiff \nearh.
\end{align}

\subsection{Basic estimates}
We begin by decomposing the error $\onenorm{\khi\conv (\hat\inp-\inp)}$ into a sum of simpler 
terms; each of the terms will then be upper-bounded separately:
\begin{align}
    \onenorm{\khi\conv (\hat\inp-\inp)}
    &=
    \sum_{n=0}^{\N-1} 
    \abs{\sum_{m=0}^{\N-1} \khi\lefto(\frac{n-m}{\N}\right) h_m}\\
    &\le
    \sum_{n=0}^{\N-1} 
    \abs{\sum_{m/\N \in \farh} \khi\lefto(\frac{n-m}{\N}\right) h_m} 
        +\sum_{n=0}^{\N-1} 
             \abs{\sum_{m/\N \in \nearh} \khi\lefto(\frac{n-m}{\N}\right) h_m}.
    \label{eq:dfirstbnd}
\end{align}

The first term in~\fref{eq:dfirstbnd} can be written as follows:
\begin{align}
    \sum_{n=0}^{\N-1} \abs{\sum_{m/\N \in \farh} \khi\lefto(\frac{n-m}{\N}\right) h_m}
    &\stackrel{(a)}{=} 
        \sum_{m/\N \in \farh} \left(\sum_{n=0}^{\N-1} \khi\lefto(\frac{n-m}{\N}\right) \right) h_m\\
    &\stackrel{(b)}{=} 
        \sum_{m/\N \in \farh} \left(\sum_{n=0}^{\N-1} \khi\lefto(\frac{n}{\N}\right) \right) h_m
    \stackrel{(c)}{=} 
        \underbrace{\sum_{m/\N \in \farh} h_m}_{A_0}.
    \label{eq:dtrm1ub}
\end{align}
Above, 
(a) follows because $h_m\ge 0$ for $m/\N\in \farh$ and $\khi(\cdot)\ge 0$; 
(b) follows 
by periodicity of $\khi(\cdot)$; (c) follows  by~\fref{eq:cabshi}.

The second term in~\fref{eq:dfirstbnd} can be upper-bounded as follows:
\begin{align}
    \sum_{n=0}^{\N-1} 
        \abs{\sum_{m/\N \in \nearh} \khi\lefto(\frac{n-m}{\N}\right) h_m}
    &\stackrel{(a)}{=} 
        \sum_{n=0}^{\N-1} 
        \abs{\sum_{j=1}^{\sparsity} \sum_{m/\N \in \nearhi{t_j}} \khi\lefto(\frac{n-m}{\N}\right) h_m}\\
    &\stackrel{(b)}{\le} 
        \underbrace{\sum_{n=0}^{\N-1} \sum_{j=1}^{\sparsity} 
        \abs{\sum_{m/\N \in \nearhi{t_j}} \khi\lefto(\frac{n-m}{\N}\right) h_m}}_{B}.
    \label{eq:dsecterm1}
\end{align}
Above, 
(a) follows because the sets $\nearhi{t_j}$ do not intersect; 
(b) follows by the triangle inequality.
To upper-bound $B$ in~\fref{eq:dsecterm1} we  will use that for 
all $\tau\in\nearhi{t_j}\intersect\T$ and all $t\in\T$,
\begin{align}
    \abs{\khi(t-\tau)-\khi(t- t_j)-\khi'(t-t_j)(t_j-\tau)}
    \le 
    \sup_{u\in\nearhi{t- t_j}} \frac{1}{2} \abs{\khi''(u)} (t_j-\tau)^2.
    \label{eq:dtaylor}
\end{align}
The inequality follows by expanding $\khi(t-\tau)$ in Taylor series in $\tau$ 
around $\tau=t_j$ up to first order and writing the remainder in Lagrange form. We have:

\begin{align}
    &\abs{\sum_{m/\N \in \nearhi{t_j}} \khi\lefto(\frac{n-m}{\N}\right) h_m}\\
    &\qquad\stackrel{(a)}{\le} 
    \abs{\sum_{m/\N \in \nearhi{t_j}} \khi\lefto(\frac{n}{\N}-t_j\right) h_m}
        + \abs{\sum_{m/\N \in \nearhi{t_j}} \khi'\lefto(\frac{n}{\N}-t_j\right)\left(t_j-\frac{m}{\N}\right) h_m} \\
    &\qquad\dummyrel{\le}
        + \sum_{m/\N \in \nearhi{t_j}} 
        \abs{\khi\lefto(\frac{n-m}{\N}\right)
            -\khi\lefto(\frac{n}{\N}-t_j\right)
            -\khi'\lefto(\frac{n}{\N}-t_j\right)
            \left(t_j-\frac{m}{\N}\right)} 
        \abs{h_m} \nonumber\\
    &\qquad\stackrel{(b)}{\le} 
    \khi\lefto(\frac{n}{\N}-t_j\right) 
    \abs{\sum_{m/\N \in \nearhi{t_j}}  h_m}
    + \abs{\khi'\lefto(\frac{n}{\N}-t_j\right)}
        \abs{\sum_{m/\N \in \nearhi{t_j}} \left(t_j-\frac{m}{\N}\right) h_m}\\
    &\qquad\dummyrel{\le}
    + \sum_{m/\N \in \nearhi{t_j}} \sup_{u\in\nearhi{n/\N - t_j}} 
        \frac{1}{2} \abs{\khi''(u)}\lefto(t_j-\frac{m}{\N}\right)^2 \abs{h_m}.
    \label{eq:dtaylorcons}
\end{align}
Above, 
(a) follows by adding and subtracting the corresponding terms and applying 
the triangle inequality;
(b) follows by~\fref{eq:dtaylor} with $t=n/\N$ and $\tau = m/\N$ 
and because $\khi(\cdot)\ge 0$.

Using~\fref{eq:dtaylorcons} we can upper-bound $B$ in~\fref{eq:dsecterm1} as follows
{
\allowdisplaybreaks
\begin{align}
    B&\le 
        \sum_{j=1}^{\sparsity}
        \left(\sum_{n=0}^{\N-1} \khi\lefto(\frac{n}{\N}-t_j\right)\right)
        \abs{\sum_{m/\N \in \nearhi{t_j}} h_m}\\
    &\dummyrel{\le} 
        +\sum_{j=1}^{\sparsity}
         \left(\sum_{n=0}^{\N-1} \abs{\khi'\lefto(\frac{n}{\N}-t_j\right)}\right)
         \abs{\sum_{m/\N \in \nearhi{t_j}} \left(t_j-\frac{m}{\N}\right) h_m} \\
    &\dummyrel{\le}  
        +\sum_{j=1}^{\sparsity}\sum_{n=0}^{\N-1} 
         \sup_{u\in\nearhi{n/\N-t_j}} \frac{1}{2} 
         \abs{\khi''(u)} \sum_{m/\N \in \nearhi{t_j}}
         \lefto(t_j-\frac{m}{\N}\right)^2 \abs{h_m}\\
    &\stackrel{(a)}{=} 
        \left(\sum_{n=0}^{\N-1} \khi
        \lefto(\frac{n}{\N}\right)\right) 
        \sum_{j=1}^{\sparsity} \abs{\sum_{m/\N \in \nearhi{t_j}}  h_m}\\
    &\dummyrel{\le} 
        +\left(\sum_{n=0}^{\N-1} \abs{\khi'\lefto(\frac{n}{\N}\right)}\right)
         \sum_{j=1}^{\sparsity} 
         \abs{\sum_{m/\N \in \nearhi{t_j}} \left(t_j-\frac{m}{\N}\right) h_m}\\
    &\dummyrel{\le} 
        +\left(\sum_{n=0}^{\N-1} \sup_{u\in\nearhi{n/\N}} \frac{1}{2} \abs{\khi''(u)} \right)
         \sum_{j=1}^{\sparsity} \sum_{m/\N \in \nearhi{t_j}}
         \lefto(t_j-\frac{m}{\N}\right)^2 \abs{h_m}\\
    &\stackrel{(b)}{\le} 
        \underbrace{\sum_{j=1}^{\sparsity} 
        \abs{\sum_{m/\N \in \nearhi{t_j}}  h_m}}_{A_1}
        +\,\cabshid\underbrace{\frac{1}{\lhi}\sum_{j=1}^{\sparsity} \abs{\sum_{m/\N \in \nearhi{t_j}} \left(t_j-\frac{m}{\N}\right) h_m}}_{A_2} \\
    &\qquad\dummyrel{\le}\hspace{7.6em}
        +\cabshidd\underbrace{\frac{1}{\lhi^2}\sum_{j=1}^{\sparsity} \sum_{m/\N \in \nearhi{t_j}}\lefto(t_j-\frac{m}{\N}\right)^2 \abs{h_m}}_{A_3}.
    \label{eq:stoodin}
\end{align}
Above, (a) follows by  periodicity of $\khi(\cdot)$; (b) follows 
by~\fref{eq:cabshi},~\fref{eq:cabshid},~\fref{eq:cabshidd}.

To complete the proof of \fref{thm:mainthm}, it remains to upper-bound each 
of the terms $A_0$, $A_1$, $A_2$, and $A_3$ by $\sim\CC{\rr} \SRF^{2\rr} \onenorm{\noise}$.
To do this we will use extended duality arguments that will rely on 
the trigonometric polynomials $q_0(\cdot)$, $q_1(\cdot)$, and $q_2(\cdot)$.

\subsection{Upper bound on $A_0$}
\label{sec:da0}
In this section we use the trigonometric polynomial $q_0(\cdot)$ 
constructed in \fref{lem:dualprod} to upper-bound~$A_0$. Let
\begin{equation}
    \vecq^0 = \tp{[q^0_0, \ldots, q^0_{\N-1}]} \define \tp{[q_{0}(l/\N): l\in[0:\N-1]]}
    \label{eq:vecq0}
\end{equation}
be the vector that consists of the samples of~$q_{0}(\cdot)$.

On the one hand,
\begin{align}
    \inner{\vecq^0}{\vech}
    \stackrel{(a)}{=}\inner{\matQ\vecq^0}{\vech}
    \stackrel{(b)}{=}\inner{\vecq^0}{\matQ\vech}
    \stackrel{(c)}{\le} \infnorm{\vecq^0}\onenorm{\matQ\vech}
    &\stackrel{(d)}{\le} \onenorm{\matQ\hat\inp -\data+\data-\matQ\inp}\\
    &\stackrel{(e)}{\le} \onenorm{\matQ\hat\inp -\data}+\onenorm{\data-\matQ\inp}
    \stackrel{(f)}{\le} 2\onenorm{\noise}.
    \label{eq:dq0bnd1}
\end{align}
Above, 
(a) follows because by \fref{lem:dualprod}, \fref{prop:qlowf}, $q_0(\cdot)$ is frequency-limited to $\flo$, 
and, therefore, the vector of its samples is also frequency limited (in discrete sense) so 
that $\vecq^0=\matQ\vecq^0$; 
(b) follows because $\matQ$ is self-adjoint; (c) follows by Cauchy-Schwartz inequality; 
(d) follows by \fref{lem:dualprod}, \fref{prop:zeroonebnd}; 
(e) follows by the triangle inequality; (f) follows since~\fref{eq:find0} 
implies $\onenorm{\matQ\hat\inp -\data}\le \onenorm{\matQ\inp -\data}$ and, 
by assumption, $\data=\matQ\inp+\noise$.

On the other hand,
\begin{align}
    \inner{\vecq^0}{\vech}
    \stackrel{(a)}{=}
        \sum_{m=0}^{\N-1} q^0_m \abs{h_m}
    \stackrel{(b)}{\ge}
        \sum_{m/\N\in\farh} q^0_m \abs{h_m}
    \stackrel{(c)}{\ge}
        \cql^\rr \frac{\lhi^{2\rr}}{(\rr\llo)^{2\rr}}\sum_{m/\N\in\farh} \abs{h_m}.
\label{eq:dq0bnd2}
\end{align}
Above, 
(a) follows because, by construction, $q_0(t)=0$ for all $t\in\setT$, which means that 
$h_m<0$ implies $q_m^0=q_0(m/N)=0$, so that 
$q^0_m h_m\ge 0$ for $m=0,\ldots, \N-1$; 
(b) follows because all terms in the sum are nonnegative; 
(c) follows from \fref{lem:dualprod}, \fref{prop:qzloinfhi}.
From~\fref{eq:dq0bnd1} and~\fref{eq:dq0bnd2}, we conclude that
\begin{equation}
    \label{eq:dlhabs}
    A_0
    =\sum_{m/\N\in\farh} \abs{h_m}
    \le \frac{\rr^{2\rr}}{\cql^\rr}  \SRF^{2\rr}\onenorm{\noise},
\end{equation}
where the equality follows because $h_m\ge 0$ for $m/N\in\farh$ and we remind the reader that $\cql<1$.

\subsection{Upper bound on $A_3$}
\label{sec:da3}
In this section we use $\vecq^0$ to upper-bound~$A_3$.
We have,
\begin{align}
    2\onenorm{\noise}
    \stackrel{(a)}{\ge}
        \sum_{m=0}^{\N-1} q^0_m \abs{h_m}
    &\stackrel{(b)}{\ge}
        \sum_{j=1}^\sparsity \sum_{m/\N \in \nearhi{t_j}} q^0_m \abs{h_m}\\
    &\stackrel{(c)}{\ge}
        \sum_{j=1}^\sparsity \sum_{m/\N \in \nearhi{t_j}} 
        \cz^{\rr} 
        \frac{ (t_j-m/\N)^2 \lhi^{2(\rr-1)}}{(\rr\llo)^{2\rr}} \abs{h_m}.
\end{align}
Above, 
(a) follows from~\fref{eq:dq0bnd1} because $q^0_m h_m\ge 0$ for $m=0,\ldots, \N-1$; 
(b) follows because the sets $\nearhi{t_j}$ do not intersect since the elements of $\setT$
are separated by at least $2\lhi$; 
(c) follows from~\fref{eq:qzl}.
Hence,
\begin{equation}
    \label{eq:dbnda3final}
    A_3
    =\frac{1}{ \lhi^2 }\sum_{j=1}^\sparsity \sum_{m/\N \in \nearhi{t_j}} 
        \left(t_j-\frac{m}{\N}\right)^2  \abs{h_m}
    \le \frac{\rr^{2\rr}}{\cz^\rr}  \SRF^{2\rr}2\onenorm{\noise}
\end{equation}
and we remind the reader that $\cz<1$.

\subsection{Upper bound on $A_1$}
\label{sec:dq1}
In this section we use trigonometric polynomial $q_1(\cdot)$ constructed 
in \fref{lem:dualprod1} to upper-bound $A_1$.
Set
\begin{align}
    \vecq^1 &= \tp{[q^1_0, \ldots, q^1_{\N-1}]} \define \tp{[q_{1}(l/\N): l\in[0:\N-1]]}.
    \label{eq:vecq0q1}
\end{align}

We now proceed as follows:
\begin{align}
    A_1=
        \sum_{j=1}^\sparsity \abs{\sum_{m/\N\in\nearhi{t_j}} h_m}
    &\stackrel{(a)}{=}
        \frac{2}{\rho}\sum_{j=1}^\sparsity \sum_{m/\N\in\nearhi{t_j}} \frac{\rho s_j}{2} h_m\\
    &\stackrel{(b)}{=}
        \frac{2}{\rho} \abs{\sum_{j=1}^\sparsity \sum_{m/\N\in\nearhi{t_j}} 
            \left(\frac{\rho s_j}{2}-q^1_m\right) h_m+\sum_{j=1}^\sparsity \sum_{m/\N\in\nearhi{t_j}} q^1_m h_m}\\
    &\stackrel{(c)}{\le}
        \frac{2}{\rho} \underbrace{\sum_{j=1}^\sparsity \sum_{m/\N\in\nearhi{t_j}} \abs{\frac{\rho s_j}{2}-q^1_m} \abs{h_m} }_{A_{11}}
        +\frac{2}{\rho} \underbrace{\abs{ \sum_{m/\N\in\nearh} q^1_m h_m}}_{A_{12}}.
\label{eq:da1bnd}
\end{align}
Above, 
(a) follows by~\fref{eq:dsdef}; 
(b) follows by adding and subtracting 
the corresponding term and because the expression in (a) is nonnegative; 
(c) follows by the triangle inequality and because the sets~$\nearhi{t_j}$ do not intersect since the
elements of $\setT$ are separated by at least $2\lhi$. 
Next, we upper-bound the terms $A_{11}$ and $A_{12}$ separately.

The first term in~\fref{eq:da1bnd}, $A_{11}$, can be upper-bounded as follows:
\begin{align}
    A_{11}
    \stackrel{(a)}{\le} 
        \rr^{2\rr+4} \cuff^{\rr+1}  \sum_{m/\N\in\nearh} q^0_m \abs{h_m}
    &\stackrel{(b)}{=} 
        \rr^{2\rr+4} \cuff^{\rr+1}  \sum_{m/\N\in\nearh} q^0_m h_m\\
    &\stackrel{(c)}{\le} 
        \rr^{2\rr+4} \cuff^{\rr+1}  \sum_{m=0}^{\N-1} q^0_m h_m
    \stackrel{(d)}{\le} 
        \rr^{2\rr+4} 2\cuff^{\rr+1} \onenorm{\noise}.
    \label{eq:da11bnd}
\end{align}
Above, 
(a) follows by~\fref{eq:bndnear} and because the sets~$\nearhi{t_j}$ do not intersect;
(b) follows because $h_m<0$ implies $q_m^0=0$; (c) follows because  $q^0_m h_m\ge 0$ for $m=0,\ldots, \N-1$; 
(d) follows by~\fref{eq:dq0bnd1}.

Following exactly the same steps as in~\fref{eq:dq0bnd1}, changing $q_m^0$ to $q_m^1$, and using in step (d) that
by \fref{lem:dualprod1}, \fref{prop:q1prop2}, $\infnorm{\vecq^1}\le \rr^{2\rr+1}\cugm^{\rr}$, we obtain:
\begin{equation}
    \label{eq:dq1intub}
    \abs{\sum_{m=0}^{\N-1} q^1_m h_m}\le 2\rr^{2\rr+1} \cugm^{\rr}\onenorm{\noise}.
\end{equation}
Using this, the second term in~\fref{eq:da1bnd}, $A_{12}$, can be upper-bounded as follows
\begin{align}
    A_{12}
    \stackrel{(a)}{\le} 
        \abs{\sum_{m=0}^{\N-1} q^1_m h_m}+\abs{\sum_{m/\N\in\farh} q^1_m h_m}
    &\stackrel{(b)}{\le} 
        2\rr^{2\rr+1}\cugm^{\rr}\onenorm{\noise}+\sum_{m/\N\in\farh} \abs{q^1_m}\abs{h_m}\\
    &\stackrel{(c)}{\le} 
        2\rr^{2\rr+1}\cugm^{\rr}\onenorm{\noise}+\rr^{2\rr+2} \cuffff^\rr\sum_{m/\N\in\farh} q^0_m \abs{h_m}\\
    &\stackrel{(d)}{=} 
        2\rr^{2\rr+1}\cugm^{\rr}\onenorm{\noise}+\rr^{2\rr+2} \cuffff^\rr\sum_{m/\N\in\farh} q^0_m h_m\\
    &\stackrel{(e)}{\le} 
        4 \rr^{2\rr+2} \cugo^\rr\onenorm{\noise}.
    \label{eq:da12bnd}
\end{align}
Above, 
(a) follow by the triangle inequality and because $\farh$ is complementary to $\nearh$;
(b) follow by~\fref{eq:dq1intub} and by the triangle inequality; 
(c) follow by~\fref{eq:dualprod14}; 
(d) follows because $h_m>0$ for $m/\N\in \farh$; 
(e) follows by~\fref{eq:dq0bnd1} because $q^0_m h_m\ge 0$ for $m=0,\ldots, \N-1$, because $\cuffff>1$,
and by defining~$\cugo\define\max(\cugm, \cuffff)$.

Substituting~\fref{eq:da11bnd} and~\fref{eq:da12bnd} into~\fref{eq:da1bnd}, using that $1/\rho=\SRF^{2\rr}$, we finally obtain
\begin{equation}
    \label{eq:dbnda1final}
    A_1=\sum_{j=1}^{\sparsity} \abs{\sum_{m/\N \in \nearhi{t_j}}  h_m}\le \rr^{2\rr+4}\cugk^{\rr+1}  \SRF^{2\rr} \onenorm{\noise},
\end{equation}
where we defined $\cugk \define 12\max(\cuff, \cugo)$.

\subsection{Upper bound on $A_2$}
\label{sec:dq2}
In this section we use trigonometric polynomial $q_2(\cdot)$ 
to upper-bound $A_2$.
Set
\begin{align}
    \vecq^2 &= \tp{[q^2_0, \ldots, q^2_{\N-1}]} \define \tp{[q_{2}(l/\N): l\in[0:\N-1]]}.
    \label{eq:vecq0q1}
\end{align}

We now proceed as follows:
\begin{align}
    A_2&=
        \frac{1}{\lhi}\sum_{j=1}^\sparsity 
            \abs{\sum_{m/\N\in\nearhi{t_j}} \left(\frac{m}{\N} - t_j\right)h_m}\\
    &\stackrel{(a)}{=}
        \frac{1}{\rho}\sum_{j=1}^\sparsity 
            \sum_{m/\N\in\nearhi{t_j}} \gamma s'_j \left( \frac{m}{\N} - t_j\right) h_m\\
    &\stackrel{(b)}{=}
        \frac{1}{\rho} 
        \abs{
            \sum_{j=1}^\sparsity \sum_{m/\N\in\nearhi{t_j}} 
                \left(\gamma s'_j \left( \frac{m}{\N} -t_j\right)-q^2_m\right) h_m
            +\sum_{j=1}^\sparsity \sum_{m/\N\in\nearhi{t_j}} q^2_m h_m}\\
    &\stackrel{(c)}{\le}
        \frac{1}{\rho} \underbrace{\sum_{j=1}^\sparsity \sum_{m/\N\in\nearhi{t_j}} 
            \abs{\gamma s'_j \left( \frac{m}{\N} - t_j\right)-q^2_m} \abs{h_m} }_{A_{21}}
        +\frac{1}{\rho} \underbrace{\abs{ \sum_{m/\N\in\nearh} q^2_m h_m}}_{A_{22}}.
\label{eq:da2bnd}
\end{align}
Above, 
(a) follows by~\fref{eq:dsdef2} and because $\gamma=\rho/\lhi$;
(b) follows by adding and subtracting the corresponding term and because the 
expression in (a) is nonnegative;
(c) follows by the triangle inequality and because the sets~$\nearhi{t_j}$ do not intersect. Next, we upper-bound the terms $A_{21}$ and $A_{22}$ separately.

The first term in~\fref{eq:da2bnd}, $A_{21}$, can be upper-bounded as follows
\begin{align}
    A_{21}
    \stackrel{(a)}{\le} 
        \rr^{2\rr+4}\cufk^{\rr+1}  \sum_{m/\N\in\nearh} q^0_m \abs{h_m}
    &\stackrel{(b)}{=} 
        \rr^{2\rr+4}\cufk^{\rr+1} \sum_{m/\N\in\nearh} q^0_m h_m\\
    &\stackrel{(c)}{\le} 
        \rr^{2\rr+4}\cufk^{\rr+1} \sum_{m=0}^{\N-1} q^0_m h_m\\
    &\stackrel{(d)}{\le} 
        2\rr^{2\rr+4}\cufk^{\rr+1} \onenorm{\noise}.
    \label{eq:da21bnd}
\end{align}
Above, 
(a) follows by~\fref{eq:bndneard} and because the sets~$\nearhi{t_j}$ do not intersect; 
(b) follows because $h_m<0$ implies $q_m^0=0$; 
(c) follows because $q^0_m h_m\ge 0$ for $m=0,\ldots, \N-1$; 
(d) follows by~\fref{eq:dq0bnd1}.

Following exactly the same steps as in~\fref{eq:dq0bnd1}, changing $q_m^0$ to $q_m^2$,  and using that by
\fref{lem:dualprod1d}, \fref{prop:q1prop2d}, $\infnorm{\vecq_2}\le \rr^{2\rr+1}\cugn^{\rr}$, we obtain:
\begin{equation}
    \label{eq:dq2intub}
    \abs{\sum_{m=0}^{\N-1} q^2_m h_m}\le 2\rr^{2\rr+1}\cugn^{\rr}\onenorm{\noise}.
\end{equation}
Using this, the second term in~\fref{eq:da2bnd}, $A_{22}$, can be upper-bounded as follows
\begin{align}
    A_{22}
    \stackrel{(a)}{\le} 
        \abs{\sum_{m=0}^{\N-1} q^2_m h_m}+\abs{\sum_{m/\N\in\farh} q^2_m h_m}
    &\stackrel{(b)}{\le} 
        2\rr^{2\rr+1}\cugn^{\rr}\onenorm{\noise}+\sum_{m/\N\in\farh} \abs{q^2_m}\abs{h_m}\\
    &\stackrel{(c)}{\le} 
        2\rr^{2\rr+1}\cugn^{\rr}\onenorm{\noise}+\rr^{2\rr+2} \cugh^\rr\sum_{m/\N\in\farh} q^0_m \abs{h_m}\\
    &\stackrel{(d)}{=} 
        2\rr^{2\rr+1}\cugn^{\rr}\onenorm{\noise}+\rr^{2\rr+2} \cugh^\rr\sum_{m/\N\in\farh} q^0_m h_m\\
    &\stackrel{(e)}{\le}
        4\rr^{2\rr+2} \cugp^\rr \onenorm{\noise}.
    \label{eq:da22bnd}
\end{align}
Above,
(a) follow by the triangle inequality and because $\farh$ is complementary to $\nearh$;
(b) follow by~\fref{eq:dq2intub} and by the triangle inequality;
(c) follow by~\fref{eq:dualprod14d};
(d) follows because $h_m>0$ for $m/\N\in \farh$;
(e) follows by~\fref{eq:dq0bnd1} because $q^0_m h_m\ge 0$ for $m=0,\ldots, \N-1$, because $\cugh>1$,
and by defining $\cugp\define\max(\cugn, \cugh)$.

Substituting~\fref{eq:da21bnd} and~\fref{eq:da22bnd} into~\fref{eq:da2bnd}, using that $1/\rho=\SRF^{2\rr}$, we finally obtain:
\begin{equation}
    \label{eq:dbnda2final}
    A_2=\frac{1}{\lhi}\sum_{j=1}^{\sparsity} \abs{\sum_{m/\N \in \nearhi{t_j}}  h_m}
    \le \rr^{2\rr+4}\cugl^{\rr+1} \SRF^{2\rr} \onenorm{\noise},
\end{equation}
where we defined $\cugl\define 6\max(\cufk, \cugp)$.

\subsection{Putting pieces together}
Substituting~\fref{eq:dlhabs} into~\fref{eq:dtrm1ub}; substituting~\fref{eq:dbnda3final}, 
\fref{eq:dbnda1final}, \fref{eq:dbnda2final} into~\fref{eq:stoodin} and 
the result into~\fref{eq:dsecterm1}; then, substituting~\fref{eq:dtrm1ub} 
and~\fref{eq:dsecterm1} into~\fref{eq:dfirstbnd}, and defining
\begin{equation}
    \label{eq:ccdef}
    \cc \define 4 \max(1/\cql, \cabshidd/\cz, \cugk, \cabshid\cugl)
\end{equation}
we obtain the desired bound~\fref{eq:stability} and complete the proof of \fref{thm:mainthm}. \qed

\section{Connection to Bernstein theorem}
\label{sec:bern}
The famous Bernstein theorem states the following~\cite[Ch.~4, eq.~(1.1)]{devore93a}.
\begin{thm}[Bernstein]
    \label{thm:bernstein}
    Consider a trigonometric polynomial frequency-limited to~$\fc$: 
    $q(t)=\sum_{k=-\fc}^{\fc} \hat q_k e^{-\iu 2\pi kt}$. Then,
    \begin{equation}
        \label{eq:bernstein}
        \infnorm{q'(\cdot)}\le 2\pi \fc\infnorm{q(\cdot)}. 
    \end{equation}
\end{thm}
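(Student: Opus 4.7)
The plan is to prove this by the classical zero-counting contradiction argument. Write $M\define\infnorm{q(\cdot)}$ and suppose for contradiction that $\abs{q'(t_0)}>2\pi\fc M$ at some $t_0\in\T$; by replacing $q$ with $-q$ if needed, assume $q'(t_0)>0$. First I would build the unique pure sinusoid of frequency $\fc$ that osculates $q$ at $t_0$, namely
\begin{equation*}
    r(t)\define q(t_0)\cos(2\pi\fc(t-t_0)) + \frac{q'(t_0)}{2\pi\fc}\sin(2\pi\fc(t-t_0)),
\end{equation*}
which satisfies $r(t_0)=q(t_0)$ and $r'(t_0)=q'(t_0)$. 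Its amplitude is $C\define\sqrt{q(t_0)^2+q'(t_0)^2/(2\pi\fc)^2}$, and the assumption $q'(t_0)^2>(2\pi\fc M)^2$ together with $q(t_0)^2\le M^2$ forces $C>M$.

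Next I would examine the auxiliary trigonometric polynomial $s(t)\define r(t)-q(t)$, which is itself real-valued and frequency-limited to $\fc$. Under the substitution $z=e^{-\iu 2\pi t}$, $z^{\fc} s(t)$ becomes an algebraic polynomial in $z$ of degree at most $2\fc$, so any nonzero such $s$ has at most $2\fc$ real zeros in $[0,1)$ counted with multiplicity. By construction $s(t_0)=s'(t_0)=0$, so $t_0$ is a zero of $s$ of multiplicity at least $2$. The strict inequality $C>M$ further implies $s(t)>0$ at each of the $\fc$ maxima of $r$ (where $r=C>M\ge q$) and $s(t)<0$ at each of the $\fc$ minima (where $r=-C<-M\le q$), so $s$ takes values of opposite sign at the endpoints of each of the $2\fc$ intervals between consecutive extrema of $r$ in the cyclic order on $[0,1)$.

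To finish, I would assemble the zero count. In each of the $2\fc-1$ sign-change intervals that do not contain $t_0$, the intermediate value theorem yields at least one zero of $s$. In the remaining interval, the local behavior of $s$ near its multiplicity-$\ge 2$ zero at $t_0$ combined with the opposite signs at the endpoints forces a total zero-multiplicity of at least $3$ in that interval: either the multiplicity at $t_0$ is exactly $2$ and $s$ does not change sign there, producing an extra sign-change zero elsewhere in the interval, or the multiplicity at $t_0$ is already $\ge 3$. Summing yields at least $(2\fc-1)+3=2\fc+2$ zeros counted with multiplicity, contradicting the degree bound. The degenerate subcase in which $t_0$ happens to coincide with an extremum of $r$ is disposed of directly: in that case $\abs{q(t_0)}=\abs{r(t_0)}=C>M$, contradicting the definition of $M$.

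The main obstacle is making the zero-count for the interval containing $t_0$ fully rigorous across the various multiplicity cases, and carefully justifying the ``at most $2\fc$ zeros per period'' degree bound via the $z^{\fc}$ reduction to an algebraic polynomial on the unit circle. Both points are classical but require care; everything else is elementary trigonometric bookkeeping.
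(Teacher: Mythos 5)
The paper does not prove this statement; it cites Bernstein's inequality as a classical result (\cite{devore93a}). Your proposal is a self-contained proof, and it is correct: it is the standard comparison argument of Bernstein, in which $q$ is compared to the osculating pure sinusoid $r$ of the top frequency $\fc$, the difference $s=r-q$ is shown to be a nonzero trigonometric polynomial of degree at most $\fc$ (via $s\not\equiv 0$ since $\infnorm{r(\cdot)}=C>M$), and then one counts zeros of $s$ in one period against the algebraic bound $2\fc$. Your reduction to a degree-$2\fc$ algebraic polynomial $z^{\fc}s$ on $\abs{z}=1$ is exactly the right way to make the zero bound precise, and the verification that $t_0$ lies strictly between two extrema of $r$ follows from $q'(t_0)>0$, as you note at the end.

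Two small remarks, neither a gap. First, the multiplicity-$\ge 3$ analysis in the interval containing $t_0$ is more than you need: the double zero at $t_0$ together with the $2\fc-1$ sign-change zeros in the other arcs already gives $\ge 2\fc+1>2\fc$ zeros counted with multiplicity, which is a contradiction, so the case split on whether the multiplicity at $t_0$ is exactly $2$ or $\ge 3$ can be dropped. Second, the ``degenerate subcase'' in which $t_0$ coincides with an extremum of $r$ cannot actually occur under your normalization $q'(t_0)>0$, since $r'(t_0)=q'(t_0)\ne 0$ whereas $r'=0$ at any extremum of $r$; the sentence disposing of it by $\abs{q(t_0)}=C>M$ is therefore redundant (though harmless). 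Also, strictly speaking, $q(t_0)^2\le M^2$ is not needed to conclude $C>M$; the inequality $q(t_0)^2\ge 0$ together with $q'(t_0)^2>(2\pi\fc M)^2$ already suffices.
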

In other words, if a trigonometric polynomial is uniformly bounded, its 
derivative cannot be too large anywhere.

Bernstein theorem helped us construct trigonometric polynomials $q_0(\cdot)$, $q_1(\cdot)$, and $q_2(\cdot)$ 
with the required properties by telling us what may be achievable and what is forbidden. We now describe these
connections to provide more intuition about our constructions.

\paragraph{Independent control.}
Consider $q(\cdot) = q_{\lc, \setV, \{f_j\}, \{d_j\}}(\cdot)$ in \fref{fig:toolsmult2n}. Since we require $\infnorm{q(\cdot)}\le \cqdv$, then, by Bernstein theorem, $\infnorm{q'(\cdot)}\le 2\pi\cqdv\fc$.
Suppose, $q(v_1)$ = 0. How large $q(v_2)$ may possibly be? Since $\infnorm{q'(\cdot)}\le 2\pi\cqdv\fc$, we must have
$q(v_2)\le 2\pi\cqdv (v_2-v_1)\fc$. Now, if the points $v_1$ and $v_2$ are well-separated, i.e., if $v_2-v_1$ is order $\lc$, Bernstein theorem puts \emph{no restrictions} on $q(v_2)$. However, if $v_2-v_1\ll \lc$, then 
$\abs{q(v_1)-q(v_2)}\le 2\pi\cqdv (v_2-v_1)\fc\ll 1$. Generalizing: it may be possible to independently control $q(v_1)$ and $q(v_2)$ \emph{only} if the points $v_1$ and $v_2$ are well-separated. This is the reason why $q_0(\cdot)$, $q_1(\cdot)$, and $q_2(\cdot)$ are constructed in an \emph{interlaced} way. We control the building blocks on sets of interlaced points that are well-separated, then we multiply the resulting trigonometric polynomials. See~\fref{eq:q0asprod} and \fref{fig:proof} for an easy example of interlacing; see \fref{eq:phikdfn}, \fref{eq:phizk}, \fref{fig:sumneed}, and \fref{fig:prodd} for a more sophisticated example of interlacing.

For readers familiar with using~$\ell_1$-minimization for super-resolution of real-valued (spikes may be positive
\emph{and} negative) and complex-valued
signals~\cite{candes13}: Bernstein theorem is responsible for the fact that~$\ell_1$-minimization fails when the
spikes are not well-separated (closer than $\llo$ to one another). The dual certificate in the real-valued case is a 
trigonometric polynomial $q(\cdot)$ with $\infnorm{q(\cdot)}\le 1$ that interpolates the \emph{sign} of the spikes in the signal. If, say 
$q(v_1)=-1$, and $v_2-v_1\ll \lc$, it is not possible that $q(v_2)=+1$ because $\infnorm{q'(\cdot)}\le 2\pi\fc$. The required dual trigonometric polynomial
does not exist and the algorithm fails.

In contrast to the real-valued case, consider our trigonometric polynomial $q_1(\cdot)$, displayed in \fref{fig:toolsneed}. Here, we interpolate the sign of the sequence $s_1, s_2, s_3, s_4$ at a set of points $t_1, t_2, t_3, t_4$ that are \emph{not} well-separated. How is that possible? The difference is that we interpolate the sign sequence at a \emph{low} level $\rho=(\lhi/\llo)^{2\rr}\ll 1$, i.e., we interpolate the points $s_i\rho/2$, and not the points $s_i$. The transitions $q_1(\cdot)$ needs to make between the points, are small; for example $\abs{q_1(t_3)-q_1(t_4)}=\rho\ll 1$ and this is not disallowed by Bernstein theorem.

\paragraph{High curvature.}
As should be clear by now, the curvature of the building block $q_{\lc, \setV}(\cdot)$ in the vicinity of its zeros expressed by \fref{eq:qnearbndeq}
(see also the sections marked in red in \fref{fig:toolsmult2n1}) determines the noise amplification in our bounds. How curvy can $q_{\lc, \setV}(\cdot)$ possibly be? Since $\infnorm{q_{\lc, \setV}(\cdot)}\le 1$, applying Bernstein
theorem twice, we conclude that the 
second derivative must satisfy $\infnorm{q''_{\lc, \setV}(\cdot)}\le 4\pi^2\fc^2$. Therefore, for $v\in\setV$, 
it must hold that $q_{\lc, \setV}(v-\tau)\le 2\pi^2(v-\tau)^2/\lc^2$. We conclude that the curvature of  
$q_{\lc, \setV}(\cdot)$ in \fref{eq:qnearbndeq} depends on $\lc$ in an optimal way (up to a constant). This leads
to the near-optimal stability estimate in \fref{thm:mainthm}.

\section{Conclusion}
When a signal is positive and Rayleigh-regular, then linear programming 
solves the super-resolution problem with near-optimal worst-case performance.
This result holds independently on how fine the discretization grid is, 
approximating the continuum arbitrarily closely.
The proof relies on new trigonometric interpolation constructions; the 
underlying ideas might be useful for other problems.

Finding an efficient algorithm that solves the same problem with a near-optimal 
worst-case performance for complex-valued signals
is still an open problem. Despite recent work that derives stability estimates 
for MUSIC and ESPRIT algorithms in certain cases, 
the question of how far are these algorithms from the optimal performance is 
not yet answered completely.

\section{Acknowledgment}
V. M. was supported by the Simons Foundation when developing the early ideas that led to this work.
The author is grateful to Emmanuel Candès for inspiring and useful discussions.

\appendices
\section{Proof of Lemma \ref{lem:dualcarlos2}}
\label{app:dp}
Let
\begin{equation}
    \fejer{}(t)\define 
    \left[\frac{\sin\lefto(\pi (\fc/2+1) t\right)}{\left(\fc/2+1\right)\sin(\pi t)}\right]^4
\end{equation}
and set
\begin{equation}
    q(t)\define\sum_{j=1}^{\npoints} \alpha_j \fejer{}(t-v_j)+\beta_j \fejer{}'(t-v_j),
    \label{eq:dpolydfn}
\end{equation}
where $\{\alpha_j\}_{j=1}^\npoints$ and $\{\beta_j\}_{j=1}^\npoints$ are free coefficients 
that will be determined in the following. Because $\fejer{}(\cdot)$ is frequency-limited 
to $[-\fc, \fc]$ [cf. \fref{eq:khi}, \fref{eq:fejsum}], $q(\cdot)$ satisfies \fref{prop:lf}.
Note,
\begin{equation}
    q'(t)=\sum_{j=1}^{\npoints} \alpha_j \fejer{}'(t-v_j)+\beta_j \fejer{}''(t-v_j).
\end{equation}
Define matrices $\matD_0,\matD_1,\matD_2\in\reals^{\npoints\times\npoints}$ with 
the elements
\begin{equation}
    [\matD_0]_{jk}=\fejer{}(v_j-v_k),\quad 
    [\matD_1]_{jk}=\fejer{}'(v_j-v_k),\quad 
    [\matD_2]_{jk}=\fejer{}''(v_j-v_k).
\end{equation}
To satisfy the interpolation constraints in \fref{prop:rhoint} we 
define $\vecalpha=\tp{[\alpha_1,\ldots,\alpha_\npoints]}$, 
$\vecbeta=\tp{[\beta_1,\ldots,\beta_\npoints]}$, $\fval=\tp{[f_1,\ldots,f_\npoints]}$, 
$\dval=\tp{[d_1,\ldots,d_\npoints]}$, demand
\begin{equation}
    \underbrace{
    \begin{bmatrix}
        \matD_0 & \matD_1\\
        \matD_1 & \matD_2
    \end{bmatrix}}_\matD
    \begin{bmatrix}
        \vecalpha\\
        \vecbeta
    \end{bmatrix}
    =
    \begin{bmatrix}
        \fval\\
        \dval
    \end{bmatrix}
    \label{eq:dpolydfn2}
\end{equation}
and solve for $\vecalpha$ and $\vecbeta$. It can be verified that $\matD_0$ 
and $\matD_2$ are both invertible; the corresponding Schur complements
\begin{align}
    \matE&=\matD_2-\matD_1 \matD_0^{-1} \matD_1\\
    \matF&=\matD_0-\matD_1 \matD_2^{-1} \matD_1
\end{align}
are well defined and are also both invertible 
(see~\cite[Sec.~2.3.1, pp.~925--926]{candes13},~\cite[App.~B, p.~1249]{candes21-2} 
for the relevant results). Therefore, $\matD$ is invertible and the 
inverse can be written as~\cite[Sec.~9.11.3.(2)]{lutkepohl96}
\begin{equation}
    \matD^{-1}=
    \begin{bmatrix}
        \matF^{-1} & -\matD_0^{-1} \matD_1 \matE^{-1}\\
        -\matD_2^{-1} \matD_1 \matF^{-1} & \matE^{-1}
    \end{bmatrix}.
\end{equation}
We know~(see~\cite[Sec.~2.3.1, pp.~925--926]{candes13},~\cite[App.~B, p.~1249]{candes21-2}):
\begin{align}
    \infnorm{\matD_1}
        &\le \frac{\ckersumd}{\lc},\label{eq:Dbnd1}\\
    \infnorm{\matD_0^{-1}}
        &\le \frac{1}{1-\ckersum}=\cDzinv,\label{eq:Dbnd0}\\
    \infnorm{\matE^{-1}}
        &\le \left(\frac{\pi^2\fc(\fc+4)}{3}-\left(\ckersumdd+\frac{\ckersumd^2}{1-\ckersum}\right)\fc^2\right)^{-1}
         \le \cEinv \lc^2,\label{eq:Ebnd}\\
    \infnorm{\matF^{-1}}
        &\le \cFinv, \label{eq:Fbnd}\\
    \infnorm{\matD_2^{-1}}
        &\le \cDtinv\lc^2. \label{eq:Dbnd2}
\end{align}
Above, $\infnorm{\matA}$ is the infinity norm of a matrix defined as
\begin{equation}
    \infnorm{\matA} = \max_{\infnorm{\vecy}=1} \infnorm{\matA\vecy} = \max_i \sum_j \abs{a_{ij}}
\end{equation}
and
\begin{alignat}{3}
    &\ckersum\le 0.007,\quad &&\ckersumd\le 0.08,\quad &&\ckersumdd\le 1.06,\\
    &\cDzinv \le 1.008,\quad  &&\cEinv\le 0.47,\quad  &&\cDtinv\le 0.43, \quad \cFinv\le 1.009.
\end{alignat}
Now we have
\begin{align}
    \infnorm{\vecalpha}
    &=\infnorm{\matF^{-1}\fval-\matD_0^{-1}\matD_1 \matE^{-1}\dval }
        \nonumber\\
    &\le\infnorm{\matF^{-1}\fval}+\infnorm{\matD_0^{-1}\matD_1 \matE^{-1}\dval }
        \nonumber\\
    &\le\infnorm{\matF^{-1}}\infnorm{\fval}+\infnorm{\matD_0^{-1}\matD_1 \matE^{-1}}\infnorm{\dval }
        \nonumber\\
    &\stackrel{(a)}{\le}
        \infnorm{\matF^{-1}}+\frac{1}{\lc}\infnorm{\matD_0^{-1}}\infnorm{\matD_1}\infnorm{\matE^{-1} }
        \nonumber\\
    &\stackrel{(b)}{\le} 
        \cFinv+\cDzinv\ckersumd\cEinv\define\ca.
\end{align}
Above, (a) follows because $\abs{f_j}\le 1$ and $\abs{d_j}\le 1/\lc$ 
for all $j=1,\ldots\npoints$; (b) follows by~\fref{eq:Dbnd1}, \fref{eq:Dbnd0}, \fref{eq:Ebnd}, and~\fref{eq:Fbnd}; 
and $\ca$ can be upper-bounded as $\ca\le 1.05$.
Similarly,
\begin{align}
    \infnorm{\vecbeta}
    &=\infnorm{-\matD_2^{-1}\matD_1 \matF^{-1} \fval+\matE^{-1}\dval }\nonumber\\
    &\le\infnorm{-\matD_2^{-1}\matD_1 \matF^{-1} \fval}+\infnorm{\matE^{-1}\dval}\nonumber\\
    &\le\infnorm{-\matD_2^{-1}\matD_1\matF^{-1} } 
        \infnorm{\fval}+\infnorm{\matE^{-1}}\infnorm{\dval}\nonumber\\
    &\le\infnorm{\matD_2^{-1}}
        \infnorm{\matD_1}\infnorm{\matF^{-1} } +\frac{1}{\lc} \infnorm{\matE^{-1}}\nonumber\\
    &\le \cb\lc
\end{align}
with $\cb\define\cDtinv\ckersumd \cFinv +\cEinv$ that can be upper-bounded as $\cb\le 0.51$.

The following lemma, proven in the end of this section, records bounds 
on $\sum_{j=1}^{\npoints} \abs{ \fejer{}^{(l)}(t-v_j)}$, $l=0,1$.
\begin{lem}
    \label{lem:sumbnd}
    The following estimates hold:
    \begin{align}
        &\sum_{j=1}^{\npoints} \abs{ \fejer{}(t-v_j)}\le \cgsum,\\
        &\sum_{j=1}^{\npoints} \abs{\fejer{}'(t-v_j)}\le \cgsumd/\lc,
    \end{align}
    where $\cgsum$, $\cgsumd$ are positive numerical constants defined 
    in the proof of the lemma below.
\end{lem}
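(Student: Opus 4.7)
The plan is to combine two ingredients: pointwise decay estimates for $\fejer{}$ and $\fejer{}'$, and the Rayleigh separation $\setV\in\rset{\kappa\lc}{1}$ that controls how many $v_j$ can cluster near any fixed $t$. Writing $\fejer{}(t)=F(t)^4$ with the normalized Dirichlet kernel $F(t)=\sin(\pi(\fc/2+1)t)/((\fc/2+1)\sin(\pi t))$, I would first record two standard facts. The identity $F(t)=\frac{1}{\fc/2+1}\sum_{k=0}^{\fc/2}e^{\iu 2\pi(k-\fc/4)t}$ gives the uniform bound $|F(t)|\le 1$, hence $|\fejer{}(t)|\le 1$ everywhere. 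For $|t|\le 1/2$ the classical inequality $|\sin(\pi t)|\ge 2|t|$ yields the decay bound $|F(t)|\le 1/(2(\fc/2+1)|t|)$, so $|\fejer{}(t)|\le c/(\fc|t|)^4$ away from $0$.

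For $\fejer{}'=4F^3 F'$ I would differentiate directly to get $F'(t)=\pi\cos(\pi(\fc/2+1)t)/\sin(\pi t)-\pi\sin(\pi(\fc/2+1)t)\cos(\pi t)/((\fc/2+1)\sin^2(\pi t))$; together with $|\sin(\pi t)|\ge 2|t|$ this produces $|F'(t)|\le c'/|t|$ for $|t|$ not too small, while Bernstein applied to $F$ (which is frequency-limited to $\fc/4$) supplies the uniform bound $|F'(t)|\le \pi\fc/2$. Combining with the bounds on $|F|$ gives the two-regime estimate $|\fejer{}'(t)|\le c_1/\lc$ uniformly, and $|\fejer{}'(t)|\le c_2/(\lc(\fc|t|)^4)$ for $|t|$ bounded away from $0$.

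The combinatorial step is then routine. Using $\setV\in\rset{\kappa\lc}{1}$, any annulus $\{s\in\T:k\kappa\lc\le|s-t|<(k+1)\kappa\lc\}$ contains at most two points of $\setV$ (one on each side of $t$). Bounding the $k=0$ annulus with the uniform estimates and the $k\ge 1$ annuli with the decay estimates, the sums reduce to
\begin{align*}
    \sum_{j=1}^\npoints\abs{\fejer{}(t-v_j)} &\le 1 + \frac{2c}{\kappa^4}\sum_{k=1}^\infty\frac{1}{k^4} \define \cgsum,\\
    \sum_{j=1}^\npoints\abs{\fejer{}'(t-v_j)} &\le \frac{1}{\lc}\left(c_1 + \frac{2c_2}{\kappa^4}\sum_{k=1}^\infty\frac{1}{k^4}\right) \define \frac{\cgsumd}{\lc},
\end{align*}
both finite since $\sum 1/k^4=\pi^4/90$, which gives the claimed constants.

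The step requiring the most care is the pointwise estimate on $\fejer{}'$ near $t=0$, where the explicit formula for $F'$ has two competing singular terms: one has to match the uniform bound $|\fejer{}'|\lesssim\fc$ against the decay bound $|\fejer{}'|\lesssim 1/(\lc(\fc|t|)^4)$ cleanly so that the outer $1/\lc$ factor survives with a concrete numerical constant. The rest of the argument is essentially bookkeeping of geometric-like tails enabled by the $\kappa\lc$-separation of $\setV$.
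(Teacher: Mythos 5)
Your proposal follows essentially the same route as the paper: a uniform bound on $\fejer{}$ (resp.\ $\fejer{}'$) controls the nearest source(s), a $t^{-4}$ power-law decay bound (obtained from the Jordan inequality $\abs{\sin\pi t}\ge 2\abs{t}$, where the paper cites \cite[Lm.~2.6]{candes13}) controls the rest, and the $\kappa\lc$-separation from $\setV\in\rset{\kappa\lc}{1}$ yields a convergent $\sum k^{-4}$. The only noteworthy difference is cosmetic: you invoke Bernstein on $F$ to get the flat bound $\abs{\fejer{}'}\le 2\pi/\lc$ near $t=0$, while the paper uses the sharper linear-in-$\abs{t}$ bound $\abs{\fejer{}'(t)}\le\tfrac{\pi^2}{3}\fc(\fc+4)\abs{t}$; your annular counting is equivalent to the paper's ordering $\abs{u_1}<\cdots<\abs{u_V}$. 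One small slip to fix: you correctly observe the innermost annulus $\{s:\abs{s-t}<\kappa\lc\}$ may hold \emph{two} points of $\setV$, but the final displayed bounds charge only one uniform term; the leading summands should be $2$ and $2c_1$ respectively (the paper handles $j=1,2$ separately for exactly this reason), which only changes the numerical constants.
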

Using the bounds we obtain the required estimates as follows.
Observe,
\begin{align}
    \abs{q(t)}
    &=\abs{\sum_{j=1}^{\npoints} \alpha_j \fejer{}(t-v_j)+\beta_j \fejer{}'(t-v_j)}\nonumber\\
    &\le\infnorm{\vecalpha}\underbrace{\sum_{j=1}^{\npoints} 
        \abs{ \fejer{}(t-v_j)}}_{\le \cgsum}
        +\infnorm{\vecbeta}\underbrace{\sum_{j=1}^{\npoints} \abs{\fejer{}'(t-v_j)}}_{\le \cgsumd/\lc}
        \nonumber\\
    &\le \ca\cgsum +\cb \cgsumd\define \cqdv.\label{eq:quubnd}
\end{align}
This proves \fref{prop:dqinfbnd}.
\fref{prop:qpbnd1} in the lemma follows from~\fref{eq:quubnd} by \fref{eq:bernstein} [Bernstein 
theorem], using that $q'(t)$ is also a trigonometric polynomial frequency-limited to $\fc$:
\begin{equation}
\abs{q'(t)} \le \cqdvd \fc,\quad \cqdvd\define 2\pi \cqdv.
\label{eq:quubnd2}
\end{equation}
In turn, \fref{prop:qppbnd1} follows from~\fref{eq:quubnd2} by \fref{eq:bernstein} [Bernstein 
theorem], using that $q''(t)$ is also a trigonometric polynomial frequency-limited to $\fc$:
\begin{equation}
\abs{q''(t)} \le \cqdvdd \fc^2,\quad \cqdvdd\define 4\pi^2 \cqdv.
\end{equation}
This completes the proof of \fref{lem:dualcarlos2}. \qed

\begin{proof}[Proof of Lemma~\ref{lem:sumbnd}]
For all $t\in[-1/2,1/2]$, we have the following bounds~\cite[Sec.~2.3.2, p.~928]{candes13}:
\begin{align}
&\abs{\fejer{}(t)}\le 1,\label{eq:kbnd1}\\
&\abs{\fejer{}'(t)}\le \frac{\pi^2}{3}\fc(\fc+4)\abs{t}\label{eq:kbnd2}.
\end{align}

For all $t$ with $\lc/2\le \abs{t}\le 1/2$ and $l=0,1$, by inspection it 
follows from~\cite[Lm.~2.6]{candes13} that the following bound holds:
\begin{equation}
\abs{\fejer{}^{(l)}(t)}\le \frac{\pi^l \cg^l}{(\fc+2)^{4-l}t^4}\label{eq:kbnd5}
\end{equation}
with $\cgz\define 1, \cgd\define 6$.
[To obtain this result from~\cite[Lm.~2.6]{candes13}, observe that, in the terminology of \cite{candes13}, for all $t$ 
with $\lc/2\le \abs{t}\le \sqrt{2}/\pi$, $b(t)<2 a(t)$ and $a(t)<1$.]

Define $u_{k_j} \define t - v_j$,  ordered in such a way that $\abs{u_1}<\ldots<\abs{u_\npoints}$. 
Since $\{v_1, v_1, \ldots, v_\npoints\}\in\rset{\kappa\lc}{1}$, we have
\begin{equation}
    \label{eq:ubnd}
    \abs{u_{2j}} > \lc \kappa (j-1)\ 
    \text{and}\ \abs{u_{2j-1}} > \lc\kappa (j-1), \ \text{for}\ j \ge 2,
\end{equation}
and also,
\begin{equation}
    \label{eq:ubnd1}
    \abs{u_{j}} > \frac{\lc}{2} \kappa j,\ \text{for}\ j > 1.
\end{equation}

First,
\begin{align}
    \sum_{j=1}^{\npoints} \abs{\fejer{}(t-v_j)}
    =\sum_{j=1}^{\npoints} \abs{\fejer{}(u_j)}
    &\stackrel{(a)}{\le} 
        2+\frac{1}{(\fc+2)^4} \sum_{j=3}^{\npoints} \frac{1}{u_j^4}\\
    &\stackrel{(b)}{\le} 
        2+\frac{2}{(\fc+2)^4} \sum_{j=1}^{\lceil\npoints/2\rceil} \frac{1}{(\lc \kappa j)^4}\\
    &\le 
        2+\frac{1}{\fc^4} \frac{2}{(\lc \kappa)^4}\sum_{j=1}^{\infty} \frac{1}{j^4}
    \stackrel{(c)}{\le} 
        2+\frac{2}{\kappa^4} \frac{4}{3}
    \stackrel{(d)}{\le}  \cgsum.
\end{align}
Above, in 
(a) we used~\fref{eq:kbnd1} to bound the terms for $j=1,2$ and used~\fref{eq:kbnd5} 
to bounds the terms for $j>2$ [\fref{eq:kbnd5} is applicable because $u_3>\lc\kappa/2>\lc/2$ since $\kappa=1.87$ ]; 
in (b) we used~\fref{eq:ubnd}; in (c) we used
\begin{equation}
\sum_{j=1}^{\infty} \frac{1}{j^4}=\frac{\pi^2}{90}\le \frac{4}{3};
\end{equation}
and in (d) we defined $\cgsum\define 2.22$.

Second,
\begin{align}
    \sum_{j=1}^{\npoints} \abs{\fejer{}'(t-v_j)}
    &=\abs{\fejer{}'(u_1)} + \sum_{j=2}^{\npoints} \abs{\fejer{}'(u_j)}\\
    &\stackrel{(a)}{\le} 
        \frac{\pi^2}{3}\fc(\fc+4)
        \frac{\lc}{2}\kappa + \frac{1}{(\fc+2)^3}
        \frac{6\pi}{((\lc/2) \kappa)^4} + \sum_{j=2}^{\npoints}
        \abs{\fejer{}'(u_j)} \\
    &\stackrel{(b)}{\le} 
        \frac{\pi^2}{3}\fc(\fc+4)
        \frac{\lc}{2}\kappa + \frac{1}{(\fc+2)^3}
        \frac{6\pi}{((\lc/2) \kappa)^4} + \frac{1}{(\fc+2)^3} \sum_{j=2}^{\npoints}
        \frac{6\pi}{u_j^4}\\
    &\stackrel{(c)}{\le}
        \frac{\pi^2}{3}\fc(\fc+4)\frac{\lc}{2}\kappa 
        + \frac{1}{(\fc+2)^3} \sum_{j=1}^{\npoints} 
            \frac{6\pi}{((\lc/2) \kappa j)^4}\\
    &\le \frac{\pi^2}{6}(\fc+4)\kappa
        +\frac{6\pi}{\fc^3} 
            \frac{2^4}{(\lc \kappa)^4}
            \sum_{j=1}^{\infty} \frac{1}{j^4}\\
    &\stackrel{(d)}{\le} \frac{\pi^2}{6}(\fc+4)\kappa+35 \fc
    \le 12.4+38.1 \fc
    \stackrel{(e)}{\le}  \cgsumd \fc.
\end{align}
Above, in 
(a) we used that if $\abs{u_1} < \lc\kappa/2$, then 
$\abs{\fejer{}'(u_1)}<(\pi^2/3)\fc(\fc+4)\lc\kappa/2$ by $\fref{eq:kbnd2}$ 
and otherwise $\abs{\fejer{}'(u_1)}<6\pi/[(\fc+2)^3((\lc/2) \kappa )^4]$ 
by~\fref{eq:kbnd5}; 
in (b) we used~\fref{eq:kbnd5}, which is applicable because $\abs{u_j}>\lc\kappa>\lc/2$ by~\fref{eq:ubnd1}; 
in (c) we used~\fref{eq:ubnd1}; 
in (d) we used $\sum_{j=1}^{\infty} 1/j^4<4/3$;
in (e) we used that $\fc > 128$ and defined $\cgsumd\define 38.2$.
\end{proof}

\section{Proof of Lemma~\ref{lem:dualprod1d}}
\label{app:dualprod1d}
\subsection{Construction}
\label{sec:const2}
We first describe how the trigonometric polynomial $q_2(\cdot)$ is constructed. 
In Sections \ref{sec:existencephipk}--\ref{sec:proofdualprod14}, we prove that the construction is valid and that it satisfies the required 
Properties \ref{prop:q1prop1d}--\ref{prop:q1prop4d}.

Recall, $\setT=\{t_1,\ldots, t_\sparsity\}$ is defined in~\fref{eq:setT} and, as before, define $\setT_k$, $k=1,\ldots,\rr$, as in~\fref{eq:setTk}; remember that $\setT = \setT_1\cup\ldots \cup\setT_r$ and $\setT_k\in\rset{\kappa\llo\rr}{1}$.

We will construct the trigonometric polynomial $q_2(\cdot)$ as a 
(shifted) sum of $\rr$ trigonometric polynomials $\{\phi_k(\cdot)\}_{k=1}^\rr$:
\begin{equation}
    \label{eq:q1defmdd}
    q_2(t)=\sum_{k=1}^\rr\phi_k(t)-\rho.
\end{equation}
Note that we are overloading the notations here and $\phi_k(\cdot)$ in this 
sections are \emph{different} from $\phi_k(\cdot)$ in \fref{sec:dualprod1m}.
Each of the trigonometric polynomials $\{\phi_k(\cdot)\}_{k=1}^\rr$ 
is frequency-limited to $\flo$,
\begin{equation}
    \label{eq:phifreqconstr1}
    \phi_k(t)
    =\sum_{l=-\flo}^{\flo} \hat \phi_{k,l} e^{-\iu 2\pi l t} \quad 
    \text{for some}\quad \hat \phi_{k, l}\in\complexset
\end{equation}
and is constructed separately to satisfy the following interpolation 
constraints on $\setT$:
\begin{align}
    \phi_k(t_l)&=
    \begin{cases}
    \rho, &\text{ if } t_l\in\setT_k,\\
    0,    &\text{ if } t_l\in \setT_k^c\define \setT\setminus\setT_k,
    \end{cases}\label{eq:req1mdd}\\
    \phi'_k(t_l)&=\begin{cases}
    \gamma s'_l, &\text{ if } t_l\in\setT_k,\\
    0, &\text{ if } t_l\in\setT_k^c.
    \end{cases}\label{eq:req3mdd}
\end{align}

Constraints~\fref{eq:req1mdd}, \fref{eq:req3mdd}, and definition~\fref{eq:q1defmdd} 
guarantee, for all $l=1,\ldots,\sparsity$,
\begin{align}
    q_2(t_l)&=0,\label{eq:q1int1dd}\\
    q'_2(t_l)&=\gamma s'_l\label{eq:q1int2dd}.
\end{align}

To develop intuition about our construction, observe that~\fref{eq:phifreqconstr1} 
and~\fref{eq:q1defmdd} guarantee that \fref{prop:q1prop1d} is satisfied.
Further, observe that interpolation constraints~\fref{eq:q1int1dd} and~\fref{eq:q1int2dd} 
are needed for~\fref{eq:bndneard} to hold because $q_0(t)=q'_0(t)=0$ for all $t\in\setT$.

Next, we explain how to construct the trigonometric polynomials $\phi_k(\cdot)$, $k=1,\ldots,\rr$.
The idea is to construct $\phi_k(\cdot)$ as a product of two trigonometric polynomials:
\begin{equation}
    \label{eq:phikdfndd}
    \phi_k(t)\define\phi_{0,k}(t)\times\phi_{+,k}(t).
\end{equation}
The first term in the product is defined as
\begin{equation}
    \label{eq:phizkdef}
    \phi_{0,k}(t)\define \prod_{l\ne k} q_{\rr\llo,\setT_l}(t),
\end{equation}
where $q_{\rr\llo,\setT_l}(\cdot),\ l=1,\ldots,\rr$ are the trigonometric polynomials 
constructed via \fref{lem:dualcarlos} with $\lc=\rr\llo$ and $\setV=\setT_l\in \rset{\kappa\llo\rr}{1}$.
The second term,
\begin{equation}
    \label{eq:phipdefdd}    
    \phi_{+,k}(t)\define\rr^{2\rr}\cufh^{\rr}q_{\rr\llo,\setT_k,\{f_j\},\{d_j\}}(t)
\end{equation}
is a (rescaled) trigonometric polynomial $q_{\rr\llo,\setT_k,\{f_j\},\{d_j\}}(\cdot)$ constructed 
via \fref{lem:dualcarlos2}  with $\lc=\rr\llo$, and $\setV=\setT_k\in \rset{\kappa\llo\rr}{1}$
and $\cufh$ is a positive numerical constant defined in~\fref{eq:czdduf1d}  below.
Further, the function-values and derivatives of $q_{\rr\llo,\setT_k,\{f_j\},\{d_j\}}(\cdot)$ are constrained 
on $\setT_k$ so that $\phi_{+,k}(\cdot)$ satisfies the following:
\begin{alignat}{2}
    \phi_{+,k}(t)&=\rho\frac{1}{\phi_{0,k}(t)}, &&\text{ for all } t\in \setT_k,\label{eq:phi11mdd}\\
    \phi'_{+,k}(t_l)
    &=-\rho\frac{\phi'_{0,k}(t_l)}{\phi^2_{0,k}(t_l)}
        +\frac{\gamma s'_l}{\phi_{0,k}(t_l)}, &&\text{ for all } t_l\in \setT_k.
        \label{eq:phi11m1dd}
\end{alignat}

Observe that $\phi_{0,k}(\cdot)$ in this section is \emph{identical} to $\phi_{0,k}(\cdot)$ 
defined in \fref{sec:dualprod1m} and, therefore, satisfies all the properties 
derived in \fref{sec:propphizk}; $\phi_{+,k}(\cdot)$ in this section is \emph{different} 
from $\phi_{+,k}(\cdot)$ in \fref{sec:dualprod1m} and the notation is overloaded. 

We will prove in \fref{sec:existencephipk} below, that this specification is valid, 
in the sense that the corresponding function values and derivatives of~$q_{\rr\llo,\setT_k,\{f_j\},\{d_j\}}(\cdot)$ 
on $\setT_k$ satisfy requirements~\fref{eq:condp2} of \fref{lem:dualcarlos2}.

It follows from~\fref{eq:phikdfndd}, \fref{eq:phizkdef}, \fref{eq:phi11mdd}, \fref{lem:dualcarlos}, 
Properties \ref{prop:qzero}, \ref{prop:qnearbnd}, and \ref{prop:qinfbnd},
that the interpolation constraint~\fref{eq:req1mdd} is satisfied:%
\begin{alignat}{2}
    \phi_k(t)&=\phi_{0,k}(t) \phi_{+,k}(t)=\rho,  &&\text{ for all } t\in\setT_k,\\
    \phi_k(t)&=\underbrace{\phi_{0,k}(t)}_0 \phi_{+,k}(t)=0, &&\text{ for all } t\in\setT_k^c.
\end{alignat}
Next, by~\fref{eq:phikdfndd},
\begin{equation}
    \phi'_k(t)=\phi'_{0,k}(t)\phi_{+,k}(t)+\phi_{0,k}(t)\phi'_{+,k}(t).
\end{equation}
Therefore, by~\fref{eq:phi11mdd} and~\fref{eq:phi11m1dd},
\begin{align}
    \phi'_k(t_l)
    &=\phi'_{0,k}(t_l)\phi_{+,k}(t_l)+\phi_{0,k}(t_l)\phi'_{+,k}(t_l)\nonumber\\
    &=\phi'_{0,k}(t_l)\rho\frac{1}{\phi_{0,k}(t_l)}
        +\phi_{0,k}(t_l)\left(-\rho\frac{\phi'_{0,k}(t_l)}{\phi^2_{0,k}(t_l)}
        +\frac{\gamma s'_l}{\phi_{0,k}(t_l)}\right)\nonumber\\
    &=\gamma s'_l,
\end{align}
for all $t_l\in\setT_k$. Further, 
by \fref{eq:phizkdef}, \fref{lem:dualcarlos}, 
Property \ref{prop:qzero},
\begin{equation}
    \phi'_k(t)
    =\underbrace{\phi'_{0,k}(t)}_0 \phi_{+,k}(t)
    +\underbrace{\phi_{0,k}(t)}_0\phi'_{+,k}(t)=0,
\end{equation}
for all $t\in\setT_k^c$.
We conclude that the interpolation constraint~\fref{eq:req3mdd} is satisfied.

Finally,~\fref{eq:phifreqconstr1} follows from~\fref{eq:phikdfndd} 
because $\phi_{0,k}(\cdot)$ in~\fref{eq:phizkdef} is frequency-limited 
to $(\rr-1)/(\llo\rr)$ [\fref{lem:dualcarlos}, \fref{prop:lf}] 
and $\phi_{+,k}(\cdot)$ in~\fref{eq:phipdefdd} 
is frequency-limited to $1/(\llo\rr)$ [\fref{lem:dualcarlos2}, \fref{prop:lf2}] so that $\phi_k(\cdot)$ is 
frequency-limited to 
    $(\rr-1)/(\llo\rr)+1/(\llo\rr)=1/\llo=\flo.$
Therefore, by~\fref{eq:q1defmdd}, $q_2(\cdot)$ is also frequency-limited to $\flo$, 
which proves~\fref{prop:q1prop1d}.

\subsection{Existence of $\phi_{+,k}(\cdot)$}
\label{sec:existencephipk}
In this subsection, we check that trigonometric polynomial $\phi_{+,k}(\cdot)$ that 
satisfies~\fref{eq:phi11mdd} 
and~\fref{eq:phi11m1dd} can indeed be defined according to~\fref{eq:phipdefdd} 
with $q_{\rr\llo,\setT_k,\{f_j\},\{d_j\}}(\cdot)$ constructed via \fref{lem:dualcarlos2}
with $\lc=\rr\llo$ and $\setV=\setT_k\in \rset{\kappa\llo\rr}{1}$.
To this end, we need to show 
that the constraints on the function values $\{f_j\}$ and on the derivatives $\{d_j\}$
that are implied by the constraints~\fref{eq:phi11mdd} and~\fref{eq:phi11m1dd} satisfy 
requirements~\fref{eq:condp2} of \fref{lem:dualcarlos2}.

First consider the case $r=1$. As already discussed, in this case $\phi_{0,k}(t)=1$ for all
$t$, and, therefore, $\phi'_{0,k}(t)=0$ for all $t$. Plugging these values into~\fref{eq:phi11mdd} 
and~\fref{eq:phi11m1dd} we see from~\fref{eq:phipdefdd} that the requirements~\fref{eq:condp2} of \fref{lem:dualcarlos2}
are satisfied.

Next, consider the case $r>1$.

To check that requirements~\fref{eq:condp2} are satisfied for $t\in\setT_k$, 
we need to find upper bounds on $\abs{\phi_{+,k}(t)}$ and on $\abs{\phi_{+,k}'(t)}$.

Take $t\in\setT_k$ and observe:
\begin{align}
    \abs{\phi_{+,k}(t)}
    \stackrel{(a)}{=}
        \abs{\rho\frac{1}{\phi_{0,k}(t)}}
    \stackrel{(b)}{\le}
        \frac{\lhi^{2\rr}}{\llo^{2\rr}} 
        \frac{1}{\frac{\lhi^{2(\rr-1)}}{(\rr\llo)^{2(\rr-1)}}} 
        \frac{1}{\cql^{\rr-1}}
    &\stackrel{(c)}{\le} 
        \rr^{2(\rr-1)}\frac{1}{\cql^{\rr}}\frac{\lhi^{2}}{\llo^{2}} \label{eq:ph11bdnm0d}\\
    &\le \rr^{2\rr}\frac{1}{\cql^{\rr}}.\label{eq:lmtransform2dd}
\end{align}
Above, (a) follows by~\fref{eq:phi11mdd}; (b) follows by~\fref{eq:phizkbndfar},
which is valid because $t\in\setT_k$ implies $t\in\far{\lhi}{\setT_k^c}$; (c) follows because $\cql<1$.

Next, take $t\in\setT_k$ and observe, according to~\fref{eq:phi11m1dd},
\begin{equation}
    \label{eq:phipderivdd}
    \abs{\phi_{+,k}'(t)}
    \le\abs{\rho\frac{\phi'_{0,k}(t)}{\phi^2_{0,k}(t)}}+\abs{\frac{\gamma}{\phi_{0,k}(t)}}.
\end{equation}
The first term above can be upper-bounded following exactly the same steps 
that lead from~\fref{eq:phipderiv} to~\fref{eq:bndphikdnear1f}. This gives:
\begin{equation}
    \label{eq:bbb0}
    \abs{\rho\frac{\phi'_{0,k}(t)}{\phi^2_{0,k}(t)}}
    \le \rr^{2\rr-1}\czdduuuu^\rr\frac{\lhi}{\llo^2}.
\end{equation}
To upper-bound the second term in~\fref{eq:phipderivdd}, consider two cases.

Case 1: $t\in\far{\rr\D\llo}{\setT_k^c}$. Then, 
by~\fref{eq:phi0klb}, $\phi_{0,k}(t) \ge \cqlf^{\rr-1}$ and, therefore,
\begin{equation}
    \label{eq:bbb1}
    \abs{\frac{\gamma}{\phi_{0,k}(t)}}
    \le \frac{\gamma}{\cqlf^{\rr-1}} 
    = \frac{1}{\cqlf^{\rr-1}} \frac{\lhi^{2\rr-1}}{\llo^{2\rr}}
    \le \rr^{2\rr-1} \left(\frac{1}{\cqlf^2}\right)^{\rr} \frac{\lhi}{\llo^2}.
\end{equation}
Above, in the last (crude) inequality we used that $\cqlf<1$ and that $\lhi/\llo<1$.

Case 2: $t\in\near{\rr\D\llo}{\setT_k^c}$. In this case 
set $\{v_1,\ldots,v_{\hrr}\}\define\setT_k^c\intersect\near{\rr\D\llo}{t}$ 
and note that $1\le \hrr\le \rr-1$. Hence,  by~\fref{eq:qzlppl},
\begin{equation}
    \phi_{0,k}(t)
    \ge \cz^{\rr-1}\frac{ \prod_{j=1}^{\hrr}(v_j-t)^2}{(\rr\llo)^{2\hrr}}
    \ge \cz^{\rr-1}\frac{ \lhi^{2\hrr}}{(\rr\llo)^{2\hrr}},
\end{equation}
where we used that $\abs{v_{j}-t}\ge 2\lhi$ for all $j=1,\ldots,\hrr$ because
all elements of $\setT$ are separated by at least $2\lhi$. Therefore,
\begin{equation}
    \label{eq:bbb2}
    \abs{\frac{\gamma}{\phi_{0,k}(t)}}
    \le \frac{\gamma}{\cz^{\rr-1}}\frac{(\rr\llo)^{2\hrr}}{\lhi^{2\hrr}} 
    = 
        \frac{1}{\cz^{\rr-1}}
        \frac{\lhi^{2\rr-1}}{\llo^{2\rr}}
        \frac{(\rr\llo)^{2\hrr}}{\lhi^{2\hrr}}
    \le \rr^{2\rr-1} \left(\frac{1}{\cz^2}\right)^{\rr} 
        \frac{\lhi}{\llo^2}.
\end{equation}
Above, in the last (crude) inequality we used that $\cz<1$, $\lhi/\llo<1$, and $\hrr\le \rr-1$.

Plugging~\fref{eq:bbb0}, \fref{eq:bbb1}, and~\fref{eq:bbb2} into~\fref{eq:phipderivdd} we obtain
\begin{align}
    \abs{\phi_{+,k}'(t)}
    &\le\rr^{2\rr-1} \cufffff^{\rr} \frac{\lhi}{\llo^2}\label{eq:phipderivdd1p}\\
    &\le\rr^{2\rr-1} \cufffff^{\rr} \frac{1}{\llo},\label{eq:phipderivdd1}
\end{align}
where we used that $\lhi/\llo<1$ and defined $\cufffff \define 2\max(\czdduuuu, 1/\cqlf^2, 1/\cz^2)$.

It follows from~\fref{eq:lmtransform2dd} and~\fref{eq:phipderivdd1} that the function values and the derivatives of $q_{\rr\llo,\setT_k}(t)=\phi_{+,k}(t)/(\rr^{2\rr}\cufh^{\rr})$ with 
\begin{equation}
    \cufh\define\max(\cufffff, 1/\cql)
    \label{eq:czdduf1d}
\end{equation} 
satisfy  requirements~\fref{eq:condp2} of \fref{lem:dualcarlos2} on $\setT_k$. 
We conclude that $\phi_{+,k}(\cdot)$ can indeed be defined according 
to~\fref{eq:phipdefdd}. According to Properties~\ref{prop:dqinfbnd}, \ref{prop:qpbnd1}, and \ref{prop:qppbnd1}
of \fref{lem:dualcarlos2}, and~\fref{eq:phipdefdd}, $\phi_{+,k}(\cdot)$ satisfies the following properties:
\begin{align}
    &\infnorm{\phi_{+,k}(\cdot)}\le \rr^{2\rr}\cufh^{\rr}\cqdv,\label{eq:phi11ddm00dd}\\
    &\infnorm{\phi_{+,k}'(\cdot)}\le \rr^{2\rr-1}\cufh^{\rr} \cqdvd\frac{1}{\llo},\label{eq:phi11ddm0dd}\\
    &\infnorm{\phi_{+,k}''(\cdot)}\le \rr^{2\rr-2}\cufh^{\rr} \cqdvdd\frac{1}{\llo^2}.\label{eq:phi11ddmdd}
\end{align}

\subsection{Proof of property 2}
Take $j\in\{1,\ldots, \sparsity\}$ and consider $t_j\in\setT$. 
There exists a unique $l\in\{1,\ldots,\rr\}$ such that $t_j\in\setT_l$. We will 
show that for all $\tau\in\near{\lhi}{t_j}$
\begin{equation}
    \abs{\phi_l(\tau)-\rho-\gamma s'_j(\tau-t_j)}
    \le \rr^{2\rr+3}\cufx^{\rr+1}  q_0(\tau) 
    \label{eq:toprove2d}
\end{equation}
and
\begin{equation}
    \abs{\phi_k(\tau)}
    \le 
    \rr^{2\rr+3}\cugg^{\rr+1}  q_0(\tau),\ 
    \text{for}\ k\in\{1,\ldots, \rr\},\ k\ne l, \label{eq:toprove1d}
\end{equation}
where the positive numerical constants $\cufx$ and $\cugg$ are defined below.

From this we will conclude:
\begin{align}
    \abs{q_2(\tau)-\gamma s'_j(\tau-t_j)}
    &=\abs{\sum_{k=1}^\rr \phi_k(\tau) -\rho-\gamma s'_j(\tau-t_j)}\\
    &\le\sum_{k\ne l}^\rr \abs{\phi_k(\tau)}+ \abs{\phi_l(\tau) - \rho-\gamma s'_j(\tau-t_j)}\\
    &\le\rr^{2\rr+4}\cufk^{\rr+1}  q_0(\tau),
\end{align}
with $\cufk \define 2\max(\cufx, \cugg)$, as desired.

To prove~\fref{eq:toprove2d} and~\fref{eq:toprove1d} recall, by~\fref{eq:req1mdd} and~\fref{eq:req3mdd},
\begin{align}
    &\left.\phi_l(\tau)-\rho -\gamma s'_j(\tau- t_j) \right|_{\tau= t_j}
        =  0 = q_0(t_j),\label{eq:init1d}\\
    &\left.\frac{d(\phi_l(\tau)-\rho-\gamma s'_j(\tau- t_j))}{d \tau}\right|_{\tau= t_j} 
        =\phi'_l(t_j)-\gamma s'_j=0=q'_0(t_j),\label{eq:q1int4dd}\\
    &\phi_k(t_j) =  0 = q_0(t_j)\ \text{for}\ k\in\{1,\ldots, \rr\},\ k\ne l,
        \label{eq:init2d}\\
    &\phi'_k(t_j)=0=q'_0(t_j)\ \text{for}\ k\in\{1,\ldots, \rr\},\ k\ne l.
        \label{eq:init3d}
\end{align}
Hence, in order to prove the bounds in~\fref{eq:toprove1d} and~\fref{eq:toprove2d}, we will derive 
upper bounds on the second derivatives $\abs{\phi''_k(\tau)}$, $k\in\{1,\ldots,\rr\}$, 
valid for all $\tau\in\near{\lhi}{t_j}$, and use the Mean Value 
theorem (see \fref{thm:meanvalue}). 

Taking the second derivative of~\fref{eq:phikdfndd} and applying the triangle inequality we find:
\begin{equation}
    \abs{\phi''_{k}(\tau)}
    \le
        \underbrace{\abs{\phi''_{0,k}(\tau)}\abs{\phi_{+,k}(\tau)}}_{E_{1}(\tau)}
        +2\underbrace{\abs{\phi'_{0,k}(\tau)}\abs{\phi'_{+,k}(\tau)}}_{E_{2}(\tau)}
        +\underbrace{\abs{\phi_{0,k}(\tau)}\abs{\phi''_{+,k}(\tau)}}_{E_3(\tau)}.
    \label{eq:phiddsumtermsl1d}
\end{equation}
In the derivation below we upper-bound the terms separately.

We will need the following notations. 
Set $\{\vt_1,\ldots,\vt_{\hrr}\}\define \near{\rr\D\llo}{\tau}\intersect\setT_k^c$. 
Also, set $\{v_1,\ldots,v_{\tilde\rr}\}\define \near{\rr\D\llo-\lhi}{t_j}\intersect\setT_k^c$. 
Note that the set $\{v_1,\ldots,v_{\tilde\rr}\}$ does not depend on $\tau$ and 
also $\{v_1,\ldots,v_{\tilde\rr}\}\subset \{\vt_1,\ldots,\vt_{\hrr}\}$ so 
that $\tilde\rr\le \hrr$. 

The remainder of the proof of Property 2 is organized as follows. First, consider 
the case $t_j\in \setT_k$ and prove~\fref{eq:toprove2d}, next consider the 
case $t_j\in\setT_k^c$ and prove~\fref{eq:toprove1d}.

\paragraph{Proof of \fref{eq:toprove2d}: case~$t_j\in\setT_k$.}
\subparagraph{Bounding $E_1(\tau)$.} By \fref{eq:mvt2} [Mean Value theorem] and the triangle inequality we can write
\begin{equation}
    \abs{\phi_{+,k}(\tau)}
    \le 
        \abs{\phi_{+,k}(t_j)}
        +\abs{\phi_{+,k}'(t_j)}\abs{\tau- t_j}
        +\frac{1}{2}\abs{\phi_{+,k}''(\tau_m)}(\tau- t_j)^2
\end{equation}
with $\tau_m\in (t_j, \tau)$. Next, we use~\fref{eq:ph11bdnm0d} 
to upper-bound $\abs{\phi_{+,k}(t_j)}$; use~\fref{eq:phipderivdd1p} 
to upper-bound $\abs{\phi_{+,k}'(t_j)}$; use~\fref{eq:phi11ddmdd} 
to upper-bound $\abs{\phi_{+,k}''(\tau_m)}$. With these estimates 
we can further upper-bound $\abs{\phi_{+,k}(\tau)}$ as follows:
\begin{align}
    \abs{\phi_{+,k}(\tau)}
    \le\rr^{2\rr-1}\cufl^\rr \left(\frac{\lhi^2}{\llo^2} +\frac{\lhi}{\llo^2}\abs{\tau-t_j}+\frac{1}{\llo^2}(\tau-t_j)^2\right)
    \le \rr^{2\rr-1}\cufm^\rr \frac{\lhi^2}{\llo^2}.
    \label{eq:phipbnd1d}
\end{align}
Above, we defined $\cufl \define \max(1/\cql, \cufffff, \cufh\cqdvdd)$, $\cufm\define 3 \cufl$, 
and used $\abs{\tau-t_j}\le\lhi$.

Assume $\tilde \rr\ge 1$ (the case  $\tilde \rr=0$ will be treated 
separately below) so that $\hrr\ge 1$ and $\tau\in\near{\rr\D\llo}{\setT_k^c}$, 
which implies that $\abs{\phi_{0,k}''(\tau)}$ can be upper-bounded by~\fref{eq:qddub}.

Multiplying~\fref{eq:phipbnd1d} and~\fref{eq:qddub} and simplifying we obtain 
the following upper bound on $E_1$:
\begin{align}
    E_1(\tau)
    =\abs{\phi''_{0,k}(\tau)}\abs{\phi_{+,k}(\tau)}
    &\stackrel{(a)}{\le}
        \rr^{2\rr+1}\cufp^{\rr+1} 
        \frac{\prod_{1\le l\le \hrr} (\vt_l-\tau)^2}{(\rr\llo)^{2\hrr}}
        \frac{1}{\llo^2}\nonumber\\
    &\stackrel{(b)}{\le}
        \rr^{2\rr+1}\cufp^{\rr+1} 
        \frac{\prod_{1\le l\le \tilde\rr} (v_l-\tau)^2}{(\rr\llo)^{2\tilde\rr}}
        \frac{1}{\llo^2}.
    \label{eq:E1d}
\end{align}
Above, in 
(a) we  used (multiple times) the bound $\lhi\le \abs{\vt_l-\tau}$,
which is true for all $l\in\{1,\ldots,\hrr\}$ (follows because the 
elements of $\setT$ are separated 
by at least $2\lhi$), used $\lhi/\llo<1$, and defined 
$\cufp \define  \max(6 \czdu \cufm, \cufh\cqdv \czdduu)$;
in (b) we used the fact that $\abs{\vt_l-\tau}/(\llo \rr)\le \D<1$ for
all $l\in\{1,\ldots,\hrr\}$. 

For the case $\tilde r=0$, the upper bound~\fref{eq:E1} 
also holds by~\fref{eq:phi0kddub}
and~\fref{eq:phi11ddm00dd}.

\subparagraph{Bounding $E_2(\tau)$.}
By~\fref{eq:mvt1} [Mean Value theorem] we can write
\begin{equation}
    \abs{\phi'_{+,k}(\tau)}
    \le \abs{\phi'_{+,k}(t_j)}+\abs{\phi_{+,k}''(\tau_m)}\abs{\tau- t_j}
\end{equation}
with $\tau_m\in (t_j, \tau)$. Next, we use~\fref{eq:phipderivdd1p} 
to upper-bound $\abs{\phi_{+,k}'(t_j)}$; use~\fref{eq:phi11ddmdd} 
to upper-bound $\abs{\phi_{+,k}''(\tau_m)}$. 
With these estimates we can further upper-bound $\abs{\phi'_{+,k}(\tau)}$ as follows:
\begin{align}
    \abs{\phi'_{+,k}(\tau)}
    \le\rr^{2\rr-1}\cufq^\rr 
        \left(\frac{\lhi}{\llo^2}+\frac{1}{\llo^2}\abs{\tau-t_j}\right)
    \le \rr^{2\rr-1}\cufr^\rr \frac{\lhi}{\llo^2}.
    \label{eq:phipkdubd}
\end{align}
Above, we defined $\cufq \define \max(\cufffff, \cqdvdd\cufh)$, $\cufr\define 2 \cufq$, 
and used $\abs{\tau-t_j}\le\lhi$.

Assume $\tilde \rr\ge 1$ (the case  $\tilde \rr=0$ will be treated 
separately below) so that $\hrr\ge 1$ and $\tau\in\near{\rr\D\llo}{\setT_k^c}$, 
which implies that $\abs{\phi'_{0,k}(t)}$ can be upper-bounded by~\fref{eq:qddub2}.
Multiplying~\fref{eq:phipkdubd} and~\fref{eq:qddub2} and simplifying, we 
obtain the following upper bound on $E_2$:

\begin{align}
    E_2(\tau)
    =\abs{\phi'_{0,k}(\tau)}\abs{\phi'_{+,k}(\tau)}
    &\stackrel{(a)}{\le} 
        \rr^{2\rr}\cufs^{\rr}  
        \frac{\prod_{1\le l\le \hrr} (\vt_l-\tau)^2}{(\rr\llo)^{2\hrr}}
        \frac{1}{\llo^2}\nonumber\\
    &\stackrel{(b)}{\le}
        \rr^{2\rr}\cufs^{\rr}  
        \frac{\prod_{1\le l\le \tilde\rr} (v_l-\tau)^2}{(\rr\llo)^{2\tilde\rr}}
        \frac{1}{\llo^2}.
    \label{eq:E2d}
\end{align}
Above, 
in (a) we  used the bound $\lhi\le \abs{\vt_l-\tau}$, which is true 
for all $l\in\{1,\ldots,\hrr\}$ (follows because the elements of $\setT$ are separated 
by at least $2\lhi$), used $\lhi/\llo<1$, and defined $\cufs \define \max(2 \cufr \czdu, 2\pi\cufh \cqdvd)$; 
in (b) we use the fact that $\abs{\vt_l-\tau}/(\llo \rr)\le \D<1$ for all $l\in\{1,\ldots,\hrr\}$.

For the case $\tilde r=0$, the upper bound~\fref{eq:E2d} also holds 
by~\fref{eq:phi0kdub}
and~\fref{eq:phi11ddm0dd}.

\subparagraph{Bounding $E_3(\tau)$.}
By~\fref{eq:phi11ddmdd},
\begin{equation}
    \label{eq:phipe3d}
    \abs{\phi_{+,k}''(\tau)}\le \rr^{2\rr-2}\cufh^{\rr} \cqdvdd\frac{1}{\llo^2}.
\end{equation}
Assume $\tilde \rr\ge 1$ (the case  $\tilde \rr=0$ will be treated separately below) 
so that $\hrr\ge 1$ and $\tau\in\near{\rr\D\llo}{\setT_k^c}$, 
which implies that $\abs{\phi_{0,k}(\tau)}$ can be upper-bounded by~\fref{eq:phipke3}.
Multiplying~\fref{eq:phipe3d} and~\fref{eq:phipke3} and simplifying, we obtain 
the following upper bound on $E_3$:
\begin{align}
    E_3(\tau)
    =\abs{\phi_{0,k}(\tau)}\abs{\phi''_{+,k}(\tau)}
    &\stackrel{(a)}{\le} 
        \rr^{2\rr-2}\cuft^{\rr} 
        \frac{ \prod_{l=1}^{\hrr} (\vt_l-\tau)^{2}}{(\rr\llo)^{2\hrr}}
        \frac{1}{\llo^2} \\
    &\stackrel{(b)}{\le}
        \rr^{2\rr-2}\cuft^{\rr}
        \frac{\prod_{1\le l\le \tilde\rr} (v_l-\tau)^2}{(\rr\llo)^{2\tilde\rr}}
        \frac{1}{\llo^2}.
    \label{eq:E3d}
\end{align}
Above, (a) we defined $\cuft \define \cufh\cqdvdd\cqu$; in (b) we use the fact 
that $\abs{\vt_l-\tau}/(\llo \rr)\le \D<1$ 
for all $l\in\{1,\ldots,\hrr\}$.

For the case $\tilde r=0$, the upper bound~\fref{eq:E3d} also holds 
by~\fref{eq:phipe3d} because by~\fref{eq:phizkdef} 
and \fref{lem:dualcarlos}, \fref{prop:qzeroone}, $\abs{\phi_{0,k}(\tau)}<1$ 
and because $\cqu>1$ and $\cqdvdd>1$.

From~\fref{eq:phiddsumtermsl1d}, \fref{eq:E1d}, \fref{eq:E2d}, and~\fref{eq:E3d} 
we conclude that
\begin{equation}
     \abs{\phi''_{k}(\tau)}
     \le \rr^{2\rr+1}\cufv^{\rr+1}  
         \frac{\prod_{1\le l\le \tilde\rr} (v_l-\tau)^2}{(\rr\llo)^{2\tilde\rr}}
         \frac{1}{\llo^2},
    \label{eq:phiddfinalbnd1d}
\end{equation}
where we defined $\cufv \define 4 \max(\cufp, \cufs, \cuft)$.

\subparagraph{Putting pieces together.}
Applying~\fref{eq:mvt2} [Mean Value theorem] to the 
function $\fun(\cdot) = \phi_l(\cdot)-\rho-\gamma s'_j(\cdot- t_j)$ with $a=t_j$ 
and $b=\tau$ and using~\fref{eq:init1d}, \fref{eq:q1int4dd}, \fref{eq:phiddfinalbnd1d} 
and  we can write for all $\tau\in\near{\lhi}{t_j}$:
\begin{align}
\abs{\phi_l(\tau)-\rho-\gamma s'_j(\tau- t_j)}
    &\stackrel{(a)}{\le}
        \frac{1}{2} \rr^{2\rr+1}\cufv^{\rr+1} 
        \frac{\prod_{1\le l\le \tilde\rr} (v_l-\tau_m)^2}{(\rr\llo)^{2\tilde\rr}}
        \frac{(\tau-t_j)^2}{\llo^2}\\
    &\stackrel{(b)}{\le} \frac{1}{2} \rr^{2\rr+1}\cufv^{\rr+1}  2^{\tilde\rr}
        \frac{\prod_{1\le l\le \tilde\rr} (v_l-\tau)^2}{(\rr\llo)^{2\tilde\rr}}
        \frac{(\tau-t_j)^2}{\llo^2}\\
    &\stackrel{(c)}{\le} \rr^{2\rr+3}\cufw^{\rr+1} 
        \frac{\prod_{1\le l\le \tilde\rr} (v_l-\tau)^2}{(\rr\llo)^{2\tilde\rr}}
        \frac{(\tau-t_j)^2}{(\rr\llo)^2}.
\label{eq:philtau1d}
\end{align}
Above, 
in (a) $\tau_m\in (t_j, \tau)$; 
in (b) we used that $\abs{v_l-\tau_m}<\abs{v_l-\tau}+\lhi<2\abs{v_l-\tau}$, which 
is true because $\tau\in\near{\lhi}{t_j}$ and because the elements of $\setT$ are separated 
by at least $2\lhi$;  
in (c) we defined $\cufw \define 2\cufv$.

The bound~\fref{eq:toprove2d} follows from~\fref{eq:philtau1d} 
and~\fref{eq:philtau2} by defining $\cufx \define \cufw/\czz$.

\paragraph{Proof of \fref{eq:toprove1d}: case~$t_j\in \setT^c_k$.}
We only need to consider this case when $\rr>1$. Indeed, when $\rr=1$, the sum in \fref{eq:q1defmdd} 
only contains one element, $\phi_l(\cdot)$, and, necessarily, $t_j\in \setT_l$ because $\setT_l^c$ 
is empty.

In this case $t_j$ is one of the elements among 
$\{v_1,\ldots,v_{\tilde\rr}\}\subset \{\vt_1,\ldots,\vt_{\hrr}\}$; 
in other words,
$t_j=v_{\tilde m}=\vt_{\hat m}$ for some $1\le \tilde m\le\tilde \rr$, $1\le \hat m\le\hrr$.
The set $\setT_k\intersect\near{\rr\D\llo-\lhi}{t_j}$ 
is either empty or contains exactly one element. 
Let $b\define\abs{\setT_k\intersect\near{\rr\D\llo-\lhi}{t_j}}$. 
In the case when $b=1$, let $\{\tilde t\}\define \setT_k\intersect\near{\rr\D\llo-\lhi}{t_j}$.

\subparagraph{Bounding $E_1(\tau)$.}
Consider the case $b=1$.
By~\fref{eq:mvt2} [Mean Value theorem] we can write:
\begin{equation}
    \abs{\phi_{+,k}(\tau)}
    \le
        \abs{\phi_{+,k}(\tilde t)}
        +\abs{\phi_{+,k}'(\tilde t)}\abs{\tau- \tilde t}
        +\frac{1}{2}\abs{\phi_{+,k}''(\tau_m)}(\tau- \tilde t)^2
\end{equation}
with $\tau_m\in (\tilde t, \tau)$. Next, we use~\fref{eq:ph11bdnm0d} 
to upper-bound $\abs{\phi_{+,k}(\tilde t)}$; use~\fref{eq:phipderivdd1p} 
to upper-bound $\abs{\phi_{+,k}'(\tilde t)}$; use~\fref{eq:phi11ddmdd} 
to upper-bound $\abs{\phi_{+,k}''(\tau_m)}$. 
With these estimates we can further upper-bound $\abs{\phi_{+,k}(\tau)}$ as follows:
\begin{align}
    \abs{\phi_{+,k}(\tau)}
    \le\rr^{2\rr-1}\cufl^\rr 
        \left(\frac{\lhi^2}{\llo^2} 
            +\frac{\lhi}{\llo^2}\abs{\tau-\tilde t}
            +\frac{1}{\llo^2}(\tau-\tilde t)^2\right)
    &\le \rr^{2\rr+1}\cufm^\rr \frac{(\tilde t-\tau)^2}{(\rr\llo)^2}\\
    &= \rr^{2\rr+1}\cufm^\rr \left[\frac{(\tilde t-\tau)^2}{(\rr\llo)^2}\right]^\I{b=1},
    \label{eq:phipkubnd1d}
\end{align}
where we used that $\lhi\le \abs{\tilde t-\tau}$ because the elements of $\setT$
are separated by at least $2\lhi$ and $\tau\in\near{\lhi}{t_j}$ with $\tilde t\ne t_j$. According 
to~\fref{eq:phi11ddm00dd} the upper bound~\fref{eq:phipkubnd1d} also holds for $b=0$.

Since $t_j\in \setT^c_k$ and $\tau\in\near{\lhi}{t_j}$, 
it follows $\tau\in\near{\rr\D\llo}{\setT_k^c}$ 
so that $\hrr\ge 1$, which implies that $\abs{\phi''_{0,k}(\tau)}$ 
can be upper-bounded by~\fref{eq:qddub3}.

Multiplying~\fref{eq:phipkubnd1d} and~\fref{eq:qddub3} and simplifying, 
we obtain the following upper bound on $E_1$:
\begin{align}
    E_1(\tau)
    =\abs{\phi''_{0,k}(\tau)}\abs{\phi_{+,k}(\tau)}
    &\stackrel{(a)}{\le} \rr^{2\rr+1}\cufy^{\rr+1} 
        \left[\frac{(\tilde t-\tau)^2}{(\rr\llo)^2}\right]^\I{b=1} 
        \frac{ \prod_{1\le l\le \hrr, l\ne\hat m} (\vt_l-\tau)^{2}}{(\rr\llo)^{2(\hrr-1)}}
        \frac{1}{\llo^2}\nonumber\\
    &\stackrel{(b)}{\le} \rr^{2\rr+1}\cufy^{\rr+1}
        \left[\frac{(\tilde t-\tau)^2}{(\rr\llo)^2}\right]^\I{b=1}
        \frac{ \prod_{1\le l\le \tilde\rr, l\ne\tilde m} (v_l-\tau)^{2}}{(\rr\llo)^{2(\tilde\rr-1)}}
        \frac{1}{\llo^2}.
    \label{eq:E12d}
\end{align}
Above, in (a) we defined $\cufy\define\cufm\cuoss$; in (b) we use the fact 
that $\abs{\vt_l-\tau}/(\llo \rr)\le \D<1$ for all $l\in\{1,\ldots,\hrr\}$.

\subparagraph{Bounding $E_2(\tau)$.}
Consider the case $b=1$.
By \fref{eq:mvt1} [Mean Value theorem] we can write:
\begin{equation}
    \abs{\phi'_{+,k}(\tau)}\le \abs{\phi'_{+,k}(\tilde t)}+\abs{\phi_{+,k}''(\tau_m)}\abs{\tau- \tilde t}
\end{equation}
with $\tau_m\in (\tilde t, \tau)$. Next, we use~\fref{eq:phipderivdd1p} 
to upper-bound $\abs{\phi_{+,k}'(\tilde t)}$; use~\fref{eq:phi11ddmdd} 
to upper-bound $\abs{\phi_{+,k}''(\tau_m)}$. 
With these estimates we can further upper-bound $\abs{\phi'_{+,k}(\tau)}$ as follows:
\begin{align}
    \abs{\phi'_{+,k}(\tau)}
    \le\rr^{2\rr-1}\cufq^\rr 
        \left(\frac{\lhi}{\llo^2}+\frac{1}{\llo^2}\abs{\tau-\tilde t}\right)
    &\le \rr^{2\rr}\cufr^\rr 
        \frac{\abs{\tau-\tilde t}}{\rr\llo}\frac{1}{\llo}\\
    &= \rr^{2\rr}\cufr^\rr 
        \left[\frac{\abs{\tau-\tilde t}}{\rr\llo}\right]^{\I{b=1}}\frac{1}{\llo},
    \label{eq:phipkubnd2d}
\end{align}
where we used that $\lhi\le \abs{\tilde t-\tau}$. 
According to~\fref{eq:phi11ddm0dd} the upper bound~\fref{eq:phipkubnd2d} also holds for $b=0$.

Since $t_j\in \setT^c_k$ and $\tau\in\near{\lhi}{t_j}$, 
it follows $\tau\in\near{\rr\D\llo}{\setT_k^c}$ 
so that $\hrr\ge 1$, which implies that $\abs{\phi'_{0,k}(\tau)}$ 
can be upper-bounded by~\fref{eq:qddub4}.

Multiplying~\fref{eq:phipkubnd2d} and~\fref{eq:qddub4} and simplifying
we obtain the following upper bound on $E_2$:
\begin{align}
    E_2(\tau)
    =\abs{\phi'_{0,k}(\tau)}\abs{\phi'_{+,k}(\tau)}
    &\stackrel{(a)}{\le} 
        \rr^{2\rr} \cufz^\rr 
        \left[\frac{(\tilde t-\tau)^2}{(\rr\llo)^2}\right]^{\I{b=1}} 
        \frac{\prod_{1\le l\le \hrr,l\ne \hat m} (\vt_{l}-\tau)^{2}}{(\rr\llo)^{2(\hrr-1)}} 
        \frac{1}{\llo^2} \nonumber\\
    &\stackrel{(b)}{\le}
        \rr^{2\rr} \cufz^\rr
        \left[\frac{(\tilde t-\tau)^2}{(\rr\llo)^2}\right]^{\I{b=1}}
        \frac{ \prod_{1\le l\le \tilde\rr,l\ne \tilde m} (v_{l}-\tau)^{2}}{(\rr\llo)^{2(\tilde\rr-1)}} 
        \frac{1}{\llo^2}.
    \label{eq:E22d}
\end{align}
Above, 
in (a)  we defined $\cufz \define \cufr \cuossss$; 
in (b) we used the fact that $\abs{\vt_l-\tau}/(\llo \rr)\le \D<1$ for all $l\in\{1,\ldots,\hrr\}$.

\subparagraph{Bounding $E_3(\tau)$.}
By~\fref{eq:phi11ddmdd},
\begin{equation}
    \label{eq:phipkubnd3d}
    \abs{\phi_{+,k}''(\tau)}\le \rr^{2\rr-2}\cufh^{\rr} \cqdvdd\frac{1}{\llo^2}.
\end{equation}

Since $t_j\in \setT^c_k$ and $\tau\in\near{\lhi}{t_j}$, 
it follows $\tau\in\near{\rr\D\llo}{\setT_k^c}$ 
so that $\hrr\ge 1$, which implies that $\abs{\phi_{0,k}(\tau)}$ 
can be upper-bounded as in~\fref{eq:qddub51}.
Multiplying~\fref{eq:phipkubnd3d} and~\fref{eq:qddub51} and simplifying,
we obtain the following upper bound on $E_3$:
\begin{align}
    E_3(\tau)
    =\abs{\phi_{0,k}(\tau)}\abs{\phi''_{+,k}(\tau)}
    &\stackrel{(a)}{\le} 
        \rr^{2\rr-2}\cuga^{\rr} 
        \left[\frac{(\tilde t-\tau)^2}{(\rr\llo)^2}\right]^{\I{b=1}} 
        \frac{ \prod_{1\le l\le \hrr,l\ne \hat m} (\vt_{l}-\tau)^{2}}{(\rr\llo)^{2(\hrr-1)}} 
        \frac{1}{\llo^2} \\
    &\stackrel{(b)}{\le} \rr^{2\rr-2}\cuga^{\rr} 
        \left[\frac{(\tilde t-\tau)^2}{(\rr\llo)^2}\right]^{\I{b=1}} 
        \frac{\prod_{1\le l\le \tilde\rr, l\ne \tilde m} 
        (v_l-\tau)^2}{(\rr\llo)^{2(\tilde\rr-1)}} 
        \frac{1}{\llo^2}.
    \label{eq:E23d}
\end{align}
Above, in (a) we defined $\cuga \define \cufh\cqdvdd\cqu$; in (b) we used the fact 
that $\abs{\vt_l-\tau}/(\llo \rr)\le \D<1$ for all $l\in\{1,\ldots,\hrr\}$.

From~\fref{eq:phiddsumtermsl1d}, \fref{eq:E12d}, \fref{eq:E22d}, 
and~\fref{eq:E23d} we conclude that
\begin{equation}
     \abs{\phi''_{k}(\tau)}\le \rr^{2\rr+1}\cugb^{\rr+1}  \left[\frac{(\tilde t-\tau)^2}{(\rr\llo)^2}\right]^\I{b=1} \frac{\prod_{1\le l\le \tilde\rr, l\ne \tilde m} (v_l-\tau)^2}{(\rr\llo)^{2(\tilde\rr-1)}} \frac{1}{\llo^2},
    \label{eq:phiddfinalbnd2d}
\end{equation}
where we defined $\cugb \define 4\max(\cufy, \cufz, \cuga)$.

\subparagraph{Putting pieces together.}
Applying \fref{eq:mvt2} [Mean Value theorem] to the 
function $\fun(\cdot) = \phi_l(\cdot)$ with $a=t_j$ and $b=\tau$ and 
using~\fref{eq:init2d},~\fref{eq:init3d}, \fref{eq:phiddfinalbnd2d}, and 
we can write for all $\tau\in\near{\lhi}{t_j}$:
\begin{align}
    \abs{\phi_k(\tau)}
    &\stackrel{(a)}{\le} 
        \frac{1}{2} \rr^{2\rr+1}\cugb^{\rr+1}  
        \left[\frac{(\tilde t-\tau_m)^2}{(\rr\llo)^2}\right]^\I{b=1} 
        \frac{\prod_{1\le l\le \tilde\rr, l\ne \tilde m} (v_l-\tau_m)^2}{(\rr\llo)^{2(\tilde\rr-1)}} 
        \frac{(\tau-t_j)^2}{\llo^2}\\
    &\stackrel{(b)}{\le} 
        \frac{1}{2} \rr^{2\rr+1}\cugb^{\rr+1}  2^{\tilde\rr}
        \left[\frac{(\tilde t-\tau)^2}{(\rr\llo)^2}\right]^\I{b=1} 
        \frac{\prod_{1\le l\le \tilde\rr, l\ne \tilde m} (v_l-\tau)^2}{(\rr\llo)^{2(\tilde\rr-1)}} 
        \frac{(\tau-t_j)^2}{\llo^2}\\
    &\stackrel{(c)}{\le} \rr^{2\rr+3}\cugc^{\rr+1}  
        \left[\frac{(\tilde t-\tau)^2}{(\rr\llo)^2}\right]^\I{b=1}
        \frac{\prod_{1\le l\le \tilde\rr} (v_l-\tau)^2}{(\rr\llo)^{2\tilde\rr}}.
\label{eq:philtau11d}
\end{align}
Above, 
in (a) $\tau_m\in (t_j, \tau)$; 
in (b) we used the fact that, for $l\ne\tilde m$, $\abs{v_l-\tau_m}<\abs{v_l-\tau}+\lhi<2\abs{v_l-\tau}$ 
and $\abs{\tilde t-\tau_m}<\abs{\tilde t-\tau}+\lhi<2\abs{\tilde t-\tau}$,
which is true because $\tau\in\near{\lhi}{t_j}$ and because the elements of $\setT$ are separated 
by at least $2\lhi$; 
in (c) we defined $\cugc \define 2\cugb$ and used the fact that $t_j=v_{\tilde m}$.

The bound~\fref{eq:toprove1d} follows from~\fref{eq:philtau11d} and~\fref{eq:philtau21} by defining $\cugg \define \cugc/\czz$.

\subsection{Proof of property 3}
By~\fref{eq:q1defmdd} and the triangle inequality:
\begin{align}
    \infnorm{q_2(\cdot)}
    \le \rho+\rr \max_{1\le k\le \rr}\infnorm{\phi_k(\cdot)}
    &\stackrel{(a)}{\le} \rho+\rr \max_{1\le k\le \rr}\infnorm{\phi_{+,k}(\cdot)}\\
    &\stackrel{(b)}{=} \rho+\rr^{2\rr+1}\cufh^{\rr} 
        \max_{1\le k\le \rr}\infnorm{q_{\rr\llo,\setT_k,\{f_j\},\{d_j\}}(\cdot)}\\
    &\stackrel{(c)}{\le} \rho+\rr^{2\rr+1}\cufh^{\rr}\cqdv 
    \stackrel{(d)}{\le} \rr^{2\rr+1}\cugn^{\rr}.
\end{align} 
Above, in (a) we used~\fref{eq:phikdfndd} and the fact that by~\fref{eq:phizkdef} and 
\fref{lem:dualcarlos}, \fref{prop:qzeroone}, $\infnorm{\phi_{0,k}(\cdot)}\le 1$;
in (b) we used~\fref{eq:phipdefdd}; in (c) we used \fref{lem:dualcarlos2}, \fref{prop:dqinfbnd};
in (d) we defined $\cugn\define 2\cqdv\cufh$ and used the fact that $\rho<1<\cqdv\cufh$.

\subsection{Proof of property 4}
\label{sec:proofdualprod14}

Take $\tau\in\far{\lhi}{\setT}$. 
As above, let 
$\{u^\tau_1,\ldots,u^\tau_{\breve \rr}\}\define \near{\rr\D\llo}{\tau}\intersect\setT$. 
Then by~\fref{eq:qzlp},
\begin{equation}
    \label{eq:qzlbprd1}
q_0(\tau)
\ge \cz^\rr  \frac{ \prod_{l=1}^{\breve  \rr}(u^\tau_l-\tau)^2}{(\rr\llo)^{2\breve \rr}}.
\end{equation}
By~\fref{eq:q0lbxx} this bound is also 
valid when $\breve \rr=0$.

Fix $k$. If $\tau\in\near{\rr\D\llo}{\setT_k^c}$, then we can 
use~\fref{eq:phikub} to upper-bound $\abs{\phi_{0,k}(\tau)}$:
\begin{align}
    \abs{\phi_{0,k}(\tau)} 
    &\le \cqu^{\hrr}\frac{ \prod_{l=1}^{\hrr} (\vt_l-\tau)^{2}}{(\rr\llo)^{2\hrr}},
    \label{eq:qzkubprd1}
\end{align}
where, as before, $\{\vt_1,\ldots,\vt_{\hrr}\}\define \near{\rr\D\llo}{\tau}\intersect\setT_k^c$. If $\tau\notin\near{\rr\D\llo}{\setT_k^c}$, we will use that  by~\fref{eq:phizk} and by \fref{lem:dualcarlos}, \fref{prop:qzeroone},
\begin{equation}
    \abs{\phi_{0,k}(\tau)} \le 1.\label{eq:qzkubprunivd1}
\end{equation}
The set $\setT_k\intersect\near{\rr\D\llo}{\tau}$ is either empty or contains 
exactly one element. Let $b\define\abs{ \setT_k\intersect\near{\rr\D\llo}{\tau}}$ 
denote the size of this set; when $b=1$, let $\{\tilde t\}\define \setT_k\intersect\near{\rr\D\llo}{ \tau}$.
Following the steps that lead to~\fref{eq:phipkubnd1d}, we obtain:
\begin{equation}
    \label{eq:qokubprd1}
    \abs{\phi_{+,k}(\tau)}\le \rr^{2\rr+1}\cufm^\rr \left[\frac{(\tilde t-\tau)^2}{(\rr\llo)^2}\right]^\I{b=1}
\end{equation}
and the bound is valid for both cases $b=0$ and $b=1$.

\paragraph*{Case $\hrr\ge 1$:}
Then, $\{u^\tau_1,\ldots,u^\tau_{\breve \rr}\}=\{\vt_1,\ldots,\vt_{\hrr}\}\union \{\tilde t\}$ 
if $b=1$, and $\{u^\tau_1,\ldots,u^\tau_{\breve \rr}\}=\{\vt_1,\ldots,\vt_{\hrr}\}$ if $b=0$. 
Therefore,
\begin{align}
    \abs{\phi_{k}(\tau)}
    =\abs{\phi_{0,k}(\tau)}\abs{\phi_{+,k}(\tau)}
    &\stackrel{(a)}{\le} 
        \rr^{2\rr+1}\cufm^\rr \cqu^{\hrr} 
        \left[\frac{(\tilde t-\tau)^2}{(\rr\llo)^2}\right]^\I{b=1} 
        \frac{ \prod_{l=1}^{\hrr} (\vt_l-\tau)^{2}}{(\rr\llo)^{2\hrr}}\\
    &= \rr^{2\rr+1}\cufm^\rr \cqu^{\hrr}  
        \frac{ \prod_{l=1}^{\breve\rr} (u^\tau_l-\tau)^{2}}{(\rr\llo)^{2\breve\rr}}
    \stackrel{(b)}{\le} \rr^{2\rr+1}\cugf^\rr  q_0(\tau).
    \label{eq:phiq0ub1d1}
\end{align}
Above, 
(a) follows by~\fref{eq:qzkubprd1} and~\fref{eq:qokubprd1}; 
(b) follows by~\fref{eq:qzlbprd1} with $\cugf \define \cufm \cqu/\cz$.

\paragraph*{Case $\hrr= 0$:}
Then, $\breve r=1$ and $\{u^\tau_{\breve\rr}\}= \{\tilde t\}$ if $b=1$ 
and $\breve \rr=0$ if $b=0$. Therefore,
\begin{align}
    \abs{\phi_{k}(\tau)}
    =\abs{\phi_{0,k}(\tau)}\abs{\phi_{+,k}(\tau)}
    &\stackrel{(a)}{\le} 
        \rr^{2\rr+1}\cufm^\rr 
        \left[\frac{(\tilde t-\tau)^2}{(\rr\llo)^2}\right]^\I{b=1}\nonumber\\
    &=
        \rr^{2\rr+1}\cufm^\rr 
        \frac{ \prod_{l=1}^{\breve  \rr}(u^\tau_l-\tau)^2}{(\rr\llo)^{2\breve \rr}}
    \stackrel{(b)}{\le} \rr^{2\rr+1}\cugf^\rr  q_0(\tau).
    \label{eq:phiq0ub2d1}
\end{align}
Above, 
(a) follows by~\fref{eq:qzkubprunivd1} and \fref{eq:qokubprd1}; 
(b) follows by~\fref{eq:qzlbprd1} because $\cqu>1$.

By \fref{lem:dualprod}, \fref{prop:qzloinfhi},
\begin{equation}
    \rho=\frac{\lhi^{2\rr}}{\llo^{2\rr}} \le  \rr^{2\rr} \frac{1}{\cql^\rr}q_0(\tau).
    \label{eq:rho2ubd1}
\end{equation}

Therefore, by~\fref{eq:q1defmdd}, \fref{eq:phiq0ub1d1}, \fref{eq:phiq0ub2d1}, \fref{eq:rho2ubd1},%
\begin{equation}
    \abs{q_2(\tau)}\le \sum_{k=1}^\rr \abs{\phi_{k}(\tau)} + \rho\le \rr^{2\rr+2} \cugh^\rr q_0(\tau),
\end{equation}
where we defined $\cugh\define\cugf+1/\cql$.
\qed

\section{Mean Value theorem}
We repeatedly use the Taylor series approximation with the remained 
expressed via the Mean Value theorem~\cite[p.~880,~25.2.25]{abramowitz64} 
given below for the convenience of the reader.
\begin{thm}
    \label{thm:meanvalue}
    Assume that $\fun(t)$ is twice differentiable on the interval $[a, b]$. Then, there exists $t_1\in (a,b)$ such that
    \begin{equation}
        \fun(b) = \fun(a) + \fun'(t_1) (b-a).
        \label{eq:mvt1}
    \end{equation}
    and there exists $t_2\in(a,b)$ such that
    \begin{equation}
        \fun(b) = \fun(a) + \fun'(a) (b-a) + \frac{\fun''(t_2)}{2} (b-a)^2.
        \label{eq:mvt2}
    \end{equation}
\end{thm}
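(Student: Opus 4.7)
The statement collects two standard classical facts, so the plan is to reduce both to Rolle's theorem (applied to carefully chosen auxiliary functions) and invoke the standard regularity hypothesis on $\fun$. I assume $\fun$ is continuous on $[a,b]$ and differentiable on $(a,b)$ for part one, and in addition that $\fun'$ is differentiable on $(a,b)$ for part two; under the stated hypothesis ``$\fun(t)$ is twice differentiable on $[a,b]$'' both are automatic.

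For part one \fref{eq:mvt1}, I would introduce the auxiliary function
\begin{equation}
    \altfun(t) \define \fun(t) - \fun(a) - \frac{\fun(b)-\fun(a)}{b-a}\,(t-a),
\end{equation}
verify that $\altfun(a) = \altfun(b) = 0$, apply Rolle's theorem to get a point $t_1 \in (a,b)$ with $\altfun'(t_1) = 0$, and read off $\fun'(t_1) = (\fun(b)-\fun(a))/(b-a)$, which is equivalent to \fref{eq:mvt1}.

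For part two \fref{eq:mvt2}, the standard device is to introduce a constant $K$ determined so that a carefully constructed remainder function vanishes at the endpoint. Specifically, I would define
\begin{equation}
    \altfun(t) \define \fun(b) - \fun(t) - \fun'(t)(b-t) - K(b-t)^2,
\end{equation}
and choose $K$ so that $\altfun(a) = 0$; by construction $\altfun(b) = 0$ as well. Rolle's theorem then yields $t_2 \in (a,b)$ with $\altfun'(t_2) = 0$. Differentiating $\altfun$, the $\fun'(t)$ terms cancel and one is left with $\altfun'(t) = (b-t)(2K - \fun''(t))$, so $\altfun'(t_2) = 0$ (together with $b-t_2 \neq 0$) forces $K = \fun''(t_2)/2$. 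Substituting this value back into the equation $\altfun(a) = 0$ produces exactly \fref{eq:mvt2}.

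There is no real obstacle here: both parts are textbook consequences of Rolle's theorem, and the only subtlety is the clever choice of the auxiliary function in part two so that the first-derivative terms in $\altfun'$ telescope. I would present this as a short, self-contained appendix (or simply cite \cite{abramowitz64}) since these are standard facts used repeatedly in Sections~\ref{sec:q1}--\ref{sec:q2} and in Appendix~\ref{app:dualprod1d} to pass from pointwise interpolation constraints to quantitative error bounds on neighborhoods via $\abs{\phi''_k(\cdot)}$ estimates.
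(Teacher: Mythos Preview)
Your proof is correct and entirely standard. However, the paper does not actually prove this theorem: it simply states it in an appendix ``for the convenience of the reader'' and cites \cite[p.~880,~25.2.25]{abramowitz64} as the source. So your Rolle's-theorem argument goes beyond what the paper does, supplying a self-contained derivation where the paper just quotes a handbook. Both choices are perfectly reasonable for an auxiliary classical fact; your version has the advantage of being self-contained, while the paper's version is shorter and keeps the focus on the new material.
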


\section{Properties of Fejér kernel}
\label{app:propfejer}
The results proven in subsections below are analogous to the results 
in~\cite[eq.~(1.11) and eq.~(2.6)]{candes21-2} with the difference that 
here we need bounds on sums and in~\cite{candes21-2} bounds on the 
corresponding integrals are provided. 

Below, we will need uniform upper bounds 
on $\abs{\khi(\cdot)}$, $\abs{\khi'(\cdot)}$, and  $\abs{\khi''(\cdot)}$; these are derived next.

Fejér kernel~\fref{eq:khi} can be written as a Fourier sum as follows:
\begin{equation}
    \label{eq:fejsum}
    \khi(t) = \frac{1}{\N}\sum_{\abs{k}\le \fhi} \left(1-\frac{\abs{k}}{\fhi+1}\right) e^{\iu 2\pi t k}.
\end{equation}
Taking the absolute value of both sides in~\fref{eq:fejsum} and applying 
the triangle inequality we find:
\begin{align}
    \abs{\khi(t)}
    &\le \frac{1}{\N}\sum_{\abs{k}\le \fhi} \left(1-\frac{\abs{k}}{\fhi+1}\right)
    =\frac{1}{\N}\left(1+\fhi\right).\label{eq:ddu33}
\end{align}
Above, the equality follows by summing up the simple series.

Differentiating~\fref{eq:fejsum} we obtain:
\begin{equation}
    \label{eq:fejsumd}
    \khi'(t) 
    = \frac{\iu 2\pi}{\N} 
        \sum_{\abs{k}\le \fhi} \left(1-\frac{\abs{k}}{\fhi+1}\right) k e^{\iu 2\pi t k}.
\end{equation}
Taking the absolute value of both sides in~\fref{eq:fejsumd} and 
applying the triangle inequality we find:
\begin{align}
    \abs{\khi'(t)}
    \le  \frac{2\pi}{\N} \sum_{\abs{k}\le \fhi} \abs{k} \left(1-\frac{\abs{k}}{\fhi+1}\right)
    \stackrel{(a)}{=}\frac{2\pi}{3 \N}\fhi (2 + \fhi)
    \stackrel{(b)}{\le} \frac{2\pi}{\N}\fhi^2.\label{eq:ddub}
\end{align}
Above, (a) follows by summing up the simple power series; (b) follows because $\fhi>1$.

Differentiating~\fref{eq:fejsumd} we obtain:
\begin{equation}
    \label{eq:fejsumdd}
    \khi''(t) = - \frac{(2\pi)^2}{\N} \sum_{\abs{k}\le \fhi} \left(1-\frac{\abs{k}}{\fhi+1}\right) k^2 e^{\iu 2\pi t k}.
\end{equation}
Taking the absolute value of both sides in~\fref{eq:fejsumdd} and applying the triangle inequality we find:
\begin{align}
    \abs{\khi''(t)}
    \le  \frac{(2\pi)^2}{\N} \sum_{\abs{k}\le \fhi} k^2 \left(1-\frac{\abs{k}}{\fhi+1}\right)
    \stackrel{(a)}{=}\frac{2\pi^2}{3 \N} \fhi (2 + 3\fhi + \fhi^2)
    \stackrel{(b)}{\le} \frac{4\pi^2}{\N}\fhi^3. \label{eq:ddubd}
\end{align}
Above, (a) follows by summing up the simple power series; (b) follows because $\fhi>1$.

The bounds derived below in this section are crude in the sense that no attempt 
has been made to obtain the tightest possible constants; for this reason some of 
the steps below may appear unnecessarily wasteful.

\subsection{Proof of~(\ref{eq:cabshid})}
Begin by splitting the summation interval and recombining the terms in the following way:
\begin{align}
    \sum_{n=0}^{\N-1} \abs{\khi'\lefto(\frac{n}{\N}\right)} 
    &=\sum_{n: n/\N\in[0, \lhi)} \abs{\khi'\lefto(\frac{n}{\N}\right)}
        +\sum_{n: n/\N\in[\lhi, 1/2)} \abs{\khi'\lefto(\frac{n}{\N}\right)}\\
    &\dummyrel{=}+\sum_{n: n/\N\in[1/2, 1-\lhi)} \abs{\khi'\lefto(\frac{n}{\N}\right)} 
        +\sum_{n: n/\N\in[1-\lhi, 1)} \abs{\khi'\lefto(\frac{n}{\N}\right)}\\
    &\le \,2\!\!\!\!\!\!\!\sum_{n: n/\N\in[0, \lhi+1/\N)} \abs{\khi'\lefto(\frac{n}{\N}\right)} + 
        2\!\!\!\!\!\!\!\sum_{n: n/\N\in[\lhi, 1/2+1/\N)} \abs{\khi'\lefto(\frac{n}{\N}\right)}.\label{eq:sumkhiubd}
\end{align}
Above, the inequality follows by symmetry of $\khi'(\cdot)$ around $1/2$. 
Next, we upper-bound the two sums separately. 

To upper-bound the first sum in~\fref{eq:sumkhiubd} we proceed as follows:
\begin{equation}
    \sum_{n: n/\N\in[0, \lhi+1/\N)} \abs{\khi'\lefto(\frac{n}{\N}\right)}
    \le (\lhi+1/\N)\N\max_t \abs{\khi'\lefto(t\right)} 
    \stackrel{(a)}{\le} 2\lhi 2\pi \fhi^2 
    \stackrel{(b)}{=} 4\pi/\lhi. \label{eq:sumkhiub1d}
\end{equation} 
Above, 
in (a) we used~\fref{eq:ddub} and the assumption that $1/\N\le\lhi$; 
in (b) we used that $\fhi = 1/\lhi$.

To upper-bound the second term in~\fref{eq:sumkhiubd} we observe 
that $\abs{\khi'(\cdot)}$ can be upper-bounded as follows for $t\in[0,0.55]$:
\begin{align}
    \abs{\khi'(t)} 
    &\stackrel{(a)}{=}
    \frac{\pi \sin(\pi(\fhi+1)t)}{\N}
    \left|
        \left[\frac{2 \cos(\pi(\fhi+1) t)}{\sin^2(\pi t)}-\frac{2\cos(\pi t)\sin(\pi(\fhi+1)t)}
        {(\fhi+1) \sin^3(\pi t)}\right]
    \right|\\
    &\stackrel{(b)}{\le} \frac{1}{\N}\left(\frac{2\pi}{\sin^2(\pi t)}+\frac{2\pi }{(\fhi+1)\sin^3(\pi t)}\right)
    \stackrel{(c)}{\le} \frac{1}{\N}\left(\frac{2}{(\fhi+1) t^3}+\frac{2}{t^2}\right).\label{eq:fejqdbndd}
\end{align}
Above, (a) follows by differentiating $\khi(\cdot)$ in~\fref{eq:khi}; in (b) we used the triangle inequality and the fact
that $\abs{\sin(\cdot)}\le 1$, $\abs{\cos(\cdot)}\le 1$; in (c) we used the inequalities $\sin(\pi t)^2\ge \pi t^2$ and $\sin(\pi t)^3\ge \pi t^3$ for $t\in[0, 0.55]$.
Therefore,
\begin{align}
    \sum_{n: n/\N\in[\lhi, 1/2+1/\N)}\!\!\! \abs{\khi'\lefto(\frac{n}{\N}\right)}
    &\stackrel{(a)}{\le}
        \frac{2}{1+\fhi} 
        \frac{1}{\N}\!\!\!\sum_{n: n/\N\in[\lhi, 1/2+1/\N)}\!\!\! 
            \frac{1}{(n/\N)^3} + 
            \frac{2}{\N}\!\!\!\sum_{n: n/\N\in[\lhi, 1/2+1/\N)}\!\!\! \frac{1}{(n/\N)^2}\\
    &\stackrel{(b)}{\le}
        \frac{2}{1+\fhi}
        \left(\frac{1}{\N}\frac{1}{\lhi^3}+\int_{\lhi}^{1/2+1/\N} \frac{1}{t^3}dt\right) 
        + \frac{2}{\N}\frac{1}{\lhi^2}+2\int_{\lhi}^{1/2+1/\N} \frac{1}{t^2}dt\\
    &\le
        \frac{2}{1+\fhi}
        \left(\frac{1}{\N}\frac{1}{\lhi^3}+\int_{\lhi}^\infty \frac{1}{t^3}dt\right) 
        +\frac{2}{\N}\frac{1}{\lhi^2}+2\int_{\lhi}^\infty \frac{1}{t^2}dt \\
    &=\frac{2}{1+\fhi} 
        \left(\frac{1}{\N}\frac{1}{\lhi^3}+\frac{1}{2\lhi^2}\right) 
        + \frac{2}{\N}\frac{1}{\lhi^2}+\frac{2}{\lhi}\\
    &\stackrel{(c)}{\le}
    \frac{3}{1+\fhi} \frac{1}{\lhi^2}+\frac{4}{\lhi}
    <\frac{7}{\lhi}.\label{eq:sumkhiub2d}
\end{align}
Above, 
(a) follows from~\fref{eq:fejqdbndd} because $1/2+1/\N<0.55$; 
in (b) the bound for the first term follows because the function $1/t^3$ 
is monotonically decreasing and the bound for the second term follows 
because the function $1/t^2$ is monotonically decreasing; 
(c) follows because $1/\N\le\lhi$.

Finally, plugging~\fref{eq:sumkhiub1d} and~\fref{eq:sumkhiub2d} 
into~\fref{eq:sumkhiubd} and defining $\cabshid=8\pi+14$ 
we obtain~\fref{eq:cabshid}.

\subsection{Proof of~(\ref{eq:cabshidd})}
Begin by splitting the summation interval and recombining the terms in the following way:
\begin{align}
    \frac{1}{2}\sum_{n=0}^{\N-1} \sup_{u\in\nearhi{n/\N}} \abs{\khi''(u)}
    &= \frac{1}{2}\sum_{n: n/\N\in[0, 2\lhi)} \sup_{u\in\nearhi{n/\N}} \abs{\khi''(u)}\\
    &\dummyrel{=} +\frac{1}{2}\sum_{n: n/\N\in[2\lhi, 1/2)} \sup_{u\in\nearhi{n/\N}} \abs{\khi''(u)}\\
    &\dummyrel{=} +\frac{1}{2}\sum_{n: n/\N\in[1/2, 1-2\lhi)} \sup_{u\in\nearhi{n/\N}} \abs{\khi''(u)}\\
    &\dummyrel{=} + \frac{1}{2}\sum_{n: n/\N\in[1-2\lhi, 1)} \sup_{u\in\nearhi{n/\N}} \abs{\khi''(u)}\\
    &\le 
    \sum_{n: n/\N\in[0, 2\lhi+1/\N)} \sup_{u\in\nearhi{n/\N}} \abs{\khi''(u)}\\
    &\dummyrel{=} + \sum_{n: n/\N\in[2\lhi, 1/2+1/\N)} \sup_{u\in\nearhi{n/\N}} \abs{\khi''(u)}.
    \label{eq:sumkhiubdd}
\end{align}
Above, the inequality follows by symmetry of $\sup_{u\in\nearhi{\cdot}} \abs{\khi''(u)}$ 
around $1/2$. Next, we upper-bound the two sums separately. 

To upper-bound the first sum in~\fref{eq:sumkhiubdd} we proceed as follows:
\begin{equation}
    \sum_{n: n/\N\in[0, 2\lhi+1/\N)} \sup_{u\in\nearhi{n/\N}} \abs{\khi''(u)}
    \le \left(2\lhi+\frac{1}{\N}\right)\N\max_t \abs{\khi''\lefto(t\right)} \stackrel{(a)}{\le} 3\lhi 4\pi^2\fhi^3 \stackrel{(b)}{=} \frac{12\pi^2}{\lhi^2}.\label{eq:sumkhiub1dd}
\end{equation} 
Above, 
in (a) we used~\fref{eq:ddubd} and the assumption that $1/\N\le\lhi$;
in (b) we used that $\fhi = 1/\lhi$.

To upper-bound the second term in~\fref{eq:sumkhiubdd} we differentiate $\khi(\cdot)$ in~\fref{eq:khi} twice to obtain:
\begin{align}
   \khi''(t) 
   &= 
       -\frac{1}{\N}
       \frac{\pi ^2 (-\sin(2\pi t) \sin(2\pi\fhi t)+\cos(2\pi t) \cos(2\pi\fhi t) + 2\cos(2\pi\fhi t)-\cos (2\pi t)-2)}
            {(\fhi+1) \sin^4(\pi t)}\\
   &\dummyrel{=}
       -\frac{1}{\N}\frac{4 \pi ^2 \fhi \sin (\pi  (2 \fhi+1) t)}{(\fhi+1) \sin ^3(\pi  t)}
       +\frac{1}{\N}\frac{2 \pi ^2 \fhi^2 \cos (2 \pi (\fhi+1) t)}{(\fhi+1) \sin ^2(\pi  t)}.
\end{align}
This leads to the following upper bound on $\abs{\khi''(t)}$ for $t\in[0,0.55]$:
\begin{align}
    \abs{\khi''(t)} 
    &\stackrel{(a)}{\le}\frac{1}{\N}
        \left(\frac{7 \pi^2}{(\fhi+1) \sin^4(\pi  t)} 
            +\frac{4 \pi^2 \fhi}{(\fhi+1) \sin^3(\pi  t)}
            +\frac{2 \pi ^2 \fhi^2}{(\fhi+1) \sin^2(\pi  t)}
        \right)\\
    &\stackrel{(b)}{\le}\frac{1}{\N}
        \left(\frac{7 \pi}{\fhi t^4}
            +\frac{4 \pi}{t^3}
            +\frac{2 \pi \fhi}{t^2}
        \right).\label{eq:fejqdbnddd}
\end{align}
Above, in (a) we used the triangle inequality and the fact
that $\abs{\sin(\cdot)}\le 1$, $\abs{\cos(\cdot)}\le 1$; in (b) we used the inequalities $\sin(\pi t)^2\ge \pi t^2$, $\sin(\pi t)^3\ge \pi t^3$, $\sin(\pi t)^4\ge \pi t^4$ for $t\in[0, 0.55]$.
Next observe that since the right-hand side of~\fref{eq:fejqdbnddd} is monotonically decreasing for $t>0$ we have for $t\in(\lhi, 0.55]$:
\begin{align}
    \sup_{u\in\nearhi{t}} \abs{\khi''(u)}
    \le \frac{1}{\N}
        \left(\frac{7 \pi}{\fhi (t-\lhi)^4} 
            +\frac{4 \pi}{(t-\lhi)^3}+\frac{2 \pi \fhi}{(t-\lhi)^2}
        \right).
        \label{eq:supi}
\end{align}
Therefore,   the second term in~\fref{eq:sumkhiubdd} can be upper-bounded as follows:
\begingroup
\allowdisplaybreaks
\begin{align}
    &\sum_{n: n/\N\in[2\lhi, 1/2+1/\N)}\!\! \sup_{u\in\nearhi{n/\N}} \abs{\khi''(u)}\\
    &\qquad\qquad\stackrel{(a)}{\le}
        \frac{7 \pi}{\fhi}\frac{1}{\N}\!\!\!\!\!\sum_{n: n/\N\in[2\lhi, 1/2+1/\N)}\!\!\frac{1}{ (n/\N-\lhi)^4} \\
    &\qquad\qquad\dummyrel{\le}
        +4 \pi\frac{1}{\N}\!\!\!\!\!\sum_{n: n/\N\in[2\lhi, 1/2+1/\N)}\!\!\frac{1}{(n/\N-\lhi)^3}\\
    &\qquad\qquad\dummyrel{\le}
        +2 \pi \fhi \frac{1}{\N}\!\!\!\!\!\sum_{n: n/\N\in[2\lhi, 1/2+1/\N)}\frac{1}{(n/\N-\lhi)^2}\\
    &\qquad\qquad\stackrel{(b)}{\le}
        \frac{7 \pi}{\fhi}
        \left(\frac{1}{\N}\frac{1}{(2\lhi-\lhi)^4} 
            + \int_{2\lhi}^{1/2+1/\N} \frac{1}{(t-\lhi)^4}d t\right)\\
    &\qquad\qquad\dummyrel{\le}
        +4 \pi\left(\frac{1}{\N}\frac{1}{(2\lhi-\lhi)^3} 
            + \int_{2\lhi}^{1/2+1/\N} \frac{1}{(t-\lhi)^3}d t\right)\\
    &\qquad\qquad\dummyrel{\le}
        +2 \pi \fhi 
        \left(\frac{1}{\N}\frac{1}{(2\lhi-\lhi)^2} 
            + \int_{2\lhi}^{1/2+1/\N} \frac{1}{(t-\lhi)^2}d t\right)\\
    &\qquad\qquad\stackrel{(c)}{\le}
        \frac{7 \pi}{\fhi}
        \left(\frac{1}{\N}\frac{1}{\lhi^4} 
            + \int_{\lhi}^{\infty} \frac{1}{t^4}d t\right)
        +4 \pi\left(\frac{1}{\N}\frac{1}{\lhi^3} 
            + \int_{\lhi}^{\infty} \frac{1}{t^3}d t\right)
        +2 \pi \fhi \left(\frac{1}{\N}\frac{1}{\lhi^2} 
            + \int_{\lhi}^{\infty} \frac{1}{t^2}d t\right)\\
    &\qquad\qquad=
        \frac{7 \pi}{\fhi}
        \left(\frac{1}{\N}\frac{1}{\lhi^4} + \frac{1}{3\lhi^3}\right)
        +4 \pi\left(\frac{1}{\N}\frac{1}{\lhi^3} +  \frac{1}{2\lhi^2}\right)
        +2 \pi \fhi \left(\frac{1}{\N}\frac{1}{\lhi^2} +  \frac{1}{\lhi}\right)\\
    &\qquad\qquad\stackrel{(d)}{\le}
        \left(\frac{28}{3}\pi + 6\pi + 4\pi\right)\frac{1}{\lhi^2} 
    = \frac{58\pi}{3} \frac{1}{\lhi^2}.\label{eq:abra}
\end{align}
\endgroup
Above, (a) follows from \fref{eq:supi} because $1/2+1/\N<0.55$;
(b) follows because the functions $1/(t-\lhi)^4$, $1/(t-\lhi)^3$, 
and $1/(t-\lhi)^2$ are monotonically decreasing; 
(c) follows by changing the integration variable;
(d)  follows because $1/\N\le \lhi$ and because $\fhi=1/\lhi$.

Finally, plugging~\fref{eq:sumkhiub1dd} and~\fref{eq:abra} 
into~\fref{eq:sumkhiubdd} and defining $\cabshidd\define 12\pi^2+58\pi/3$ 
we obtain~\fref{eq:cabshidd}.

\bibliographystyle{IEEEtran}
\bibliography{vm_bib.bib}

\begin{thebibliography}{10}
\providecommand{\url}[1]{#1}
\csname url@samestyle\endcsname
\providecommand{\newblock}{\relax}
\providecommand{\bibinfo}[2]{#2}
\providecommand{\BIBentrySTDinterwordspacing}{\spaceskip=0pt\relax}
\providecommand{\BIBentryALTinterwordstretchfactor}{4}
\providecommand{\BIBentryALTinterwordspacing}{\spaceskip=\fontdimen2\font plus
\BIBentryALTinterwordstretchfactor\fontdimen3\font minus
  \fontdimen4\font\relax}
\providecommand{\BIBforeignlanguage}[2]{{%
\expandafter\ifx\csname l@#1\endcsname\relax
\typeout{** WARNING: IEEEtran.bst: No hyphenation pattern has been}%
\typeout{** loaded for the language `#1'. Using the pattern for}%
\typeout{** the default language instead.}%
\else
\language=\csname l@#1\endcsname
\fi
#2}}
\providecommand{\BIBdecl}{\relax}
\BIBdecl

\bibitem{betzig06-06}
E.~Betzig, G.~H. Patterson, R.~Sougrat, O.~W. Lindwasser, S.~Olenych, J.~S.
  Bonifacino, M.~W. Davidson, J.~Lippincott-Schwartz, and H.~F. Hess, ``Imaging
  intracellular fluorescent proteins at nanometer resolution,'' \emph{Science},
  vol. 313, no. 5793, pp. 1642--1645, Sep. 2006.

\bibitem{dickson97on}
R.~M. Dickson, A.~B. Cubitt, R.~Y. Tsien, and W.~Moerner, ``On/off blinking and
  switching behaviour of single molecules of green fluorescent protein,''
  \emph{Nature}, vol. 388, no. 6640, pp. 355--358, Jul. 1997.

\bibitem{klar00fluor}
T.~A. Klar, S.~Jakobs, M.~Dyba, A.~Egner, and S.~W. Hell, ``Fluorescence
  microscopy with diffraction resolution barrier broken by stimulated
  emission,'' \emph{Proc. Natl. Acad. Sci. USA}, vol.~97, no.~15, pp.
  8206--8210, Jul. 2000.

\bibitem{candes-14}
V.~I. Morgenshtern and E.~J. Cand{\`e}s, ``Super-resolution of positive
  sources: the discrete setup,'' \emph{SIAM J. Imaging Sci.}, vol.~9, no.~1,
  pp. 412---444, Mar. 2016.

\bibitem{donoho92-09}
D.~L. Donoho, ``Superresolution via sparsity constraints,'' \emph{SIAM J. Math.
  Anal.}, vol.~23, no.~5, pp. 1309--1331, Sep. 1992.

\bibitem{candes21-2}
E.~J. Cand{\`e}s and C.~Fernandez-Granda, ``Super-resolution from noisy data,''
  \emph{J. Fourier Anal. Appl.}, vol.~19, no.~6, pp. 1229--1254, Dec. 2013.

\bibitem{candes13}
------, ``Towards a mathematical theory of super-resolution,'' \emph{Commun.
  Pure Appl. Math.}, vol.~67, no.~6, pp. 906--956, Jun. 2014.

\bibitem{fernandez-granda16a}
C.~Fernandez-Granda, ``{Super-resolution of point sources via convex
  programming},'' \emph{Information and Inference: A Journal of the IMA},
  vol.~5, no.~3, pp. 251--303, Apr. 2016.

\bibitem{bhaskar11-09}
B.~N. Bhaskar, G.~Tang, and B.~Recht, ``Atomic norm denoising with applications
  to line spectral estimation,'' in \emph{Proc. Allerton Conf. Commun., Contr.,
  Comput.}, September 2011, pp. 261--268.

\bibitem{prony95essai}
R.~Prony, ``Essai exp{\'e}rimental et analytique,'' \emph{J. de l'Ecole
  Polytechnique (Paris)}, vol.~1, no.~2, pp. 24--76, 1795.

\bibitem{demanet14the-r}
\BIBentryALTinterwordspacing
L.~Demanet and N.~Nguyen, ``The recoverability limit for superresolution via
  sparsity,'' \emph{CoRR}, vol. abs/1502.01385, Dec. 2015. [Online]. Available:
  \url{http://arxiv.org/abs/1502.01385}
\BIBentrySTDinterwordspacing

\bibitem{batenkov20a}
D.~Batenkov, L.~Demanet, G.~Goldman, and Y.~Yomdin, ``Conditioning of partial
  nonuniform {Fourier} matrices with clustered nodes,'' \emph{SIAM J. Matrix
  Anal. Appl.}, vol.~41, no.~1, pp. 199--220, 2020.

\bibitem{stoica89stati}
P.~Stoica and A.~Nehorai, ``Statistical analysis of two nonlinear least-squares
  estimators of sine-wave parameters in the colored-noise case,''
  \emph{Circuits, Syst. and Signal Process.}, vol.~8, no.~1, pp. 3--15, 1989.

\bibitem{stoica89maxim}
P.~Stoica, R.~L. Moses, B.~Friedlander, and T.~S{\"o}derstr{\"o}m, ``Maximum
  likelihood estimation of the parameters of multiple sinusoids from noisy
  measurements,'' \emph{{IEEE} Trans. Acoust., Speech, Signal Process.},
  vol.~37, no.~3, pp. 378--392, 1989.

\bibitem{batenkov13accuracy}
D.~Batenkov and Y.~Yomdin, ``On the accuracy of solving confluent {P}rony
  systems,'' \emph{SIAM J. Appl. Math.}, vol.~73, no.~1, pp. 134--154, Jan.
  2013.

\bibitem{shahram04-05}
M.~Shahram and P.~Milanfar, ``Imaging below the diffraction limit: a
  statistical analysis,'' \emph{{IEEE} Trans. Image Process.}, vol.~13, no.~5,
  pp. 677--689, May 2004.

\bibitem{shahram05resolv}
------, ``On the resolvability of sinusoids with nearby frequencies in the
  presence of noise,'' \emph{{IEEE} Trans. Signal Process.}, vol.~53, no.~7,
  pp. 2579--2588, Jul. 2005.

\bibitem{helstrom64-10}
C.~W. Helstrom, ``The detection and resolution of optical signals,''
  \emph{{IEEE} Trans. Inf. Theory}, vol.~10, no.~4, pp. 275--287, Oct. 1964.

\bibitem{tang13-1}
G.~Tang, B.~N. Bhaskar, and B.~Recht, ``Near minimax line spectral
  estimation,'' \emph{{IEEE} Trans. Inf. Theory}, vol.~61, no.~2, pp. 499--512,
  Jan. 2015.

\bibitem{tang13compr}
G.~Tang, B.~N. Bhaskar, P.~Shah, and B.~Recht, ``Compressed sensing off the
  grid,'' \emph{{IEEE} Trans. Inf. Theory}, vol.~59, no.~11, pp. 7465--7490,
  Nov. 2013.

\bibitem{fernandez-granda13suppo}
C.~Fernandez-Granda, ``Support detection in super-resolution,'' in \emph{10th
  international conference on Sampling Theory and Applications (SampTA 2013)},
  Bremen, Germany, Jul. 2013, pp. 145--148.

\bibitem{azais15spike}
J.-M. Aza{\"\i}s, Y.~de~Castro, and F.~Gamboa, ``Spike detection from
  inaccurate samplings,'' \emph{Appl. Comput. Harmon. Anal.}, vol.~38, no.~2,
  pp. 177--195, Mar. 2015.

\bibitem{stoica05}
P.~Stoica and R.~Moses, \emph{Spectral Analysis of Signals}.\hskip 1em plus
  0.5em minus 0.4em\relax Prentice Hall, 2005.

\bibitem{barabell83impro}
A.~Barabell, ``Improving the resolution performance of eigenstructure-based
  direction-finding algorithms,'' in \emph{Proc. IEEE Int. Conf. Acoust.,
  Speech, Signal Process. (ICASSP)}, vol.~8, 1983, pp. 336--339.

\bibitem{bienvenu79influ}
G.~Bienvenu, ``Influence of the spatial coherence of the background noise on
  high resolution passive methods,'' in \emph{Proc. IEEE Int. Conf. Acoust.,
  Speech, Signal Process. (ICASSP)}, vol.~4, 1979, pp. 306--309.

\bibitem{schmidt86-03}
R.~O. Schmidt, ``Multiple emitter location and signal parameter estimation,''
  \emph{{IEEE} Trans. Antennas Propagat.}, vol. AP-34, no.~3, pp. 276--280,
  Mar. 1986.

\bibitem{pisarenko73theretr}
V.~F. Pisarenko, ``The retrieval of harmonics from a covariance function,''
  \emph{Geophys. J. Int.}, vol.~33, no.~3, pp. 347--366, 1973.

\bibitem{tufts82estim}
D.~W. Tufts and R.~Kumaresan, ``Estimation of frequencies of multiple
  sinusoids: making linear prediction perform like maximum likelihood,''
  \emph{Proc. {IEEE}}, vol.~70, no.~9, pp. 975--989, Sep. 1982.

\bibitem{cadzow88signa}
J.~A. Cadzow, ``Signal enhancement---{A} composite property mapping
  algorithm,'' \emph{{IEEE} Trans. Acoust., Speech, Signal Process.}, vol.~36,
  no.~1, pp. 49--62, Jan. 1988.

\bibitem{hua90matri}
Y.~Hua and T.~K. Sarkar, ``Matrix pencil method for estimating parameters of
  exponentially damped/undamped sinusoids in noise,'' \emph{{IEEE} Trans.
  Acoust., Speech, Signal Process.}, vol.~38, no.~5, pp. 814--824, May 1990.

\bibitem{paulraj86a-sub}
A.~Paulraj, R.~Roy, and T.~Kailath, ``A subspace rotation approach to signal
  parameter estimation,'' \emph{Proc. {IEEE}}, vol.~74, no.~7, pp. 1044--1046,
  Jul. 1986.

\bibitem{roy89-07}
R.~Roy and T.~Kailath, ``{ESPRIT} -- estimation of signal parameters via
  rotational invariance techniques,'' \emph{{IEEE} Trans. Acoust., Speech,
  Signal Process.}, vol.~37, no.~7, pp. 984--995, Jul. 1989.

\bibitem{clergeot89perfo}
H.~Clergeot, S.~Tressens, and A.~Ouamri, ``Performance of high resolution
  frequencies estimation methods compared to the {C}ramer-{R}ao bounds,''
  \emph{{IEEE} Trans. Acoust., Speech, Signal Process.}, vol.~37, no.~11, pp.
  1703--1720, Nov. 1989.

\bibitem{stoica91stati}
P.~Stoica and T.~S{\"o}derstr{\"o}m, ``Statistical analysis of {MUSIC} and
  subspace rotation estimates of sinusoidal frequencies,'' \emph{{IEEE} Trans.
  Signal Process.}, vol.~39, no.~8, pp. 1836--1847, Aug. 1991.

\bibitem{liao14music}
W.~Liao and A.~Fannjiang, ``{MUSIC} for single-snapshot spectral estimation:
  stability and super-resolution,'' \emph{{IEEE} Trans. Signal Process.},
  vol.~63, no.~23, pp. 6395--6406, Jul. 2015.

\bibitem{moitra15a}
A.~Moitra, ``Super-resolution, extremal functions and the condition number of
  {Vandermonde} matrices,'' in \emph{Proceedings of the Forty-Seventh Annual
  ACM Symposium on Theory of Computing}, ser. STOC '15, 2015, pp. 821--830.

\bibitem{li19b}
W.~{Li} and W.~{Liao}, ``Conditioning of restricted {Fourier} matrices and
  super-resolution of {MUSIC},'' in \emph{13th International conference on
  Sampling Theory and Applications (SampTA)}, 2019, pp. 1--4.

\bibitem{li17a}
\BIBentryALTinterwordspacing
W.~Li and W.~Liao, ``Stable super-resolution limit and smallest singular value
  of restricted {Fourier} matrices,'' \emph{CoRR}, vol. abs/1709.03146, 2017.
  [Online]. Available: \url{http://arxiv.org/abs/1709.03146}
\BIBentrySTDinterwordspacing

\bibitem{li20a}
W.~{Li}, W.~{Liao}, and A.~{Fannjiang}, ``Super-resolution limit of the
  {ESPRIT} algorithm,'' \emph{{IEEE} Trans. Inf. Theory}, pp. 1--1, 2020.

\bibitem{kunis19a}
\BIBentryALTinterwordspacing
S.~Kunis and D.~Nagel, ``On the smallest singular value of multivariate
  {Vandermonde} matrices with clustered nodes,'' \emph{CoRR}, vol.
  abs/1907.07119, 2019. [Online]. Available:
  \url{http://arxiv.org/abs/1907.07119}
\BIBentrySTDinterwordspacing

\bibitem{donoho90-06}
D.~L. Donoho, I.~M. Johnstone, J.~C. Hoch, and A.~S. Stern, ``Maximum entropy
  and the nearly black object,'' \emph{J. Roy. Statist. Soc. Ser. B}, vol.~54,
  no.~1, pp. 41--81, Jun. 1992.

\bibitem{fuchs05}
J.-J. Fuchs, ``Sparsity and uniqueness for some specific under-determined
  linear systems,'' in \emph{Proc. IEEE Int. Conf. Acoust., Speech, Signal
  Process. (ICASSP)}, vol.~5, 2005, pp. v/729--v/732.

\bibitem{schiebinger17a}
G.~Schiebinger, E.~Robeva, and B.~Recht, ``{Superresolution without
  separation},'' \emph{Information and Inference: A Journal of the IMA},
  vol.~7, no.~1, pp. 1--30, May 2017.

\bibitem{eftekhari19a}
\BIBentryALTinterwordspacing
A.~Eftekhari, J.~Tanner, A.~Thompson, B.~Toader, and H.~Tyagi, ``Sparse
  non-negative super-resolution --- simplified and stabilised,'' \emph{Appl.
  Comput. Harmon. Anal.}, 2019. [Online]. Available:
  \url{http://www.sciencedirect.com/science/article/pii/S1063520319300193}
\BIBentrySTDinterwordspacing

\bibitem{denoyelle17a}
Q.~Denoyelle, V.~Duval, and G.~Peyr{\'e}, ``Support recovery for sparse
  super-resolution of positive measures,'' \emph{J. Fourier Anal. Appl.},
  vol.~23, pp. 1153--1194, Oct. 2017.

\bibitem{devore93a}
R.~A. DeVore and G.~G. Lorentz, \emph{Constructive approximation}.\hskip 1em
  plus 0.5em minus 0.4em\relax Springer-Verlag, 1993.

\bibitem{lutkepohl96}
H.~L{\"u}tkepohl, \emph{Handbook of Matrices}.\hskip 1em plus 0.5em minus
  0.4em\relax Chichester, U.K.: Wiley, 1996.

\bibitem{abramowitz64}
M.~Abramowitz and I.~A. Stegun, \emph{Handbook of Mathematical Functions With
  Formulas, Graphs, and Mathematical Tables}, 10th~ed., ser. Applied
  Mathematics Series.\hskip 1em plus 0.5em minus 0.4em\relax Washington, D.C.
  20402: U.S. Government Printing Office, 1972, no.~55.

\end{thebibliography}
\end{document}